\documentclass[letterpaper,11pt]{article}

\usepackage{setspace}

\usepackage{tabularx}

\usepackage{setspace}
\usepackage{multirow}

\newif\ifecta
\ectafalse        

\ifecta
  \usepackage[margin=1.25in, includefoot, footskip=0.5in]{geometry} 
\else
  \usepackage[margin=1in, includefoot, footskip=0.5in]{geometry}    
\fi
\usepackage{xcolor}
\usepackage{comment}
\usepackage{environ}
\usepackage{booktabs}
\usepackage{array} 
\usepackage{float} 

\newif\ifshowpublic
\newif\ifshowprivate
\newif\ifshowredundant

\showpublictrue
\showprivatetrue
\showredundantfalse

\ifshowpublic\includecomment{public}\else\excludecomment{public}\fi
\ifshowprivate\includecomment{private}\else\excludecomment{private}\fi

\definecolor{redundantbrown}{RGB}{150,75,0}
\NewEnviron{redu}{%
  \ifshowredundant
    \par\smallskip
    {\color{redundantbrown}\textbf{[Redundant]} \BODY}%
    \par\smallskip
  \fi
}

\usepackage{amsmath,amssymb,amsfonts,amsthm}
\usepackage{mathtools}      
\usepackage{bm}             
\usepackage{bbm}            
\usepackage{booktabs,tabularx,longtable}
\usepackage[inline]{enumitem}
\usepackage{float}
\usepackage{natbib}
\usepackage{graphicx}
\graphicspath{{figures/}{figures/}{../}{./}}
\makeatletter\def\input@path{{./}{../}}\makeatother
\usepackage{tikz}
\usetikzlibrary{arrows.meta}   
\usetikzlibrary{positioning}   
\usetikzlibrary{calc}          
\usepackage{xcolor}            
\usepackage[normalem]{ulem}    
\newcommand{\rev}[1]{#1}                
\long\def\bluerev#1{{\color{black}#1}}  
\newcommand{\redrev}[1]{{\color{black}#1}}  
\newcommand{\greenrev}[1]{{\color{black}#1}}  
\newcommand{\ridgerev}[1]{#1}  

\newif\ifshowgreenmarkup
\showgreenmarkupfalse    

\ifshowgreenmarkup
  \long\def\greenrev#1{{\color{green!55!black}#1}}
  \newcommand{\greendel}[1]{{\color{green!55!black}\sout{#1}}}
\else
  \long\def\greenrev#1{#1}
  \long\def\greendel#1{}
\fi

\setlist[description]{style=nextline,font=\normalfont\bfseries,labelsep=0.5em,leftmargin=0pt}
\setlist[enumerate]{leftmargin=*,labelsep=0.5em,itemsep=0.3em}
\setlist[itemize]{leftmargin=*,labelsep=0.5em,itemsep=0.3em}

\allowdisplaybreaks
\numberwithin{equation}{section}

\newcommand{\E}{\mathbb{E}}

\newcommand{\Pp}{\mathbb{P}}

\newcommand{\R}{\mathbb{R}}
\newcommand{\1}{\mathbbm{1}}

\newcommand{\indep}{\perp\!\!\!\perp}
\newcommand{\Var}{\operatorname{Var}}
\newcommand{\Cov}{\operatorname{Cov}}

\newcommand{\argmin}{\mathop{\mathrm{argmin}}}

\newcommand{\pto}{\stackrel{p}{\to}}

\newcommand{\Ucal}{\mathcal{U}}
\newcommand{\Dcal}{\mathcal{D}}
\newcommand{\Ncal}{\mathcal{N}}

\newcommand{\op}{\mathrm{op}}

\providecommand{\Acal}{\mathcal{A}}
\providecommand{\Hcal}{\mathcal{H}}
\providecommand{\Vcal}{\mathcal{V}}
\providecommand{\Wcal}{\mathcal{W}}

\providecommand{\Ical}{\mathcal{I}}

\makeatletter
\@ifundefined{theorem}{\newtheorem{theorem}{Theorem}[section]}{}
\@ifundefined{lemma}{\newtheorem{lemma}[theorem]{Lemma}}{}
\@ifundefined{proposition}{}{}
\@ifundefined{assumption}{\newtheorem{assumption}[theorem]{Assumption}}{}
\@ifundefined{hypothesis}{\newtheorem{hypothesis}[theorem]{Hypothesis}}{}

\theoremstyle{remark}
\@ifundefined{remark}{\newtheorem{remark}[theorem]{Remark}}{}

\theoremstyle{plain}
\makeatother
\makeatletter
\let\oldremark\remark
\let\oldendremark\endremark

\renewenvironment{remark}[1][]
  {%
    \pushQED{\qed}%
    \oldremark[#1]%
  }
  {%
    \popQED\oldendremark
  }
\makeatother
\makeatletter
\@ifundefined{definition}{\newtheorem{definition}[theorem]{Definition}}{}
\@ifundefined{corollary}{\newtheorem{corollary}[theorem]{Corollary}}{}
\makeatother

\makeatletter
\theoremstyle{definition}
\@ifundefined{definition}{}{}
\@ifundefined{example}{\newtheorem{example}[theorem]{Example}}{}
\@ifundefined{algorithm}{\newtheorem{algorithm}[theorem]{Algorithm}}{}
\@ifundefined{corollary}{\newtheorem{corollary}[theorem]{Corollary}}{}
\theoremstyle{plain}
\makeatother

\newcommand{\examplecont}[2]{%
  \par\smallskip
  \noindent\textbf{Example~\ref{#1} (#2), continued.}\ \ignorespaces}

\title{A Design-Based Approach to Testing and Inference in (Quasi-)Experiments with Spillovers%
  \thanks{\setlength{\baselineskip}{4mm}I thank Isaiah Andrews, Jesse Shapiro, Elie Tamer, and Davide Viviano for guidance and generous support. I also thank Raj Chetty for early feedback and encouragement. I am grateful to the authors of \citet{egger2022general} and \citet{muralidharan2023generaleq}, especially Ted Miguel and Paul Niehaus, for providing additional data and guidance for replication. I also benefited from helpful comments by Aristotelis Epanomeritakis, Tilman Graff, Gabriel Kreindler, Andreas Petrou-Zeniou, Yuya Sasaki, Neil Shephard, Rahul Singh, and Shengbin Wei, as well as from discussions with participants at the Harvard Econometrics Workshop and NEEPC~2026. I acknowledge support from the Harvard Griffin Fund in Economics. All remaining errors are mine.\smallskip}}

\author{
Yechan Park\thanks{\setlength{\baselineskip}{4mm}Department of Economics, Harvard University. Littauer Center, 1805 Cambridge St, Cambridge, MA~02138. Email: \texttt{yechanpark@fas.harvard.edu}}
}
\date{First version: July~9, 2026\\[2pt]This version: August~18, 2026}

\usepackage[hidelinks,bookmarksnumbered]{hyperref}
\usepackage{bookmark}
\begin{document}

\maketitle

\begin{abstract}

Economic policies rarely affect only their direct targets. To study these spillovers, researchers summarize who else was treated with a simple exposure measure, such as the share of treated neighbors within a radius. But for many settings, economic theory provides little guidance on choosing the functional form (e.g., ring) of that measure or its parameters (e.g., radius). We show that the data can inform both choices. Correctly specified exposure measures imply orthogonality conditions that can be used for both estimation and testing. We establish consistency and asymptotic normality of the resulting estimator under spatial and network dependence in a design-based framework, with all randomness arising from treatment assignment. We then characterize the efficient moment conditions. Applied to two large-scale anti-poverty programs, the framework supports some prior radius estimates but rejects others. In the latter case, the revised radius yields substantively different policy-effect estimates.

\bigskip
\noindent{\bf Keywords:} spillovers; exposure mappings; design-based inference; specification testing; spatial and network dependence; general equilibrium effects

\end{abstract}

\newpage

\section{Introduction}
\label{sec:intro}

Economic policies can generate effects far beyond their direct targets. Because economic agents are embedded in spatial, market, and social networks, shocks may propagate through prices, local labor markets, commuting and shopping patterns, and social interactions. Classical analyses of social multipliers in economics emphasize that such spillovers can amplify or attenuate policy impacts in important ways (e.g., \citealp{glaeser2003social,mani2013poverty}). More recently, a growing body of experimental and quasi-experimental work in development and spatial economics has made spillovers the primary object of interest, including studies of large-scale cash transfers, public works, and transport infrastructure \citep[e.g.][]{miguel2004worms,donaldson2016railroads,egger2022general,muralidharan2023generaleq,franklin2024urban,walker2024slack}.


Despite this interest, spillovers pose econometric challenges. Unlike the standard \redrev{no-interference} setting (where each unit's outcome depends only on its own treatment), spillovers mean that unit
$i$'s outcome can depend on many, or even all units' treatments. Whereas under the standard assumption each unit has only a small number of potential outcomes (e.g., two in the binary-treatment case), with spillovers the number of potential outcomes can be very large (e.g., $2^n$ in the binary case), since each of the possible assignments can in principle produce a different outcome for that unit.

Applied work typically makes the problem tractable by replacing the full assignment vector \(W\) with a low-dimensional exposure mapping \(g(W;X_i,\theta)\) \citep{Manski2013TreatmentResponse,AronowSamii2017}\footnote{An alternative is to restrict interference to small, known groups (partial interference; e.g., households or dorm rooms), the special case in which the exposure map reads only own-group assignments. We focus on the general case because our applications feature market- and space-mediated spillovers that may cross such groupings.}. The vector \(W\) collects all units' treatment statuses, \(X_i\) contains baseline covariates such as networks or commuting flows, and \(\theta\) parameterizes features of the mapping, such as a distance radius or elasticity parameter. Examples include ring or neighborhood averages \citep{miguel2004worms,egger2022general} and smooth spatial-decay kernels such as gravity or market-access measures \citep{redding2004economic,donaldson2016railroads,franklin2024urban}.

This approach raises two key challenges. The first is the choice of the map structure \(g\left(W ; X_i, \theta\right)\). Often economic theory does not provide an exact functional form for this map beyond broad guidance, such as the idea that spillovers should decay with distance (in space, in a network, or along trade and commuting links). Reflecting such ambiguity, even in closely related empirical settings, different researchers make very different choices among these constructions. For example, in large-scale anti-poverty programs in low-income countries, \redrev{researchers studying settings with very similar institutional features choose different exposure map functional forms} \citep[e.g.][]{egger2022general,franklin2024urban,walker2024slack}.\footnote{\citet{egger2022general} study large cash transfers in rural Kenya and proxy local spillovers by per-capita transfers within distance bands around each market (``distance buffers''). \citet{franklin2024urban} evaluate an urban public-works program in Addis Ababa and, to capture spillovers across neighborhoods, combine a spatial-equilibrium model with commuting data so that exposure is mediated by the commuting network rather than by geographic distance alone. In related Kenyan data, \citet{walker2024slack} construct gravity-style ``shopping network'' exposure indices that weight village-level activity by baseline shopping flows from villages to markets, and compare the performance of these network-based measures to buffer-style measures of spatial exposure in explaining cross-market inflation \citep[][Figure~A.17]{walker2024slack}.} Even within a single paper, researchers often report multiple exposure constructions and informally compare them \citep{donaldson2016railroads,muralidharan2023generaleq}.\footnote{For instance, \citet{muralidharan2023generaleq} take as their main specification a ring-based exposure measure, given by the treated share of units within a 20 km radius of each location, but also show in their Appendix~B that similar results obtain when exposure is measured using a market-access-style kernel. Likewise, \citet{donaldson2016railroads} summarize the impact of the nineteenth-century U.S.\ railroad network through a market-access index, but Appendix Table~3 reports estimates under a range of \redrev{alternative exposure maps}: they redefine market access using different measures of ``market size'' (population versus county wealth) and restrict the set of trading partners (for example, only counties beyond certain distance thresholds, only urban counties, only major cities, or only New York City).}

Second, even given a particular functional form for the exposure mapping, empirical researchers must still select \(\theta\) and account for the uncertainty this choice introduces into downstream policy estimates. For example, \citet{egger2022general} note that they ``had no a priori knowledge of the relevant distances over which general equilibrium effects might operate\ldots''. In practice, applied papers often report results for several plausible values of \(\theta\), or select a preferred specification using model-selection criteria or informal diagnostics \citep[e.g.][]{currie2015toxicplants,donaldson2016railroads,egger2022general}. Yet standard errors and confidence intervals typically condition on this choice, treating \(\theta\) as fixed rather than as a source of estimation uncertainty.\footnote{In \citet{egger2022general}, local price changes are regressed on exposure measures defined over multiple distance buffers, with the outer radius chosen by a Schwarz/Bayesian information criterion; the selected 0--2 km band is then used as the main exposure definition in the reduced-form analysis \citep[][Section~4.2]{egger2022general}. \citet{currie2015toxicplants} estimate effects within several distance bands around polluting manufacturing plants and report alternative distances, but focus on a preferred near/far comparison, 0--1 versus 1--2 miles, guided by atmospheric dispersion evidence \citep[][pp.~278, 299--312]{currie2015toxicplants}. \citet{donaldson2016railroads} recompute market access under alternative trade elasticities as robustness checks \citep[][pp.~843--844, Appendix Table~3]{donaldson2016railroads}. In each case, variation across \(\theta\) is treated as specification analysis rather than propagated into sampling uncertainty.}

To address these challenges in choosing and calibrating exposure mappings, this paper develops a design-based framework for testing and estimating exposure mappings in settings with spillovers. A key difficulty in such settings is that correctly specifying the outcome model is especially challenging given the complex dependence induced by spillovers. We adopt a design-based approach that sidesteps this by treating potential outcomes as fixed for the experimental sample and letting all randomness come from the known assignment mechanism; this randomization serves as the key assumption needed to test and conduct inference on the exposure mapping. Starting from a proposed exposure map $g(W; X_i, \theta)$, we develop tools to (i) test whether the implications of this map are consistent with observed outcomes under the known assignment mechanism and (ii) estimate the tuning parameters $\theta$. To quantify uncertainty, we develop a design-based law of large numbers and central limit theorem for GMM estimators under spatial and network dependence, characterize an efficiency bound and optimal moments within the class of design-based estimators, and propagate the resulting exposure-map uncertainty into downstream policy estimands.

Formally, we work with a known randomized (or quasi-randomized) assignment design \(\Dcal_n\) and a proposed exposure map \(g(W;X_i,\theta)\). A key observation is that a correctly specified exposure mapping implies an \emph{exposure sufficiency} property: there exists a true parameter \(\theta_0\) such that potential outcomes depend on the full treatment assignment vector only through the exposure mapping \(g(W;X_i,\theta_0)\), so that conditional on \(g(W;X_i,\theta_0)\) the remaining details of \(W\) carry no additional outcome-relevant information.
This property implies a large set of orthogonality conditions between outcomes and functions of \(W\), which we construct via what we call \emph{design-side residuals}. Given any integrable outcome transformation \(\phi\) and design function \(\psi\), we orthogonalize \(\psi(W)\) with respect to the exposure mapping by subtracting its conditional expectation given \(g(W;X_i,\theta)\) \citep[in the spirit of][]{BorusyakHull2023}, and for each unit \(i\) define the design-side residual
\(
  R_{i,\theta}{\psi}(W)
  :=
  \psi(W)
  -
  \E\big[\psi(W)\mid g(W;X_i,\theta)\big],
\)
where the conditional expectation is taken with respect to \(\Dcal_n\), and thus can be computed by repeatedly sampling from the known design. Under exposure sufficiency and (quasi-)random assignment, we show that at the true parameter these design-side residuals are orthogonal to any function of unit $i$'s outcome:
\(
  \E\big[\phi(Y_i)\, R_{i,\theta_0}{\psi}(W)\big] = 0
\)
for every \(\phi\) and \(\psi\). Varying \((\phi,\psi)\) generates a family of moment conditions that form the basis for GMM estimation; when the system is overidentified, the associated \(J\)-test assesses whether the proposed exposure mapping is consistent with the design.
In this way, a single design-side residualization strategy supports both of our key objectives: evaluating the fit of a proposed exposure map and learning about the tuning parameter \(\theta_0\) from the data.

For inference, we establish consistency and asymptotic normality of the design-based GMM estimator under spatial and network dependence. We work in the affinity-set dependence framework of \citet{chandrasekhar2023general}, which nests classical structures such as mixing random fields, m-dependence, and dependency graphs as special cases. Existing results in this framework are pointwise; to our knowledge we are the first to establish uniform consistency and asymptotic normality in it, and we give primitive conditions that cover both smooth exposure maps (gravity and market-access kernels) and non-smooth ones (the ring radius).
We also characterize efficiency within this moment class: there is an asymptotic-variance lower bound for any moment the framework constructs, and a feasible sieve attains it.

Finally, we carry the resulting uncertainty about $g(W; X_i, \theta)$ into downstream policy estimands (e.g., average effects under counterfactual assignment rules) so that inference reflects both the experimental variation and uncertainty about the exposure specification itself.


We illustrate the framework in two applications to large-scale anti-poverty
programs in development economics: the large-scale public-works reform in rural India analyzed
by \citet{muralidharan2023generaleq}, and the GiveDirectly cash-transfer
experiment in rural Kenya studied by \citet{egger2022general} together with the
follow-up structural analysis of \citet{walker2024slack}. Both study large interventions in
low-income economies, are explicitly motivated by general-equilibrium
spillovers, involve spatially linked local markets, and rely primarily on
ring-based exposure mappings.

The two applications yield contrasting conclusions. In the
\citet{muralidharan2023generaleq} application, the design-based
overidentification tests do not reject their ring specification, and the
estimated radii broadly support the original 20 km labor-market scale for the
main income outcomes.
In the \citet{egger2022general} application,
by contrast, the ring specification is rejected for several core outcomes, and our support diagnostics reject the original 2 km radius for all outcomes,
\bluerev{with the smallest non-rejected supports often being 4--6 km or larger.} This pattern is consistent with the follow-up analysis of
\citet{walker2024slack}, which argues that very local rings can miss
market-level ambient spillovers that are common to households within local
markets. Consistent with this narrative, we obtain a local fiscal transfer multiplier of \bluerev{\ridgerev{1.60}}, which is smaller than \citet{egger2022general}'s original 2.5, and much closer to the model-predicted multiplier of 1.54 by \citet{walker2024slack}.

Together, these applications illustrate the value of treating exposure mappings as objects of design-based inference rather than fixed researcher choices: the framework can support an economically motivated map, detect and guide revisions when it does not, and propagate the resulting uncertainty into policy-relevant estimates.


%
\subsection{Related literature}


\paragraph{Design-based uncertainty.}
We adopt a finite-population, design-based perspective in which $\{Y_i(\cdot),X_i\}_{i=1}^n$ are fixed and randomness arises only from the assignment mechanism. This tradition traces to \citet{Neyman1923} and motivates modern distinctions between design-based and superpopulation uncertainty in regression and experimental analyses \greendel{\citep{AbadieAtheyImbensWooldridge2020,AbadieAtheyImbensWooldridge2023}}\greenrev{\citep{AbadieAtheyImbensWooldridge2020}}. Beyond fully randomized designs, \citet{RambachanRoth2025} develop a design-based theory of uncertainty for canonical quasi-experimental estimators when treatment propensities can vary across units, and \citet{LiDing2017} provide finite-population central limit theorems yielding $\sqrt{n}$-Gaussian approximations under randomization-based dependence. Our contribution operates within this design-based framework but targets moment conditions implied by exposure-sufficiency restrictions for parameterized exposure mappings\redrev{, under the cross-unit dependence induced by interference}.

\paragraph{Interference and exposure mappings.}

Foundational potential-outcomes treatments of interference include \citet{Sobel2006} and \citet{HudgensHalloran2008}. Building on this tradition, \citet{AronowSamii2017} formalize \emph{exposure mappings} and develop randomization-based estimators for causal effects under general (but specified) interference structures, while \citet{SavjeAronowHudgens2021} clarify large-sample behavior of conventional estimators under unknown interference and show that standard variance formulas can fail. \greenrev{\citet{Leung2022ANI} allows treatments assigned to increasingly distant units to have smaller but nonzero effects under approximate neighborhood interference, a class that no finite-dimensional exposure map characterizes and so falls outside our framework.} Recent work also studies what can and cannot be learned about interference structure from the design itself: \redrev{\citet{GaoHarshawSavjeWang2026} show that no specification test can have uniform power against any richer exposure-mapping alternative to an exposure-mapping model}\footnote{See Remark \ref{rem:gaocomp} for detailed discussion.}, while \redrev{\citet{Zhong2025Unconditional} develops finite-sample randomization tests for the existence and extent of interference (for example, whether spillovers vanish beyond a given distance) under minimal assumptions on the network.} Complementary work emphasizes robustness to unknown spillover mechanisms within broad classes and studies estimators (and sometimes designs) with minimax or neighborhood-adaptive guarantees; see \citet{belloni2022neighborhood} and \citet{faridani2024linear}. Our approach is complementary to these robust and testing-based methods: rather than remaining fully agnostic about structure or testing a partially sharp no-spillover null, we take the parametric exposure maps used in applied work (e.g.,\ ring, gravity, market-access) \redrev{as maintained} and use the known design to (i) estimate the tuning parameter $\theta_0$ and (ii) test the associated exposure-sufficiency restrictions via overidentifying, design-implied orthogonality conditions.\greenrev{\footnote{A further complementary literature asks how exposure-based estimands should be interpreted when the researcher's chosen exposure mapping is misspecified. \citet{Savje2024Misspecified} separates the use of an exposure mapping to define an estimand from the assumption that the mapping captures the complete causal structure, and establishes conditions under which exposure effects remain estimable under misspecification. \citet{ParkYang2026} instead take the marginal policy effect as primitive and show that a researcher-chosen exposure mapping induces a pseudo-true outcome model and a corresponding decomposition into direct and spillover components.}}

\paragraph{Testing models with trusted shocks.}
A line of work going back to \citet{Lucas1980} views structural models as objects to be disciplined by their responses to shocks with well-understood sources and exogeneity properties. \greendel{In development, \citet{todd2006assessing}, \citet{AttanasioMeghirSantiago2012}, and \citet{DufloHannaRyan2012} combine dynamic structural models with randomized policy variation to estimate deep parameters and test whether the model reproduces experimentally identified impacts.} In quantitative trade, \citet{AdaoCostinotDonaldson2023} propose IV-based goodness-of-fit statistics that compare a model's predicted response to quasi-experimental tariff shocks with the observed response. We bring this ``trusted shocks'' logic to spillovers and exposure mappings: the parametric exposure family $g(W;X_i,\theta)$ plays the role of the structured model\redrev{, a structural restriction on how treatments propagate even in otherwise reduced-form designs}, the (quasi-)experimental assignment $W\sim\Dcal_n$ (or $W\sim G_n(\cdot\mid X)$) provides the trusted shock, and the design-implied orthogonality conditions deliver both specification tests for the maintained exposure class and design-based estimators of the exposure parameter $\theta_0$.

\paragraph{Orthogonalization and recentering.}
Our design-side residualization builds on a long tradition of orthogonalization in semiparametric and GMM settings, including the partially linear model of \citet{robinson1988root} and the construction of ``orthogonal instruments'' in IO \citep{AckerbergCrawford2009,AckerbergCrawfordHahn2011,AndrewsBarahonaGentzkowRambachanShapiro2025}. Closest in spirit is \redrev{the recentering approach of} \citet{BorusyakHull2023}, who study settings where a known shock design generates a constructed regressor (a formula instrument) and recenter it by subtracting its conditional mean given covariates to obtain orthogonal moments \redrev{for a downstream causal or structural parameter}. \greenrev{Extensions to optimality are provided by \citet{BorusyakHull2026Optimal}, and to nonlinear settings by \citet{borusyak2025estimating}, whose application is demand estimation rather than spillovers.} \redrev{Whereas that work uses recentering to identify and estimate a downstream parameter given a fixed formula or exposure, we apply the same orthogonalization logic upstream to discipline and test the exposure map itself}: we treat $g(W;X_i,\theta)$ as the object of interest and use the known (quasi-)experimental design to identify, estimate, and test $\theta_0$. Relatedly, \citet{Ritzwoller2025Spillovers} develops reweighting procedures for proximity-exposure regressions using residualized proximity measures to isolate variation orthogonal to alternative mediating channels.

\rev{The rest of the paper is organized as follows.
Sections~\ref{sec:design-orthogonality}--\ref{sec:db-optimal-instruments} develop
the framework: identification via design-based orthogonality
(Section~\ref{sec:design-orthogonality}), a large-sample theory under spatial and
network dependence (Section~\ref{sec:DB-consistency-AN-affinity}), and the
efficient choice of moments (Section~\ref{sec:db-optimal-instruments}). Section~\ref{sec:stage2} propagates
the estimated map into downstream estimands, and
Section~\ref{sec:applications-main} presents the two applications.}

\section{Setup and Moment Conditions}
\label{sec:design-orthogonality}

\subsection{Design, outcomes, and exposure maps}
We observe units $i=1,\dots,n$ (regions, households, or network nodes) and a treatment
assignment vector $W = (W_1,\dots,W_n)^\top$ drawn from a known experimental (or
quasi-experimental) design $\Dcal_n$ (e.g., complete, Bernoulli, stratified,
cluster-randomized, shock-based, etc.).

\rev{The assignment takes values in a known support $\mathcal{W}$. Binary
treatment, $\mathcal{W}=\{0,1\}^n$, is the leading case, but nothing in what
follows requires it: the individual assignment $W_i$ may be multivalued or continuous, as with the
per-village transfer amounts in our second application.}
For each assignment $w \in \mathcal{W}$, unit $i$ has a potential outcome $Y_i(w)$\redrev{, a
function of the \emph{entire} assignment vector $w$ rather than unit $i$'s own
assignment $w_i$ alone; this is what allows unit $i$'s outcome to depend on other
units' treatments}. The realized outcome is $Y_i = Y_i(W)$.

Let $X_i$ denote observed unit-level information, taking values in a set $\mathcal{X}$, that may be relevant for how
assignment affects unit $i$, such as location, network links, strata, or baseline
covariates. An \emph{exposure map} is a function
\[
  g : \mathcal{W} \times \mathcal{X} \times \Theta \to \mathbb{R}^k,
\]
with parameter $\theta \in \Theta \subset \mathbb{R}^p$ and $k \ll n$, intended to
summarize the aspects of the assignment that are relevant for $Y_i$ through
\[
  g(W;X_i,\theta).
\]
When there is no risk of confusion, we use the shorthand notation
\[
  g_i(W;\theta) := g(W;X_i,\theta),
\]
and write $g_i(w;\theta)$ for the exposure induced by a nonrandom assignment
$w \in \mathcal{W}$.


Throughout, we adopt a finite-population, design-based perspective: the potential-outcome
schedule $\{Y_i(\cdot)\}_{i=1}^n$ and covariates $\{X_i\}_{i=1}^n$ for the experimental (or quasi-experimental) sample are treated as fixed (or, equivalently, conditioned upon), and all randomness
comes from $W \sim \Dcal_n$.

We write $\E$ and $\Pp$ for expectation and probability with respect to $\Dcal_n$,
conditioning implicitly on this fixed schedule of potential outcomes and covariates.
This perspective, as in
\citet{Neyman1923,AbadieAtheyImbensWooldridge2020,LiDing2017}, anchors inference to the specific network, market, or spatial
environment actually exposed to the policy, rather than positing a hypothetical
superpopulation experiment in which entire economies, including their equilibrium
prices, networks, and cross-unit dependence, are repeatedly resampled and subjected
to new assignments $W$. Such an experiment is least credible exactly when spillovers
are system-wide, since it would then require strong assumptions about how the joint
distribution of $\{Y_i(w)\}_{i,w}$ and the broader equilibrium environment vary
coherently across draws. The design-based framework is correspondingly most appealing
here: taking the realized economy as fixed and the known randomization or shock design
as the sole source of uncertainty, it delivers a transparent basis for inference
without additional assumptions on the population-generating process.

\subsection{Examples of exposure maps}
\label{subsec:three-exposure-maps}

As a concrete illustration, we document three families of exposure maps that recur in applied work. Let
\(d(i,j)\) denote a fixed, nonstochastic distance between units \(i\) and \(j\), where
distance may be geographic, travel-time, or network distance.

\begin{example}[Ring exposure]
\label{ex:ring-egger}
A ring exposure map imposes a hard spatial cutoff: only treated units within distance
\(\theta\) of unit \(i\) contribute to exposure. Define the row-normalized weights
\begin{equation}
  a_{ij}(\theta)
  :=
  \frac{\mathbf 1\{d(i,j)\le \theta\}\,\mathbf 1\{j\neq i\}}
       {\sum_{k=1}^n \mathbf 1\{d(i,k)\le \theta\}\,\mathbf 1\{k\neq i\}},
  \qquad
  a_{ii}(\theta):=0,\footnote{
We use the convention $a_{ij}(\theta):=0$ for all $j$ when no other unit lies within distance $\theta$ of $i$ (empty denominator).}
  \label{eq:ring-exposure-perunit}
\end{equation}

and let
\[
  g^{\mathrm{ring}}(W;X_i,\theta)
  :=
  \sum_{j=1}^n a_{ij}(\theta) W_j .
\]
Thus \(g^{\mathrm{ring}}(W;X_i,\theta)\) is the average treatment status among units
lying within radius \(\theta\) of \(i\). This is the logic behind the distance-buffer
specifications in \citet{egger2022general}. It is also the main exposure design in
\citet{muralidharan2023generaleq}, where the baseline specification uses the treated share
of locations within a fixed \(20\) km radius.
\end{example}

\begin{example}[Smooth spatial decay]
\label{ex:smooth-spatial-decay}
Smooth spatial-decay maps replace the hard cutoff in a ring design with weights that decline
continuously with distance. Let \(L_j>0\) denote a pre-treatment measure of the economic size
or attractiveness of location \(j\), such as population, employment, or market size. Define
\begin{equation}
  a_{ij}(\theta)
  :=
  \frac{K(d(i,j),L_j;\theta)\,\mathbf 1\{j\neq i\}}
       {\sum_{k=1}^n K(d(i,k),L_k;\theta)\,\mathbf 1\{k\neq i\}},
  \qquad
  a_{ii}(\theta):=0,
  \label{eq:smooth-spatial-decay-weights}
\end{equation}
and
\begin{equation}
  g^{\mathrm{ssd}}(W;X_i,\theta)
  :=
  \sum_{j=1}^n a_{ij}(\theta) W_j .
  \label{eq:smooth-spatial-decay-exposure}
\end{equation}

This formulation nests several familiar kernels. One example is a gravity-style exponential
kernel,
\[
  K_{\mathrm{grav}}(d,L;\theta) := L\,\exp(-\theta d),
\]
under which exposure decays smoothly with distance at rate \(\theta\), used in  \citet{franklin2024urban,walker2024slack}. Another important
example is a market-access kernel with power decay,
\[
  K_{\mathrm{ma}}(d,L;\theta) := L(1+\alpha d)^{-\theta},
\]
which is closely related to market-access measures in \citet{redding2004economic} and
\citet{donaldson2016railroads}. This is the alternative exposure class considered in
Appendix~B of \citet{muralidharan2023generaleq}, where the authors use the form
\((1+\alpha d)^{-\theta}\) with \(\alpha=1/100\) and fix \(\theta=8\) based on
\citet{donaldson2016railroads}.
\end{example}

\begin{example}[Network exposure]
\label{ex:network-distance}

When spillovers propagate through a known baseline network, distance may be measured in graph
steps rather than kilometers. The ring exposure in Example~\ref{ex:ring-egger} also covers network spillovers as a
special case if distance is interpreted as graph distance rather than geographic distance. Let \(G=(V,E)\) be a fixed graph on nodes
\(V=\{1,\dots,n\}\), and let \(\mathrm{dist}_G(i,j)\) denote shortest-path distance on
\(G\). For \(\theta\in\{1,2,\dots\}\), define the neighborhood
\[
  \mathcal N_{\le \theta}(i)
  :=
  \{\,j\neq i : \mathrm{dist}_G(i,j)\le \theta\,\}.
\]
A natural scalar exposure map is then
\begin{equation}
  g^{\mathrm{net}}(W;X_i,\theta)
  :=
  \frac{1}{|\mathcal N_{\le \theta}(i)|}
  \sum_{j\in\mathcal N_{\le \theta}(i)} W_j,
  \label{eq:network-hop-exposure}
\end{equation}
with the convention \(g^{\mathrm{net}}(W;X_i,\theta)=0\) if
\(|\mathcal N_{\le \theta}(i)|=0\).

In social-network applications, researchers often further decompose this into the share of
treated friends, the share of treated friends of friends, and so on up to distance
\(\theta\).\footnote{In the network-interference literature this is often described as a
\(K\)-hop restriction. We write the truncation parameter as \(\theta\) here to keep notation
consistent across the three exposure families.} When \(\theta=1\),
\eqref{eq:network-hop-exposure} reduces to the familiar share of treated friends. This type
of restriction is common in empirical work on peer effects and network spillovers; see, for
example, \citet{cai2015social} and \citet{BramoulleDjebbariFortin2009}.
\end{example}

Across these examples, the role of \(\theta\) is always the same: it indexes how far
spillovers reach or how quickly they decay. Table~\ref{tab:exposure-cartoon} illustrates the three families.

\newcommand{\RingCartoon}{%
\begin{tikzpicture}[
    scale=0.7,
    every node/.style={circle, draw, inner sep=1pt, minimum size=5pt},
    treated/.style={circle, draw, fill=black, inner sep=1pt, minimum size=5pt},
    untreated/.style={circle, draw, fill=white, inner sep=1pt, minimum size=5pt}
]
  \node[treated,label=above:{\scriptsize $i$}] (i) at (0,0) {};
  \draw[dashed] (0,0) circle [radius=1.2];

  \foreach \angle/\style in {90/treated,150/treated,210/untreated,330/treated}{
    \node[\style] at (1.2*cos{\angle},1.2*sin{\angle}) {};
  }
  \foreach \angle in {30,270}{
    \node[untreated] at (2.0*cos{\angle},2.0*sin{\angle}) {};
  }
\end{tikzpicture}%
}
\newcommand{\GravCartoon}{%
\begin{tikzpicture}[
    scale=0.8,
    every node/.style={circle, draw, inner sep=1pt, minimum size=4pt},
    center/.style={circle, draw, fill=black, inner sep=1pt, minimum size=5pt},
    ring/.style={line width=0.25pt, draw=black!35},
    arrowstyle/.style={-stealth}
]
  \node[center,label=above:{\scriptsize $i$}] (i) at (-0.1,0) {};

  \foreach \r in {0.7,1.2,1.7} {
    \draw[ring] (i) circle[radius=\r];
  }

  \node (n1) at ($(i)+(0.7,0.2)$) {};      
  \node (n2) at ($(i)+(-1.0,0.7)$) {};     
  \node (n3) at ($(i)+(1.1,-0.7)$) {};     
  \node[minimum size=7pt] (n4) at ($(i)+(-1.6,-0.2)$) {}; 

  \draw[arrowstyle, ultra thick] (n1) -- (i);
  \draw[arrowstyle, thick]       (n2) -- (i);
  \draw[arrowstyle, thin]        (n3) -- (i);
  \draw[arrowstyle, very thin]   (n4) -- (i);
\end{tikzpicture}%
}
\newcommand{\SmoothDecayCartoon}{\GravCartoon}

\newcommand{\KHopCartoon}{%
\begin{tikzpicture}[
    scale=0.6,
    every node/.style={circle, draw, inner sep=1pt, minimum size=5pt},
    center/.style={circle, draw, fill=black, inner sep=1pt, minimum size=6pt},
    hopone/.style={circle, draw, fill=gray!60, inner sep=1pt, minimum size=5pt},
    hoptwo/.style={circle, draw, fill=white, very thick, inner sep=1pt, minimum size=5pt},
    far/.style={circle, draw, fill=white, inner sep=1pt, minimum size=5pt}
]
  \node[center] (i) at (0,0) {};
  \node[hopone] (a) at (1,0) {};
  \node[hopone] (b) at (-1,0) {};
  \node[hopone] (c) at (0,1) {};
  \node[hopone] (d) at (0,-1) {};
  \node[hoptwo] (e) at (2,0) {};
  \node[hoptwo] (f) at (-2,0) {};
  \node[hoptwo] (g) at (0,2) {};
  \node[hoptwo] (h) at (0,-2) {};
  \node[far] (p) at (2.5,-1.5) {};
  \node[far] (q) at (-2.5,-1.5) {};

  \foreach \nbr in {a,b,c,d}{ \draw (i) -- (\nbr); }
  \draw (a) -- (e); \draw (b) -- (f); \draw (c) -- (g); \draw (d) -- (h);
  \draw (e) -- (p); \draw (h) -- (q);

  \draw[dashed, rounded corners=4pt] (-1.5,-1.5) rectangle (1.5,1.5);
  \draw[dotted, rounded corners=4pt] (-2.3,-2.3) rectangle (2.3,2.3);
\end{tikzpicture}%
}
\newcommand{\NetworkCartoon}{\KHopCartoon}
\begin{table}[t]
\centering
\caption{Illustration of canonical exposure maps}
\label{tab:exposure-cartoon}
\small
\begin{tabularx}{\textwidth}{@{} l
                                m{3.0cm}
                                X
                               @{}}
\toprule
& Illustration & Informal description \\
\midrule
\multicolumn{3}{l}{} \\[-0.75ex]

Ring
  & \raisebox{-0.5\height}{\RingCartoon}
  & Exposure is the average treatment status among units lying within distance
    $\theta$ of unit $i$. This is the main specification of
    \citet{muralidharan2023generaleq} and the distance-buffer logic used in
    \citet{egger2022general}. \\[0.9ex] 

Smooth spatial decay
  & \raisebox{-0.5\height}{\SmoothDecayCartoon}
  & All locations can affect $i$, but their influence declines smoothly with
    distance. Gravity-style exponential kernels and market-access kernels of the
    form $(1+\alpha d)^{-\theta}$ are leading examples; the latter is the
    alternative exposure class considered in Appendix~B of
    \citet{muralidharan2023generaleq}. \\[0.9ex]

Network
  & \raisebox{-0.5\height}{\NetworkCartoon}
  & Exposure is averaged over nodes within graph distance $\theta$ of $i$. In
    social-network applications this often corresponds to the share of treated
    friends, or more generally treated friends, friends of friends, and so on,
    up to distance $\theta$. \\
\bottomrule
\end{tabularx}
\end{table}

\subsection{Exposure sufficiency}
\label{subsec:randomization-exposure-sufficiency}

We now state the two primitive conditions that generate the design-based moment
restrictions. The first condition is the usual finite-population randomization
condition: the potential-outcome schedule is fixed, and all randomness comes from
the known assignment law.

\begin{assumption}[Randomized assignment]
\label{ass:rand-indep}
The assignment $W$ is drawn from a known law $\Dcal_n$ that does
not depend on the potential outcomes. Formally, for every collection of potential
outcomes $\{Y_i(\cdot)\}_{i=1}^n$,
\[
  W \,\big|\, \{Y_i(\cdot),X_i\}_{i=1}^n \sim \Dcal_n .
\]
\end{assumption}

The second condition is the exposure-map hypothesis. It states that, at the
correct value $\theta_0$, the candidate exposure map contains all information in the
assignment vector that is relevant for unit $i$'s outcome.

\begin{hypothesis}[Exposure map]
\label{hyp:exposure-suff}
\label{ass:exposure-suff}
\label{ass:exposure-sufficiency}
The exposure mapping $g(W;X_i,\theta)$ is \emph{well specified} if there exist
$\theta_0\in\Theta$ and measurable functions
$\widetilde Y_i:\mathbb R^k\to\mathbb R$ such that, for every assignment $w$,
\[
  Y_i(w)
  =
  \widetilde Y_i\bigl(g(w;X_i,\theta_0)\bigr),
  \qquad i=1,\dots,n .
\]
\end{hypothesis}

Thus Hypothesis~\ref{ass:exposure-suff} makes the exposure \(g(W;X_i,\theta_0)\) a summary of the assignment for unit $i$:
the map is \emph{well specified} when the exposure captures everything in $W$ relevant to
$Y_i$. The goal is to test this restriction and learn $\theta_0$ using the known design
\(\Dcal_n\).

\redrev{Our main result is that, combined with the known design, well specification has an
observable consequence. Because the assignment law is known and does not depend on the
potential outcomes, conditioning on the exposure removes all dependence between $Y_i$ and
the residual variation in $W$. We call this consequence exposure sufficiency. It is what
makes the map testable: the design leaves assignment variation that a well-specified
exposure must render irrelevant to $Y_i$, and any leftover dependence is evidence against
the map.}

\begin{theorem}[Exposure sufficiency]
\label{thm:design-orthogonality}
\label{thm:design-orthogonality-unit}
\label{thm:exposure-sufficiency}
Suppose Assumption~\ref{ass:rand-indep} and Hypothesis~\ref{hyp:exposure-suff} hold.
Then, for every unit \(i\),
\[
  Y_i \;\perp\!\!\!\perp\; W
  \;\big|\;
  g(W;X_i,\theta_0)
\]
under the design distribution.
\end{theorem}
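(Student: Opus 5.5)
The plan is to show that, under the design distribution, $Y_i$ is almost surely a measurable function of the conditioning variable $G_i := g(W;X_i,\theta_0)$. A random variable that is a measurable function of the conditioning variable is conditionally independent of anything given that variable. So once this is established, the statement follows from a short measure-theoretic argument. Throughout we work conditionally on the fixed schedule $\{Y_i(\cdot),X_i\}_{i=1}^n$, as in the design-based setup. By Assumption~\ref{ass:rand-indep}, the only source of randomness is then $W\sim\Dcal_n$, and $Y_i(\cdot)$ and $X_i$ are deterministic objects.

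\textbf{Step 1.} Under Hypothesis~\ref{hyp:exposure-suff}, for every realization, $Y_i = Y_i(W) = \widetilde Y_i(g(W;X_i,\theta_0)) = \widetilde Y_i(G_i)$. Because $\widetilde Y_i$ is measurable and $X_i,\theta_0$ are fixed, $Y_i$ is $\sigma(G_i)$-measurable. Assumption~\ref{ass:rand-indep} is what makes this legitimate. It ensures that conditioning on the potential-outcome schedule leaves the law of $W$ equal to the known $\Dcal_n$. Hence the conditional distributions of $(Y_i,W)$ given $G_i$ are computed from $\Dcal_n$ alone, with $\widetilde Y_i$ treated as a fixed function.

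\textbf{Step 2.} Verify the conditional independence directly. Take bounded measurable $f$ and $h$. Since $f(Y_i)=f(\widetilde Y_i(G_i))$ is $\sigma(G_i)$-measurable, we can pull it out of the conditional expectation:
\[
\E\bigl[f(Y_i)h(W)\mid G_i\bigr]=f(\widetilde Y_i(G_i))\,\E\bigl[h(W)\mid G_i\bigr]=\E\bigl[f(Y_i)\mid G_i\bigr]\,\E\bigl[h(W)\mid G_i\bigr]
\]
almost surely. This factorization is exactly $Y_i\indep W\mid G_i$. Equivalently, the regular conditional law of $Y_i$ given $(W,G_i)$ is the point mass at $\widetilde Y_i(G_i)$, which depends only on $G_i$.

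\textbf{Main obstacle.} The only delicate point is interpretive rather than technical. One must be clear that the randomness is solely from $W$, so that conditioning on the fixed schedule is harmless; Assumption~\ref{ass:rand-indep} handles this. The hypothesis must also hold for every $w$ in the support $\mathcal W$, or at least $\Dcal_n$-almost surely, so that $Y_i=\widetilde Y_i(G_i)$ holds almost surely. I would also note that measurability of $g(\cdot;X_i,\theta_0)$ on $\mathcal W$ is needed for $G_i$ to be a random variable; this is immediate for the paper's examples, which are linear in $W$.
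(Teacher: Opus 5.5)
Your proposal is correct and follows essentially the same route as the paper's proof. Both write $Y_i=\widetilde Y_i(G_i)$, with $\widetilde Y_i$ nonrandom because the potential-outcome schedule is fixed, conclude that $Y_i$ is $\sigma(G_i)$-measurable, and then pull $f(Y_i)$ out of $\E[f(Y_i)h(W)\mid G_i]$ to get the factorization for all bounded measurable $f,h$.
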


\redrev{To see what the theorem requires, and why it can be tested, consider the ring map
of Example~\ref{ex:ring-egger}, where $g(W;X_i,\theta)$ is the treated share of units within
radius $\theta$ of unit $i$. Applied researchers already probe this choice informally:
one draws rings of half a mile, a mile, two miles, and looks for the radius at which
estimated spillovers level off. The design-based restriction makes that intuition precise.
Well specification at radius $\theta_0$ says unit $i$'s outcome depends on the assignment
only through the within-$\theta_0$ treated share: two assignments with the same share
produce the same outcome for $i$, no matter which of those neighbors are treated and
regardless of the treatment of any unit beyond $\theta_0$. This is a substantive economic
restriction, and it can fail in two ways: a treated unit just outside the radius still
affects $i$, or the identity rather than the count of treated neighbors matters. The known
randomization turns each failure into something observable. If the ring is correct, then
under the design the treatments of units outside radius $\theta_0$ are uncorrelated with
$Y_i$ once we condition on the within-ring share; leftover correlation is evidence that
spillovers reach past $\theta_0$. This is exactly the equal-to-zero condition that the
moments of the following subsection formalize, both to test a candidate radius and to
select $\theta_0$.}
\subsection{Orthogonal moments}
\label{subsec:orthogonal-moments}

\redrev{The ring discussion tested one exposure map with one natural statistic, the
treatments of units outside the radius. The conditional independence in
Theorem~\ref{thm:exposure-sufficiency} says much more: \emph{any} function of the
assignment, once purged of what the candidate exposure explains, must be unrelated to
\emph{any} function of the outcome. We encode this with a residual that strips from a
design function the part predictable from the exposure.}
To construct the moments, let
\(\psi:\mathcal{W}\times\mathcal{X}\to\mathbb R\) be any integrable design function, possibly depending on the unit's covariates \(X_i\).\footnote{The design residual \(R_{i,\theta}\) below evaluates the design function at the unit's covariates \(X_i\), so the unit-centering enters through the same index \(i\) that \(R_{i,\theta}\) already carries. This covers the leading applied case, in which \(\psi\) aggregates the assignment in a neighborhood of \(X_i\) (for example, population-weighted treatment averages over distance bands around unit \(i\)). To avoid clutter we continue to write \(\psi(W)\), and \(\psi_m(W)\) for dictionary elements, each understood to be evaluated at the relevant unit's covariates inside \(R_{i,\theta}\).} For each unit \(i\) and candidate parameter
\(\theta\), define the design residual
\begin{equation}
  R_{i,\theta}{\psi}(W)
  :=
  \psi(W;X_i)
  -
  \E\!\left[\psi(W;X_i)\mid g(W;X_i,\theta)\right],
  \label{eq:Ri-theta-def}
\end{equation}
where the conditional expectation is taken under the known design \(\Dcal_n\).
This residual removes from \(\psi(W)\) the component explained by the candidate
exposure \(g(W;X_i,\theta)\).

\begin{corollary}[Orthogonal moments]
\label{cor:design-moment-unit}
Under the conditions of Theorem~\ref{thm:exposure-sufficiency}, for any outcome
transformation \(\phi:\mathbb R\to\mathbb R\) and design function \(\psi\)
\redrev{with \(\E[\phi(Y_i)^2]<\infty\) and \(\E[\psi(W)^2]<\infty\)},
\begin{equation}
\label{eq:design-moment-unit}
  \E\!\left[
    \phi(Y_i)\,
    R_{i,\theta_0}{\psi}(W)
  \right]
  =
  0 .
\end{equation}
Equivalently,
\[
  \E\!\left[
    \phi(Y_i)
    \left\{
      \psi(W)
      -
      \E[\psi(W)\mid g(W;X_i,\theta_0)]
    \right\}
  \right]
  =
  0 .
\]
\end{corollary}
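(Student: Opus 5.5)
The plan is to derive the corollary directly from Theorem~\ref{thm:exposure-sufficiency} by the tower property, conditioning on the candidate exposure at $\theta_0$. Write $G_i := g(W;X_i,\theta_0)$. This is a measurable function of $W$, since $X_i$ and $\theta_0$ are fixed under the design-based perspective. Under Hypothesis~\ref{hyp:exposure-suff}, $Y_i=\widetilde Y_i(G_i)$, so $\phi(Y_i)$ is itself $\sigma(G_i)$-measurable. That alone essentially delivers the result, and Theorem~\ref{thm:exposure-sufficiency} gives the same conclusion in conditional-independence form.

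The first step checks integrability, so that every expectation is well defined. Conditional expectation is an $L^2$ contraction, so $\E[\psi(W)^2]<\infty$ gives $\E[\E[\psi(W)\mid G_i]^2]\le \E[\psi(W)^2]<\infty$. Hence $R_{i,\theta_0}\psi(W)\in L^2$. By Cauchy--Schwarz, the product $\phi(Y_i)\,R_{i,\theta_0}\psi(W)$ is then integrable. These are the only places where the second-moment conditions in the statement enter.

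The second step is the main computation. By iterated expectations conditional on $G_i$,
\[
\E\bigl[\phi(Y_i)\,R_{i,\theta_0}\psi(W)\bigr]
=\E\Bigl[\E\bigl[\phi(Y_i)\,\psi(W)\mid G_i\bigr]-\E\bigl[\phi(Y_i)\mid G_i\bigr]\,\E\bigl[\psi(W)\mid G_i\bigr]\Bigr].
\]
In the second term we pull out the $\sigma(G_i)$-measurable factor $\E[\psi(W)\mid G_i]$. The bracket is the conditional covariance of $\phi(Y_i)$ and $\psi(W)$ given $G_i$. Theorem~\ref{thm:exposure-sufficiency} states that $Y_i\indep W\mid G_i$, so $\phi(Y_i)\indep\psi(W)\mid G_i$ and the conditional expectation of the product factorizes almost surely. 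The bracket is therefore zero a.s., and so is its expectation. Alternatively, measurability of $\phi(Y_i)$ in $G_i$ lets us pull it out of the conditional expectation, giving $\E[\phi(Y_i)\{\E[\psi\mid G_i]-\E[\psi\mid G_i]\}]=0$. The ``equivalently'' display is just the definition \eqref{eq:Ri-theta-def} written out at $\theta=\theta_0$.

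The only step needing care is justifying the factorization of conditional expectations when $\phi(Y_i)$ and $\psi(W)$ are merely square-integrable rather than bounded. I would first prove the factorization for bounded truncations $\phi_M=(\phi\wedge M)\vee(-M)$ and $\psi_M$, where it follows directly from conditional independence. I would then pass to the limit using dominated convergence for conditional expectations, with the dominating function $|\phi(Y_i)\psi(W)|\in L^1$. Under a finite or discrete design, such as binary $W$ with finite $n$, this step is trivial because all conditional expectations are finite weighted sums.
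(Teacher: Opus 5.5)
Your proof is correct, and the ``alternative'' you mention in the second step is exactly the paper's argument. The paper notes that Hypothesis~\ref{hyp:exposure-suff} makes $Y_i$, and hence $\phi(Y_i)$, $\sigma(G_i)$-measurable, and establishes integrability by the same $L^2$-contraction and Cauchy--Schwarz step you give. It then pulls $\phi(Y_i)$ outside $\E[\,\cdot\mid G_i]$, so that $\E[R_{i,\theta_0}\psi(W)\mid G_i]=0$ finishes the proof. The conditional independence of Theorem~\ref{thm:exposure-sufficiency} is proved alongside, but it is never used for the orthogonality.

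Your main line instead runs through the theorem's conclusion by factorizing $\E[\phi(Y_i)\psi(W)\mid G_i]$. This is essentially the forward direction of Lemma~\ref{lem:product_moment_characterization}, extended from bounded to square-integrable pairs. That route costs you the truncation and conditional dominated convergence step. The measurability route avoids this step entirely, because pulling out a $\sigma(G_i)$-measurable factor needs only the integrability you already have. In return, your route is more general: it uses only $Y_i\indep W\mid G_i$. It would therefore remain valid if outcomes carried extra randomness that is conditionally independent of $W$ given the exposure. In that case $Y_i$ is no longer a function of $G_i$, and the paper's one-line argument does not apply as written.
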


\redrev{Two features of the corollary matter for what follows. First, it is agnostic about
the map: nothing in \eqref{eq:design-moment-unit} is special to the ring, so the same
construction disciplines gravity, market-access, or network exposures, with only
$g(W;X_i,\theta)$ changing. Second, the equalities hold at $\theta_0$ for \emph{every}
admissible $(\phi,\psi)$, which is what lets a single strategy serve two ends: matching
them identifies and estimates $\theta_0$, while a candidate map that violates them is
detectable through the resulting overidentification.}

For estimation, fix a finite dictionary of moment-generating pairs
\[
  \{(\phi_m,\psi_m):m=1,\dots,M\},
\]
where each \(\phi_m:\mathbb R\to\mathbb R\) is an outcome transformation and each
\(\psi_m:\mathcal{W}\times\mathcal{X}\to\mathbb R\) is a design function. Define the $i$-specific moment
contribution
\begin{equation}
  \eta_{m,i,n}(\theta)
  :=
  \phi_m(Y_i)\,R_{i,\theta}{\psi_m}(W),
  \label{eq:single-unit-score}
\end{equation}
and the corresponding sample moment
\[
  \eta_{m,n}(\theta)
  :=
  \frac{1}{n}\sum_{i=1}^n\eta_{m,i,n}(\theta).
\]
Stacking the \(M\) moments gives
\[
  \eta_n(\theta)
  :=
  \bigl(
    \eta_{1,n}(\theta),\dots,\eta_{M,n}(\theta)
  \bigr)^\top .
\]
A design-based GMM estimator is then
\begin{equation}
  \hat\theta_n
  \in
  \arg\min_{\theta\in\Theta}
  Q_n(\theta),
  \qquad
  Q_n(\theta)
  :=
  \eta_n(\theta)^\top \Lambda_n \eta_n(\theta),
  \label{eq:design-gmm-criterion}
\end{equation}
where \(\Lambda_n\) is a positive semidefinite weight matrix.
\begin{remark}[Practical implementation]
\label{rem:computing-design-projection}
The conditional expectations
\[
  \E[\psi_m(W)\mid g(W;X_i,\theta)]
\]
are design-side objects: because the assignment law $\Dcal_n$ is known
(Assumption~\ref{ass:rand-indep}), each is a functional of the design alone and
involves neither the potential outcomes nor any outcome model, so it can be computed
without additional assumptions. When enumeration of the support of $\Dcal_n$ is
feasible, the projection is evaluated directly. Otherwise, one draws assignments
$W^{(b)}\sim\Dcal_n$, $b=1,\dots,B$, and approximates
$\E[\psi_m(W)\mid g(W;X_i,\theta)=s]$ by regressing $\psi_m(W^{(b)})$ on a flexible
function of the simulated exposure $g(W^{(b)};X_i,\theta)$: for a discrete exposure,
by averaging $\psi_m$ within its realized values; for a continuous exposure, by a
low-order polynomial or kernel smoother in $s$, with $B$ and the smoother chosen so
that the residual approximation error is first-order negligible. In the applications
we use a quadratic in the exposure index across a few hundred placebo draws\ridgerev{, fit by
cross-validated ridge regression in the \citet{egger2022general} application}.
%
\end{remark}

\begin{remark}[Overidentification and testing exposure maps]
\label{rem:overidentification-exposure-testing}
When \(M>\dim(\theta)\), the system is overidentified. The associated design-based
\(J\)-statistic can therefore be used to test the maintained exposure specification or
to compare alternative exposure maps. Section~\ref{sec:DB-consistency-AN-affinity}
derives the large-sample behavior of \(\hat\theta_n\), and
Section~\ref{sec:db-optimal-instruments} studies the efficient choice of
moments.
\end{remark}

\begin{remark}[Multi-unit restrictions]
\label{rem:joint-exposure-future-work}
Hypothesis~\ref{ass:exposure-suff} also implies orthogonality restrictions
for any finite set of units. For any finite
$S\subset\{1,\dots,n\}$, let
\[
  Y_S := (Y_i)_{i\in S},
  \qquad
  g_S(W;\theta) := \bigl(g(W;X_i,\theta)\bigr)_{i\in S}.
\]
If the exposure map is well specified, then
\[
  Y_S \;\perp\!\!\!\perp\; W
  \;\big|\; g_S(W;\theta_0),
\]
and hence, for any square-integrable $\phi_S$ and $\psi$,
\[
  \E\!\left[
    \phi_S(Y_S)
    \left\{
      \psi(W)
      -
      \E[\psi(W)\mid g_S(W;\theta_0)]
    \right\}
  \right]
  =0.
\]

These restrictions could generate moments based on pairs or larger groups of
units, such as moments involving $(Y_i,Y_j)$ and the joint exposure vector
$(g(W;X_i,\theta_0),g(W;X_j,\theta_0))$. We do not pursue the full multi-unit
moment system here: it greatly expands the class of possible moments, and it is
less clear how to choose among them. Instead, the paper focuses on the unit-level
moments, which nest the moments used in applied work and already deliver tractable
estimators, specification tests, and efficiency analysis. Developing practical
procedures that exploit the broader cross-unit restrictions is left for future work.
\end{remark}

\begin{remark}[Quasi-experimental shock designs and relation to recentering]
\label{rem:quasi-and-BH}
Many applications of exposure mappings are based on \emph{quasi-experimental} variation
rather than literal randomized assignment.
In such settings, the researcher observes a realized shock vector $W$ (e.g.,\ line
openings in a transport network, sectoral demand shocks, or other environmental shocks),
treats the baseline covariates $X=(X_1,\dots,X_n)$ as fixed, and specifies a
\emph{shock design} $\Dcal_n^{\text{shock}}$ that captures the as-good-as-random
component of $W$.
Provided this design is known and satisfies the same structural property as
Assumption~\ref{ass:rand-indep},
\[
  W \,\big|\, \{Y_i(\cdot)\}_{i=1}^n \sim \Dcal_n^{\text{shock}}
  \quad\text{for all potential-outcome schedules},
\]
all of the constructions above go through after replacing $\Dcal_n$ by
$\Dcal_n^{\text{shock}}$.
In practice, $\Dcal_n^{\text{shock}}$ is implemented via permutations, placebo
networks, or other simulation schemes, and the re-randomization step used to approximate
$\E[\psi_m(W)\mid g(W;X_i,\theta)]$ is carried out by simulating $W$ from this shock
design.

This design-side residualization is analogous in spirit to the recentering approach of
\citet{BorusyakHull2023}: both start from a known shock or assignment distribution and
construct transformed instruments or moments that are orthogonal, by design, to certain
components of the assignment.
However, the goals and maintained structures are different.
\citet{BorusyakHull2023} assume a particular linear homogeneous treatment-effect model
and a given formula instrument, and their objective is to identify and estimate the
resulting coefficient $\beta$.\footnote{The recentering logic is not intrinsically tied to a linear model: recently, \citet{borusyak2025estimating} apply it to nonlinear (nested and mixed logit)
demand estimation.}
By contrast, in the present framework the exposure mapping itself is the primary object
of interest: we use the randomized or quasi-random shock design to learn about
$\theta_0$ in $g(W;X_i,\theta_0)$ and to test whether a proposed exposure class is
consistent with the design-implied orthogonality conditions. That said, we provide a general procedure to conduct formal inference on policy functionals in a downstream stage, discussed in Section~\ref{sec:stage2}.
\end{remark}

\section{Design-Based Consistency and Asymptotic Normality}
\label{sec:DB-consistency-AN-affinity}

Section~\ref{sec:design-orthogonality} derived design-based moment restrictions implied
by exposure sufficiency. We now study the large-sample behavior of GMM estimators and
tests constructed from those moments. Our goal is to establish design-based consistency
and asymptotic normality of the GMM estimator, and to develop the associated
overidentification tests of the exposure-map specification. See
Remark~\ref{rem:exact-randomization-tests} for discussion of exact randomization tests.


To separate the asymptotic argument from any particular choice of moments, we work with a
generic finite-dimensional moment contribution
\[
  \Psi_i(\theta)\in\mathbb R^q .
\]
In the exposure-mapping application, this vector is obtained by stacking finitely many
residualized design moments,
\[
  \Psi_i(\theta)
  =
  \bigl(
    \eta_{1,i,n}(\theta),\dots,\eta_{q,i,n}(\theta)
  \bigr)^\top,
  \qquad
  \eta_{m,i,n}(\theta)
  =
  \phi_m(Y_i)R_{i,\theta}{\psi_m}(W).
\]
The generic notation lets us state the LLN, consistency, CLT, and asymptotic-normality
results once, in terms of the dependence structure induced by the assignment design. These
results are then used directly for exposure-map estimation here and for the optimal-moment
analysis in Section~\ref{sec:db-optimal-instruments}.


\subsection{Finite-population setup and GMM criterion}

Following the design-based literature (e.g.,\
\citealp{freedman2008regression,AronowSamii2017}), we work with a triangular array
of experiments $\{(\Ucal_n,\Dcal_n)\}_{n\geq 1}$, where both the population
size $N_n:=|\Ucal_n|$ and the assignment design $\Dcal_n$ are allowed to
change with $n$.

For each $n$, let $\Ucal_n=\{1,\dots,N_n\}$ denote the finite experimental
population. For each unit $i\in\Ucal_n$ and parameter
$\theta\in\Theta\subset\R^p$, let $\Psi_i(\theta)\in\R^q$ be a
$q$-dimensional moment vector, with true parameter value $\theta_0$.
The array $\{\Psi_i(\theta)\}_{i\in\Ucal_n}$ is generated by the known
assignment design $\Dcal_n$, while the potential outcomes and baseline
attributes are treated as fixed. The design fixes the distribution of these moments in any given population; the large-sample results below require, in addition, regularity conditions on the dependence the design induces across units, which we impose through the affinity-set structure introduced next.
%

The sample moment is the
normalized finite-population average
\[
  \bar\Psi_n(\theta)
  :=
  \frac{1}{N_n}\sum_{i=1}^{N_n}\Psi_i(\theta),
  \qquad
  \mu_n(\theta)
  :=
  \frac{1}{N_n}\sum_{i=1}^{N_n}\E[\Psi_i(\theta)].
\]
Let $\Lambda_n$ be a symmetric positive semidefinite $q\times q$ weight matrix
with $\Lambda_n\pto \Lambda\succeq 0$, and define the quadratic GMM criterion
\[
  Q_n(\theta)
  :=
  \bar\Psi_n(\theta)^\top \Lambda_n\,\bar\Psi_n(\theta),
  \qquad
  \hat\theta_n \in \arg\min_{\theta\in\Theta} Q_n(\theta).
\]

In the limit, write
\begin{equation}
\label{eq:Q-theta-unit}
\begin{aligned}
  \mu(\theta)
  &:=
  \lim_{n\to\infty}\mu_n(\theta),\\
  Q(\theta)
  &:=
  \mu(\theta)^\top \Lambda\,\mu(\theta), 
\end{aligned}
\end{equation}
whenever the limit exists. In particular, we assume that the limit is centered
at the true parameter,
$  \mu(\theta_0)=0$.\footnote{
This centering follows directly from the
design-based orthogonality result in
Corollary~\ref{cor:design-moment-unit}.}


\subsection{Affinity sets and consistency}
\label{sec:affset}
To accommodate design-induced spatial or network dependence, for each $i\in\Ucal_n$
we fix an \emph{affinity set} $A_i\subseteq\Ucal_n$ with $i\in A_i$ collecting units
whose assignments may have non-negligible covariance with that of unit $i$.
We write $|A_i|$ for its cardinality and allow arbitrary dependence within $A_i$.
In the spatial examples, $A_i$ can be read as a growing geographic or travel-time
ball around unit $i$; in the market-access and gravity examples, it collects
locations whose assignment shocks receive non-negligible kernel weight for $i$;
and in the network examples, it corresponds to a local graph neighborhood whose
radius may grow slowly with $N_n$.
Outside $A_i$ we do not impose conditional independence; instead, we only assume
that the aggregate covariance contribution from $\{j\notin A_i\}$ is
asymptotically negligible relative to the contribution from within $A_i$.

Define centered variables
\[
  Z_{i,n}(\theta)
  := \Psi_i(\theta) - \E[\Psi_i(\theta)],
  \qquad
  \bar Z_n(\theta)
  := \frac{1}{N_n} \sum_{i=1}^{N_n} Z_{i,n}(\theta)
  = \bar\Psi_n(\theta)-\mu_n(\theta),
\]
and the \emph{within-affinity covariance matrix}\footnote{We take the affinity
relation to be symmetric ($j\in A_i\Leftrightarrow i\in A_j$), so that
$\Omega_n(\theta)$ is symmetric; this holds in all our examples (metric or
travel-time balls, symmetric kernels, undirected graph neighborhoods), and any
directed relation may be replaced by its symmetrization $\{j:j\in A_i\ \text{or}\
i\in A_j\}$.}
\[
  \Omega_n(\theta)
  := \sum_{i=1}^{N_n} \ \sum_{j \in A_i} \
  \Cov\!\big(Z_{i,n}(\theta),\, Z_{j,n}(\theta)\big)
  \in \R^{q \times q}.
\]

We first state the high-level ULLN and identification requirements.  The
following subsection then explains how the ULLN is verified under primitive
conditions in the two leading cases.

\begin{assumption}[Design-based ULLN and identification]
\label{ass:as-lln}
\mbox{}
\begin{enumerate}[label=\textup{AS--LLN\arabic*}, ref=\textup{AS--LLN\arabic*}, leftmargin=1.6cm]

\item\label{ass:as-lln1}
\textbf{Deterministic stabilization.}
There exists a deterministic $\mu:\Theta\to\R^q$ such that
\[
  \sup_{\theta\in\Theta}\,\|\mu_n(\theta)-\mu(\theta)\|\ \to\ 0.
\]

\item\label{ass:as-lln2}
\textbf{Uniform LLN for the design-based moments.}
The sample moments converge uniformly in probability to their limits:
\[
  \sup_{\theta\in\Theta}\big\|\bar\Psi_n(\theta)-\mu_n(\theta)\big\|
  \ \pto\ 0.
\]

\item\label{ass:as-lln3}
\textbf{Parameter space and identification of the population criterion.}
The parameter space $\Theta\subset\R^p$ is compact and $\theta_0\in\Theta$.
\redrev{The limit map $\mu:\Theta\to\R^q$ from \ref{ass:as-lln1} is continuous on $\Theta$.}
Given \ref{ass:as-lln1}, define
$Q(\theta)=\mu(\theta)^\top \Lambda\,\mu(\theta)$ on $\Theta$, and assume
\[
 Q(\theta)=0
  \quad\Longleftrightarrow\quad
  \theta=\theta_0.
\]

\item\label{ass:as-lln4}
\textbf{Weight convergence.}
$\Lambda_n\pto \Lambda\succeq0$.
\end{enumerate}
\end{assumption}

Assumption~\ref{ass:as-lln} is stated at a high level in terms of the
population map $\mu(\theta)$ and the sample map $\bar\Psi_n(\theta)$.
The next result shows that these high-level conditions are sufficient for
consistency.\footnote{Genuinely discrete exposure parameters (for example, an
integer hop-count) are handled by a separate finite-grid selection argument in
Appendix~\ref{app:finite-grid-ulln-proof}.}

\begin{remark}[Interpretable necessary conditions for identification]
\label{rem:identification-necessary}
\bluerev{We impose identification at the criterion level in \ref{ass:as-lln3}. Two necessary conditions specific to the moments
\eqref{eq:single-unit-score} are worth mentioning. First, if $w\mapsto g(W;X_i,\theta)$ is
injective at some $\theta$, then conditioning on $g(W;X_i,\theta)$ is equivalent
to conditioning on $W$, so $R_{i,\theta}\psi\equiv0$ for every $\psi$ by
\eqref{eq:Ri-theta-def} and the moments are uninformative about that $\theta$;
identification therefore requires strictly positive residual variation,
$\E[(R_{i,\theta}\psi(W))^2]>0$ for some admissible $\psi$. This is the
degeneracy of a continuous exposure parameter paired with a finely resolved
assignment, such as a smooth market-access decay in (near-)continuous
distances.
Second, the outcomes must respond to the assignment: under the sharp null of no
effect on any unit each $\phi(Y_i)$ is design-nonrandom, so
$\E[\phi(Y_i)R_{i,\theta}\psi(W)]=\phi(Y_i)\,\E[R_{i,\theta}\psi(W)]=0$ for
\emph{every} $\theta$ by the law of iterated expectations, and $\theta_0$ is not
separated.}
\end{remark}

\begin{theorem}[Design-based GMM consistency]
\label{thm:as-consistency}
Suppose Assumption~\ref{ass:as-lln} holds. Then
\[
  \sup_{\theta\in\Theta}\big|Q_n(\theta)-Q(\theta)\big| \;\pto\; 0,
\]
and any sequence of minimizers
$\hat\theta_n\in\arg\min_{\theta\in\Theta} Q_n(\theta)$ satisfies
$\hat\theta_n\pto\theta_0$.
\end{theorem}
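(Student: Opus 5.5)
The plan is the standard extremum-estimator argument, split into two parts: first uniform convergence of the criterion, then a well-separated-minimum argument using compactness and continuity.

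\textbf{Uniform convergence.} Start from the decomposition
\[
  \bar\Psi_n(\theta)-\mu(\theta)
  = \bigl(\bar\Psi_n(\theta)-\mu_n(\theta)\bigr)
  + \bigl(\mu_n(\theta)-\mu(\theta)\bigr).
\]
By \ref{ass:as-lln2} and \ref{ass:as-lln1}, this gives $\sup_\theta\|\bar\Psi_n(\theta)-\mu(\theta)\|\pto0$.

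Next, $\mu$ is continuous on the compact set $\Theta$ by \ref{ass:as-lln3}, so $C:=\sup_\theta\|\mu(\theta)\|<\infty$. Consequently $\sup_\theta\|\bar\Psi_n(\theta)\|\le C+o_p(1)=O_p(1)$.

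Writing $a=\bar\Psi_n(\theta)$ and $b=\mu(\theta)$, use the algebraic identity
\[
  a^\top\Lambda_n a-b^\top\Lambda b
  =(a-b)^\top\Lambda_n a+b^\top\Lambda_n(a-b)+b^\top(\Lambda_n-\Lambda)b .
\]
Bound each term by operator norms:
\[
  |Q_n(\theta)-Q(\theta)|
  \le \|a-b\|\,\|\Lambda_n\|\,(\|a\|+\|b\|)+\|b\|^2\,\|\Lambda_n-\Lambda\| .
\]
Take suprema over $\theta$. The first term is $o_p(1)\cdot O_p(1)\cdot O_p(1)$, since $\|\Lambda_n\|\pto\|\Lambda\|$ by \ref{ass:as-lln4}. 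The second term is at most $C^2\|\Lambda_n-\Lambda\|=o_p(1)$. This proves the first claim.

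\textbf{Consistency.} Fix $\varepsilon>0$ and let $K_\varepsilon:=\{\theta\in\Theta:\|\theta-\theta_0\|\ge\varepsilon\}$. This set is compact, being a closed subset of a compact set. If it is empty there is nothing to prove.

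The map $Q$ is continuous, as a continuous quadratic form composed with the continuous $\mu$. Because $Q>0$ on $K_\varepsilon$ by the identification condition in \ref{ass:as-lln3}, we get
\[
  \delta:=\min_{K_\varepsilon}Q>0 .
\]

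Now use the minimizing property and $Q(\theta_0)=0$, which holds because $\mu(\theta_0)=0$:
\[
  Q(\hat\theta_n)\le Q_n(\hat\theta_n)+o_p(1)\le Q_n(\theta_0)+o_p(1)\le Q(\theta_0)+o_p(1)=o_p(1).
\]
Hence $\Pp(\hat\theta_n\in K_\varepsilon)\le\Pp(Q(\hat\theta_n)\ge\delta)\to0$, which is $\hat\theta_n\pto\theta_0$.

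Note that $Q_n$ need not attain its minimum, or be measurable at the minimizer. The argument only uses $Q_n(\hat\theta_n)\le Q_n(\theta_0)$, so it also covers near-minimizers up to $o_p(1)$. Measurability issues can be handled with outer probability if needed.

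\textbf{Main obstacle.} The only real subtlety is getting a uniform bound on $\|\bar\Psi_n\|$ and a well-separated minimum. Both come from compactness together with continuity of $\mu$, which is exactly why \ref{ass:as-lln3} imposes continuity. Without it, $\delta$ could be zero even though $Q$ has a unique zero, and the consistency step would fail.
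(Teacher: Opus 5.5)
Your proof is correct and follows essentially the same route as the paper. First you show uniform convergence of $Q_n$ via a quadratic-form decomposition: you center directly at $\mu$ with a three-term identity, while the paper uses a four-term split around $\mu_n$. Then you use the standard well-separated-minimum argument on the compact set $\{\theta\in\Theta:\|\theta-\theta_0\|\ge\varepsilon\}$, relying on continuity of $Q$, and your chain $Q(\hat\theta_n)\le Q_n(\theta_0)+o_p(1)=o_p(1)$ is only a compact rewriting of the paper's $\delta_\eta/4$ event argument.
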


\subsection{Primitive conditions for the uniform law of large numbers}
\label{subsec:primitive-ulln-main}
Consistency was established above under the high-level
uniform law of large numbers assumed in \ref{ass:as-lln2},
\[
  \sup_{\theta\in\Theta}\bigl\|\bar\Psi_n(\theta)-\mu_n(\theta)\bigr\|\pto 0 .
\]
This subsection gives primitive conditions under which that uniform law holds.  The smooth case follows from a standard argument; the ring needs care, because its
sample path is a step function of the radius $\theta$ and the smooth-GMM argument does
not apply. We verify it for the two exposure maps used in the applications, continuing
Examples~\ref{ex:smooth-spatial-decay} and~\ref{ex:ring-egger};
Appendix~\ref{subsec:primitive-LLN-proof} states the primitive conditions more formally
and gives the proofs.
\examplecont{ex:smooth-spatial-decay}{smooth spatial decay}
When the weights $a_{ij}(\theta)$ vary smoothly with $\theta$, as in the gravity and
market-access kernels, the moment vector is Lipschitz in $\theta$ and a standard
covering argument applies. The condition that is not automatic concerns the design
projection $m_{i,m,\theta}(s):=\E[\psi_m(W)\mid g_i(W;\theta)=s]$, which must itself vary
regularly in both arguments,
\[
  \bigl|m_{i,m,\theta}(s)-m_{i,m,\theta'}(s')\bigr|
  \le M_{i,n}\bigl(\|\theta-\theta'\|+\|s-s'\|\bigr),
  \qquad
  \frac1{N_n}\sum_{i=1}^{N_n}\E\,M_{i,n}=O(1).
\]
Smoothness of the exposure map does not by itself deliver this, so we impose the displayed
regularity condition directly as a primitive assumption in Appendix~\ref{subsec:primitive-LLN-proof}.
\examplecont{ex:ring-egger}{ring exposure}
Raising the radius $\theta$ changes exposure only through the units whose distance to
$i$ crosses $\theta$, so the sample path $\theta\mapsto\bar\Psi_n(\theta)$ moves in steps
and is not differentiable; in place of a smoothness argument, the uniform law rests on a
condition on the distances. That condition is the spatial counterpart of the bounded-density requirement familiar from regression-discontinuity
and threshold-regression designs, with pairwise distance $d_{ij}$ playing the role of the
running variable and the radius $\theta$ that of the cutoff: as a band of radii shrinks,
the weighted mass of distances inside it must vanish. Formally, for an interval
$I\subset\R_{+}$ of length $|I|$,
\[
  \lim_{\delta\downarrow0}\ \limsup_{n\to\infty}\
  \sup_{|I|\le\delta}\ \frac1{N_n}\sum_{i=1}^{N_n}\sum_{j\ne i}
  a_{ij}\,\1\{d(i,j)\in I\}=0,
\]
with $a_{ij}$ the exposure weights of Example~\ref{ex:ring-egger}. Moving the radius then
shifts only a vanishing share of the moment, so the empirical criterion converges
uniformly and its population limit is continuous in $\theta$ even though every sample path
jumps.
\begin{remark}[Relation to exact randomization tests]
\label{rem:exact-randomization-tests}
A common inferential approach in experiments with spillovers is the finite-sample exact
randomization test. These tests are cleanest for a \emph{sharp} null: one that pins down
each unit's potential outcome under every assignment, so realized outcomes can be
recomputed for any counterfactual assignment and the statistic has a known permutation
distribution. Exposure sufficiency is not sharp: it restricts how outcomes depend on the
assignment through the exposure map but leaves the potential outcomes otherwise
unspecified. For such non-sharp nulls, exact tests remain available by choosing focal
units and conditioning on the assignment cell within which the focal units' outcomes are
invariant; see, for example, \citet{athey2018exact}.
Our target, however, is not a single fixed null but estimation of the continuous tuning
parameter $\theta$, together with inference on the downstream regression coefficients
through the GMM criterion $Q(\theta)$. Exact randomization inference is already awkward
for the coefficient-based analyses in the applied work cited above, and more so for estimating
$\theta$ or evaluating $Q$. We therefore take the asymptotic conditional-moment route
developed in this section, which draws on the implications of exposure sufficiency
directly.
\end{remark}

\subsection{Asymptotic normality}

We now strengthen the LLN conditions above to obtain a CLT and a GMM asymptotic
normality result.
We retain the finite-population, design-based setup and notation introduced
above, and assume Assumption~\ref{ass:as-lln} holds so that
$\hat\theta_n\pto\theta_0$ by Theorem~\ref{thm:as-consistency}.

\bluerev{Asymptotic normality rests on two further ingredients, both imposed at
the population level. We state them formally as Assumption~\ref{ass:AN-AFF} in
Appendix~\ref{app:DB-AN-affinity} and describe their roles here. The first delivers
a \emph{pointwise} central limit theorem for the moment vector at the truth. Because
the design-based moments are dependent across units through the affinity sets of
Section~\ref{sec:affset}, we invoke the affinity-set central limit theorem of
\citet{chandrasekhar2023general}: under bounded fourth moments, decay of the
within-affinity covariances, and a stabilizing aggregate covariance
$\Omega_n/N_n\to\Omega$ with $\Omega$ positive definite, the centered moments satisfy
$\sqrt{N_n}\,\bar\Psi_n(\theta_0)\Rightarrow\Ncal(0,\Omega)$.

The second ingredient carries this pointwise statement to asymptotic normality of the
GMM estimator $\hat\theta_n$. It has two parts: \emph{mean differentiability} of the
population moment map $\mu(\theta)$ at $\theta_0$, with full-rank Jacobian $G$, and
\emph{stochastic equicontinuity} of the centered empirical process
$\mathbb G_n(\theta)=\sqrt{N_n}\big(\bar\Psi_n(\theta)-\mu(\theta)\big)$ near
$\theta_0$. Both are restrictions on the population map and on the empirical process,
not on the individual sample paths $\Psi_i(\theta)$. This is what lets the argument, a
Z-estimator application of \citet[Theorem~3.3.1]{vaart1996weak} to the projected map
$\theta\mapsto G^\top\Lambda_n\bar\Psi_n(\theta)$, cover the non-differentiable ring
path alongside the smooth gravity and market-access kernels; the affinity-set
conditions enter only through the pointwise CLT and the equicontinuity of
$\mathbb G_n$.}

\begin{theorem}[Design-based GMM asymptotic normality]
\label{thm:DB-AN-aff}
Under Assumption~\ref{ass:as-lln} and Assumption~\ref{ass:AN-AFF},
\[
  \sqrt{N_n}\,(\hat\theta_n-\theta_0)
  \ \Rightarrow\
  \Ncal\!\Big(
    0,\ (G^\top \Lambda\, G)^{-1} G^\top \Lambda\, \Omega\, \Lambda\, G\, (G^\top \Lambda\, G)^{-1}
  \Big),
\]
where $G=G(\theta_0)$ and $\Omega$ is the design-based asymptotic covariance in
Assumption~\ref{ass:AN-AFF}\,\textup{\ref{ass:an-aff3}}.
With the optimal weight $\Lambda=\Omega^{-1}$,
the asymptotic variance simplifies to $(G^\top \Omega^{-1}G)^{-1}$.
Moreover, if $\Lambda_n\pto\Omega^{-1}$, then the
efficient-weight case is obtained.
\end{theorem}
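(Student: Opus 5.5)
The argument is the standard Z-estimator route for non-smooth GMM, applied to the projected map $\theta\mapsto G^\top\Lambda_n\bar\Psi_n(\theta)$ as indicated before the statement. Throughout we use the consistency $\hat\theta_n\pto\theta_0$ delivered by Theorem~\ref{thm:as-consistency}.

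\textbf{Step 1 (near-zero of the projected map).} First we show that $\hat\theta_n$ approximately zeroes the projected map, i.e.
\[
G^\top\Lambda_n\bar\Psi_n(\hat\theta_n)=o_p(N_n^{-1/2}).
\]
For a smooth kernel this follows from the first-order condition. For the ring, sample paths are step functions, so we argue through the criterion instead:
\[
Q_n(\hat\theta_n)\le Q_n(\theta_0)=O_p(N_n^{-1})
\]
by the pointwise CLT. Next we expand $\bar\Psi_n(\theta)$ as
\[
\mu(\theta)+N_n^{-1/2}\mathbb G_n(\theta),
\]
with $\mu(\theta)=G(\theta-\theta_0)+o(\|\theta-\theta_0\|)$ by mean differentiability. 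Stochastic equicontinuity then gives
\[
\mathbb G_n(\hat\theta_n)=\mathbb G_n(\theta_0)+o_p(1).
\]

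\textbf{Step 2 ($\sqrt{N_n}$-rate).} Substituting, we get
\[
\bar\Psi_n(\hat\theta_n)=G(\hat\theta_n-\theta_0)+N_n^{-1/2}\mathbb G_n(\theta_0)+o_p(\|\hat\theta_n-\theta_0\|+N_n^{-1/2}).
\]
Because $G$ has full column rank and $\Lambda$ is positive definite on the range of $G$ (implied by identification), we have
\[
\bar\Psi_n^\top\Lambda_n\bar\Psi_n\ge c\|\hat\theta_n-\theta_0\|^2-O_p(N_n^{-1})\text{ cross terms}.
\]
Combined with $Q_n(\hat\theta_n)=O_p(N_n^{-1})$, this yields $\|\hat\theta_n-\theta_0\|=O_p(N_n^{-1/2})$.

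\textbf{Step 3 (local quadratic comparison).} With the rate in hand, $Q_n$ on an $N_n^{-1/2}$-neighbourhood equals, uniformly in $h$, the quadratic
\[
N_n^{-1}\bigl\|\Lambda_n^{1/2}\bigl(Gh+\mathbb G_n(\theta_0)\bigr)\bigr\|^2+o_p(N_n^{-1}),
\]
where $h=\sqrt{N_n}(\theta-\theta_0)$. The near-minimizer of this quadratic is
\[
h^\ast=-(G^\top\Lambda G)^{-1}G^\top\Lambda\,\mathbb G_n(\theta_0),
\]
and an argmin-type argument (as in van der Vaart–Wellner, Theorem~3.2.2, or directly via Theorem~3.3.1 on the projected map) gives
\[
\sqrt{N_n}(\hat\theta_n-\theta_0)=h^\ast+o_p(1).
\]

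\textbf{Step 4 (limit law and efficient weight).} The affinity-set CLT of \citet{chandrasekhar2023general} in Assumption~\ref{ass:AN-AFF} gives $\mathbb G_n(\theta_0)=\sqrt{N_n}\,\bar\Psi_n(\theta_0)\Rightarrow\Ncal(0,\Omega)$, using $\mu(\theta_0)=0$ from Corollary~\ref{cor:design-moment-unit}. Slutsky together with $\Lambda_n\pto\Lambda$ then yields the sandwich variance. Setting $\Lambda=\Omega^{-1}$ collapses the sandwich to $(G^\top\Omega^{-1}G)^{-1}$, and any $\Lambda_n\pto\Omega^{-1}$ gives the same limit by Slutsky.

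\textbf{Main obstacle.} The hard step is Steps 1–2 for the non-smooth ring, where neither a first-order condition nor a pathwise Taylor expansion is available. I would rely entirely on the population-level differentiability of $\mu$ plus stochastic equicontinuity of $\mathbb G_n$ on shrinking neighbourhoods. One point needs care: the centering $\mu_n$ versus $\mu$ must differ by $o(N_n^{-1/2})$ near $\theta_0$. I would take this as part of Assumption~\ref{ass:AN-AFF}, since $\mathbb G_n$ is defined there with $\mu$.
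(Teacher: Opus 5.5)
Your argument is correct, but it takes a different route from the paper. The paper's proof is a pure Z-estimator argument. It applies Lemma~\ref{lem:vdvw-z-estimator} (van der Vaart--Wellner, Theorem~3.3.1) to $M_n(\theta)=G^\top\Lambda_n\bar\Psi_n(\theta)$. It checks condition~(Z) from the equicontinuity in \ref{ass:an-aff4} plus a correction term $G^\top(\Lambda_n-\Lambda)\sqrt{N_n}\{\mu(\theta_n)-\mu(\theta_0)\}$, and lets the self-normalized remainder $o_p(1+\sqrt{N_n}\|\theta_n-\theta_0\|)$ deliver the $\sqrt{N_n}$ rate. The approximate zero $\|G^\top\Lambda_n\bar\Psi_n(\hat\theta_n)\|=o_p(N_n^{-1/2})$ is simply assumed, as \ref{ass:an-aff6}.

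You instead exploit that $\hat\theta_n$ minimizes $Q_n$. The comparison $Q_n(\hat\theta_n)\le Q_n(\theta_0)=O_p(N_n^{-1})$, together with nonsingular $G^\top\Lambda G$, yields the rate. A uniform quadratic approximation on $N_n^{-1/2}$-balls followed by an argmin comparison then yields $\sqrt{N_n}(\hat\theta_n-\theta_0)=h^\ast+o_p(1)$; this is the Pakes--Pollard / Newey--McFadden route.

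What your route buys is that \ref{ass:an-aff6} becomes a consequence rather than an assumption for exact or $o_p(N_n^{-1})$-approximate minimizers. This includes the step-function ring, where the paper says the projected equation must be maintained directly; the grid-mesh condition $\Delta_n=o(N_n^{-1/2})$ then just makes the grid minimizer such a near-minimizer. What the paper's route buys is brevity, applicability to any estimator satisfying the projected equation whether or not it minimizes $Q_n$, and verbatim reuse of the same lemma for the stacked Stage~2 system in Theorem~\ref{thm:joint-clt-app}.

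Three small repairs are needed.
\begin{enumerate}
\item The near-zero you announce in Step~1 is not proved there. It only follows after Step~3, by substituting $\hat h_n=h^\ast+o_p(1)$ into your expansion. So the alternative ``directly via Theorem~3.3.1'' would need \ref{ass:an-aff6} as an input, and adds nothing once Step~3 is done.
\item The rank condition in Step~2 comes from \ref{ass:an-aff2}, not from the global identification in \ref{ass:as-lln3}, which does not imply first-order identification.
\item Step~3 uses interiority, \ref{ass:an-aff1}. This guarantees $\theta_0+h^\ast/\sqrt{N_n}\in\Theta$ with probability approaching one, so the comparison $Q_n(\hat\theta_n)\le Q_n(\theta_0+h^\ast/\sqrt{N_n})$ is legitimate.
\end{enumerate}

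Your concern about $\mu_n$ versus $\mu$ needs no extra assumption. \ref{ass:an-aff3} already contains the centering condition at $\theta_0$, and \ref{ass:an-aff4} is stated for $\mathbb G_n$ centered at $\mu$, so your argument never touches $\mu_n$.
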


%
\begin{corollary}[Design-based overidentification test]
\label{cor:db-J-test}
Suppose Assumption~\ref{ass:as-lln} and Assumption~\ref{ass:AN-AFF} hold with
$q>p$, and let $\Lambda_n\pto\Omega^{-1}$ as in Theorem~\ref{thm:DB-AN-aff}. Under
the maintained exposure specification (Hypothesis~\ref{hyp:exposure-suff}),
\[
  N_n\,Q_n(\hat\theta_n)
  \ \Rightarrow\
  \chi^2_{\,q-p}.
\]
\end{corollary}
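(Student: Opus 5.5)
The argument is the textbook GMM $J$-test computation, run inside the design-based asymptotics of Theorem~\ref{thm:DB-AN-aff}. We expand $\bar\Psi_n(\hat\theta_n)$ around $\theta_0$, substitute the linear representation of $\hat\theta_n$, and observe that $N_nQ_n(\hat\theta_n)$ becomes a quadratic form in an asymptotically normal vector with an idempotent matrix of rank $q-p$.

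\emph{Step 1: expansion of the empirical moment at the estimator.} Write $\mathbb G_n(\theta)=\sqrt{N_n}(\bar\Psi_n(\theta)-\mu(\theta))$. Split
\[
\sqrt{N_n}\,\bar\Psi_n(\hat\theta_n)=\mathbb G_n(\theta_0)+\bigl[\mathbb G_n(\hat\theta_n)-\mathbb G_n(\theta_0)\bigr]+\sqrt{N_n}\,\mu(\hat\theta_n).
\]
The bracketed term is $o_p(1)$ by stochastic equicontinuity, since $\hat\theta_n\pto\theta_0$ by Theorem~\ref{thm:as-consistency}. Mean differentiability together with $\mu(\theta_0)=0$ gives $\sqrt{N_n}\mu(\hat\theta_n)=G\sqrt{N_n}(\hat\theta_n-\theta_0)+o_p(1)$, where we use that $\sqrt{N_n}(\hat\theta_n-\theta_0)=O_p(1)$ from Theorem~\ref{thm:DB-AN-aff}. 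We also need $\mathbb G_n(\theta_0)=\sqrt{N_n}\bar\Psi_n(\theta_0)+o_p(1)$. This requires $\sqrt{N_n}\mu_n(\theta_0)\to 0$, and it holds exactly because Corollary~\ref{cor:design-moment-unit} gives $\mu_n(\theta_0)=0$ for every $n$ under Hypothesis~\ref{hyp:exposure-suff}.

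\emph{Step 2: the linear representation of the estimator.} Take the influence representation from the proof of Theorem~\ref{thm:DB-AN-aff}:
\[
\sqrt{N_n}(\hat\theta_n-\theta_0)=-(G^\top\Lambda G)^{-1}G^\top\Lambda\,\mathbb G_n(\theta_0)+o_p(1),
\]
with $\Lambda=\Omega^{-1}$. Plugging this into Step~1 gives
\[
\sqrt{N_n}\bar\Psi_n(\hat\theta_n)=\bigl(I-G(G^\top\Omega^{-1}G)^{-1}G^\top\Omega^{-1}\bigr)\mathbb G_n(\theta_0)+o_p(1).
\]

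\emph{Step 3: the limiting quadratic form.} The pointwise CLT gives $\mathbb G_n(\theta_0)\Rightarrow\Omega^{1/2}\xi$ with $\xi\sim\Ncal(0,I_q)$, and we have $\Lambda_n\pto\Omega^{-1}$. By the continuous mapping theorem and Slutsky,
\[
N_nQ_n(\hat\theta_n)\Rightarrow \xi^\top(I-P)\xi,\qquad P=\Omega^{-1/2}G(G^\top\Omega^{-1}G)^{-1}G^\top\Omega^{-1/2}.
\]
Here $P$ is the orthogonal projection onto the column space of $\Omega^{-1/2}G$, which has rank $p$ because $G$ has full column rank and $\Omega\succ0$. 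So $I-P$ is idempotent of rank $q-p$, and the limit is $\chi^2_{q-p}$.

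\emph{Main obstacle.} The delicate step is Step~1 in the non-smooth ring case. Theorem~\ref{thm:DB-AN-aff} was proved for the projected map $G^\top\Lambda_n\bar\Psi_n$, whereas here we need the full $q$-vector $\bar\Psi_n(\hat\theta_n)$ at the estimator. My first attempt would apply the equicontinuity condition of Assumption~\ref{ass:AN-AFF} coordinatewise, since it is stated for the whole vector process $\mathbb G_n$, which makes the expansion immediate. A second issue is that $\hat\theta_n$ minimizes $Q_n$ and need not solve the projected Z-equation exactly. For this I would show that the near-zero condition $G^\top\Lambda_n\bar\Psi_n(\hat\theta_n)=o_p(N_n^{-1/2})$ follows from the argmin property, by comparing $Q_n(\hat\theta_n)$ with $Q_n(\tilde\theta_n)$ for $\tilde\theta_n=\theta_0-(G^\top\Lambda G)^{-1}G^\top\Lambda\,\mathbb G_n(\theta_0)/\sqrt{N_n}$, the standard Pakes--Pollard argument.
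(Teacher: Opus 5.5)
Your proposal is correct, and it is essentially the argument the paper relies on. The paper states the corollary without a separate proof, as a direct consequence of the linear representation $\sqrt{N_n}(\hat\theta_n-\theta_0)=-(G^\top\Lambda G)^{-1}G^\top\Lambda\,\mathbb G_n(\theta_0)+o_p(1)$ from the proof of Theorem~\ref{thm:DB-AN-aff}, the equicontinuity of the full vector process in \ref{ass:an-aff4}, and the idempotent quadratic-form computation you give.

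Two small remarks:
\begin{itemize}
\item The Pakes--Pollard step is unnecessary. The projected estimating equation you worry about is imposed directly as \ref{ass:an-aff6} and is already built into Theorem~\ref{thm:DB-AN-aff}.
\item The condition $\sqrt{N_n}\|\mu_n(\theta_0)-\mu(\theta_0)\|\to0$ is what the CLT for $\mathbb G_n(\theta_0)$ requires. The identity $\mathbb G_n(\theta_0)=\sqrt{N_n}\bar\Psi_n(\theta_0)$ holds exactly, because $\mu(\theta_0)=0$.
\end{itemize}
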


The statistic $N_n Q_n(\hat\theta_n)$ therefore provides a formal test of the
maintained exposure map, as anticipated in
Remark~\ref{rem:overidentification-exposure-testing}: rejection is evidence that
residualized assignment variation left over after conditioning on
$g(W;X_i,\theta)$ remains predictive of outcomes through the chosen moments.
Different finite moment collections yield different $J$-statistics;
Section~\ref{sec:db-optimal-instruments} studies how to choose moments efficiently.
\begin{remark}[Scope of the specification test]
\label{rem:gaocomp}
Contemporaneously, \citet{GaoHarshawSavjeWang2026} prove that specification
testing of exposure-mapping models is impossible against any richer
exposure-mapping alternative: every testing procedure, design included, has worst-case Type I and Type II errors summing to one, at every sample size. The alternative is simply
too large in that it places no structure across units and so admits
adversarial outcome schedules that are maximally separated from the null yet
generate the same observed-data distribution under every assignment. The implication is that informative tests exist only
against alternatives restricted beyond what exposure mappings alone impose,
as in their consistent test against a linear-in-means model.

Our test naturally embodies the restriction their result requires, in the form applied work already adopts. We maintain a
parametric exposure class $\{g(\cdot;\theta):\theta\in\Theta\}$ and a finite
set of residualized design moments; the null states that some
$\theta_0\in\Theta$ satisfies $\mu(\theta_0)=0$. The $J$-test then has power
against alternatives that keep the moment criterion bounded away from zero
uniformly over $\Theta$, under the convergence conditions of
Assumption~\ref{ass:as-lln} maintained along the alternative sequence.
Rejection therefore signals that assignment variation left over after
conditioning on the proposed exposure map remains predictive of outcomes.
Non-rejection means the maintained class is consistent with the design through
these moments, and nothing more: it does not certify exposure sufficiency
against the unrestricted alternative.
\end{remark}

\begin{remark}[Nonconservative design-based variance]
\label{rem:nonconservative-variance}
The variance in Theorem~\ref{thm:DB-AN-aff} is design-based but not a
Neyman-style conservative bound. Neyman-type bounds arise because the exact
variance of a treatment-effect estimator involves co-moments of the same unit's
potential outcomes across assignments, which are never jointly observed
\citep{Neyman1923, AbadieAtheyImbensWooldridge2020}. No such term arises
here: by Corollary~\ref{cor:design-moment-unit}, the score
\[
  \phi(Y_i)\left\{\psi(W)-\E[\psi(W)\mid g(W;X_i,\theta_0)]\right\}
\]
is mean zero unit by unit, whatever the outcome functions $\widetilde{Y}_i$ may
be, so the centered score is observed for the realized assignment and $\Omega$
is identified as a second-moment functional of observed scores under the known
design. Given the graph-HAC condition of Assumption~\ref{ass:graph-hac}, the
sandwich therefore estimates the exact large-sample variance for the maintained
exposure model.\end{remark}

It remains to estimate the asymptotic design covariance matrix \(\Omega\). 
%
%
%
%
%
%
%
%
%
%
%
%
%
%
%
In our setting $\Omega$ captures spatial and network dependence through the
affinity sets, so a natural object is a spatial/graph HAC estimator built from
$\{Z_{i,n}\}_{i\le N_n}$.
Rather than spell out primitive conditions for a particular estimator, we impose
a high-level consistency requirement in the spirit of spatial and network HAC
methods, and refer to the existing literature for sufficient conditions.
For spatial dependence, see \citet{Conley1999,KimSun2011}.
For general network dependence and $\psi$-dependent processes on graphs,
\citet{KojevnikovMarmerSong2021,sasaki2025gmm} provide LLN, CLT, and consistency results for a
closely related network HAC estimator.
We formalize our requirement as follows.

\begin{assumption}[Graph-HAC estimation of the asymptotic covariance]
\label{ass:graph-hac}
Let $\Omega$ be the design-based asymptotic covariance in
Assumption~\ref{ass:AN-AFF}\,\textup{\ref{ass:an-aff3}}.
There exists a sequence of graph-HAC (Conley-type) estimators $\hat\Omega_n$
constructed from $\{Z_{i,n}\}_{i\le N_n}$ and the affinity sets
$\{A_i\}_{i\le N_n}$ such that
        
 $\hat\Omega_n$ is consistent in operator norm for the normalized
        asymptotic covariance:
        \[
          \big\|\hat\Omega_n - \Omega_n/N_n\big\|_{\mathrm{op}} \ \pto\ 0.
        \]
\redrev{Evaluated at a preliminary consistent estimator $\hat\theta_n$ (Algorithm~\ref{alg:twostep-practical-scores}), the resulting plug-in error is $o_p(1)$: the design centering is computed from the known design rather than estimated, and $\hat\theta_n\pto\theta_0$ under Assumption~\ref{ass:as-lln} (Theorem~\ref{thm:as-consistency}).}
\end{assumption}



\section{Efficient Moments}
\label{sec:db-optimal-instruments}

Sections~\ref{sec:design-orthogonality}--\ref{sec:DB-consistency-AN-affinity}
introduced a class of design-based moments parameterized by an outcome
transformation $\phi$ and a design function $\psi$,
\[
  \eta_{\phi,\psi,n}(\theta)
  :=
  \frac1{N_n}\sum_{i\in\Ucal_n}
  \phi(Y_i)\,R_{i,\theta}\psi(W).
\] Two facts delimit what optimality within this class means.
First, the class is exhaustive at the unit level:
Appendix~\ref{app:conditional-poirier} shows that the collection of these
moments over all bounded measurable $(\phi,\psi)$ is equivalent to the
exposure sufficiency $Y_i \indep W \mid g(W;X_i,\theta_0)$. Second,
a correctly-specified exposure map also implies cross-unit restrictions, involving pairs or
larger sets of units (Remark~\ref{rem:joint-exposure-future-work}), which lie
outside this class and are not exploited. The bound below is therefore an
efficiency bound within the unit-level moment class, the class containing the
moments used in applied work and in our applications, not a semiparametric
bound over all implications of an exposure map hypothesis.

\begin{theorem}[Efficiency bound and sieve approximation; informal]
\label{thm:efficiency-informal}
Under the regularity conditions of Appendix~\ref{app:db-efficiency}:
\begin{enumerate}[label=\textup{(\roman*)}, leftmargin=1.2cm]
\item There is a finite $V^\star>0$ such that every GMM estimator of
$\theta_0$ built from finitely many moments in the unit-level class has
design-based asymptotic variance at least $V^\star$


\item For a nested dictionary
$\{(\phi_m,\psi_m):m\geq1\}$ whose span is dense in the unit-level
moment class, the finite-dictionary oracle variances satisfy
\(
  V_M^\star\to V^\star.
\)
If the fixed-dimensional feasible-GMM conditions hold for every fixed
$M$ and $b_n/N_n\to0$, there exists a deterministic, nondecreasing
sequence $M_n\to\infty$ such that
\(
  \frac{M_n^2b_n}{N_n}\to0
\)
and
\(
  \sqrt{N_n}
  (\hat\theta_{M_n,n}-\theta_0)
  \Rightarrow
  \Ncal(0,V^\star)
\)

\end{enumerate}
\end{theorem}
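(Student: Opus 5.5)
The plan treats each unit-level moment as a linear functional on a Hilbert space of design-side scores, and reads the efficiency bound off as a Cauchy--Schwarz / projection argument, in the style of optimal-instrument results.

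\textbf{Setting up the score space.} For a finite collection of pairs $(\phi_m,\psi_m)$, Theorem~\ref{thm:DB-AN-aff} with efficient weight $\Lambda=\Omega^{-1}$ gives variance $(G^\top\Omega^{-1}G)^{-1}$. Here:
\begin{itemize}
\item $G$ has columns $\partial_\theta \lim N_n^{-1}\sum_i \E[\phi_m(Y_i)R_{i,\theta}\psi_m(W)]$ at $\theta_0$;
\item $\Omega$ is the affinity-set long-run covariance of the scores $s_{m,i}:=\phi_m(Y_i)R_{i,\theta_0}\psi_m(W)$.
\end{itemize}
Let $\mathcal H$ be the closure, under the inner product
\[
\langle a,b\rangle:=\lim_n N_n^{-1}\sum_i\sum_{j\in A_i}\E[a_i b_j],
\]
of the linear span of such score arrays over bounded measurable $(\phi,\psi)$. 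Linear combinations are allowed because $\psi\mapsto R_{i,\theta}\psi$ is linear, and products $\phi(Y_i)$ enter linearly through $\mathcal H$. The Jacobian map $a\mapsto D(a):=\partial_\theta\mu_a(\theta_0)$ is then a linear functional $\mathcal H\to\R^p$.

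\textbf{Part (i): the bound.} The regularity conditions of Appendix~\ref{app:db-efficiency} should make $D$ bounded, so Riesz representation gives $d^\star\in\mathcal H^p$ with $D(a)=\langle a,d^\star\rangle$. Set $V^\star:=\langle d^\star,d^{\star\top}\rangle^{-1}$; this is finite and positive once the Gram matrix of $d^\star$ is nonsingular, which follows from the identification and full-rank $G$ assumption. For any finite $a=(a_1,\dots,a_q)$:
\begin{itemize}
\item $G=\langle a,d^{\star\top}\rangle$ and $\Omega=\langle a,a^\top\rangle$;
\item hence $G^\top\Omega^{-1}G$ is the Gram matrix of the projection of $d^\star$ onto $\mathrm{span}(a)$;
\item so it is dominated in the psd order by $\langle d^\star,d^{\star\top}\rangle$.
\end{itemize}
Inverting gives variance $\succeq V^\star$. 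Any non-efficient weight only increases the sandwich beyond $(G^\top\Omega^{-1}G)^{-1}$, by the standard GMM inequality.

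\textbf{Part (ii): sieve convergence.} For a nested dense dictionary, the projections of $d^\star$ onto the first $M$ elements converge to $d^\star$ in $\mathcal H$. Hence the Gram matrices converge and, by continuity of inversion at a nonsingular limit, $V_M^\star\to V^\star$. The one point to check is that density of the dictionary in the $(\phi,\psi)$ class transfers to density of the generated scores in $\mathcal H$. That requires continuity of $(\phi,\psi)\mapsto$ score in the $\mathcal H$ norm, which I would get from the bounded-fourth-moment and summable-affinity-covariance conditions of Assumption~\ref{ass:AN-AFF}.

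\textbf{Feasible estimator and growing $M$.} This step uses a diagonal argument. For each fixed $M$, the fixed-dimensional feasible conditions give
\[
\sqrt{N_n}(\hat\theta_{M,n}-\theta_0)\Rightarrow\Ncal(0,V_M^\star).
\]
Metrize weak convergence by the bounded-Lipschitz distance $\beta$. Then choose $M_n$ as the largest $M\le N_n$ such that both
\begin{itemize}
\item $\sup_{n'\ge n}\beta\bigl(\mathcal L(\sqrt{N_{n'}}(\hat\theta_{M,n'}-\theta_0)),\Ncal(0,V_M^\star)\bigr)\le 1/M$, and
\item $M^2b_n/N_n\le (b_n/N_n)^{1/2}$.
\end{itemize}
This $M_n$ is deterministic and nondecreasing, and since $b_n/N_n\to0$ it tends to infinity. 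Combined with $\Ncal(0,V_M^\star)\to\Ncal(0,V^\star)$, this yields the claim.

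\textbf{Main obstacle.} I expect the hard step to be the uniformity needed for the $\sup_{n'\ge n}$ tail in the diagonal choice; a pure existence argument through a subsequential diagonal may be needed instead. If that does not go through, I would fall back on explicit rates. The condition $M_n^2b_n/N_n\to0$ is exactly what controls the HAC estimation error of an $M\times M$ weight matrix in operator norm, as in Assumption~\ref{ass:graph-hac}, with bandwidth $b_n$.
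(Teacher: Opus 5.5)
Your proposal follows the paper's proof. The paper builds $\Hcal$ from the asymptotic covariance $\Omega(\eta,\xi)=\lim N_n\Cov(\eta_n,\xi_n)$, which matches your within-affinity form once outside-affinity covariance is negligible. It assumes the derivative functional is bounded, takes its Riesz representer $\eta^\star$, and obtains the bound by Cauchy--Schwarz. Its finite-dictionary oracle $\eta_M^\star$ is exactly your projection of $d^\star$ onto the dictionary span, so your psd-order projection argument is the vector form of the same step (cf.\ Appendix~\ref{app:vector_parameters}). Part~(ii) is the same density-plus-bounded-Lipschitz diagonal argument. Three points in the last step need correcting.

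\textbf{The flagged obstacle is not one.} Write $X_{M,n}:=\sqrt{N_n}(\hat\theta_{M,n}-\theta_0)$. For fixed $M$, $d_{\mathrm{BL}}\bigl(\mathcal L(X_{M,n}),\Ncal(0,V_M^\star)\bigr)\to0$ already implies that its supremum over $n'\ge n$ tends to zero. Nothing uniform in $M$ is required, since each $M$ is used only beyond its own threshold. The paper chooses strictly increasing $n_M\ge M$ such that, for all $n\ge n_M$, both $d_{\mathrm{BL}}\le 1/M$ and $M^2b_n/N_n\le 1/M$ hold, and sets $M_n:=\max\{M:n_M\le n\}$. No subsequence or explicit rate is needed. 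The condition $M_n^2b_n/N_n\to0$ is never used to control HAC error; it is only shown to be attainable together with the fixed-$M$ limits (Remark~\ref{rem:primitive_sieve_conditions}).

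\textbf{Your selection rule does not guarantee that $M_n$ is nondecreasing.} The condition $M^2b_n/N_n\le(b_n/N_n)^{1/2}$ is checked at $n$ alone, and $b_n/N_n$ need not be monotone. So an $M$ admissible at $n$ can become inadmissible at $n+1$. Replace $b_n/N_n$ by $\sup_{n'\ge n}b_{n'}/N_{n'}$, as you already did for the bounded-Lipschitz condition. The admissible set then grows with $n$, and its maximum is nondecreasing.

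\textbf{The density transfer cannot come from Assumption~\ref{ass:AN-AFF}.} Its fourth-moment and bounded row-sum conditions concern one fixed moment vector, not the whole $(\phi,\psi)$ class. With growing affinity sets, the natural bound $\|a\|_\Omega^2\lesssim b_n\sup_i\E[a_i^2]$ gives no continuity. The paper instead takes density of the span in $\Hcal$ under $\|\cdot\|_\Omega$ as the hypothesis (Assumption~\ref{ass:dense_sieve}, (S1)). That is what ``dense in the unit-level moment class'' means in the statement, and you should assume the same.
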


For empirical work, fixing a finite dictionary and running two-step GMM
with an estimated covariance matrix yields the optimal GMM variance within
that dictionary. As the population dictionary spaces become dense, their
oracle variances approach the within-class bound. Theorem~\ref{thm:sieve_oracle}
further establishes feasible attainment along some sufficiently slowly
growing deterministic dictionary sequence. 

\begin{algorithm}[Two-step GMM]
\label{alg:twostep-practical-scores}
Given a dictionary $\{(\phi_m,\psi_m):m=1,\ldots,M\}$, stack
$\eta_n^{(M)}(\theta):=\bigl(\eta_{1,n}(\theta),\ldots,\eta_{M,n}(\theta)\bigr)^\top$,
$\eta_{m,n}(\theta):=\frac1{N_n}\sum_{i\in\Ucal_n} \phi_m(Y_i)R_{i,\theta}\psi_m(W)$.
\begin{enumerate}[label=\textup{(\alph*)}, leftmargin=1.2cm]
\item Compute $\hat\theta_n^{(1)}\in\arg\min_{\theta\in\Theta}
  \eta_n^{(M)}(\theta)^\top\eta_n^{(M)}(\theta)$.
\item Estimate the covariance of $\sqrt{N_n}\,\eta_n^{(M)}(\theta_0)$ by a
  graph-HAC estimator $\hat\Omega_{M,n}$ (Assumption~\ref{ass:graph-hac})
  evaluated at $\hat\theta_n^{(1)}$.
\item Compute $\hat\theta_n^{(2)}\in\arg\min_{\theta\in\Theta}
  \eta_n^{(M)}(\theta)^\top\hat\Omega_{M,n}^{-1}\eta_n^{(M)}(\theta)$; then
  \[
    \sqrt{N_n}(\hat\theta_n^{(2)}-\theta_0)\Rightarrow
    \Ncal\bigl(0,(G^{(M)\top}\Omega^{(M)-1}G^{(M)})^{-1}\bigr)
  \]
  as in Theorem~\ref{thm:DB-AN-aff}, where $G^{(M)}$ and $\Omega^{(M)}$ are the
  limiting Jacobian and covariance of the stacked moments
  (Appendix~\ref{app:sieve_gmm}).
\end{enumerate}
\end{algorithm}

\section{Stage 2: Outcome regression given the exposure map}
\label{sec:stage2}

Stage~1 learns the exposure map $g(W;X_i,\hat\theta_n)$.  In most applications,
the object of ultimate interest is a downstream exposure--response or policy parameter, such as
the coefficient from an outcome regression on the learned exposure index.  This
section states the main implication for such second-step inference.

The main case is a regular exposure-map parameter $\theta\in\Theta\subset\R^p$
estimated in Stage~1.  Since $\hat\theta_n$ is an input into the downstream
exposure--response equation, the sampling uncertainty in $\hat\theta_n$ must be
propagated into second-step inference.

For concreteness, consider the linear exposure--response projection.  Define
\[
  Z_i(\theta):=
  \begin{pmatrix}
    1\\
    g(W;X_i,\theta)
  \end{pmatrix}.
\]
The target $(\alpha_0,\beta_0)$ is the design-based projection coefficient
satisfying
\begin{equation}
\label{eq:linear-main}
  \lim_{n\to\infty}
  \frac{1}{N_n}\sum_{i\in\Ucal_n}
  \E\!\left[
    Z_i(\theta_0)
    \{Y_i-\alpha_0-\beta_0^\top g(W;X_i,\theta_0)\}
  \right]
  =0.
\end{equation}
A correctly specified linear conditional mean is sufficient for
\eqref{eq:linear-main}, but the projection interpretation does not require it.

Collect the regular parameters as
\[
  \zeta=(\theta^\top,\alpha,\beta^\top)^\top .
\]
The two-step estimator is characterized by stacking the projected Stage~1 GMM
equation with the Stage~2 exposure--response equation:
\[
  G_1^\top\Lambda_{1,n}\bar\Psi_{1,n}(\hat\theta_n)=o_p(N_n^{-1/2}),
  \qquad
  \bar\Psi_{2,n}(\hat\theta_n,\hat\alpha_n,\hat\beta_n)=o_p(N_n^{-1/2}).
\]
Under a joint affinity-set CLT, mean differentiability of the stacked population map,
and stochastic equicontinuity, Appendix~\ref{app:stage2} shows that
\[
  \sqrt{N_n}(\hat\zeta_n-\zeta_0)
  \Rightarrow
  \Ncal(0,V_\zeta),
\]
where $V_\zeta$ is the corresponding design-based sandwich covariance matrix.
Inference for the exposure--response coefficient $\beta_0$ uses the
$(\beta,\beta)$ block of $V_\zeta$.  More general low-dimensional Stage~2
moments, including nonlinear regressions or policy-functional estimating
equations, are handled by the same stacked-system argument.


\begin{remark}[Design-based interpretation of the Stage-2 standard errors]
\label{rem:stage2-conservative}
\bluerev{The Stage-2 projection moments, unlike the Stage-1 moments
(Remark~\ref{rem:nonconservative-variance}), are mean zero only in aggregate:
\eqref{eq:linear-main} fixes the average moment at zero but leaves the
unit-level means nonzero, and those means are not identified from the realized
assignment (each involves $\widetilde{Y}_i$ at unrealized exposures). The
feasible sandwich must therefore center at the sample mean, so it need not
reproduce the exact design variance of $\hat\beta$. When the affinity sets are
singletons or block-diagonal the discrepancy is positive semidefinite and the
reported standard errors are conservative, with no assumption of correct
specification; it vanishes when the linear exposure--response holds unit by
unit; and it is of ambiguous sign under general overlapping dependence absent a
local-alignment condition. Appendix~\ref{subsec:beta-inference-app} makes this
precise and states the condition.}
\end{remark}

\section{Applications}
\label{sec:applications-main}

This section applies the framework to two large-scale anti-poverty programs in
development economics that explicitly study general-equilibrium (GE) effects:
(i) the Smartcards reform of India's National Rural Employment Guarantee Scheme
analyzed by \citet{muralidharan2023generaleq}; and (ii) the GiveDirectly
cash-transfer experiment in rural Kenya studied by \citet{egger2022general},
together with the follow-up structural analysis of \citet{walker2024slack}.

As alluded to in the introduction, the two settings are chosen to be similar ex ante, but we find that they yield contrasting conclusions.

\subsection{\texorpdfstring{Revisiting \citet{muralidharan2023generaleq}}{Revisiting Muralidharan et al. (2023)}}
\label{subsec:mns-main}

We begin with the general-equilibrium effects of India's
National Rural Employment Guarantee Scheme (NREGS) studied by
\citet{muralidharan2023generaleq}. The program guarantees up to 100 days of
public employment per year to rural households. \citet{muralidharan2023generaleq}
exploit a large randomized rollout of biometric ``Smartcards'' that improved
the implementation of NREGS at the mandal level, and combine direct effects on
treated mandals with spillovers to nearby untreated areas.\footnote{Administrative units nest as district $\supset$ mandal $\supset$
Gram Panchayat (GP): a mandal is a sub-district (average population roughly
62,500 in the study sample), and a GP is a village-cluster government
comprising one or more census villages. Treatment was randomized at the
mandal level, outcomes are measured at the GP (or household) level, and
spatial exposure is computed from geocoded 2001 Census village locations
\citep{muralidharan2023generaleq}. Our Stage~1 below implements the exposure
measure at the census-village level; see Appendix~\ref{app:mns-stage1}.}

\paragraph{Ring-based exposure measure.}
In their main specification (their equation~(1)),
\citet{muralidharan2023generaleq} estimate regressions of the form
\begin{equation}
  Y_{pmd}
  \;=\;
  \alpha
  \;+\;\beta_0 T_m
  \;+\;\beta_1 \mathrm{NR}_{pmd}^{20}
  \;+\; X_{pmd}'\gamma
  \;+\;\varepsilon_{pmd},
  \label{eq:mns-main-eq}
\end{equation}
where $Y_{pmd}$ is an outcome such as NREGS earnings, wage-labor income, or
total income for Gram Panchayat (GP) $p$ in mandal $m$ and district $d$,
$T_m$ is the Smartcard treatment indicator for mandal $m$, and $X_{pmd}$ are
controls. The key spillover variable $\mathrm{NR}_{pmd}^{20}$ is the share of
GPs in other mandals within 20 km of $p$ that were assigned to treatment; it is a
ring-based neighborhood-treatment measure constructed at the GP level using
geographic distance, an instance of the ring exposure map of
Example~\ref{ex:ring-egger} with radius $\theta = 20$ km.

\greenrev{The headline finding of \citet{muralidharan2023generaleq} is that 86\% of beneficiary
income gains came from non-program earnings: from higher private-sector wages and employment
rather than from NREGS payments themselves, with statistically significant spatial spillovers
of wages and employment onto nearby areas. This finding is a decomposition of the total effect
on the treated into a direct component (captured by $\beta_0$) and a neighborhood component
operating through $\mathrm{NR}^{20}_{pmd}$. The authors fix the radius entering that
decomposition a priori at 20 km, motivated by commuting distances and the geographic scale of
local labor markets, rather than estimate it.}
\greendel{The decomposition therefore rests on that choice.}
\greenrev{The decomposition therefore rests on a choice that is neither estimated nor tested
against the design.}

\paragraph{Stage~1: letting the data choose the radius.}
We apply the Stage~1 design-based GMM procedure of
Sections~\ref{sec:design-orthogonality} and~\ref{sec:DB-consistency-AN-affinity},
taking the ring radius as the tuning parameter $\theta$. The exposure measure
follows \citet{muralidharan2023generaleq}'s construction: for each GP we compute the
population-weighted fraction of non-same-mandal treated census villages within
radius $\theta$, using the same 6,662-village geometry as the original paper.
\greendel{The
design functions $\psi_i(W)$ are population-weighted annulus averages of the mandal
treatment indicator in distance bands around GP $i$, built from the same village-level distance structure as the
exposure; residualizing them as in~\eqref{eq:Ri-theta-def} gives the design-side
residuals $R_{i,\theta}\psi(W)$, and the averaged product moments
$Y_iR_{i,\theta}\psi(W)$ form the GMM criterion~\eqref{eq:design-gmm-criterion}.}
\greenrev{The design functions are population-weighted annulus averages of the mandal
treatment indicator in distance bands around GP $i$, built from the same
village-level distance structure as the exposure.}
Because physical interaction distances vary continuously in space, we treat the
radius as a continuous parameter and search over a fine 0.1 km grid. We report
Wald confidence intervals justified by the asymptotic normality result and the design-based overidentification
($J$) test of Section~\ref{sec:DB-consistency-AN-affinity}. Full technical details (the
two-step criterion, Wald interval construction, and overidentification test) are in
Appendix~\ref{app:mns-stage1}.

\greendel{Table~\ref{tab:mns-stage1-main} reports the Stage-1 radius estimates for the three
outcomes that enter \citet{muralidharan2023generaleq}'s general-equilibrium decomposition. For
each outcome it gives the selected radius $\hat\theta$, its Wald standard error and 95\%
confidence interval, and the overidentification statistic $J(\hat\theta)$ with its $p$-value.
The selected radii are 23.7 km for total income, 14.1 km for NREGS earnings, and 25.8 km for
wage-labor income. The $J$ statistics are small, and none is significant at conventional levels
($p = 0.42$, $0.54$, and $0.40$), so at the selected radii the design-based moments give no
evidence against the ring specification.}
\greenrev{Table~\ref{tab:mns-stage1-main} reports the Stage-1 estimates: the selected radii are
23.7 km for total income, 14.1 km for NREGS earnings, and 25.8 km for wage-labor income. For
each outcome the table also gives the Wald standard error and 95\% confidence interval for
$\hat\theta$, and the overidentification statistic $J(\hat\theta)$ with its $p$-value. The $J$
statistics are small and none is significant at conventional levels ($p = 0.42$, $0.54$, and
$0.40$), so at the selected radii the design-based moments give no evidence against the ring
specification.}

\greendel{Figure~\ref{fig:mns-objective-main} plots the Stage-1 criterion against the candidate
radius for each outcome, with the dashed vertical line marking the selected radius
$\hat\theta$. The criterion is what selects the radius: it is large where the ring's implied
moments are far from zero and small where they are close, so the design favors the radius that
minimizes it. In all three panels the criterion is high at short radii and falls as the radius
widens.}
\greenrev{Figure~\ref{fig:mns-objective-main} plots the Stage-1 criterion against the candidate
radius for each outcome, with the dashed vertical line marking the selected radius
$\hat\theta$; the criterion is small where the ring's implied design-based moments are close to
zero. In all three panels the criterion is high at short radii, falls to a clear interior
minimum at the selected radius, and rises beyond it, so the design identifies a finite
interaction radius for every outcome. The total-income criterion also has local minima near the
two component radii, consistent with total income aggregating a more local and a
broader-reaching margin.}

\begin{table}[htbp]
  \centering
  \caption{Stage-1 radius estimates for \citet{muralidharan2023generaleq} outcomes.}
  \label{tab:mns-stage1-main}
  \begin{tabular}{lcccc}
    \toprule
    Outcome $Y$ & $\hat\theta$ (km) & Wald SE (km) & 95\% Wald CI & \rev{$J(\hat\theta)$ ($p$-value)} \\
    \midrule
        Total income   & 23.7 & 1.22 & [21.30,\;26.10] & \rev{6.01 (0.422)} \\
    NREGS earnings  & 14.1 & 0.64 & [12.84,\;15.36] & \rev{5.05 (0.538)} \\
    Wage-labor income  & 25.8 & 1.05 & [23.75,\;27.85] & \rev{6.26 (0.395)} \\
    \bottomrule
  \end{tabular}
  \begin{flushleft}
  \footnotesize Notes: Population-weighted village-level exposure and
  instruments (3 km bands, 0.1 km radius grid).
  Wald SE and CI treat the radius as a continuous parameter; see
  Appendix~\ref{app:mns-stage1} for details.
  The $J$-statistic is the overidentification test\rev{, with $6$ degrees of freedom; $p$-values in parentheses}.
  \end{flushleft}
\end{table}

\begin{figure}[htbp]
  \centering
    \includegraphics[width=0.48\textwidth]{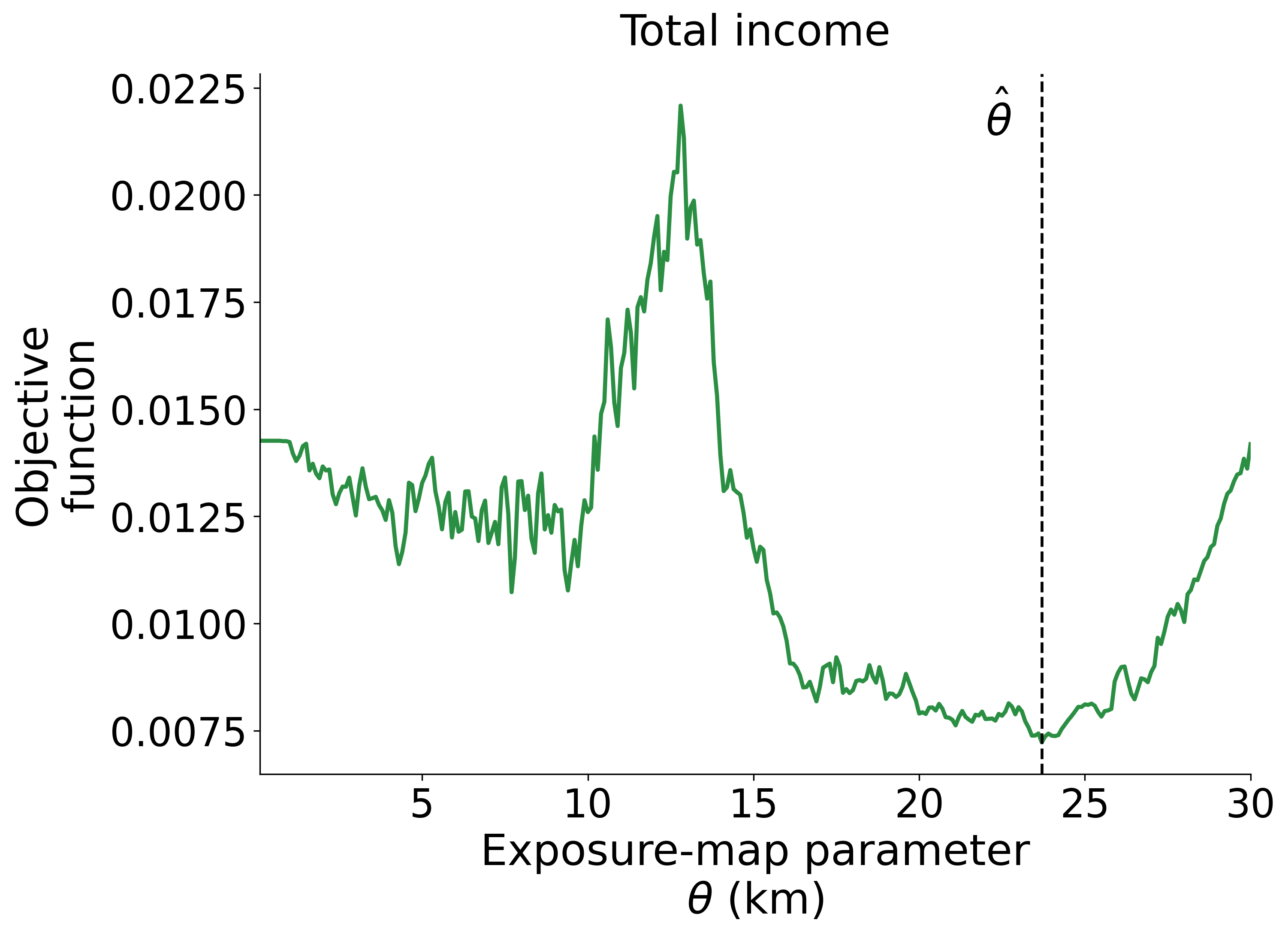}

  \vspace{0.5em}
  \includegraphics[width=0.48\textwidth]{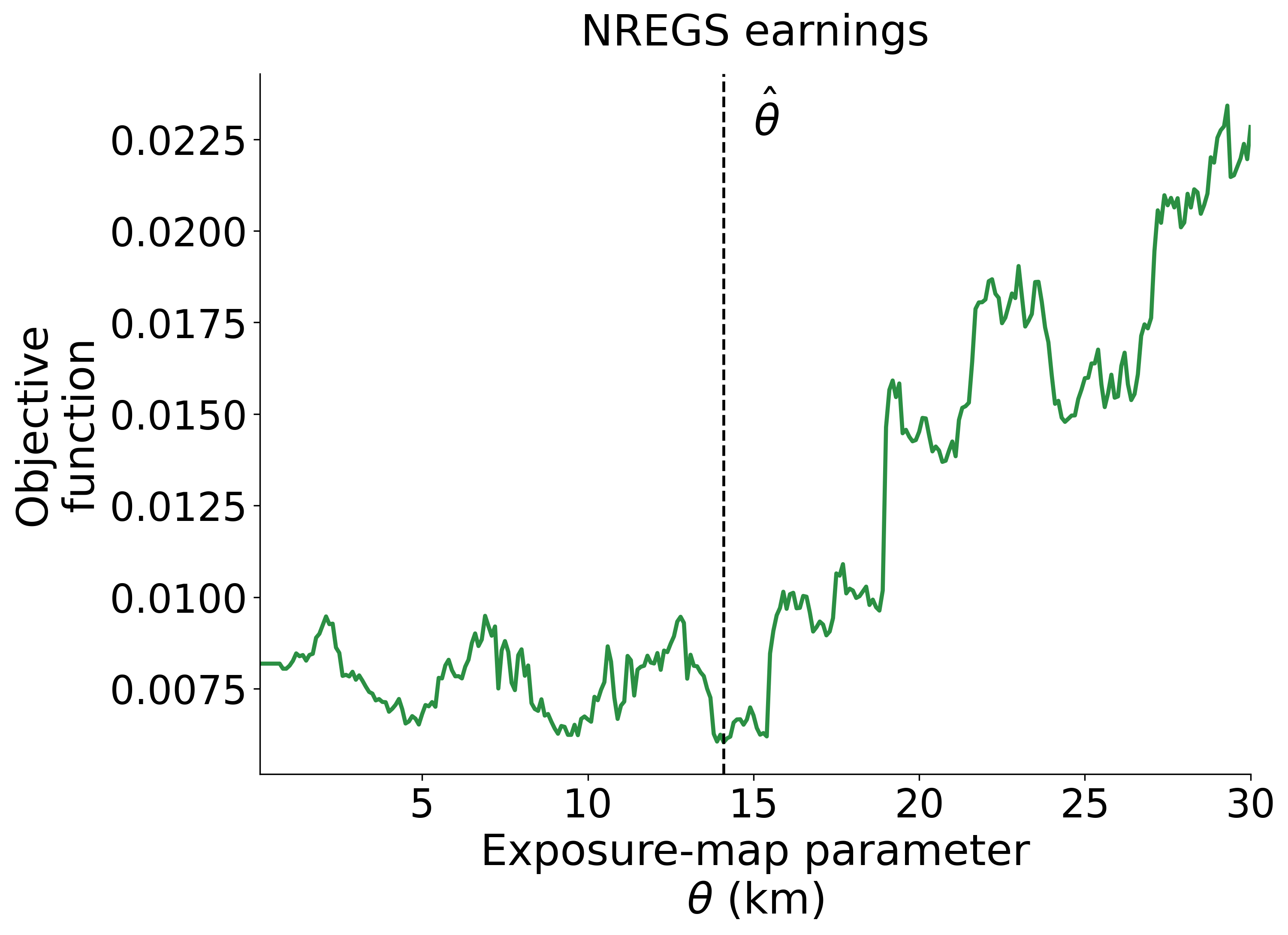}
  \includegraphics[width=0.48\textwidth]{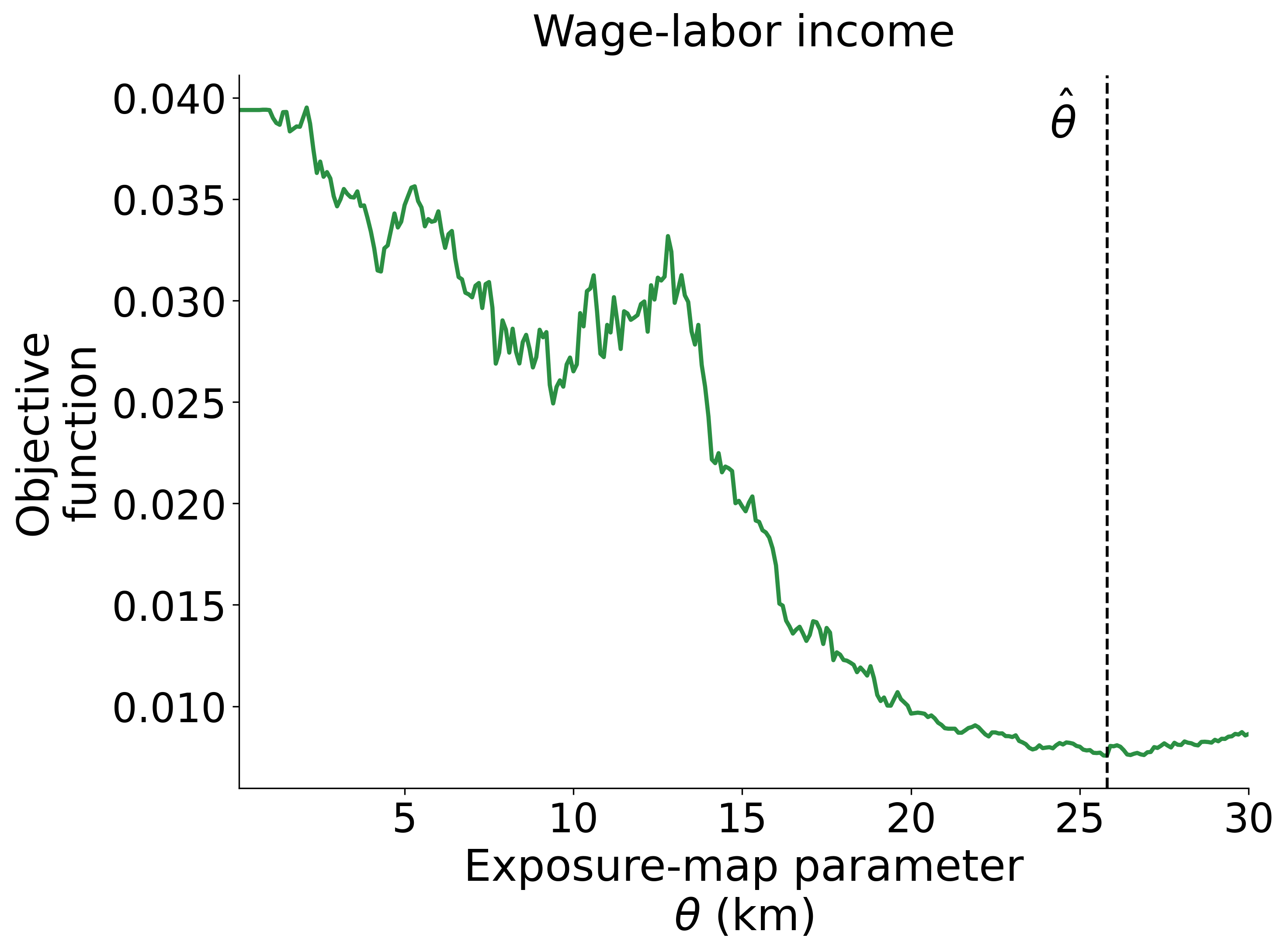}
  \caption{\rev{Stage-1 objective functions for total income (top),
  NREGS earnings (bottom left), and wage-labor income (bottom right).}
  Dashed line: selected radius $\hat\theta$. Population-weighted village-level
  exposure and instruments (3 km bands), strict placebo design.}
  \label{fig:mns-objective-main}
\end{figure}

\greendel{In this application, the design-based moments broadly support the 20 km exposure
choice that \citet{muralidharan2023generaleq} make on institutional grounds, in two ways.
First, all three objective functions have clear interior minima
(Figure~\ref{fig:mns-objective-main}), and the overidentification statistics are small at the
selected radii (last column of Table~\ref{tab:mns-stage1-main}), so the data give no strong
evidence against the ring specification. Second, and most directly relevant to the original
paper, the criterion function for total income, the headline outcome for the income-source
decomposition, is minimized at 23.7 km (first row of Table~\ref{tab:mns-stage1-main}; top panel
of Figure~\ref{fig:mns-objective-main}), close to the 20 km labor-market scale they adopt a
priori.}
\greenrev{Taken together, the exhibits broadly support the exposure choice that
\citet{muralidharan2023generaleq} make on institutional grounds: the ring specification is not
rejected for any outcome, and the estimated radius for total income, the outcome underlying the
decomposition, is 23.7 km, close to the 20 km they adopt a priori (first row of
Table~\ref{tab:mns-stage1-main}).}

\greenrev{NREGS earnings and wage-labor income decompose total income into a more local and a
broader-reaching component (rows two and three of Table~\ref{tab:mns-stage1-main}), and they
separate in the direction the authors' own mechanism predicts: NREGS earnings, the direct
public-employment margin, localize to 14.1 km, whereas wage-labor income, which operates
through private labor markets, reaches farther, to 25.8 km.\footnote{\citet{muralidharan2023generaleq}
emphasize that jobcard holders ``could only do NREGS work in their own villages,'' so the
public-employment margin should track local program access; the 14.1 km estimate has a tight
95\% confidence interval of [12.8, 15.4] km. In their mechanism for wage-labor income, better
NREGS implementation raises reservation wages and propagates through markets they describe as
``spatially integrated beyond individual GPs or even mandals.''}
The total-income radius, 23.7 km, falls between them, and the 20 km choice lies inside this
range.}

\paragraph{Stage~2: the income-source decomposition.}
Re-estimating \citet{muralidharan2023generaleq}'s outcome equations at the outcome-specific radii leaves their conclusions essentially intact (Table~\ref{tab:mns-stage2-main}). Total income
($9344.2$) and wage-labor income ($7627.9$) are approximately unchanged from the 20 km
specification; only NREGS earnings move materially, falling from $1294.6$ to
$758.7$. \citet{muralidharan2023generaleq} attribute 14\% of the income gain to
program earnings and 86\% to non-program earnings. At the design-based radii the
program share falls to 8.1\% ($758.7$ of $9344$), so non-program
earnings account for roughly 91.9\% of the gain.\footnote{Estimating sources at different radii need not yield shares that sum to one. As an alternative accounting exercise, restricting to the two sources with outcome-specific radii (NREGS and wage labor) gives a program share of 9.0\%, of the same small magnitude as the 8.1\% in the text.} Their central finding, that the
income gains come predominantly from non-program (general-equilibrium) earnings
rather than direct program payments, is thus robust to estimating the radius.

The design-based standard errors are close to their fixed-radius
counterparts despite additionally propagating the sampling uncertainty in
$\hat\theta$, so treating the radius as estimated rather than known does
not materially inflate the uncertainty around the decomposition.

\begin{table}[htbp]
\centering
\caption{\rev{Treatment effects for
\citet{muralidharan2023generaleq} outcomes: original 20\,km specification vs.\
design-consistent radii.}}
\label{tab:mns-stage2-main}
\rev{%
\resizebox{0.98\linewidth}{!}{%
\begin{tabular}{lcccc}
\toprule
Outcome & 20\,km spec.\ (MNS) & Design-based ($\hat\theta$) & $\hat\theta$ (km) & $\Delta$ vs.\ 20\,km \\
\midrule
Total income      & 9579.5 (4538.7) & 9344.2 (4969.1) & 23.7 & $-235.3$ \\
NREGS earnings    & 1294.6 (1061.0) & \phantom{0}758.7 \phantom{0}(624.6) & 14.1 & $-535.9$ \\
Wage-labor income & 7607.2 (2720.7) & 7627.9 (2996.3) & 25.8 & $+20.7$ \\
\bottomrule
\end{tabular}%
}}
\begin{flushleft}
\footnotesize
\emph{Notes:} \rev{Entries are adjusted treatment effects with standard errors in
parentheses. The ``20\,km spec.'' column is \citet{muralidharan2023generaleq}'s
original fixed-radius specification; the ``Design-based'' column re-estimates the
same Equation~\eqref{eq:mns-main-eq}-style regression (treatment, neighborhood
exposure, district fixed effects, strata controls; baseline total income
included only for total income) at the outcome-specific radius $\hat\theta$
selected in Stage~1. Standard errors are clustered at the mandal level; the
design-based standard error additionally propagates the Stage-1 Wald variance of
$\hat\theta$ via the delta method.}
\end{flushleft}
\end{table}

\subsection{Revisiting Egger et al.\ (2022)}
\label{subsec:egger-main}

\greendel{\citet{egger2022general} study the GiveDirectly cash-transfer experiment in rural Kenya,
whose two-tier randomization is what makes the exposure map estimable here: treatment was assigned
both at the village level and through a second-tier saturation design across sublocations,
generating experimental variation in indirect exposure to transfers at the village and
local-market level.}
\greenrev{\citet{egger2022general} study the GiveDirectly cash-transfer experiment in rural
Kenya, in which treatment was randomized both across villages and, through a second-tier
saturation design, across sublocations. The two-tier design generates experimental variation in
indirect exposure to transfers at the village and local-market level; this is the variation our
Stage~1 exploits below.}\footnote{See
\citet{egger2022general} for details on the experimental design and data
collection.}

\greenrev{Their main specification relates household outcomes to the amount transferred to the
household's own village and to the amount transferred to other villages within 2 km. The 2 km
outer radius, the distance beyond which transfers are assumed to have no effect, is selected
by a Bayesian information criterion over concentric 2 km distance bands. We refer to the single
included band as their scalar $0$--$2$ km spillover measure; in the language of
Section~\ref{sec:design-orthogonality}, it is a one-band member of the annular exposure-index
class we estimate below.}

\greenrev{The headline finding of \citet{egger2022general} is a local transfer multiplier of
roughly 2.5 (their Table~V), together with large positive spillovers onto non-recipient
households. The spillover component of that accounting is estimated from the $0$--$2$ km
exposure, so the multiplier inherits the support choice; and the support is selected by
in-sample predictive fit rather than estimated from, or tested against, the experimental
design.}

\paragraph{Stage~1.}
\greendel{We follow the same two-stage template as in the
\citet{muralidharan2023generaleq} application: Stage~1 estimates the exposure
map from the design-based moments, and Stage~2 runs the authors' original
outcome regression with the estimated map in place of their fixed one. The
applications differ in the exposure map being estimated.}
\greenrev{We follow the same two-stage template as in the
\citet{muralidharan2023generaleq} application.} Let
\[
A_v(W)
=
\left(
T_{v,0-2}^{\neg v}(W),
T_{v,2-4}^{\neg v}(W),
\ldots,
T_{v,18-20}^{\neg v}(W)
\right)'
\]
denote neighboring-village transfer amounts in the ten annuli up to 20 km.
\greendel{The first of these, $T_{v,0-2}^{\neg v}(W)$, is \citet{egger2022general}'s own spillover
variable, the amount transferred to other villages within $2$ km of the household's village; we
call it their scalar $0$--$2$ km spillover measure.}
\greenrev{The first of these, $T_{v,0-2}^{\neg v}(W)$, is their scalar $0$--$2$ km spillover
measure defined above.} For a
candidate \emph{support} $R_m=2m$, \rev{the radius beyond which transfers are
assumed to have no causal effect on a household,} Stage~1 forms the linear
spillover exposure index
\[
g_v^{(m)}(W;\theta_m)
=
\sum_{j=1}^{m}
\theta_{m,j}T_{v,2(j-1)-2j}^{\neg v}(W),
\qquad
\mathbf 1'\theta_m=1,
\]
and estimates $\theta_m$ from the design-based exposure-sufficiency moments.\footnote{The scale of $\theta_m$ is immaterial for
exposure sufficiency: replacing $\theta_m$ by $c\theta_m$, with $c\neq0$, only
rescales the downstream coefficient on the index. We impose $\mathbf 1'\theta_m=1$
and let the Stage~2 coefficient absorb scale. When $m=1$, this normalization
implies $\theta_{1,1}=1$. Thus the 2 km case coincides with the original
Egger et al.\ one-ring exposure measure, and Stage~1 has no nontrivial shape
parameter to estimate.} In
this notation, the exposure map used in Stage~1 is
\[
\left(
T_v^{\mathrm{own}}(W),
g_v^{(m)}(W;\theta_m)
\right).
\]
\greendel{Stage~1 estimates the annulus weights $\theta_m$: it asks which weighted
annular exposure index makes the remaining assignment variation orthogonal to
outcomes under the known randomization design, the same role the radius played in
the ring application.}

\paragraph{Stage~2.}
\greendel{Stage~2 returns to the original \citet{egger2022general} IV specification, exactly as in
the \citet{muralidharan2023generaleq} application.}
\greenrev{Stage~2 returns to \citet{egger2022general}'s original IV specification, as in
the \citet{muralidharan2023generaleq} application, replacing their scalar $0$--$2$ km spillover
measure with the estimated index; standard errors account for the generated weights
$\widehat\theta_m$ by the delta method.} We keep the same outcome equation
and instrumenting logic:
\[
Y_{iv}
=
\alpha
+
\beta_{\mathrm{own},m}T_v^{\mathrm{own}}(W)
+
\gamma_m g_v^{(m)}(W;\widehat\theta_m)
+
X_{iv}'\delta
+
\varepsilon_{iv}.
\]
\greendel{Standard errors use the delta method to account for the generated exposure
weights $\widehat\theta_m$.}

\greendel{How much the support choice matters can be read directly off the Stage-2 estimates as the
support varies. Figure~\ref{fig:egger-radius-path-core} traces the recipient and non-recipient
total effects for the three aggregate outcomes, from $R=2$ to $20$ km.}
\greendel{Two endpoints anchor the path. The leftmost point, $R=2$ km, reproduces the
original \citet{egger2022general} specification: with only the $0$--$2$
km annulus included, the normalized index collapses to their single $0$--$2$ km
ring ($\theta_{1,1}=1$). The rightmost point, $R=20$ km, is the opposite
extreme: the fully flexible index that imposes no support restriction across the
ten annuli. Each interior point reports the estimate obtained when we impose, ex
ante, that transfers beyond $R$ km do not affect the household.}
\greendel{Two patterns stand out. The recipient total effects are stable across the
path: their point estimates move
little and they mostly remain
significant throughout. This is what one would expect if the recipient response
is driven mainly by the direct, own-village transfer, which does not depend on
the spillover radius. The non-recipient (pure spillover) effects behave very
differently: as the support widens, their point estimates attenuate and their
standard errors grow, and by the fully flexible $20$ km support they are
generally no longer distinguishable from zero.
The standard errors along both paths widen and tighten unevenly rather than rising
monotonically with the support, most visibly near $16$ km; as Appendix~\ref{app:egger-extra},
part~D shows, this reflects collinearity among the annular transfer variables at particular
supports rather than the strength of spillovers at those distances.}
\greenrev{Figure~\ref{fig:egger-radius-path-core} traces the Stage-2 recipient and non-recipient
effects as the support widens from the authors' 2 km to the fully flexible 20 km: the recipient
point estimates are stable across the entire path, while the non-recipient effects attenuate
and lose precision until, by 20 km, they are generally no longer distinguishable from
zero.\footnote{The confidence intervals along both paths do not necessarily widen monotonically with the
support; Appendix~\ref{app:egger-extra}, part~D, discusses potential explanations, including
collinearity among the annular transfer variables at particular supports.} The
two endpoints anchor the comparison. At $R=2$ km the normalized index collapses to their single
$0$--$2$ km ring ($\theta_{1,1}=1$), so the leftmost point replicates the original
\citet{egger2022general} specification; at $R=20$ km the index imposes no support
restriction across the ten annuli; each interior point imposes, ex ante, that transfers beyond
$R$ km do not affect the household. The contrast between the two paths is what one would
expect: recipients respond mainly to the direct, own-village transfer, which does not depend on
the support, whereas non-recipients respond only through spillovers, which do. The lower panel
of each plot reports the $p$-value path of the support test introduced next.}

\begin{figure}[htbp]
\centering
\includegraphics[width=0.48\textwidth]{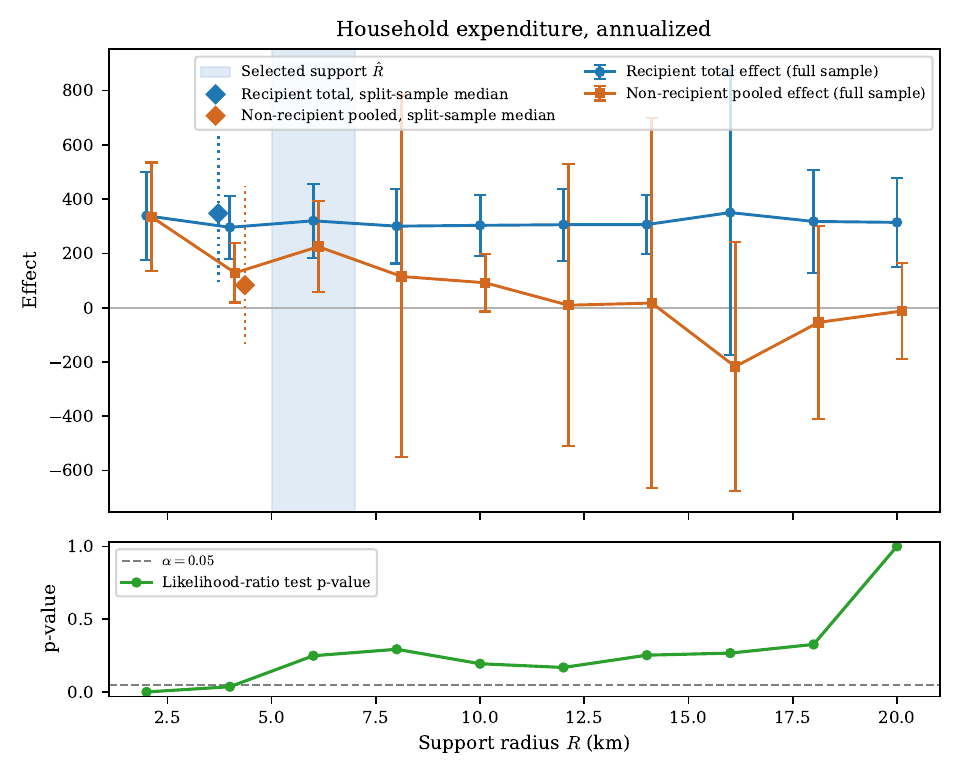}
\includegraphics[width=0.48\textwidth]{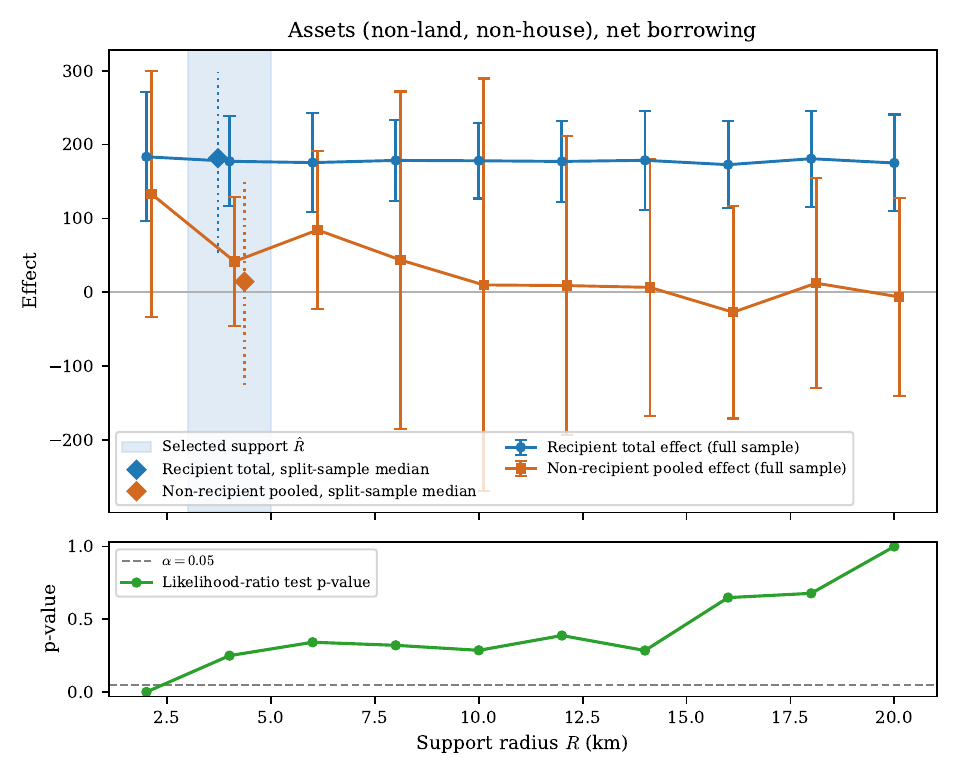}

\vspace{0.5em}

\includegraphics[width=0.52\textwidth]{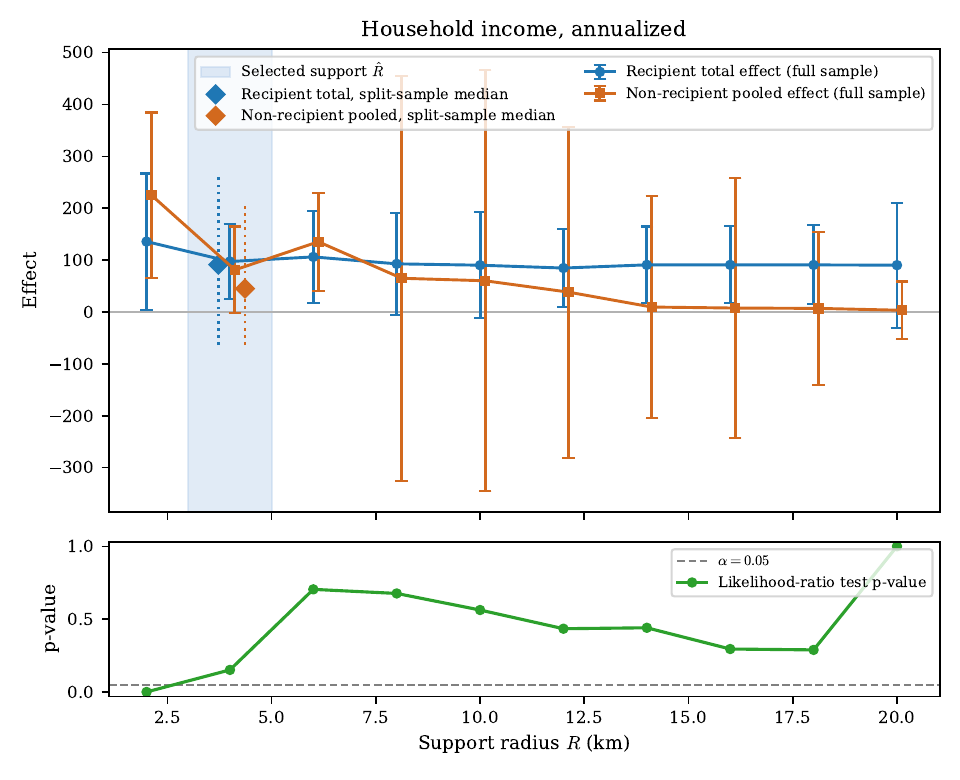}
\caption{Recipient and non-recipient effects across candidate supports,
\citet{egger2022general} application}
\label{fig:egger-radius-path-core}
\begin{flushleft}
\footnotesize
\emph{Notes:} The figure plots recipient total effects and non-recipient pooled
spillover effects across candidate supports $R=2,\ldots,20$ km. The shaded band
marks the selected support. The lower panel reports the likelihood-ratio test $p$-value
path, and the dashed horizontal line marks $\alpha=0.05$. \bluerev{Diamonds report
sublocation-level split-sample medians with their median intervals (dotted
whiskers).} The appendix reports additional outcome diagnostics.
\end{flushleft}
\end{figure}

\paragraph{\rev{Choosing the support by a design-based test.}}
\greendel{Ex ante, we would prefer to impose no a priori restriction on the support and report
the fully flexible $20$ km index. The radius path above shows, however, that once all ten annuli are left unrestricted, the pure-spillover effects are estimated too imprecisely to be informative, and at $20$ km they are indistinguishable from
zero. We therefore look for the most parsimonious support that remains adequate,
the same goal behind the Bayesian information criterion that
\citet{egger2022general} use to set their outer radius. What differs is how
adequacy is judged. Exposure sufficiency makes the support a testable
restriction: if the annuli beyond $R_m$ carry no outcome-relevant assignment
variation, zeroing their coefficients leaves the design-based moments satisfied,
which is the support form of the exposure-sufficiency restriction that the
overidentification test of Section~\ref{sec:DB-consistency-AN-affinity} (Corollary~\ref{cor:db-J-test}) evaluates. We
test the unrestricted $20$ km index against the restricted index that zeroes all
annuli beyond $R_m$ and select the smallest support the test does not reject,
implemented as a likelihood-ratio statistic along the path (see the notes to
Figure~\ref{fig:egger-radius-path-core}). Whereas the BIC rewards cross-sectional
predictive fit, the design-based test asks whether the omitted annuli are
consistent with the randomization, the identifying assumption the downstream
estimates rely on.}
\greenrev{Neither endpoint can serve as the downstream specification: 2 km is an untested
restriction, and the fully flexible 20 km index estimates the pure-spillover effects too
imprecisely to be informative. We therefore select the smallest support the design does not
reject. Exposure sufficiency makes the support a testable restriction: if the annuli beyond
$R_m$ carry no outcome-relevant assignment variation, zeroing their coefficients leaves the
design-based moments satisfied, which is the restriction the overidentification test of
Section~\ref{sec:DB-consistency-AN-affinity} (Corollary~\ref{cor:db-J-test}) evaluates. We test
the unrestricted 20 km index against the restricted index that zeroes all annuli beyond $R_m$,
implemented as a likelihood-ratio statistic along the path (lower panels of
Figure~\ref{fig:egger-radius-path-core}), and select the smallest non-rejected support. This is
the same parsimony goal behind the Bayesian information criterion \citet{egger2022general} use
to set their outer radius; the difference is that the BIC scores cross-sectional predictive
fit, whereas the design-based test scores consistency with the randomization, the identifying
assumption the downstream estimates rely on.}\footnote{As a check on post-selection and spatial-dependence concerns, the diamonds in
Figure~\ref{fig:egger-radius-path-core} report a sublocation-level sample-splitting
exercise. We split at the level of \emph{sublocations} (the higher administrative
tier at which the saturation design randomizes treatment intensity, each containing
many villages and a local market) rather than at the village or household level:
because cash-transfer spillovers operate through the local market
\citep{walker2024slack}, dependence is contained within a sublocation and is
approximately negligible across them, so assigning whole sublocations to the
selection and estimation subsamples makes the two approximately independent while
leaving the within-sublocation spillovers the exposure map captures intact. We aggregate across repeated splits
following the median procedure of \citet{chernozhukov2025fisher}: we report the median point estimate across splits and form the band
from the median lower and median upper confidence limits, yielding an approximately
$90\%$ interval.
\greendel{The split estimates are noisier but preserve the message: $2$ km is
too local and the selected supports are modestly larger.}
\greenrev{The split estimates are noisier but point the same way: $2$ km is too local and the
selected supports are modestly larger.}}

\rev{This procedure rejects the original $2$ km support for all three core outcomes\greenrev{: in
every lower panel of Figure~\ref{fig:egger-radius-path-core}, the $p$-value path lies below
$0.05$ at 2 km and \ridgerev{crosses above it by 4 km for assets and household income
and by $6$ km for household expenditure}}.
\bluerev{\ridgerev{The smallest non-rejected support is $6$ km for household expenditure
and $4$ km for assets and household income}; across the broader set of outcomes in
Appendix~\ref{app:egger-extra} the selected supports are $4$--$6$ km, occasionally larger.} The $2$ km exposure is thus too
local from the design's perspective.}

\greendel{The comparison isolates what the support choice does, because the exposure measure is the
only thing that changes. Table~\ref{tab:egger-core-IV-linear-index} sets \citet{egger2022general}'s
original estimates against the design-based estimates at the selected supports $\hat R$: their
scalar $0$--$2$ km spillover measure is replaced by the estimated index at $\hat R$, and nothing
else is altered.}
\greendel{The comparison echoes the radius path, and it separates the two margins sharply.
The recipient total effects survive the change of exposure map: all three point estimates move by
less than half of \citet{egger2022general}'s own standard error. In contrast, the
non-recipient side is where the exposure choice bites. Every pooled spillover attenuates by about one
to one and a half of \citet{egger2022general}'s standard errors: expenditure from $334.7$ to $128.9$,
assets from $135.4$ to $41.9$, and income from $225.0$ to $81.4$, the last no longer distinguishable
from zero at the $5\%$ level. Relative to the \citet{muralidharan2023generaleq} application, then, the
exposure-map choice does matter here.}
\greenrev{Table~\ref{tab:egger-core-IV-linear-index} isolates what the support choice does,
since the exposure measure is the only thing that changes between the two columns; the change
bites on the non-recipient margin only. The recipient total effects survive the change of
exposure map: all three point estimates move by less than half of \citet{egger2022general}'s
own standard error. The non-recipient side is where the exposure choice bites:
every pooled spillover attenuates by about one to one and a half of
\citet{egger2022general}'s standard errors. Expenditure falls from $334.7$ to \ridgerev{$224.9$} and
remains significant; income falls from $225.0$ to \ridgerev{$81.1$} and is no longer distinguishable from
zero at the $5\%$ level; assets, not significant in the original specification, falls from
\ridgerev{$133.1$ to $41.6$}.\footnote{The ``Egger IV'' column re-estimates the original 2 km
specification on our replication sample; the estimates closely match the published Table~1
values (see the table notes).} Relative to the \citet{muralidharan2023generaleq} application,
then, the exposure-map choice does matter here.}

\greendel{Two different tests appear in this comparison, and it is worth keeping
them apart. The supports $\hat R$ are chosen by the likelihood-ratio test
described above: $\hat R$ is the smallest support that test does not reject.
The last column of Table~\ref{tab:egger-core-IV-linear-index} instead reports a
standard GMM overidentification ($J$) test of the selected linear-index
specification, and the $J$ test does reject for expenditure and assets.
The two findings are compatible: an ambient, market-level component that is
common to households within a local market is not spanned by an annular index
at any support, which is precisely the mechanism of \citet{walker2024slack}
discussed below. The residual $J$ rejection therefore sharpens, rather than
muddies, the contrast with the \citet{muralidharan2023generaleq} application,
where the ring class is not rejected at the selected radii.}
\greenrev{The last column of Table~\ref{tab:egger-core-IV-linear-index} reports the
overidentification ($J$) test of the selected linear-index specification. The test rejects for
expenditure and assets (\ridgerev{$p = 0.007$} and $0.001$) and does not reject for income
(\ridgerev{$p = 0.134$}).}

\begin{table}[htbp]
  \centering
  \caption{Core IV components: Egger et al.\ vs. design-based estimates}
  \label{tab:egger-core-IV-linear-index}
  \footnotesize
  \setlength{\tabcolsep}{5pt}
  \begin{tabular}{lcccc}
    \toprule
     & Egger IV & Design-based IV & $\Delta\beta,\ \Delta$SE & $J$ p-val \\
    \midrule
    \multicolumn{5}{l}{\emph{Household expenditure, annualized}} \\
    \quad Recipients     & 338.6 (100.6) & 320.0 (69.4) & $-18.6,\ -31.2$ & 0.007 \\
    \quad Non-recipients & 334.7 (131.1) & 224.9 (85.2) & $-109.8,\ -45.9$ & 0.007 \\[0.3em]
    \multicolumn{5}{l}{\emph{Assets (non-land, non-house), net borrowing}} \\
    \quad Recipients     & 183.4 (55.2) & 177.3 (31.2) & $-6.1,\ -24.0$ & 0.001 \\
    \quad Non-recipients & 133.1 (101.3) & 41.6 (44.8) & $-91.5,\ -56.5$ & 0.001 \\[0.3em]
    \multicolumn{5}{l}{\emph{Household income, annualized}} \\
    \quad Recipients     & 135.7 (82.1) & 97.2 (36.9) & $-38.5,\ -45.1$ & 0.134 \\
    \quad Non-recipients & 225.0 (100.1) & 81.1 (42.6) & $-143.8,\ -57.4$ & 0.134 \\
    \bottomrule
  \end{tabular}
  \begin{flushleft}
  \footnotesize\emph{Notes:} Entries are IV point estimates with standard errors in parentheses. Egger IV values re-estimate the original 2~km Table~1 specification on our replication sample (cluster standard errors); they closely match the published Table~I values. Design-based IV values use each outcome's selected design-based support ($\hat R = 6.0$~km for household expenditure, $4.0$~km for assets and income); the design-side conditional expectations entering the index are estimated by cross-validated ridge regression on a quadratic polynomial in the exposure index, using $200$ placebo draws of the assignment. Recipient entries report total effects; non-recipient entries report pooled spillover effects, both with joint delta-method standard errors. $J$ p-val is the design-GMM overidentification p-value for the selected support; the support-selection rule is the likelihood-ratio test.
  \end{flushleft}
\end{table}


\greendel{This pattern has a natural interpretation, and it is one a fit-based radius rule is bound
to miss.}
\greenrev{The spillover attenuation and the residual $J$ rejection are both consistent with a
market-level ``ambient'' component in the spillovers, a component that a fit-based radius rule
misses.} \citet{walker2024slack} argue that
local GE responses include a market-level ``ambient'' or ``slack'' component:
\[
Y_i
\;=\;
\alpha
\;+\; \beta^{\mathrm{loc}} g^{\mathrm{loc}}(W;X_i,R_0)
\;+\; \underbrace{A_{m(i)}(W)}_{\text{market-level ambient}}
\;+\; u_i,
\]
where $g^{\mathrm{loc}}$ captures very local spillovers and $A_{m(i)}(W)$ is
common to households in local market $m(i)$.
\greendel{\citet{egger2022general} choose the radius $r$ by
minimizing $\mathrm{BIC}(r)=N\log\hat\sigma_r^2+k(r)\log N$, where $\hat\sigma_r^2$
is the residual variance from regressing $Y_i$ on the radius-$r$ ring exposures and
controls, $k(r)$ is the number of parameters in that regression, and $N$ is the
number of villages; the radius is thus scored by how much the rings reduce residual
variance, traded off against the penalty $k(r)\log N$. But a component that is
approximately common within a market is largely absorbed by the intercept and
controls, so outer rings that mainly proxy $A_{m(i)}(W)$ yield only small reductions
in $\hat\sigma_r^2$ and BIC prefers short radii,
even when the beyond-2\,km transfers generating that component move the total GE
response.}
\greenrev{A component that is approximately common within a market is largely absorbed by the
intercept and controls, so outer annuli that mainly proxy $A_{m(i)}(W)$ barely reduce residual
variance and the BIC prefers short radii, even when the beyond-2 km transfers generating that
component move the total general-equilibrium response; and no annular index spans such a
component at any support, which is why the $J$ test still rejects at $\hat
R$.\footnote{\citet{egger2022general} choose the radius $r$ by minimizing
$\mathrm{BIC}(r)=N\log\hat\sigma_r^2+k(r)\log N$, where $\hat\sigma_r^2$ is the residual
variance from regressing $Y_i$ on the radius-$r$ ring exposures and controls, $k(r)$ is the
number of parameters, and $N$ is the number of villages: the radius is scored by how much the
rings reduce residual variance, traded off against the penalty.}}
The design-based diagnostic instead scores the radius by consistency with
the randomization rather than by residual fit, so it retains those annuli and
extends the support. In addition, broader market-level forces such as inflation or
price responses can offset part of the measured income and non-recipient gains.
Together these explain why the support diagnostic moves beyond 2\,km even as some
downstream estimates become smaller or less precise.

\paragraph{Implications for multiplier accounting.}
\greendel{Finally, the support choice feeds into the local fiscal multiplier reported
in \citet{egger2022general}'s Table~V. Table~\ref{tab:egger-multiplier-results} recomputes that accounting with the
design-selected indexes, giving \bluerev{an expenditure multiplier of about $1.6$, an
income multiplier of about $1.5$, and a mean multiplier of about $1.57$, well
below the original headline value near $2.5$.}}
\greenrev{Recomputing \citet{egger2022general}'s Table~V multiplier accounting with the
design-selected indexes gives a mean local transfer multiplier of about \ridgerev{$1.60$}, well below the
original headline value near $2.5$ and close to the $1.54$ implied by
\citet{walker2024slack}'s calibrated general-equilibrium model
(Table~\ref{tab:egger-multiplier-results}). The expenditure and income components are about
\ridgerev{$1.63$ and $1.57$}.} We obtain standard errors by a
delta method that propagates both the Stage~1 uncertainty in the index weights
and the Table~V coefficient uncertainty. With only $84$ sublocation clusters
these multipliers are imprecisely estimated, with standard errors comparable to
\citet{egger2022general}'s own; the mean multiplier carries a standard error of
\bluerev{\ridgerev{$1.10$}}, so although its point estimate lies well below $2.5$, it is not
statistically distinguishable from one.
\greendel{This revised magnitude lines
up with the structural follow-up of \citet{walker2024slack}, whose calibrated
general-equilibrium model implies a real multiplier of $1.54$, on the same real
(deflated) basis as \citet{egger2022general}'s Table~V.}
\greenrev{The \citet{walker2024slack} benchmark is on the same real (deflated) basis as
\citet{egger2022general}'s Table~V.} Appendix~\ref{app:egger-extra}, part~E
details the reparameterization and the delta-method inference.

\begin{table}[htbp]
  \centering
  \caption{\rev{Local transfer multiplier: design-based estimates and benchmarks}}
  \label{tab:egger-multiplier-results}
  \rev{%
  \begin{tabular}{lccc}
    \toprule
    & Design-based & \citet{egger2022general} & \citet{walker2024slack} \\
    &  & Table~V & model  \\
    \midrule
    Expenditure & \bluerev{\ridgerev{1.63 (1.49)}} & 2.58 (1.44) & -- \\
    Income      & \bluerev{\ridgerev{1.57 (0.97)}} & 2.47 (1.71) & -- \\
    Mean        & \bluerev{\ridgerev{1.60 (1.10)}} & 2.52 (1.39) & 1.54 \\
    \bottomrule
  \end{tabular}}
  \begin{flushleft}
  \footnotesize \rev{\emph{Notes:} Entries are multiplier point estimates with
  standard errors in parentheses, all on a real basis: transfers and
  outcomes are deflated to January~2015 USD, matching \citet{egger2022general}'s
  Table~V. Our design-based estimates recompute that deflated accounting after
  replacing each ring exposure with the design-selected index; their standard
  errors are delta-method values that propagate the Stage-1 uncertainty in the
  index weights and the Table~V coefficient uncertainty (Appendix~\ref{app:egger-extra}, part~E). The \citet{egger2022general} column reproduces their Table~V
  estimates, whose standard errors come from a wild clustered bootstrap. The \citet{walker2024slack} figure is a
  model-implied real multiplier from a calibrated general-equilibrium model, with
  no sampling standard error and reported only in aggregate (their nominal
  counterpart is $1.83$).}
  \end{flushleft}
\end{table}

\greenrev{The second application thus illustrates the revision use of the framework: when the
design rejects the original exposure map, both the map and the headline policy estimate change
materially.}

\section{Conclusion}
\label{sec:conclusion}
Estimates of spillover effects depend on an exposure map that turns the realized
assignment into an index of each unit's exposure. Both the functional form of
that map (e.g., a ring) and its parameter (e.g., the radius) are usually fixed
before estimation, with little guidance for either choice. This paper shows that
the same randomization that identifies treatment effects can guide both choices. Specifically, a correctly specified exposure map implies
orthogonality conditions that can be used for both estimation and testing:
design-based GMM can estimate the mapping parameter, and the same conditions, when
overidentified, can test the map itself.

We establish consistency and asymptotic normality of the resulting estimator for
the maps used in applied work, with all randomness arising from the assignment,
and we characterize the efficient moments. Finally, we carry the resulting
uncertainty about $g(W;X_i,\theta)$ into downstream policy estimands (e.g.,
average effects under alternative assignment rules), so that inference reflects
both the experimental variation and the uncertainty about the exposure
specification itself.

We apply the method to two large-scale anti-poverty experiments. For the NREGS
reform studied by \citet{muralidharan2023generaleq}, the 20~km radius chosen on
institutional grounds is not rejected by the design, and the program's headline
decomposition is robust: even at the data-chosen radius, the income gains come
predominantly from non-program earnings. For the cash-transfer experiment of
\citet{egger2022general}, the 2~km support is rejected for every core outcome,
and replacing it with the smallest support the design does not reject lowers the
estimated local transfer multiplier (mean) from about $2.5$ to \bluerev{\ridgerev{$1.60$}}, close to the
value implied by the independent structural model of \citet{walker2024slack}.

\bibliographystyle{apalike}
\bibliography{references}
\newpage
\appendix
\section{\texorpdfstring{Proofs for Section \ref{sec:design-orthogonality}}{Proofs for the design-orthogonality results}}

\begin{proof}[Proof of Theorem \ref{thm:exposure-sufficiency} and Corollary \ref{cor:design-moment-unit}]
Fix a unit \(i\), and write
\[
  G_i(\theta_0) := g(W;X_i,\theta_0).
\]
By Hypothesis~\ref{hyp:exposure-suff}, there exists \(\theta_0\in\Theta\) and a
measurable map \(\widetilde Y_i:\R^k\to\R\) such that, for every assignment
\(w\in\mathcal{W}\),
\[
  Y_i(w)
  =
  \widetilde Y_i\bigl(g(w;X_i,\theta_0)\bigr).
\]
Under the design \(\Dcal_n\), the realized outcome is \(Y_i=Y_i(W)\). Hence
\[
  Y_i
  =
  \widetilde Y_i\bigl(g(W;X_i,\theta_0)\bigr)
  =
  \widetilde Y_i\bigl(G_i(\theta_0)\bigr)
  \quad\text{a.s.}
\]
By Assumption~\ref{ass:rand-indep}, the potential-outcome schedule is fixed
under $\Dcal_n$, so each map $\widetilde Y_i$ is nonrandom. Thus \(Y_i\) is measurable with respect to
\(\sigma(G_i(\theta_0))\).

We first verify the conditional independence statement. Let
\(f:\R\to\R\) and \(h:\mathcal{W}\to\R\) be bounded measurable functions. Since
\(Y_i=\widetilde Y_i(G_i(\theta_0))\) almost surely,
\[
  f(Y_i)
  =
  f\!\left(\widetilde Y_i(G_i(\theta_0))\right)
\]
is \(\sigma(G_i(\theta_0))\)-measurable. Therefore,
\begin{align*}
  \E\!\left[
    f(Y_i)h(W)
    \,\big|\,
    G_i(\theta_0)
  \right]
  &=
  f\!\left(\widetilde Y_i(G_i(\theta_0))\right)
  \E\!\left[
    h(W)
    \,\big|\,
    G_i(\theta_0)
  \right]  \\
  &=
  \E\!\left[
    f(Y_i)
    \,\big|\,
    G_i(\theta_0)
  \right]
  \E\!\left[
    h(W)
    \,\big|\,
    G_i(\theta_0)
  \right].
\end{align*}
This verifies
\[
  Y_i \;\perp\!\!\!\perp\; W
  \;\big|\;
  g(W;X_i,\theta_0)
\]
under the design distribution.

We next verify the orthogonality condition. Let
\(\phi:\R\to\R\) \redrev{be a square-integrable outcome transformation, \(\E[\phi(Y_i)^2]<\infty\), and let
\(\psi:\mathcal{W}\times\mathcal{X}\to\R\) be a square-integrable design function, \(\E[\psi(W)^2]<\infty\)}. By definition,
\[
  R_{i,\theta_0}{\psi}(W)
  =
  \psi(W)
  -
  \E\!\left[
    \psi(W)
    \mid
    G_i(\theta_0)
  \right].
\]
Hence
\[
  \E\!\left[
    R_{i,\theta_0}{\psi}(W)
    \mid
    G_i(\theta_0)
  \right]
  =
  0
  \quad\text{a.s.}
\]
Moreover, since \(Y_i\) is \(\sigma(G_i(\theta_0))\)-measurable,
\(\phi(Y_i)\) is also \(\sigma(G_i(\theta_0))\)-measurable.
\redrev{Because conditional expectation is an \(L^2\) contraction,
\(\psi(W)\in L^2\) implies
\(R_{i,\theta_0}\psi(W)=\psi(W)-\E[\psi(W)\mid G_i(\theta_0)]\in L^2\); together with
\(\phi(Y_i)\in L^2\), the Cauchy--Schwarz inequality gives
\(\E\big[\,|\phi(Y_i)\,R_{i,\theta_0}\psi(W)|\,\big]
\le \|\phi(Y_i)\|_{L^2}\,\|R_{i,\theta_0}\psi(W)\|_{L^2}<\infty\),
so the product is integrable and the law of iterated expectations applies below.}
The law of iterated
expectations gives
\begin{align*}
  \E\!\left[
    \phi(Y_i)R_{i,\theta_0}{\psi}(W)
  \right]
  &=
  \E\!\left[
    \E\!\left[
      \phi(Y_i)R_{i,\theta_0}{\psi}(W)
      \,\big|\,
      G_i(\theta_0)
    \right]
  \right] \\
  &=
  \E\!\left[
    \phi(Y_i)
    \E\!\left[
      R_{i,\theta_0}{\psi}(W)
      \,\big|\,
      G_i(\theta_0)
    \right]
  \right] \\
  &=
  0.
\end{align*}
Equivalently,
\[
  \E\!\left[
    \phi(Y_i)
    \left\{
      \psi(W)
      -
      \E\!\left[
        \psi(W)
        \mid
        g(W;X_i,\theta_0)
      \right]
    \right\}
  \right]
  =
  0,
\]
which is the desired unit-level design-based orthogonality condition.

Since \(i\), \(\phi\), and \(\psi\) were arbitrary, the result holds for every unit and
every admissible outcome transformation and design function. This completes the proof.
\end{proof}
\bigskip

\section{\texorpdfstring{Additional Results and Proofs for Section~\ref{sec:DB-consistency-AN-affinity}}{Additional results and proofs for design-based consistency and asymptotic normality}}
\label{app:DB-GMM-consistency-affinity}

Throughout this appendix we work under the notation and conditions of
Section~\ref{sec:DB-consistency-AN-affinity}.
In particular, for each $n$ and $i\in\Ucal_n$ we define, for $\theta\in\Theta$,
\[
  Z_{i,n}(\theta)
  := \Psi_i(\theta)-\E[\Psi_i(\theta)],
  \qquad
  \bar Z_n(\theta)
  := \frac{1}{N_n}\sum_{i=1}^{N_n} Z_{i,n}(\theta),
\]
and
\[
  \mu_n(\theta)
  :=
  \frac{1}{N_n}\sum_{i=1}^{N_n}\E[\Psi_i(\theta)],
  \qquad
  \mu(\theta) := \lim_{n\to\infty}\mu_n(\theta),
\]
whenever the limit exists.
The GMM criterion and its limit are
\[
  Q_n(\theta)
  :=
  \bar\Psi_n(\theta)^\top \Lambda_n\,\bar\Psi_n(\theta),
  \qquad
  Q(\theta)
  :=
  \mu(\theta)^\top \Lambda\,\mu(\theta),
\]
with $\Lambda_n\pto\Lambda\succeq 0$.
Assumption~\ref{ass:as-lln} in the main text encodes the high-level uniform LLN,
identification, and weight convergence conditions needed for consistency.

The main text focuses on the regular continuous exposure-map case.
This appendix gives the formal supporting material for the baseline regular
continuous case and then records the finite-grid variants used in
applications. 

\subsection{Consistency of the baseline regular case}
\paragraph{Uniform convergence of the criterion.}

The key step is to show that the high-level uniform LLN for the moments implies
uniform convergence of the quadratic criterion $Q_n(\theta)$ to $Q(\theta)$.

\begin{lemma}[Uniform convergence of the quadratic criterion]
\label{lem:as-Q-unif}
Suppose Assumption~\ref{ass:as-lln} holds.
Then
\[
  \sup_{\theta\in\Theta}\big|Q_n(\theta)-Q(\theta)\big| \;\pto\; 0.
\]
\end{lemma}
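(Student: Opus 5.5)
The plan is a standard triangle-inequality decomposition of the quadratic form, followed by bounding each piece with the uniform convergences in Assumption~\ref{ass:as-lln}. First I combine \ref{ass:as-lln1} and \ref{ass:as-lln2}:
\[
  \sup_{\theta\in\Theta}\|\bar\Psi_n(\theta)-\mu(\theta)\|
  \le \sup_{\theta}\|\bar\Psi_n(\theta)-\mu_n(\theta)\|+\sup_{\theta}\|\mu_n(\theta)-\mu(\theta)\|
  \pto 0.
\]
Write $\Delta_n(\theta):=\bar\Psi_n(\theta)-\mu(\theta)$, so that $\sup_\theta\|\Delta_n(\theta)\|\pto0$.

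Next I expand the criterion difference. Adding and subtracting terms gives
\[
  Q_n(\theta)-Q(\theta)
  = \Delta_n(\theta)^\top\Lambda_n\Delta_n(\theta)
  +2\,\mu(\theta)^\top\Lambda_n\Delta_n(\theta)
  +\mu(\theta)^\top(\Lambda_n-\Lambda)\mu(\theta).
\]
Each term is bounded using the operator norm:
\[
  \sup_\theta|Q_n-Q|
  \le \|\Lambda_n\|_{\op}\sup\|\Delta_n\|^2
  +2\|\Lambda_n\|_{\op}\sup\|\mu\|\sup\|\Delta_n\|
  +\|\Lambda_n-\Lambda\|_{\op}\sup\|\mu\|^2.
\]

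Two facts are needed to close the argument. First, $\|\Lambda_n\|_{\op}=O_p(1)$, which follows from \ref{ass:as-lln4}, since $\|\Lambda_n\|_{\op}\le\|\Lambda\|_{\op}+o_p(1)$. Second, $\sup_\theta\|\mu(\theta)\|<\infty$, which holds because $\mu$ is continuous on the compact set $\Theta$ by \ref{ass:as-lln3}. Then each term is a product of an $O_p(1)$ or finite factor with an $o_p(1)$ factor, and so the sum is $o_p(1)$.

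The only point requiring care, and the main obstacle, is the boundedness of $\mu$. Without it the cross term $2\mu^\top\Lambda_n\Delta_n$ need not be controlled uniformly, which is exactly why continuity and compactness are included in \ref{ass:as-lln3}. The rest is routine stochastic-order bookkeeping. No measurability subtleties arise if we interpret the supremum probabilistically via outer probability, as is standard.
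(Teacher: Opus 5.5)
Your proof is correct and follows the paper's argument: expand the quadratic form, bound each term in operator norm, and close using $\|\Lambda_n\|_{\op}=O_p(1)$ together with boundedness of $\mu$, which comes from continuity on the compact $\Theta$. The only difference is that you center at the limit $\mu$ rather than at $\mu_n$; this folds the paper's separate deterministic term $\mu_n^\top\Lambda\mu_n-\mu^\top\Lambda\mu$ into $\Delta_n$, giving three terms instead of four, and spares you from establishing $\sup_\theta\|\mu_n(\theta)\|=O(1)$.
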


\begin{proof}
Define, for each $n$ and $\theta\in\Theta$,
\[
  \Delta_n(\theta)
  := \bar\Psi_n(\theta)-\mu_n(\theta)
  = \bar Z_n(\theta).
\]
Then
\[
  \bar\Psi_n(\theta)
  = \mu_n(\theta) + \Delta_n(\theta).
\]
Substituting into $Q_n(\theta)$ yields
\begin{align*}
  Q_n(\theta)
  &= \bar\Psi_n(\theta)^\top \Lambda_n \bar\Psi_n(\theta) \\
  &= \big(\mu_n(\theta)+\Delta_n(\theta)\big)^\top
      \Lambda_n
      \big(\mu_n(\theta)+\Delta_n(\theta)\big) \\
  &=
  \Delta_n(\theta)^\top \Lambda_n \Delta_n(\theta)
  + 2\,\mu_n(\theta)^\top \Lambda_n \Delta_n(\theta)
  + \mu_n(\theta)^\top \Lambda_n \mu_n(\theta).
\end{align*}
Subtracting $Q(\theta)=\mu(\theta)^\top \Lambda\mu(\theta)$ and adding and
subtracting $\mu_n(\theta)^\top \Lambda\mu_n(\theta)$ gives
\begin{align}
\label{eq:QnQ-decomp-app}
  Q_n(\theta) - Q(\theta)
  &=
  \underbrace{\Delta_n(\theta)^\top \Lambda_n \Delta_n(\theta)}_{T_{1,n}(\theta)}
  + \underbrace{2\,\mu_n(\theta)^\top \Lambda_n \Delta_n(\theta)}_{T_{2,n}(\theta)} \nonumber\\
  &\quad
  + \underbrace{\mu_n(\theta)^\top (\Lambda_n - \Lambda)\mu_n(\theta)}_{T_{3,n}(\theta)}
  + \underbrace{\big\{\mu_n(\theta)^\top \Lambda\mu_n(\theta)
         - \mu(\theta)^\top \Lambda\mu(\theta)\big\}}_{T_{4,n}(\theta)}.
\end{align}

We bound each term $T_{k,n}(\theta)$ uniformly in $\theta\in\Theta$.

\medskip\noindent
\textbf{First term.}
For each $\theta$,
\[
  \big|\Delta_n(\theta)^\top \Lambda_n \Delta_n(\theta)\big|
  \le \|\Lambda_n\|_{\op}\,\|\Delta_n(\theta)\|^2,
\]
hence
\[
  \sup_{\theta\in\Theta}
  \big|\Delta_n(\theta)^\top \Lambda_n \Delta_n(\theta)\big|
  \le \|\Lambda_n\|_{\op}
  \Big(\sup_{\theta\in\Theta}\|\Delta_n(\theta)\|\Big)^2.
\]
By Assumption~\ref{ass:as-lln}\,\textup{(AS--LLN2)},
\[
  \sup_{\theta\in\Theta}\|\Delta_n(\theta)\|
  =
  \sup_{\theta\in\Theta}\big\|\bar\Psi_n(\theta)-\mu_n(\theta)\big\|
  \;\pto\; 0,
\]
and by Assumption~\ref{ass:as-lln}\,\textup{(AS--LLN4)},
$\Lambda_n\pto\Lambda$ implies $\|\Lambda_n\|_{\op}=O_p(1)$.
Therefore
\[
  \sup_{\theta\in\Theta}
  \big|T_{1,n}(\theta)\big|
  = o_p(1).
\]

\medskip\noindent
\textbf{Second term.}
Similarly,
\[
  \big|\mu_n(\theta)^\top \Lambda_n \Delta_n(\theta)\big|
  \le \|\mu_n(\theta)\|\,\|\Lambda_n\|_{\op}\,\|\Delta_n(\theta)\|,
\]
so
\[
  \sup_{\theta\in\Theta}
  \big|T_{2,n}(\theta)\big|
  \le
  2\Big(\sup_{\theta\in\Theta}\|\mu_n(\theta)\|\Big)
    \|\Lambda_n\|_{\op}
    \Big(\sup_{\theta\in\Theta}\|\Delta_n(\theta)\|\Big).
\]
By Assumption~\ref{ass:as-lln}\,\textup{(AS--LLN1)},
$\mu_n(\theta)\to\mu(\theta)$ uniformly on $\Theta$.
Continuity of $\mu$ and compactness of $\Theta$ imply
\[
  \sup_{\theta\in\Theta}\|\mu(\theta)\| < \infty,
  \qquad
  \sup_{\theta\in\Theta}\|\mu_n(\theta)\| = O(1).
\]
Combining with Assumption~\ref{ass:as-lln}\,\textup{(AS--LLN2)} and
\textup{(AS--LLN4)}, we obtain
\[
  \sup_{\theta\in\Theta}
  \big|T_{2,n}(\theta)\big|
  = o_p(1).
\]

\medskip\noindent
\textbf{Third term.}
For each $\theta$,
\[
  \big|\mu_n(\theta)^\top (\Lambda_n-\Lambda)\mu_n(\theta)\big|
  \le \|\mu_n(\theta)\|^2\,\|\Lambda_n-\Lambda\|_{\op},
\]
hence
\[
  \sup_{\theta\in\Theta}
  \big|T_{3,n}(\theta)\big|
  \le
  \Big(\sup_{\theta\in\Theta}\|\mu_n(\theta)\|\Big)^2
  \|\Lambda_n-\Lambda\|_{\op}.
\]
By Assumption~\ref{ass:as-lln}\,\textup{(AS--LLN4)},
$\|\Lambda_n-\Lambda\|_{\op}\pto0$, and as above
$\sup_{\theta\in\Theta}\|\mu_n(\theta)\|=O(1)$.
Thus
\[
  \sup_{\theta\in\Theta}
  \big|T_{3,n}(\theta)\big|
  = o_p(1).
\]

\medskip\noindent
\textbf{Fourth term.}
For each $\theta$,
\begin{align*}
  &\mu_n(\theta)^\top \Lambda\mu_n(\theta)
  - \mu(\theta)^\top \Lambda\mu(\theta) \\
  &\quad=
  \big(\mu_n(\theta)-\mu(\theta)\big)^\top \Lambda\mu_n(\theta)
  + \mu(\theta)^\top \Lambda\big(\mu_n(\theta)-\mu(\theta)\big),
\end{align*}
so
\begin{align*}
  &\big|\mu_n(\theta)^\top \Lambda\mu_n(\theta)
  - \mu(\theta)^\top \Lambda\mu(\theta)\big| \\
  &\qquad\le
  \|\mu_n(\theta)-\mu(\theta)\|\,\|\Lambda\|_{\op}\,\|\mu_n(\theta)\|
  +
  \|\mu(\theta)\|\,\|\Lambda\|_{\op}\,\|\mu_n(\theta)-\mu(\theta)\|.
\end{align*}
Taking suprema over $\theta\in\Theta$ and using
Assumption~\ref{ass:as-lln}\,\textup{(AS--LLN1)}, together with boundedness of
$\sup_{\theta}\|\mu_n(\theta)\|$ and $\sup_{\theta}\|\mu(\theta)\|$, we obtain
\[
  \sup_{\theta\in\Theta}
  \big|T_{4,n}(\theta)\big|
  =
  \sup_{\theta\in\Theta}
  \big|\mu_n(\theta)^\top \Lambda\mu_n(\theta)
  - \mu(\theta)^\top \Lambda\mu(\theta)\big|
  \to 0.
\]

\medskip

Combining the four bounds with \eqref{eq:QnQ-decomp-app}, we conclude that
\[
  \sup_{\theta\in\Theta}\big|Q_n(\theta)-Q(\theta)\big|
  \le
  \sum_{k=1}^4 \sup_{\theta\in\Theta}|T_{k,n}(\theta)|
  = o_p(1).
\]
This establishes the desired uniform convergence.
\end{proof}

\paragraph{\texorpdfstring{Argmin consistency and proof of Theorem~\ref{thm:as-consistency}.}{Argmin consistency and proof of the consistency theorem.}}

We now use Lemma~\ref{lem:as-Q-unif} and the identification part of
Assumption~\ref{ass:as-lln} to show that any minimizer of $Q_n(\theta)$ is
consistent.

\begin{lemma}[Argmin consistency]
\label{lem:as-argmin}
Suppose Assumption~\ref{ass:as-lln} holds and
Lemma~\ref{lem:as-Q-unif} is in force.
Then any
$\hat\theta_n\in\arg\min_{\theta\in\Theta} Q_n(\theta)$ satisfies
$\hat\theta_n \pto \theta_0$.
\end{lemma}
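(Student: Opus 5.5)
The argument is the standard well-separated-minimum route to argmin consistency (as in Newey--McFadden, Theorem~2.1), run in probability rather than almost surely. The two ingredients are the uniform convergence $\sup_{\theta}|Q_n(\theta)-Q(\theta)|\pto0$ from Lemma~\ref{lem:as-Q-unif} and the identification and compactness conditions in \ref{ass:as-lln3}.

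\textbf{Step 1: continuity of $Q$.} By \ref{ass:as-lln3}, $\mu$ is continuous on $\Theta$, so $Q(\theta)=\mu(\theta)^\top\Lambda\mu(\theta)$ is continuous. Since $\Lambda\succeq0$, we have $Q\ge0$, and $Q(\theta_0)=0$ because $\mu(\theta_0)=0$.

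\textbf{Step 2: well-separated minimum.} Fix $\varepsilon>0$ and let $K_\varepsilon:=\{\theta\in\Theta:\|\theta-\theta_0\|\ge\varepsilon\}$. This set is closed in the compact set $\Theta$, hence compact. If it is empty there is nothing to prove. Otherwise the continuous function $Q$ attains its minimum on $K_\varepsilon$ at some point $\bar\theta\neq\theta_0$. Identification says $Q(\theta)=0$ only at $\theta_0$, so
\[
  \delta_\varepsilon:=\inf_{\theta\in K_\varepsilon}Q(\theta)=Q(\bar\theta)>0 .
\]
This step is where compactness and continuity are genuinely used. Note that the separation must come from $Q$ and not from $Q_n$: the latter may be discontinuous in $\theta$, as in the ring case. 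Continuity is assumed only for the limit $\mu$, and that is all the argument needs.

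\textbf{Step 3: the chaining inequality.} Since $\hat\theta_n$ minimizes $Q_n$, we have $Q_n(\hat\theta_n)\le Q_n(\theta_0)$. Therefore
\[
  Q(\hat\theta_n)\le Q_n(\hat\theta_n)+\sup_\theta|Q_n-Q|
  \le Q_n(\theta_0)+\sup_\theta|Q_n-Q|
  \le Q(\theta_0)+2\sup_\theta|Q_n-Q|
  =2\sup_\theta|Q_n-Q| .
\]
This inequality requires only that $\hat\theta_n$ be an exact minimizer, which the statement grants. If one wanted approximate minimizers, adding an $o_p(1)$ slack would leave the argument unchanged.

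\textbf{Step 4: conclusion.} On the event $\{\|\hat\theta_n-\theta_0\|\ge\varepsilon\}$ we have $\hat\theta_n\in K_\varepsilon$, so $Q(\hat\theta_n)\ge\delta_\varepsilon$. Combined with Step~3, this gives
\[
  \Pp\bigl(\|\hat\theta_n-\theta_0\|\ge\varepsilon\bigr)\le\Pp\bigl(\sup_\theta|Q_n-Q|\ge\delta_\varepsilon/2\bigr)\to0
\]
by Lemma~\ref{lem:as-Q-unif}. Since $\varepsilon$ was arbitrary, $\hat\theta_n\pto\theta_0$.

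One technical caveat remains: measurability of $\hat\theta_n$ and of the supremum. I would treat the probabilities above as outer probabilities, which leaves the argument unaffected.
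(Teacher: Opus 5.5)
Your proposal is correct and follows essentially the same route as the paper's proof. Both arguments use compactness of $\{\theta\in\Theta:\|\theta-\theta_0\|\ge\eta\}$ and continuity of $Q$ to get a well-separated minimum $\delta_\eta>0$, then invoke the uniform convergence of Lemma~\ref{lem:as-Q-unif} to exclude minimizers from that set with probability tending to one. The only difference is cosmetic: you chain $Q(\hat\theta_n)\le 2\sup_\theta|Q_n(\theta)-Q(\theta)|$ directly using $Q(\theta_0)=0$, whereas the paper works on the event $\{\sup_\theta|Q_n-Q|\le\delta_\eta/4\}$ and compares $Q_n$ on the separated set with $Q_n(\theta_0)$.
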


\begin{proof}
By Assumption~\ref{ass:as-lln}\,\textup{(AS--LLN3)},
$\Theta$ is compact and $\mu(\theta)$ is continuous in $\theta$ on $\Theta$.
Since $\Lambda\succeq0$ is fixed,
$Q(\theta)=\mu(\theta)^\top \Lambda\mu(\theta)$ is
continuous on $\Theta$.
Assumption~\ref{ass:as-lln}\,\textup{(AS--LLN3)} further implies that $Q$ is
uniquely minimized at $\theta_0$.

Fix $\eta>0$ and define
\[
  S_\eta := \{\theta\in\Theta : \|\theta-\theta_0\|\ge\eta\}.
\]
Compactness of $\Theta$ implies $S_\eta$ is compact.
By continuity and uniqueness of the minimizer,
\[
  \delta_\eta
  :=
  \inf_{\theta\in S_\eta}\big\{Q(\theta)-Q(\theta_0)\big\}
  >0.
\]

By Lemma~\ref{lem:as-Q-unif},
\[
  \sup_{\theta\in\Theta}\big|Q_n(\theta)-Q(\theta)\big| \;\pto\;0.
\]
Hence, for $\varepsilon=\delta_\eta/4$,
\[
  \Pp\!\left(
  \sup_{\theta\in\Theta}\big|Q_n(\theta)-Q(\theta)\big|
  > \frac{\delta_\eta}{4}
  \right)
  \to 0.
\]
For all large $n$, with probability at least $1-o(1)$, the event
\[
  \Acal_{n,\eta}
  := \left\{
  \sup_{\theta\in\Theta}\big|Q_n(\theta)-Q(\theta)\big|
  \le \frac{\delta_\eta}{4}
  \right\}
\]
occurs.
On $\Acal_{n,\eta}$ we have, for every $\theta\in S_\eta$,
\[
  Q_n(\theta)
  \ge
  Q(\theta) - \frac{\delta_\eta}{4}
  \ge
  Q(\theta_0) + \delta_\eta - \frac{\delta_\eta}{4}
  = Q(\theta_0) + \frac{3\delta_\eta}{4},
\]
while
\[
  Q_n(\theta_0)
  \le
  Q(\theta_0) + \frac{\delta_\eta}{4}.
\]
Therefore, on $\Acal_{n,\eta}$,
\[
  Q_n(\theta)
  > Q_n(\theta_0)
  \qquad\text{for all }\theta\in S_\eta,
\]
so no minimizer of $Q_n$ can lie in $S_\eta$.
In particular,
\[
  \Pp\big(\hat\theta_n\in S_\eta\big)
  \le
  \Pp(\Acal_{n,\eta}^c)
  \to 0.
\]
Since $\eta>0$ is arbitrary, this implies $\hat\theta_n\pto\theta_0$.
\end{proof}

\begin{proof}[Proof of Theorem~\ref{thm:as-consistency}]
Under Assumption~\ref{ass:as-lln}\,\textup{(AS--LLN1)}–\textup{(AS--LLN4)},
Lemma~\ref{lem:as-Q-unif} gives
\[
  \sup_{\theta\in\Theta}\big|Q_n(\theta)-Q(\theta)\big| \;\pto\; 0.
\]
Assumption~\ref{ass:as-lln}\,\textup{(AS--LLN3)} ensures that $Q$ is continuous
on $\Theta$ and uniquely minimized at $\theta_0$.
Applying Lemma~\ref{lem:as-argmin} then yields $\hat\theta_n\pto\theta_0$ for
any sequence of minimizers
$\hat\theta_n\in\arg\min_{\theta\in\Theta}Q_n(\theta)$.
This proves the consistency result stated in
Theorem~\ref{thm:as-consistency}.
\end{proof}

\subsection{Primitive LLN via affinity sets}
\label{subsec:primitive-LLN-proof}

This subsection verifies Assumption~\ref{ass:as-lln}\,\textup{(AS--LLN2)} from
primitive design-based conditions.  The only substantive residual remains the
Section~\ref{sec:design-orthogonality} design residual
\[
  R_{i,\theta}{\psi}(W)
  =\psi(W)-\E[\psi(W)\mid g_i(W;\theta)].
\]
For the proofs below, use the standard centered-moment notation
\[
  Z_{i,n}(\theta)
  :=\Psi_i(\theta)-\E\Psi_i(\theta),
  \qquad
  \bar Z_n(\theta)
  :=\frac{1}{N_n}\sum_{i=1}^{N_n}Z_{i,n}(\theta).
\]
Thus the desired ULLN is simply
\[
  \sup_{\theta}\|\bar Z_n(\theta)\|\pto0.
\]
All expectations and covariances in this subsection are with respect to the known design
$\Dcal_n$, conditional on the realized finite population.

\subsubsection*{B.2.1 Smooth exposure maps}

The smooth case covers gravity and market-access exposure maps.  We state projection
regularity directly because smoothness of $g_i(W;\theta)$ alone does not imply smoothness of
$\E[\psi(W)\mid g_i(W;\theta)]$ as $\theta$ varies.

\begin{assumption}[Smooth-kernel ULLN primitives]
\label{ass:as-lln-smooth}
Let $\Theta\subset\R^p$ be compact and let the moment dictionary
$\{(\phi_m,\psi_m):m=1,\ldots,q\}$ be fixed.  Write
\[
  \Psi_{i,m}(\theta)
  :=
  \phi_m(Y_i)\{\psi_m(W)-m_{i,m,\theta}(g_i(W;\theta))\},
  \qquad
  m_{i,m,\theta}(s):=
  \E[\psi_m(W)\mid g_i(W;\theta)=s].
\]
Assume the following conditions.
\begin{enumerate}[label=\textup{(S\arabic*)}, leftmargin=1.4cm]
\item \textbf{Bounded dictionary.}
There are constants $C_\phi,C_\psi<\infty$ such that
$|\phi_m(Y_i)|\le C_\phi$ and $|\psi_m(W)|\le C_\psi$ for all $i,m,n$.

\item \textbf{Smooth exposure and regular projection.}
There exist nonnegative envelopes $G_{i,n}(W)$ and $M_{i,m,n}$ such that, for all
$\theta,\theta'\in\Theta$,
\[
  \|g_i(W;\theta)-g_i(W;\theta')\|
  \le G_{i,n}(W)\|\theta-\theta'\|,
\]
and for all support points $s,s'$,
\[
  |m_{i,m,\theta}(s)-m_{i,m,\theta'}(s')|
  \le
  M_{i,m,n}\{\|\theta-\theta'\|+\|s-s'\|\}.
\]
The envelopes satisfy
\[
  \sup_n \frac{1}{N_n}\sum_{i=1}^{N_n}
  \max_{1\le m\le q}M_{i,m,n}\{1+\E G_{i,n}(W)\}<\infty.
\]

\item \textbf{Affinity covariance bound.}
There exist affinity sets $A_i=A_{i,n}$, with $i\in A_i$, such that
$b_n:=\max_i |A_i|=o(N_n)$.  Moreover, with
\[
  \rho_n
  :=
  \sup_{\theta\in\Theta}
  \frac{1}{N_n^2}
  \sum_{i=1}^{N_n}\sum_{j\notin A_i}
  \big\|\Cov\{Z_{i,n}(\theta),Z_{j,n}(\theta)\}\big\|_{\mathrm{op}},
\]
we have $\rho_n\to0$.
\end{enumerate}
\end{assumption}

\begin{lemma}[Smooth-kernel ULLN]
\label{lem:as-ulln-S}
Under Assumption~\ref{ass:as-lln-smooth},
\[
  \sup_{\theta\in\Theta}
  \left\|
  \frac{1}{N_n}\sum_{i=1}^{N_n}\{\Psi_i(\theta)-\E\Psi_i(\theta)\}
  \right\|\pto0.
\]
\end{lemma}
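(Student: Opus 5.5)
The proof combines a pointwise variance bound with a Lipschitz bracketing argument over a finite grid in $\Theta$.

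\textbf{Pointwise step.} Fix $\theta$ and bound the second moment of $\bar Z_n(\theta)$ by splitting covariances into within- and outside-affinity pairs:
\[
  \E\|\bar Z_n(\theta)\|^2
  \le \frac{q}{N_n^2}\sum_{i}\sum_{j\in A_i}\big\|\Cov\{Z_{i,n}(\theta),Z_{j,n}(\theta)\}\big\|_{\op}
  + q\,\rho_n .
\]
By (S1), $|\Psi_{i,m}(\theta)|\le 2C_\phi C_\psi$, so each within-affinity covariance is bounded by a constant. The first term is therefore $O(b_n/N_n)=o(1)$. The second term tends to zero by (S3), uniformly in $\theta$. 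Chebyshev then gives $\bar Z_n(\theta)\pto0$ for each fixed $\theta$, and in fact $\sup_\theta \E\|\bar Z_n(\theta)\|^2\to0$.

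\textbf{Stochastic Lipschitz bound.} For $\theta,\theta'$ write
\[
  |\Psi_{i,m}(\theta)-\Psi_{i,m}(\theta')|
  \le C_\phi\,\big|m_{i,m,\theta}(g_i(W;\theta))-m_{i,m,\theta'}(g_i(W;\theta'))\big|
  \le C_\phi M_{i,m,n}\{1+G_{i,n}(W)\}\|\theta-\theta'\|,
\]
using (S2) with $s=g_i(W;\theta)$ and $s'=g_i(W;\theta')$. Let $L_n:=N_n^{-1}\sum_i \max_m C_\phi M_{i,m,n}\{1+G_{i,n}(W)\}$. By (S2), $\E L_n$ is bounded, so $L_n=O_p(1)$ by Markov. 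Taking expectations of the same bound also makes $\mu_n$ Lipschitz with constant $\E L_n$. Hence
\[
  \|\bar Z_n(\theta)-\bar Z_n(\theta')\|\le \sqrt q\,(L_n+\E L_n)\|\theta-\theta'\| .
\]

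\textbf{Covering and conclusion.} By compactness, cover $\Theta$ by a finite $\delta$-net $\{\theta_1,\dots,\theta_K\}$, where $K$ depends only on $\delta$. Then
\[
  \sup_\theta\|\bar Z_n(\theta)\|\le \max_{k\le K}\|\bar Z_n(\theta_k)\|+\sqrt q(L_n+\E L_n)\delta .
\]
The maximum is $o_p(1)$ by the pointwise step and a union bound over finitely many points. The second term is $O_p(1)\cdot\delta$. Given $\varepsilon,\eta>0$, choose $M$ with $\Pp(L_n+\E L_n>M)<\eta$ for all $n$, and take $\delta=\varepsilon/(2\sqrt q M)$; the result follows.

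\textbf{Main obstacle.} The delicate step is the Lipschitz bound. One must check that $m_{i,m,\theta}$ is evaluated at the random point $g_i(W;\theta)$ with the joint $(\theta,s)$ modulus in (S2), rather than needing separate differentiability of the projection in $\theta$. One must also check that the envelope product $M_{i,m,n}G_{i,n}(W)$ has bounded average expectation. The displayed envelope condition supplies exactly this, treating $M_{i,m,n}$ as design-nonrandom as the notation suggests. If $M_{i,m,n}$ were random, I would instead require $\E[M_{i,m,n}G_{i,n}]$ to be bounded in the same way.
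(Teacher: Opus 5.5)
Your proposal is correct and follows essentially the same route as the paper: a uniform second-moment bound split into within- and outside-affinity covariances, a stochastic Lipschitz bound obtained from (S2) at $s=g_i(W;\theta)$, and a finite $\varepsilon$-net with Chebyshev, a union bound, and an oscillation term. Your reading of the envelope condition, with $M_{i,m,n}$ treated as design-nonrandom, is also the one the paper's proof implicitly uses.
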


\begin{proof}
First, Assumption~\ref{ass:as-lln-smooth}\,\textup{(S1)} implies
$\sup_{i,n,\theta}\E\|\Psi_i(\theta)\|^2<\infty$ because
$|m_{i,m,\theta}(s)|\le C_\psi$.  Hence, for any $i,j,\theta$,
\[
  \big\|\Cov\{Z_{i,n}(\theta),Z_{j,n}(\theta)\}\big\|_{\mathrm{op}}
  \le C
\]
for a finite constant $C$.  Therefore
\begin{align*}
  \sup_{\theta\in\Theta}
  \E\|\bar Z_n(\theta)\|^2
  &\le
  C\left\{
  \frac{b_n}{N_n}+\rho_n
  \right\}
  =o(1).                                      \tag{B.1$'$}
\end{align*}
Indeed,
\[
  \E\|\bar Z_n(\theta)\|^2
  =\operatorname{tr}\{\Var(\bar Z_n(\theta))\}
  \le q\,\|\Var(\bar Z_n(\theta))\|_{\mathrm{op}},
\]
and
\[
  \|\Var(\bar Z_n(\theta))\|_{\mathrm{op}}
  \le
  \frac1{N_n^2}\sum_{i=1}^{N_n}\sum_{j=1}^{N_n}
  \big\|\Cov\{Z_{i,n}(\theta),Z_{j,n}(\theta)\}\big\|_{\mathrm{op}}.
\]
The within-affinity part has at most $N_n b_n$ terms and each term is bounded by
the uniform second-moment bound; the outside-affinity part is exactly controlled
by $\rho_n$.  This gives \textup{(B.1$'$)}.

Second, Assumption~\ref{ass:as-lln-smooth}\,\textup{(S2)} implies that the moment vector is
Lipschitz in $\theta$.  For each $m$,
\begin{align*}
&|m_{i,m,\theta}(g_i(W;\theta))-m_{i,m,\theta'}(g_i(W;\theta'))|  \\
&\qquad\le
M_{i,m,n}\{1+G_{i,n}(W)\}\|\theta-\theta'\|,
\end{align*}
so
\[
  \|\Psi_i(\theta)-\Psi_i(\theta')\|
  \le
  L_{i,n}\|\theta-\theta'\|,
  \qquad
  L_{i,n}:=
  \sqrt q C_\phi\max_m M_{i,m,n}\{1+G_{i,n}(W)\}.
\]
By the envelope condition,
\[
  \frac{1}{N_n}\sum_i \E L_{i,n}=O(1),
  \qquad
  \frac{1}{N_n}\sum_i L_{i,n}=O_p(1).        \tag{B.2$'$}
\]

Now let $\Theta_\varepsilon$ be a finite $\varepsilon$-net of the compact set $\Theta$.  For
fixed $\varepsilon$, (B.1$'$), Chebyshev's inequality, and a union bound imply
\[
  \max_{\theta_r\in\Theta_\varepsilon}\|\bar Z_n(\theta_r)\|\pto0.
\]
For arbitrary $\theta\in\Theta$, let $\pi_\varepsilon(\theta)$ be a nearest net point.  By
(B.2$'$),
\begin{align*}
  \|\bar Z_n(\theta)-\bar Z_n(\pi_\varepsilon(\theta))\|
  &\le
  \varepsilon\left\{
  \frac{1}{N_n}\sum_i L_{i,n}
  +
  \frac{1}{N_n}\sum_i \E L_{i,n}
  \right\}
  =\varepsilon O_p(1).
\end{align*}
Combining the net-point bound with the oscillation bound and then sending
$\varepsilon\downarrow0$ proves the result.
\end{proof}

\redrev{For bounded assignments $|W_j|\le1$ (in particular binary treatment, as in our applications), the exposure-derivative bounds below hold as displayed.}
For normalized gravity weights
$a_{ij}(\theta)\propto A_j\exp(-\theta d_{ij})$, the exposure derivative satisfies
\[
  \left|
  \frac{\partial}{\partial\theta}\sum_{j\ne i}a_{ij}(\theta)W_j
  \right|
  \le
  2\sum_{j\ne i}a_{ij}(\theta)d_{ij}.
\]
For market-access weights $a_{ij}(\theta)\propto A_j(1+\alpha d_{ij})^{-\theta}$,
\[
  \left|
  \frac{\partial}{\partial\theta}\sum_{j\ne i}a_{ij}(\theta)W_j
  \right|
  \le
  2\sum_{j\ne i}a_{ij}(\theta)\log(1+\alpha d_{ij}).
\]
Thus the exposure-smoothness part of Assumption~\ref{ass:as-lln-smooth}\,\textup{(S2)} follows
from bounded weighted first moments of distance or log-distance, while the regularity of the
design projection is the separate primitive condition stated above.

\subsubsection*{B.2.2 Ring exposures}
\label{app:continuous-hard-ring-ulln}

The smooth proof above does not apply to a ring exposure because the sample
path $\theta\mapsto\Psi_i(\theta)$ can be stepwise.  Nevertheless, a continuum ULLN can be
proved under a direct shell-regularity condition.  The condition says that small
changes in the cutoff affect only a vanishing average mass of pair distances and
that the design projection changes only through this shell.

For this subsection the parameter $\theta\in\Theta=[\underline\theta,\overline\theta]$ is the ring radius, and set
\[
  Z_{i,n}(\theta):=\Psi_i(\theta)-\E\Psi_i(\theta),
  \qquad
  \bar Z_n(\theta):=\frac1{N_n}\sum_{i=1}^{N_n}Z_{i,n}(\theta).
\]

\begin{assumption}[Ring ULLN primitives]
\label{ass:hard-ring-ulln}
The following conditions hold.
\begin{enumerate}[label=\textup{(HR\arabic*)}, leftmargin=1.55cm]
\item \textbf{Bounded moment contributions.}
There is a finite constant $C$ such that
\[
  \sup_n\sup_{i\le N_n}\sup_{\theta\in\Theta}\|\Psi_i(\theta)\|\le C
  \quad\text{almost surely}.
\]

\item \textbf{Pointwise affinity variance control.}
There exist affinity sets $A_i=A_{i,n}$ with $i\in A_i$ and
$b_n:=\max_i|A_i|=o(N_n)$ such that
\[
  \rho_n^{\mathrm{ring}}
  :=
  \sup_{\theta\in\Theta}
  \frac{1}{N_n^2}
  \sum_{i=1}^{N_n}\sum_{j\notin A_i}
  \big\|\Cov\{Z_{i,n}(\theta),Z_{j,n}(\theta)\}\big\|_{\mathrm{op}}
  \to0.
\]

\item \textbf{Uniform shell oscillation.}
For every interval $I\subset\Theta$, there are deterministic envelopes
$H_{i,n}(I)\ge0$ such that, whenever $\theta,\theta'\in I$,
\[
  \|\Psi_i(\theta)-\Psi_i(\theta')\|\le H_{i,n}(I)
  \quad\text{almost surely}.
\]
Define
\[
  \omega_n(\delta)
  :=
  \sup_{I\subset\Theta:\ |I|\le\delta}
  \frac1{N_n}\sum_{i=1}^{N_n}H_{i,n}(I).
\]
Then
\[
  \lim_{\delta\downarrow0}\limsup_{n\to\infty}\omega_n(\delta)=0.
\]
\end{enumerate}
\end{assumption}

\begin{lemma}[Ring ULLN]
\label{lem:continuous-hard-ring-ulln}
Under Assumption~\ref{ass:hard-ring-ulln},
\[
  \sup_{\theta\in\Theta}
  \left\|
    \bar\Psi_n(\theta)-\mu_n(\theta)
  \right\|
  =
  \sup_{\theta\in\Theta}\|\bar Z_n(\theta)\|
  \pto0.
\]
\end{lemma}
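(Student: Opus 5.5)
I will mirror the proof of Lemma~\ref{lem:as-ulln-S}: a pointwise mean-square bound, then a finite grid, then an oscillation bound. The only change is that the Lipschitz envelope is replaced by the shell-oscillation envelope in (HR3).

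\textbf{Step 1 (pointwise variance).} By (HR1), each centered coordinate satisfies $|Z_{i,n}(\theta)|\le 2C$, so
\[
\|\Cov\{Z_{i,n}(\theta),Z_{j,n}(\theta)\}\|_{\mathrm{op}}\le C'
\]
uniformly in $i,j,\theta$. I then argue exactly as in (B.1$'$):
\[
\E\|\bar Z_n(\theta)\|^2\le q\,\|\Var(\bar Z_n(\theta))\|_{\mathrm{op}}\le q\,C'\,\frac{b_n}{N_n}+q\,\rho_n^{\mathrm{ring}}.
\]
The within-affinity part has at most $N_nb_n$ terms, and the outside part is controlled by (HR2). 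Hence $\sup_\theta\E\|\bar Z_n(\theta)\|^2\to0$.

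\textbf{Step 2 (finite grid).} Fix $\delta>0$ and partition $\Theta=[\underline\theta,\overline\theta]$ into $K_\delta=\lceil(\overline\theta-\underline\theta)/\delta\rceil$ closed intervals $I_1,\dots,I_{K_\delta}$, each of length at most $\delta$. Pick a point $\theta_k\in I_k$ in each. Since $K_\delta$ is fixed in $n$, Chebyshev's inequality and a union bound give
\[
\max_k\|\bar Z_n(\theta_k)\|\pto0.
\]

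\textbf{Step 3 (oscillation).} For $\theta\in I_k$, I use (HR3) twice: once for the realized moment, and once for its expectation after taking expectations of the almost-sure bound. This gives $\|\E\Psi_i(\theta)-\E\Psi_i(\theta_k)\|\le H_{i,n}(I_k)$, because the envelope is deterministic. Therefore
\[
\|\bar Z_n(\theta)-\bar Z_n(\theta_k)\|\le \frac{2}{N_n}\sum_i H_{i,n}(I_k)\le 2\,\omega_n(\delta)
\]
almost surely and uniformly in $\theta$. Combining the steps,
\[
\sup_\theta\|\bar Z_n(\theta)\|\le \max_k\|\bar Z_n(\theta_k)\|+2\,\omega_n(\delta).
\]
For any $\varepsilon>0$, I choose $\delta$ so that $\limsup_n\omega_n(\delta)<\varepsilon/4$. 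Then
\[
\limsup_n\Pp\Bigl(\sup_\theta\|\bar Z_n(\theta)\|>\varepsilon\Bigr)\le\limsup_n\Pp\Bigl(\max_k\|\bar Z_n(\theta_k)\|>\varepsilon/2\Bigr)=0.
\]

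\textbf{Main obstacle.} The step needing care is the oscillation step, because sample paths jump and no derivative is available. The key point is that (HR3) supplies an almost-sure envelope that is deterministic, so it controls the realized path and its mean simultaneously. It also does so uniformly over all $\theta$ in an interval, not just over grid points. Measurability of $\sup_\theta$ is a minor issue: I would either treat it via outer probability, or note that the ring path is piecewise constant with finitely many jumps at pairwise distances, so the supremum reduces to a countable set.
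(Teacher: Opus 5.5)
Your proposal is correct and follows the paper's proof essentially step for step: a variance bound from (HR1)--(HR2) that holds uniformly in $\theta$, then Chebyshev plus a union bound over a fixed finite cover by intervals of length at most $\delta$, then the deterministic shell envelope (HR3) applied to both the realized moments and their means to obtain the $2\omega_n(\delta)$ oscillation bound. Your explicit $\varepsilon$--$\delta$ conclusion and your remark on measurability of the supremum only make precise what the paper leaves as ``take $n\to\infty$ for fixed $\delta$, then send $\delta\downarrow0$.''
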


\begin{proof}
First fix $\theta\in\Theta$.  The same trace and covariance decomposition used in
Lemma~\ref{lem:as-ulln-S} gives
\[
  \E\|\bar Z_n(\theta)\|^2
  \le
  C\left(\frac{b_n}{N_n}+\rho_n^{\mathrm{ring}}\right),
\]
where the constant can change from line to line.  This bound holds uniformly in
$\theta$ by Assumption~\ref{ass:hard-ring-ulln}.

Fix $\delta>0$ and cover $\Theta$ by intervals $I_1,\ldots,I_{M_\delta}$ of
length at most $\delta$, with $M_\delta\le C/\delta$.  Choose one grid point
$r_\ell\in I_\ell$ for each interval.  For fixed $\delta$, Chebyshev's inequality
and a union bound imply
\[
  \Pr\left(
    \max_{\ell\le M_\delta}\|\bar Z_n(r_\ell)\|>\eta
  \right)
  \le
  \frac{C M_\delta}{\eta^2}
  \left(\frac{b_n}{N_n}+\rho_n^{\mathrm{ring}}\right)
  \to0.
\]
For any $\theta\in I_\ell$,
\begin{align*}
  \|\bar Z_n(\theta)-\bar Z_n(r_\ell)\|
  &\le
  \frac1{N_n}\sum_{i=1}^{N_n}\|\Psi_i(\theta)-\Psi_i(r_\ell)\|
  +
  \frac1{N_n}\sum_{i=1}^{N_n}\E\|\Psi_i(\theta)-\Psi_i(r_\ell)\|  \\
  &\le 2\omega_n(\delta).
\end{align*}
Therefore
\[
  \sup_{\theta\in\Theta}\|\bar Z_n(\theta)\|
  \le
  \max_{\ell\le M_\delta}\|\bar Z_n(r_\ell)\|+2\omega_n(\delta).
\]
Taking $n\to\infty$ for fixed $\delta$ and then sending $\delta\downarrow0$
proves the result.
\end{proof}

\begin{lemma}[Shell regularity implies continuity of the mean map]
\label{lem:hard-ring-mu-continuity}
Suppose Assumption~\ref{ass:hard-ring-ulln}\,\textup{(HR3)} holds and
Assumption~\ref{ass:as-lln}\,\textup{\ref{ass:as-lln1}} holds for the continuous
ring class $\Theta=[\underline\theta,\overline\theta]$.  Then the limit mean map
$\theta\mapsto\mu(\theta)$ is uniformly continuous on $\Theta$.  Hence the continuity part
of Assumption~\ref{ass:as-lln}\,\textup{\ref{ass:as-lln3}} follows from the shell
primitive rather than being imposed separately.
\end{lemma}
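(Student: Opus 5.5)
The plan is to transfer the shell-oscillation bound from the sample moments to their means, and then pass to the limit $n\to\infty$ using the uniform convergence in \ref{ass:as-lln1}.

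\textbf{Step 1: an oscillation bound for $\mu_n$.} Fix $\theta,\theta'\in\Theta$ with $|\theta-\theta'|\le\delta$, and let $I$ be the closed interval with endpoints $\theta$ and $\theta'$, so $|I|\le\delta$. By (HR3), $\|\Psi_i(\theta)-\Psi_i(\theta')\|\le H_{i,n}(I)$ almost surely, and $H_{i,n}(I)$ is deterministic. Taking expectations, with Jensen's inequality for the norm, gives
\[
  \|\mu_n(\theta)-\mu_n(\theta')\|
  \le
  \frac1{N_n}\sum_{i=1}^{N_n}\E\|\Psi_i(\theta)-\Psi_i(\theta')\|
  \le
  \frac1{N_n}\sum_{i=1}^{N_n}H_{i,n}(I)
  \le
  \omega_n(\delta).
\]
Integrability of $\Psi_i$ is needed here. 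It follows from (HR1), or else from the a.s.\ bound $\|\Psi_i(\theta)\|\le\|\Psi_i(\theta')\|+H_{i,n}(I)$ together with integrability at a single point.

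\textbf{Step 2: passing to the limit.} By the triangle inequality,
\[
  \|\mu(\theta)-\mu(\theta')\|
  \le
  2\sup_{\vartheta\in\Theta}\|\mu_n(\vartheta)-\mu(\vartheta)\|
  +\omega_n(\delta).
\]
Take $\limsup_{n\to\infty}$. The first term vanishes by \ref{ass:as-lln1}, and only pointwise convergence is actually needed for this step. This leaves
\[
  \sup_{|\theta-\theta'|\le\delta}\|\mu(\theta)-\mu(\theta')\|
  \le
  \limsup_{n\to\infty}\omega_n(\delta)=:\bar\omega(\delta).
\]
The left side does not depend on $n$, so this bound holds for every $\delta>0$. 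Since $\bar\omega(\delta)\to0$ as $\delta\downarrow0$ by (HR3), the modulus of continuity of $\mu$ tends to zero, which is uniform continuity. The continuity clause of \ref{ass:as-lln3} follows, because uniform continuity implies continuity on the compact interval $\Theta$.

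\textbf{Main obstacle.} The only delicate point is bookkeeping rather than analysis. The envelope $H_{i,n}(I)$ must be taken for the specific interval joining $\theta$ and $\theta'$, so it is covered by the supremum over $|I|\le\delta$ that defines $\omega_n(\delta)$. The $\limsup$ in $n$ must be taken before sending $\delta\downarrow0$, in the same order as in the definition in (HR3). No probabilistic argument is needed, because the envelopes are deterministic and the sample-path jumps of the ring exposure enter only through their averaged mass.
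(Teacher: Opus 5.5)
Your proof is correct and follows the paper's argument. You bound $\|\mu_n(\theta)-\mu_n(\theta')\|$ by $\omega_n(\delta)$ using the deterministic (HR3) envelope on the interval joining $\theta$ and $\theta'$, apply the triangle inequality with \ref{ass:as-lln1}, take $\limsup_{n\to\infty}$, and then send $\delta\downarrow0$; your remarks that pointwise convergence of $\mu_n$ already suffices and that integrability is harmless are accurate minor refinements.
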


\begin{proof}
For $\theta,\theta'\in I$ with $|I|\le\delta$, Assumption~\ref{ass:hard-ring-ulln}\,
\textup{(HR3)} gives
\[
  \|\mu_n(\theta)-\mu_n(\theta')\|
  \le
  \frac1{N_n}\sum_{i=1}^{N_n}\E\|\Psi_i(\theta)-\Psi_i(\theta')\|
  \le \omega_n(\delta).
\]
Therefore, using the uniform convergence in
Assumption~\ref{ass:as-lln}\,\textup{\ref{ass:as-lln1}},
\[
  \sup_{|\theta-\theta'|\le\delta}\|\mu(\theta)-\mu(\theta')\|
  \le
  2\sup_{\theta\in\Theta}\|\mu_n(\theta)-\mu(\theta)\|+\omega_n(\delta).
\]
Taking first $\limsup_{n\to\infty}$ and then $\delta\downarrow0$ yields uniform
continuity of $\mu$.
\end{proof}

A useful primitive sufficient condition for the envelope condition \textup{(HR3)} of Assumption~\ref{ass:hard-ring-ulln}
comes from weighted shell mass.  Suppose
\[
  g_i(W;\theta)=\sum_{j\ne i}a_{ij,n}W_j\1\{d_{ij}\le \theta\},
  \qquad
  \sup_{i,n}\sum_{j\ne i}|a_{ij,n}|<\infty,
\]
and define the weighted shell mass
\[
  S_{i,n}(I):=\sum_{j\ne i}|a_{ij,n}|\1\{d_{ij}\in I\}.
\]
If the design projection satisfies the shell bound
\[
  \left|
    \E[\psi_m(W)\mid g_i(W;\theta)]
    -
    \E[\psi_m(W)\mid g_i(W;\theta')]
  \right|
  \le L S_{i,n}(I)
  \quad\text{whenever } \theta,\theta'\in I,
\]
and $\phi_m(Y_i)$ is uniformly bounded, then one may take
$H_{i,n}(I)=C S_{i,n}(I)$.  The shell condition becomes
\[
  \lim_{\delta\downarrow0}\limsup_{n\to\infty}
  \sup_{I\subset\Theta:\ |I|\le\delta}
  \frac1{N_n}\sum_{i=1}^{N_n}\sum_{j\ne i}
  |a_{ij,n}|\1\{d_{ij}\in I\}=0.
\]
\redrev{This is an \emph{atomless-shell} (no-mass-point) condition for continuous
cutoffs: no single radius may carry a non-negligible share of the weighted
pairwise mass. It is the spatial analogue of the familiar requirement, in
threshold and regression-discontinuity models, that the running variable have no
mass point at the cutoff, and the modulus $\omega_n(\delta)$ above is the
asymptotic-equicontinuity device that turns it into a uniform law of large
numbers.}  Equivalently,
for the weighted shell measures
\[
  \nu_n(I):=\frac1{N_n}\sum_{i=1}^{N_n}\sum_{j\ne i}
  |a_{ij,n}|\1\{d_{ij}\in I\},
\]
\redrev{which record, on the radius axis, how much exposure weight sits at each
distance (a weighted analogue of the pair-distance distribution, or Ripley's $K$
function in spatial statistics),} small intervals must have uniformly small mass.  A sufficient way to obtain this
is that $\nu_n$ converge uniformly over intervals in the relevant cutoff region
to a limiting measure with no atoms and locally bounded interval mass.  The
condition can fail in regular lattice designs when many pairs lie exactly at the
candidate cutoff radius; in that case the continuous-cutoff target may be ill
posed at that radius.  This observation is separate from finite VC dimension:
VC controls the number of threshold sets, while the shell condition controls the
size of the jumps under the design.

\subsubsection*{B.2.3 Finite-grid and $K$-hop maps}
\label{app:finite-grid-ulln-proof}

For genuinely discrete maps, the main primitive result is finite-grid selection.
This is the natural formulation for network cutoffs $K\in\{1,\ldots,K_{\max,n}\}$
and for any exposure problem that the researcher intentionally defines as a
finite list of candidate specifications.

\begin{assumption}[Finite-grid ULLN primitives for non-smooth maps]
\label{ass:as-lln-vc}
Let $\Theta_n$ be a finite candidate set with cardinality $K_n=|\Theta_n|$.  Assume
\[
  \sup_n\sup_{i\le N_n}\sup_{\theta\in\Theta_n}
  \E\|\Psi_i(\theta)\|^2<\infty.
\]
There exist affinity sets $A_i=A_{i,n}$ with $i\in A_i$ and
$b_n:=\max_i|A_i|$ such that, with
\[
  \rho_n^{\mathrm{fg}}
  :=
  \sup_{\theta\in\Theta_n}
  \frac{1}{N_n^2}
  \sum_{i=1}^{N_n}\sum_{j\notin A_i}
  \big\|\Cov\{Z_{i,n}(\theta),Z_{j,n}(\theta)\}\big\|_{\mathrm{op}},
\]
we have
\[
  K_n\left(\frac{b_n}{N_n}+\rho_n^{\mathrm{fg}}\right)\to0.
\]
\end{assumption}

\begin{lemma}[Finite-grid ULLN]
\label{lem:finite-grid-ulln}
Under Assumption~\ref{ass:as-lln-vc},
\[
  \sup_{\theta\in\Theta_n}
  \left\|
  \frac{1}{N_n}\sum_{i=1}^{N_n}\{\Psi_i(\theta)-\E\Psi_i(\theta)\}
  \right\|\pto0.
\]
\end{lemma}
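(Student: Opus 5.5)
The plan is to reuse the pointwise second-moment bound from the proofs of Lemma~\ref{lem:as-ulln-S} and Lemma~\ref{lem:continuous-hard-ring-ulln}. The grid is finite, so a union bound over its $K_n$ points replaces the covering argument. No oscillation or equicontinuity control is needed, because the supremum runs only over $\Theta_n$ and never over a continuum.

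\emph{Step 1 (uniform pointwise variance bound).} Fix $\theta\in\Theta_n$. Write $\E\|\bar Z_n(\theta)\|^2=\operatorname{tr}\Var(\bar Z_n(\theta))\le q\,\|\Var(\bar Z_n(\theta))\|_{\op}$ and bound the operator norm by
\[
  \frac1{N_n^2}\sum_{i=1}^{N_n}\sum_{j=1}^{N_n}\big\|\Cov\{Z_{i,n}(\theta),Z_{j,n}(\theta)\}\big\|_{\op}.
\]
Split the double sum into pairs with $j\in A_i$ and pairs with $j\notin A_i$.
\begin{itemize}
  \item For the within-affinity pairs, apply Cauchy--Schwarz. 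Each covariance is at most $\sup_{i,n,\theta}\E\|\Psi_i(\theta)\|^2=:C<\infty$, which is finite by the moment condition in Assumption~\ref{ass:as-lln-vc}. There are at most $N_nb_n$ such pairs, so this part is $O(b_n/N_n)$.
  \item The outside-affinity part is bounded by $\rho_n^{\mathrm{fg}}$ by definition.
\end{itemize}
Together these give
\[
  \max_{\theta\in\Theta_n}\E\|\bar Z_n(\theta)\|^2\le C'\bigl(b_n/N_n+\rho_n^{\mathrm{fg}}\bigr),
\]
with $C'$ depending only on $q$ and $C$. Because $\rho_n^{\mathrm{fg}}$ is defined as a supremum over $\Theta_n$, the bound is uniform over the grid.

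\emph{Step 2 (union bound).} For any $\eta>0$, combine Chebyshev's inequality with a union bound over the $K_n$ grid points:
\[
  \Pp\Bigl(\max_{\theta\in\Theta_n}\|\bar Z_n(\theta)\|>\eta\Bigr)
  \le \sum_{\theta\in\Theta_n}\frac{\E\|\bar Z_n(\theta)\|^2}{\eta^2}
  \le \frac{C'}{\eta^2}\,K_n\Bigl(\frac{b_n}{N_n}+\rho_n^{\mathrm{fg}}\Bigr)\to0.
\]
The convergence to zero is exactly the rate condition in Assumption~\ref{ass:as-lln-vc}. Since $\bar Z_n(\theta)=\bar\Psi_n(\theta)-\mu_n(\theta)$, this gives the stated conclusion.

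\emph{Main obstacle.} No single step is hard. The points that need care are:
\begin{itemize}
  \item checking that the constant in Step 1 does not depend on $n$ or $\theta$, which is why the moment bound and $\rho_n^{\mathrm{fg}}$ must both be suprema over the possibly growing grid $\Theta_n$;
  \item noting that $K_n$ may diverge, so the assumption's product rate condition is what makes the union bound work, and $b_n=o(N_n)$ alone would not suffice.
\end{itemize}
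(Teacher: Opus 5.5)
Your proof is correct and is essentially the paper's argument. Both use the trace/covariance decomposition from the smooth case to get a pointwise bound $C(b_n/N_n+\rho_n^{\mathrm{fg}})$ that holds uniformly over $\Theta_n$. Both then apply Chebyshev's inequality and a union bound over the $K_n$ grid points, closed by the product rate condition in Assumption~\ref{ass:as-lln-vc}.
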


\begin{proof}
For each fixed $\theta\in\Theta_n$, the same trace and covariance decomposition
as in the smooth proof gives
\[
  \E\|\bar Z_n(\theta)\|^2
  \le
  C\left(\frac{b_n}{N_n}+\rho_n^{\mathrm{fg}}\right).
\]
Therefore, by Chebyshev's inequality and a union bound,
\[
  \Pp\left(\sup_{\theta\in\Theta_n}\|\bar Z_n(\theta)\|>\eta\right)
  \le
  \frac{C K_n}{\eta^2}
  \left(\frac{b_n}{N_n}+\rho_n^{\mathrm{fg}}\right)\to0.
\]
\end{proof}

\begin{lemma}[Finite-grid argmin consistency]
\label{lem:finite-grid-argmin-consistency}
Let $\Theta_n$ be finite and define
\[
  Q_n(\theta):=\bar\Psi_n(\theta)^\top\Lambda_n\bar\Psi_n(\theta),
  \qquad
  Q_n^0(\theta):=\mu_n(\theta)^\top\Lambda\mu_n(\theta).
\]
Suppose the finite-grid ULLN in Lemma~\ref{lem:finite-grid-ulln} holds,
$\Lambda_n\pto\Lambda$, and
$\sup_{\theta\in\Theta_n}\|\mu_n(\theta)\|=O(1)$.  Suppose also that there is a
unique grid target $\theta_{0,n}\in\Theta_n$ and a constant $c>0$ such that
\[
  \min_{\theta\in\Theta_n:\ \theta\ne\theta_{0,n}}
  \{Q_n^0(\theta)-Q_n^0(\theta_{0,n})\}
  \ge c
\]
for all sufficiently large $n$.  Then any finite-grid minimizer
\[
  \hat\theta_n\in\arg\min_{\theta\in\Theta_n}Q_n(\theta)
\]
satisfies
\[
  \Pr(\hat\theta_n=\theta_{0,n})\to1.
\]
\end{lemma}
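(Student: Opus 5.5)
The plan is to mirror the argmin argument of Lemma~\ref{lem:as-argmin}, replacing the compactness and continuity step with the assumed uniform separation gap $c$ on the finite grid. The first step is to show that the empirical criterion $Q_n$ is uniformly close to the deterministic grid criterion $Q_n^0$ over $\Theta_n$:
\[
  \sup_{\theta\in\Theta_n}|Q_n(\theta)-Q_n^0(\theta)|\pto0 .
\]
To do so, write $\Delta_n(\theta):=\bar\Psi_n(\theta)-\mu_n(\theta)=\bar Z_n(\theta)$. Expanding the quadratic form as in \eqref{eq:QnQ-decomp-app}, but with the centering term $\mu_n$ in place of $\mu$, gives
\[
  Q_n(\theta)-Q_n^0(\theta)
  =\Delta_n^\top\Lambda_n\Delta_n
  +2\mu_n^\top\Lambda_n\Delta_n
  +\mu_n^\top(\Lambda_n-\Lambda)\mu_n .
\]
There is no fourth term, because $Q_n^0$ is already built from $\mu_n$; this is why no limit map $\mu$ is needed here.

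Each of the three terms is then bounded uniformly over $\Theta_n$.
\begin{itemize}
\item The first term is at most $\|\Lambda_n\|_{\op}\sup_{\Theta_n}\|\Delta_n\|^2$.
\item The second is at most $2\sup_{\Theta_n}\|\mu_n\|\,\|\Lambda_n\|_{\op}\sup_{\Theta_n}\|\Delta_n\|$.
\item The third is at most $\sup_{\Theta_n}\|\mu_n\|^2\|\Lambda_n-\Lambda\|_{\op}$.
\end{itemize}
Lemma~\ref{lem:finite-grid-ulln} gives $\sup_{\Theta_n}\|\Delta_n\|\pto0$. The hypothesis $\Lambda_n\pto\Lambda$ gives $\|\Lambda_n\|_{\op}=O_p(1)$ and $\|\Lambda_n-\Lambda\|_{\op}=o_p(1)$. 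The assumed bound $\sup_{\Theta_n}\|\mu_n\|=O(1)$ then makes all three terms $o_p(1)$ uniformly. Since $K_n$ may grow, the suprema over $\Theta_n$ must be taken inside every bound. This is harmless, because the finite-grid ULLN is already stated as a supremum over $\Theta_n$.

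The second step is the separation argument. Define the event $\Acal_n:=\{\sup_{\Theta_n}|Q_n-Q_n^0|\le c/3\}$; by the first step, $\Pp(\Acal_n)\to1$. On $\Acal_n$, for all large $n$ and every $\theta\in\Theta_n$ with $\theta\ne\theta_{0,n}$, the gap assumption gives
\[
  Q_n(\theta)\ge Q_n^0(\theta)-c/3\ge Q_n^0(\theta_{0,n})+2c/3\ge Q_n(\theta_{0,n})+c/3>Q_n(\theta_{0,n}).
\]
So on $\Acal_n$ every minimizer of $Q_n$ over the finite set $\Theta_n$, which always exists, must equal $\theta_{0,n}$. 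Hence $\Pr(\hat\theta_n=\theta_{0,n})\ge\Pp(\Acal_n)\to1$.

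The main obstacle is the first step: the growing grid and the bounded uniformity of $\mu_n$ need care, since $\theta_{0,n}$ and $\Theta_n$ both change with $n$ and compactness-based continuity arguments are unavailable. Both issues are absorbed by the two assumed inputs. The uniform $O(1)$ bound on $\mu_n$ controls the cross term, and the fixed gap $c>0$, independent of $n$, replaces the $\delta_\eta>0$ that continuity and compactness supplied in Lemma~\ref{lem:as-argmin}.
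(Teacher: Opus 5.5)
Your proposal is correct and follows essentially the same route as the paper. Both arguments obtain $\sup_{\theta\in\Theta_n}|Q_n(\theta)-Q_n^0(\theta)|\pto0$ from the finite-grid ULLN, $\Lambda_n\pto\Lambda$, and the $O(1)$ bound on $\mu_n$, and then apply the $c/3$ separation argument on the finite grid. The only difference is that you write out the three-term quadratic-form expansion explicitly, while the paper simply asserts the uniform convergence.
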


\begin{proof}
The finite-grid ULLN implies
\[
  \sup_{\theta\in\Theta_n}\|\bar\Psi_n(\theta)-\mu_n(\theta)\|\pto0.
\]
Together with $\Lambda_n\pto\Lambda$ and
$\sup_{\theta\in\Theta_n}\|\mu_n(\theta)\|=O(1)$, this gives
\[
  \sup_{\theta\in\Theta_n}|Q_n(\theta)-Q_n^0(\theta)|\pto0.
\]
With probability tending to one, this supremum is smaller than $c/3$.  On that
event, for every $\theta\ne\theta_{0,n}$,
\[
  Q_n(\theta)-Q_n(\theta_{0,n})
  \ge
  \{Q_n^0(\theta)-Q_n^0(\theta_{0,n})\}-2c/3
  \ge c/3>0.
\]
Thus the unique minimizer of $Q_n$ over $\Theta_n$ is $\theta_{0,n}$ with
probability tending to one.
\end{proof}

\subsection{Asymptotic normality of the baseline regular case}
\label{app:DB-AN-affinity}

This subsection proves the regular asymptotic-normality result used in
Section~\ref{sec:DB-consistency-AN-affinity}.  The dependent triangular-array
central limit theorem is needed only at the true exposure parameter, and the
local behavior of the criterion is handled by the Z-estimator theorem of
\citet[Theorem~3.3.1]{vaart1996weak}.  For ring exposures the sample moments need
not be differentiable: the requirement is the projected estimating equation in
Assumption~\ref{ass:AN-AFF}\,\textup{\ref{ass:an-aff6}}, a condition on the
estimator rather than on the sample paths.  When the radius is selected on a
numerical grid, the grid mesh enters the same first-order-negligibility
requirement.

We use the notation of Section~\ref{sec:DB-consistency-AN-affinity}.  Define
\[
  \bar\Psi_n(\theta)
  :=
  \frac{1}{N_n}\sum_{i=1}^{N_n}\Psi_i(\theta),
  \qquad
  \mu_n(\theta)
  :=
  \frac{1}{N_n}\sum_{i=1}^{N_n}\E[\Psi_i(\theta)],
\]
and, at the true parameter,
\[
  Z_{i,n}:=\Psi_i(\theta_0)-\E[\Psi_i(\theta_0)],
  \qquad
  \bar Z_n:=\frac{1}{N_n}\sum_{i=1}^{N_n}Z_{i,n}.
\]
The within-affinity covariance matrix is
\[
  \Omega_n
  :=
  \sum_{i=1}^{N_n}\sum_{j\in A_i}
  \Cov\{Z_{i,n},Z_{j,n}\}.
\]
Assumption~\ref{ass:AN-AFF}\,\textup{\ref{ass:an-aff3}} states that
$\Omega_n/N_n\to\Omega$, where $\Omega$ is finite and positive definite.

\medskip
\noindent\bluerev{The asymptotic-normality conditions summarized in Section~\ref{sec:DB-consistency-AN-affinity} are the following.}

\begin{assumption}[Design-based AN conditions]
\label{ass:AN-AFF}
In addition to Assumption~\ref{ass:as-lln}, suppose:
\begin{enumerate}[label=\textup{AN--AFF\arabic*}, ref=\textup{AN--AFF\arabic*}, leftmargin=1.6cm]

\item \label{ass:an-aff1}\textbf{Interiority.}
$\theta_0$ is an interior point of $\Theta$.

\item \label{ass:an-aff2}\textbf{Mean differentiability and full rank.}
There exists a neighborhood $\Ncal(\theta_0)$ and a continuous map
$G:\Ncal(\theta_0)\to\R^{q\times p}$ such that
\[
  \mu(\theta)
  =
  \mu(\theta_0)
  + G(\theta_0)(\theta-\theta_0)
  + r(\theta),
  \qquad
  \|r(\theta)\| \le C\|\theta-\theta_0\|^2
  \quad\text{for }\theta\in\Ncal(\theta_0),
\]
for some finite $C$, and $G^\top\Lambda G$ is nonsingular at
$\theta_0$, where $G:=G(\theta_0)$.

\item \label{ass:an-aff3}\textbf{Affinity-set CLT and covariance stabilization.}
The array $\{Z_{i,n}\}$ has uniformly bounded fourth moments.
Let $Z_{i,n,\ell}$ denote the $\ell$th component of $Z_{i,n}$ for
$\ell=1,\ldots,q$.
For each scalar index $(i,\ell)$, define a scalar affinity set
\[
  A_n(i,\ell)
  \subseteq \Ucal_n \times \{1,\ldots,q\}
  \quad\text{by}\quad
  A_n(i,\ell)
  :=
  \big\{(j,\ell') : j\in A_i,\ \ell'=1,\ldots,q\big\},
\]
and define the ``outside-affinity'' scalar sum
\[
  Z_{-i,\ell}
  :=
  \sum_{(j,\ell') \notin A_n(i,\ell)} Z_{j,n,\ell'}.
\]
The scalar array $\{Z_{i,n,\ell}\}$ then satisfies the three
affinity-set covariance conditions of \citet{chandrasekhar2023general}:
\begin{align*}
  &\sum_{i=1}^{N_n}\sum_{\ell=1}^{q}
    \ \sum_{(j,\ell'),(k,\ell'')\in A_n(i,\ell)}
    \E\!\big[\ |Z_{i,n,\ell}|\ Z_{j,n,\ell'} Z_{k,n,\ell''}\ \big]
    \ =\ o\!\big(\|\Omega_n\|_F^{3/2}\big), \tag{A1} \\
  &\sum_{i=1}^{N_n}\sum_{\ell=1}^{q}
    \ \sum_{j=1}^{N_n}\sum_{\ell'=1}^{q}
    \ \sum_{(k,\ell'')\in A_n(i,\ell)} \ \sum_{(l,\tilde\ell)\in A_n(j,\ell')}
    \Cov\!\big(Z_{i,n,\ell}Z_{k,n,\ell''},\ Z_{j,n,\ell'}Z_{l,n,\tilde\ell}\big)
    \ =\ o\!\big(\|\Omega_n\|_F^{2}\big), \tag{A2} \\
  &\sum_{i=1}^{N_n}\sum_{\ell=1}^{q}
    \E\!\Big[\ \big|\E\big[\,Z_{i,n,\ell}\,Z_{-i,\ell}\ \big|\ Z_{-i,\ell}\big]\big|\ \Big]
    \ =\ o\!\big(\|\Omega_n\|_F\big). \tag{A3}
\end{align*}
Moreover, the within-affinity covariance has uniformly bounded row sums,
\[
  \sup_n\max_{1\le i\le N_n}\max_{1\le \ell\le q}
  \sum_{(j,\ell')\in A_n(i,\ell)}
  \big|\Cov(Z_{i,n,\ell},Z_{j,n,\ell'})\big|<\infty,
\]
and the aggregate covariance stabilizes:
\[
  \frac{\Omega_n}{N_n}\ \to\ \Omega,
  \qquad
  \Omega\ \text{finite and positive definite}.
\]
Under these conditions,
\[
  \sqrt{N_n}\,\bar Z_n \ \Rightarrow\ \Ncal(0,\Omega).
\]
In addition, the finite-population centering at the true parameter is negligible
at the $\sqrt{N_n}$ scale:
\[
  \sqrt{N_n}\,
  \big\|\mu_n(\theta_0)-\mu(\theta_0)\big\|
  \ \to\ 0.
\]
Consequently,
\[
  \sqrt{N_n}\,
  \big(\bar\Psi_n(\theta_0)-\mu(\theta_0)\big)
  \ \Rightarrow\ \Ncal(0,\Omega).
\]
In the exposure-mapping application, this condition holds exactly because the
residualized design-based orthogonality moments satisfy
$\mu_n(\theta_0)=\mu(\theta_0)=0$ for every $n$.

\item \label{ass:an-aff4}\textbf{Empirical process equicontinuity.}
Writing
\[
  \mathbb G_n(\theta)
  :=
  \sqrt{N_n}\big(\bar\Psi_n(\theta)-\mu(\theta)\big),
\]
the process $\{\mathbb G_n(\theta):\theta\in\Ncal(\theta_0)\}$ is stochastically
equicontinuous at $\theta_0$: for any sequence $h_n\downarrow0$,
\[
  \sup_{\|\theta-\theta_0\|\le h_n}
  \big\|\mathbb G_n(\theta)-\mathbb G_n(\theta_0)\big\|
  \ \pto\ 0.
\]
\item \label{ass:an-aff6}\textbf{Projected estimating equation.}
The estimator satisfies the projected GMM estimating equation up to
$o_p(N_n^{-1/2})$:
\[
  \big\|
    G^\top \Lambda_n \bar\Psi_n(\hat\theta_n)
  \big\|
  =
  o_p(N_n^{-1/2}).
\]
\end{enumerate}
\end{assumption}

\subsubsection{Pointwise affinity-set CLT}

\begin{lemma}[Affinity-set CLT for the moment vector]
\label{lem:affCLT}
Under Assumptions~\ref{ass:as-lln} and
\ref{ass:AN-AFF}\,\textup{\ref{ass:an-aff3}},
\[
  \Omega_n^{-1/2}\sum_{i=1}^{N_n} Z_{i,n}
  \Rightarrow
  \Ncal(0,I_q).
\]
Consequently,
\[
  \sqrt{N_n}\{\bar\Psi_n(\theta_0)-\mu_n(\theta_0)\}
  \Rightarrow
  \Ncal(0,\Omega).
\]
Since Assumption~\ref{ass:AN-AFF}\,\textup{\ref{ass:an-aff3}} also contains
the fixed-limit centering condition,
\[
  \sqrt{N_n}\{\bar\Psi_n(\theta_0)-\mu(\theta_0)\}
  \Rightarrow
  \Ncal(0,\Omega).
\]
\end{lemma}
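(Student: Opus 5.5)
The plan is to reduce the vector statement to the scalar affinity-set CLT of \citet{chandrasekhar2023general} by the Cramér--Wold device, and then handle the centering. Fix $t\in\R^q$ with $\|t\|=1$ and form the scalar array $\xi_{i,n}:=t^\top Z_{i,n}=\sum_{\ell}t_\ell Z_{i,n,\ell}$. Each $\xi_{i,n}$ is mean zero and has uniformly bounded fourth moments by \ref{ass:an-aff3}. Its natural affinity sets are $A_i$, matching the scalar sets $A_n(i,\ell)$ after summing over components. The first task is to show that conditions (A1)--(A3) for $\{\xi_{i,n}\}$ follow from the same conditions stated componentwise in \ref{ass:an-aff3}. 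Expanding each product of $\xi$'s multilinearly in the coordinates, the left-hand sides for $\xi$ are bounded by $\|t\|_\infty^{r}$ (with $r=2,3,4$) times the corresponding sums over $(i,\ell),(j,\ell'),\dots$. Here the absolute value $|\xi_{i,n}|\le\sum_\ell|t_\ell||Z_{i,n,\ell}|$ is used in (A1), and the conditional-expectation term in (A3) is handled with the triangle inequality.

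Next comes normalization. The variance proxy for the scalar array is $\sigma_n^2(t):=t^\top\Omega_n t$. Since $\Omega_n/N_n\to\Omega\succ0$, we have $\sigma_n^2(t)\asymp N_n$ and $\|\Omega_n\|_F\asymp N_n$. So the $o(\|\Omega_n\|_F^{r})$ rates transfer to $o(\sigma_n^{2r}(t))$, which is exactly the form the scalar theorem requires. Applying \citet{chandrasekhar2023general} then gives $\sigma_n(t)^{-1}\sum_i\xi_{i,n}\Rightarrow\Ncal(0,1)$.

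To use this, I must also check that $\sigma_n^2(t)$ is asymptotically the true variance of $\sum_i\xi_{i,n}$. That variance equals $\sigma_n^2(t)$ plus the outside-affinity covariance. The outside part is controlled by (A3): $\sum_i|\E[\xi_{i,n}\xi_{-i}]|\le\sum_i\E|\E[\xi_{i,n}\xi_{-i}\mid\xi_{-i}]|$, which is $o(N_n)$. The bounded row-sum condition guarantees that the within-affinity sums are $O(N_n)$, so no degeneracy arises. This step, showing that the cross-component covariances outside the affinity sets are negligible relative to $t^\top\Omega_n t$, is where I expect the main care is needed. I would first try using the $\sigma$-field monotonicity $\E|\E[X\mid \mathcal F]|$ applied with the coarser conditioning set generated by the projected sum.

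Finally, Cramér--Wold gives $N_n^{-1/2}\sum_i Z_{i,n}\Rightarrow\Ncal(0,\Omega)$. Since $\Omega_n/N_n\to\Omega$ and $\Omega$ is positive definite, $\Omega_n^{-1/2}N_n^{1/2}\to\Omega^{-1/2}$, and Slutsky yields $\Omega_n^{-1/2}\sum_iZ_{i,n}\Rightarrow\Ncal(0,I_q)$. The second display is the same statement after noting $\sqrt{N_n}\{\bar\Psi_n(\theta_0)-\mu_n(\theta_0)\}=N_n^{-1/2}\sum_iZ_{i,n}$. The third follows by adding $\sqrt{N_n}\{\mu_n(\theta_0)-\mu(\theta_0)\}\to0$, which is the centering condition in \ref{ass:an-aff3} and is identically zero in the exposure-mapping application by Corollary~\ref{cor:design-moment-unit}.
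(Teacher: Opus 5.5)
\textbf{The reduction fails at its first step.} You claim that (A1)--(A3) for $\xi_{i,n}=t^\top Z_{i,n}$ follow from the conditions in \ref{ass:an-aff3} by multilinear expansion. Those conditions are imposed on the stacked scalar array with the dimension-inclusive sets $A_n(i,\ell)$, so every inner sum runs over all components with \emph{unit} weights, and the summands are signed. With $S_i:=\sum_{j\in A_i}Z_{j,n}$ and $T_i:=\sum_{j\notin A_i}Z_{j,n}$, the left-hand sides of (A1)--(A3) are exactly
\[
  \sum_{i,\ell}\E\bigl[|Z_{i,n,\ell}|\,(\mathbf{1}^\top S_i)^2\bigr],
  \qquad
  \Var\Bigl(\sum_{i}(\mathbf{1}^\top Z_{i,n})(\mathbf{1}^\top S_i)\Bigr),
  \qquad
  \sum_{i,\ell}\E\bigl|\E[Z_{i,n,\ell}\,\mathbf{1}^\top T_i\mid\mathbf{1}^\top T_i]\bigr|.
\]
The scalar theorem applied to $\xi$ instead needs $\sum_i\E[|t^\top Z_{i,n}|(t^\top S_i)^2]$, $\Var(\sum_i(t^\top Z_{i,n})(t^\top S_i))$ and $\sum_i\E|\E[t^\top Z_{i,n}\,t^\top T_i\mid t^\top T_i]|$.

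\begin{itemize}
\item Expanding in coordinates gives $t$-weighted \emph{signed} sums of cross-component terms. These are not dominated by $\|t\|_\infty^{r}$ times the unit-weighted signed sums: cancellation across components in the direction $\mathbf{1}$ (say, $\mathbf{1}^\top S_i$ small) says nothing about $t^\top S_i$.
\item Your bound $|\xi_{i,n}|\le\sum_\ell|t_\ell||Z_{i,n,\ell}|$ only handles the absolute-value factor in (A1), not $(t^\top S_i)^2$.
\item In (A3), the $\sigma$-fields generated by $t^\top T_i$ and $\mathbf{1}^\top T_i$ are not nested, and the integrands differ. So the conditioning-monotonicity argument you propose cannot bridge them.
\end{itemize}

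A Cram\'er--Wold route therefore needs (A1)--(A3) for every direction $t$, or in term-by-term absolute-value form. That is a strictly stronger hypothesis than \ref{ass:an-aff3} states.

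The paper does not project at all. It applies the \emph{multivariate} affinity-set CLT of \citet{chandrasekhar2023general} directly to the stacked array; the stacked conditions, with their cross-component terms and $\|\Omega_n\|_F$ rates, are written to match that theorem's hypotheses, so the first display rests on that citation. This gives $\Omega_n^{-1/2}\sum_iZ_{i,n}\Rightarrow\Ncal(0,I_q)$ in one step, followed by rescaling with $(\Omega_n/N_n)^{1/2}\to\Omega^{1/2}$. Your Slutsky and centering steps coincide with this.

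Separately, the step you single out as delicate is unnecessary: you do not need to show that $t^\top\Omega_n t$ is asymptotically the true variance of $\sum_i\xi_{i,n}$. The scalar theorem already delivers the limit under the within-affinity normalization, and $t^\top\Omega_n t/N_n\to t^\top\Omega t$ is all Cram\'er--Wold requires.
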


\begin{proof}
The first display is the multivariate affinity-set CLT of
\citet{chandrasekhar2023general}.  Assumption~\ref{ass:AN-AFF}\,\textup{\ref{ass:an-aff3}}
verifies their three covariance conditions (their Assumptions~1--3) for the
\emph{stacked} scalar array $\{Z_{i,n,\ell}\}$ with the dimension-inclusive
affinity sets $A_n(i,\ell)$ defined there, including the cross-component terms;
their multivariate Theorem~1 then yields
$\Omega_n^{-1/2}\sum_{i=1}^{N_n}Z_{i,n}\Rightarrow\Ncal(0,I_q)$ directly.  Since
\[
  \sum_{i=1}^{N_n}Z_{i,n}
  =
  N_n\{\bar\Psi_n(\theta_0)-\mu_n(\theta_0)\},
\]
we can write
\[
  \sqrt{N_n}\{\bar\Psi_n(\theta_0)-\mu_n(\theta_0)\}
  =
  \Big(\frac{\Omega_n}{N_n}\Big)^{1/2}
  \Big(\Omega_n^{-1/2}\sum_{i=1}^{N_n}Z_{i,n}\Big).
\]
The first factor converges to $\Omega^{1/2}$ and the second factor converges to
$\Ncal(0,I_q)$, so Slutsky's theorem gives the stated $\Ncal(0,\Omega)$ limit.
The last display follows from the additional condition
$\sqrt{N_n}\|\mu_n(\theta_0)-\mu(\theta_0)\|\to0$.
\end{proof}

\subsubsection{The Z-estimator theorem used below}

\begin{lemma}[Finite-dimensional specialization of \citet{vaart1996weak}, Theorem~3.3.1]
\label{lem:vdvw-z-estimator}
Let $\Theta\subset\R^p$, let $M_n:\Theta\to\R^p$ be random maps, and let
$M:\Theta\to\R^p$ be deterministic.  Suppose that $\hat\theta_n\pto\theta_0$,
$M(\theta_0)=0$, and
\[
  \|M_n(\hat\theta_n)\|=o_p(N_n^{-1/2}).
\]
Suppose further that $M$ is differentiable at $\theta_0$ with nonsingular
derivative $A$, that
\[
  \sqrt{N_n}\{M_n(\theta_0)-M(\theta_0)\}\Rightarrow Z
\]
for a tight random vector $Z$, and that for every sequence
$\theta_n\pto\theta_0$,
\[
  \sqrt{N_n}
  \big[
    \{M_n(\theta_n)-M(\theta_n)\}
    -
    \{M_n(\theta_0)-M(\theta_0)\}
  \big]
  =
  o_p\big(1+\sqrt{N_n}\|\theta_n-\theta_0\|\big).
  \tag{Z}
\]
Then
\[
  \sqrt{N_n}(\hat\theta_n-\theta_0)
  =
  -A^{-1}\sqrt{N_n}\{M_n(\theta_0)-M(\theta_0)\}+o_p(1),
\]
and hence
\[
  \sqrt{N_n}(\hat\theta_n-\theta_0)\Rightarrow -A^{-1}Z.
\]
\end{lemma}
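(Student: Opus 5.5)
We prove the finite-dimensional specialization of \citet[Theorem~3.3.1]{vaart1996weak} by a direct linearization argument; no empirical-process machinery beyond condition~(Z) is needed. We write $\Delta_n := \hat\theta_n-\theta_0$ and $\mathbb{Z}_n := \sqrt{N_n}\{M_n(\theta_0)-M(\theta_0)\}$. By assumption $\mathbb{Z}_n\Rightarrow Z$ with $Z$ tight, so $\mathbb{Z}_n=O_p(1)$.

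\emph{Step 1 (decomposition).} We apply (Z) along the random sequence $\theta_n=\hat\theta_n\pto\theta_0$. Since $M(\theta_0)=0$, this gives
\[
  \sqrt{N_n}M_n(\hat\theta_n)
  = \sqrt{N_n}M(\hat\theta_n) + \mathbb{Z}_n + o_p\big(1+\sqrt{N_n}\|\Delta_n\|\big).
\]
The left side is $o_p(1)$ by the assumption $\|M_n(\hat\theta_n)\|=o_p(N_n^{-1/2})$. Differentiability of $M$ at $\theta_0$ gives $M(\theta)=A(\theta-\theta_0)+o(\|\theta-\theta_0\|)$. Because $\hat\theta_n\pto\theta_0$, the remainder term is $o_p(\|\Delta_n\|)$, which can be seen by the usual argument of composing a deterministic $o(\cdot)$ function with a consistent sequence. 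Combining these,
\[
  \sqrt{N_n}A\Delta_n = -\mathbb{Z}_n + o_p\big(1+\sqrt{N_n}\|\Delta_n\|\big).
\]

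\emph{Step 2 ($\sqrt{N_n}$-rate; main obstacle).} This is the one delicate step, because the error on the right-hand side depends on the unknown rate of $\Delta_n$. Since $A$ is nonsingular, $\|A h\|\ge c\|h\|$ with $c=\|A^{-1}\|_{\op}^{-1}>0$. Applying this to the display in Step~1,
\[
  c\sqrt{N_n}\|\Delta_n\| \le O_p(1) + o_p(1)\cdot\big(1+\sqrt{N_n}\|\Delta_n\|\big).
\]
Write the $o_p(1)$ factor as $\varepsilon_n$. On the event $\{\varepsilon_n\le c/2\}$, whose probability tends to one, we can absorb the last term into the left side: $(c/2)\sqrt{N_n}\|\Delta_n\|\le O_p(1)+c/2$. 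Hence $\sqrt{N_n}\|\Delta_n\|=O_p(1)$.

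\emph{Step 3 (conclusion).} With the rate from Step~2, the error term $o_p(1+\sqrt{N_n}\|\Delta_n\|)$ is $o_p(1)$. Multiplying the display in Step~1 by $A^{-1}$ then yields
\[
  \sqrt{N_n}\Delta_n=-A^{-1}\mathbb{Z}_n+o_p(1).
\]
The continuous mapping theorem, applied to $\mathbb{Z}_n\Rightarrow Z$, and Slutsky's theorem give $\sqrt{N_n}\Delta_n\Rightarrow -A^{-1}Z$.
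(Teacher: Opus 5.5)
Your proof is correct, but it takes a different route from the paper. The paper's proof is a pure citation. It translates $M_n$, $M$, condition (Z), the approximate-zero condition and $A$ into the notation of \citet[Theorem~3.3.1]{vaart1996weak}, and notes that continuous invertibility of the derivative reduces to nonsingularity in $\R^p$.

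You instead reprove the result from scratch, in four moves:
\begin{enumerate}
\item rearrange (Z) along $\theta_n=\hat\theta_n$;
\item linearize $M$ at $\theta_0$;
\item bootstrap the $\sqrt{N_n}$ rate by absorbing the $o_p(1+\sqrt{N_n}\|\Delta_n\|)$ remainder using $\|Ah\|\ge\|A^{-1}\|_{\op}^{-1}\|h\|$ on an event of probability tending to one;
\item conclude by the continuous mapping theorem and Slutsky's theorem.
\end{enumerate}
This is essentially the argument behind the cited theorem, specialized to finite dimensions.

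What each route buys:
\begin{itemize}
\item The citation is shorter and inherits the general Banach-space, outer-probability version.
\item Your version is self-contained, so the reader need not check that the translation is exact. (The cited theorem imposes the expansion only at $\hat\theta_n$, in outer probability, with a Fr\'echet derivative.)
\item Your version shows that nonsingularity of $A$ enters only in the rate step.
\item It also shows that (Z) is needed only along $\hat\theta_n$, so the lemma's ``every sequence $\theta_n\pto\theta_0$'' hypothesis is stronger than required.
\end{itemize}

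One cosmetic point: the first display of your Step~1 is just (Z) rearranged and does not use $M(\theta_0)=0$. That fact enters only in the expansion $M(\theta)=A(\theta-\theta_0)+o(\|\theta-\theta_0\|)$.
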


\begin{proof}
This is the finite-dimensional version of \citet[Theorem~3.3.1]{vaart1996weak}
with rate $\sqrt{N_n}$.  In their notation, $M_n$ and $M$ correspond to the
random and limiting maps $\Psi_n$ and $\Psi$, condition \textup{(Z)} is their
local stochastic expansion condition, the approximate-zero condition is
$M_n(\hat\theta_n)=o_p(N_n^{-1/2})$, and the nonsingular matrix $A$ is the
continuously invertible derivative of the limiting map.  Since the parameter is
finite-dimensional, continuous invertibility of the derivative is equivalent to
nonsingularity of $A$.
\end{proof}

\subsubsection{Proof of Theorem~\ref{thm:DB-AN-aff}}

\begin{proof}[Proof of Theorem~\ref{thm:DB-AN-aff}]
By Theorem~\ref{thm:as-consistency}, Assumption~\ref{ass:as-lln} implies
$\hat\theta_n\pto\theta_0$.  We apply Lemma~\ref{lem:vdvw-z-estimator} to the
projected GMM maps
\[
  M_n(\theta):=G^\top\Lambda_n\bar\Psi_n(\theta),
  \qquad
  M(\theta):=G^\top\Lambda\mu(\theta),
\]
where $G=G(\theta_0)$.

First, $M(\theta_0)=0$ because $\mu(\theta_0)=0$.  By
Assumption~\ref{ass:AN-AFF}\,\textup{\ref{ass:an-aff2}},
\[
  \mu(\theta)=\mu(\theta_0)+G(\theta-\theta_0)+r(\theta),
  \qquad
  \|r(\theta)\|\le C\|\theta-\theta_0\|^2,
\]
so $M$ is differentiable at $\theta_0$ with derivative
\[
  A:=G^\top\Lambda G.
\]
The matrix $A$ is nonsingular by
Assumption~\ref{ass:AN-AFF}\,\textup{\ref{ass:an-aff2}}.

Second, Lemma~\ref{lem:affCLT} and $\Lambda_n\pto\Lambda$ imply
\[
  \sqrt{N_n}\{M_n(\theta_0)-M(\theta_0)\}
  =
  G^\top\Lambda_n
  \sqrt{N_n}\{\bar\Psi_n(\theta_0)-\mu(\theta_0)\}
  \Rightarrow
  G^\top\Lambda Z,
\]
where $Z\sim\Ncal(0,\Omega)$.

Third, we verify the stochastic expansion condition \textup{(Z)}.  Let
$\theta_n\pto\theta_0$.  The random-sequence version of
Assumption~\ref{ass:AN-AFF}\,\textup{\ref{ass:an-aff4}} follows from its
deterministic shrinking-ball formulation: choose a deterministic sequence
$h_n\downarrow0$ such that
$\Pr(\|\theta_n-\theta_0\|>h_n)\to0$ and apply the supremum bound on the event
$\{\|\theta_n-\theta_0\|\le h_n\}$.  Therefore
\[
  \sqrt{N_n}
  \big[
    \{\bar\Psi_n(\theta_n)-\mu(\theta_n)\}
    -
    \{\bar\Psi_n(\theta_0)-\mu(\theta_0)\}
  \big]
  =o_p(1).
\]
Multiplying by $G^\top\Lambda_n$ preserves the $o_p(1)$ order.  The difference
between $\Lambda_n$ and $\Lambda$ contributes
\[
  G^\top(\Lambda_n-
  \Lambda)\sqrt{N_n}\{\mu(\theta_n)-\mu(\theta_0)\}
  =
  o_p(1)\,O\big(\sqrt{N_n}\|\theta_n-\theta_0\|\big),
\]
where the $O(\|\theta_n-\theta_0\|)$ bound follows from local mean
differentiability.  Hence the total remainder is
\[
  o_p\big(1+\sqrt{N_n}\|\theta_n-\theta_0\|\big),
\]
which is exactly condition \textup{(Z)}.

Fourth, Assumption~\ref{ass:AN-AFF}\,\textup{\ref{ass:an-aff6}} gives
\[
  \|M_n(\hat\theta_n)\|
  =
  \|G^\top\Lambda_n\bar\Psi_n(\hat\theta_n)\|
  =o_p(N_n^{-1/2}).
\]
This is a projected estimating-equation condition.  It is not a differentiability
condition and is therefore compatible with ring exposures
whose sample moments are step functions of the cutoff.  In smooth cases it can
be verified by an ordinary sample first-order condition and a Jacobian law of
large numbers.  In non-smooth continuous-cutoff cases it is maintained directly as part of the
regular Z/GMM condition.  If the continuous cutoff is computed by grid search,
the grid approximation must be first-order negligible, for
example a grid mesh $\Delta_n=o(N_n^{-1/2})$, so that the computed estimator
satisfies the projected estimating equation of
Assumption~\ref{ass:AN-AFF}\,\textup{\ref{ass:an-aff6}}.

All conditions of Lemma~\ref{lem:vdvw-z-estimator} are now verified.  Therefore
\[
  \sqrt{N_n}(\hat\theta_n-\theta_0)
  =
  -
  (G^\top\Lambda G)^{-1}G^\top\Lambda
  \sqrt{N_n}\{\bar\Psi_n(\theta_0)-\mu(\theta_0)\}
  +o_p(1).
\]
Combining this expansion with Lemma~\ref{lem:affCLT} gives
\[
  \sqrt{N_n}(\hat\theta_n-\theta_0)
  \Rightarrow
  \Ncal\!\left(
    0,
    (G^\top\Lambda G)^{-1}
    G^\top\Lambda\Omega\Lambda G
    (G^\top\Lambda G)^{-1}
  \right).
\]
If $\Lambda=\Omega^{-1}$, the variance reduces to
$(G^\top\Omega^{-1}G)^{-1}$.  If
$\Lambda_n=\hat\Omega_n^{-1}$ and $\hat\Omega_n\pto\Omega$, the efficient-weight
case follows by Slutsky's theorem.
\end{proof}

\begin{remark}[Continuous and finite-grid rings]
\label{rem:continuous-vs-finite-ring-AN}
A ring exposure can be regular even though the sample path
$\theta\mapsto\bar\Psi_n(\theta)$ is stepwise.  The regularity needed for Wald inference
is exactly the regular Z/GMM condition in Assumption~\ref{ass:AN-AFF}:
differentiability of the mean map $\theta\mapsto\mu(\theta)$ at $\theta_0$, a pointwise
affinity-set CLT, local stochastic equicontinuity, and the projected estimating
equation.  The continuum ULLN in Lemma~\ref{lem:continuous-hard-ring-ulln}
provides one primitive shell-regularity route for the corresponding uniform
convergence.  Thus the theorem does not require differentiability of the sample
moments, but it does require the shell and empirical-process regularity stated
in the assumptions.

A genuinely finite-grid radius or finite $K$-hop parameter is different.  Under a
fixed population separation gap, the selected discrete parameter is consistently
selected and has a degenerate first-order limit.  Wald inference for the
discrete cutoff is therefore not the target.  If a fine grid is used only as a
numerical implementation of an underlying continuous-cutoff estimator, the
continuous Wald limit applies provided the grid mesh $\Delta_n$ satisfies
$\Delta_n=o(N_n^{-1/2})$ and the projected estimating-equation remainder remains
$o_p(N_n^{-1/2})$.
\end{remark}

\subsection{Finite-grid exposure-map variants}
\label{app:finite-grid-nested-variants}

This subsection collects the finite-grid exception to the baseline regular
continuous theory.  It treats a finite discrete exposure-map component with an
optional continuous component.

\begin{corollary}[Finite discrete exposure-map components]
\label{cor:discrete-parameters}
Suppose the Stage~1 exposure-map parameter can be written as
$\theta=(\vartheta,\rho)$, where $\vartheta$ takes values in a finite set
$\Vcal$ and, conditional on $\vartheta$, $\rho$ lies in a compact set
$\mathcal B(\vartheta)\subset\R^{p_\rho}$.
For each $\vartheta\in\Vcal$, define the profile population criterion
\[
  Q^p(\vartheta)
  :=
  \inf_{\rho\in\mathcal B(\vartheta)} Q(\vartheta,\rho),
  \qquad
  Q(\vartheta,\rho)
  :=
  \mu(\vartheta,\rho)^\top\Lambda\mu(\vartheta,\rho).
\]
Assume that $Q^p$ has a unique minimizer $\vartheta_0$ with a strict gap:
\[
  Q^p(\vartheta)-Q^p(\vartheta_0)
  \ge c>0
  \qquad
  \text{for all }\vartheta\ne\vartheta_0,
\]
and let
\[
  \rho_0
  \in
  \arg\min_{\rho\in\mathcal B(\vartheta_0)} Q(\vartheta_0,\rho).
\]
For each $\vartheta$, let
\[
  \hat\rho_n(\vartheta)
  \in
  \arg\min_{\rho\in\mathcal B(\vartheta)} Q_n(\vartheta,\rho),
\]
and define the profiled finite-grid estimator
\[
  \hat\vartheta_n
  \in
  \arg\min_{\vartheta\in\Vcal}
  Q_n\big(\vartheta,\hat\rho_n(\vartheta)\big),
  \qquad
  \hat\rho_n:=\hat\rho_n(\hat\vartheta_n).
\]
Assume that the uniform LLN and deterministic-stabilization conditions hold uniformly over
$\{(\vartheta,\rho):\vartheta\in\Vcal,\rho\in\mathcal B(\vartheta)\}$.  Hence the
finite-population profile criteria
\[
  Q_n^{0,p}(\vartheta)
  :=\inf_{\rho\in\mathcal B(\vartheta)}
  \mu_n(\vartheta,\rho)^\top\Lambda\mu_n(\vartheta,\rho)
\]
satisfy $\sup_{\vartheta\in\Vcal}|Q_n^{0,p}(\vartheta)-Q^p(\vartheta)|\to0$,
so the fixed limit-profile gap transfers to the finite-population profile for
all sufficiently large $n$.  Assume also that the asymptotic-normality conditions
of Theorem~\ref{thm:DB-AN-aff} hold for the continuous component $\rho$
conditional on $\vartheta=\vartheta_0$.
Then
\[
  \Pr(\hat\vartheta_n=\vartheta_0)\to1.
\]
Consequently,
\[
  \sqrt{N_n}\,(\hat\rho_n-\rho_0)
  =
  \sqrt{N_n}\big(\hat\rho_n(\vartheta_0)-\rho_0\big)+o_p(1),
\]
so the continuous Stage~1 exposure-map parameter has the same first-order
asymptotic distribution as the oracle estimator that fixes the correct discrete
component $\vartheta_0$.
\end{corollary}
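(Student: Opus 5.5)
The plan has two steps: first select the discrete component consistently, then transfer the oracle asymptotics for the continuous component.

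\emph{Step 1 (profile criterion convergence).} Since $\Vcal$ is finite and the uniform LLN and deterministic-stabilization conditions hold over the whole set $\{(\vartheta,\rho)\}$, I would rerun the argument of Lemma~\ref{lem:as-Q-unif} on the union of the compact sets $\mathcal B(\vartheta)$. This gives
\[
\sup_{\vartheta\in\Vcal}\sup_{\rho\in\mathcal B(\vartheta)}|Q_n(\vartheta,\rho)-Q(\vartheta,\rho)|\pto0 .
\]
Infima are $1$-Lipschitz under the sup norm, so
\[
\Bigl|\inf_{\rho}Q_n(\vartheta,\rho)-\inf_\rho Q(\vartheta,\rho)\Bigr|\le\sup_\rho|Q_n-Q| .
\]
Writing $\hat Q_n^p(\vartheta):=Q_n(\vartheta,\hat\rho_n(\vartheta))=\inf_\rho Q_n(\vartheta,\rho)$, we therefore get $\max_{\vartheta\in\Vcal}|\hat Q_n^p(\vartheta)-Q^p(\vartheta)|\pto0$.

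\emph{Step 2 (selection).} This mirrors Lemma~\ref{lem:finite-grid-argmin-consistency}. Consider the event on which the maximal deviation from Step~1 is below $c/3$; its probability tends to one. On that event, for every $\vartheta\neq\vartheta_0$,
\[
\hat Q_n^p(\vartheta)-\hat Q_n^p(\vartheta_0)\ge Q^p(\vartheta)-Q^p(\vartheta_0)-2c/3\ge c/3>0 .
\]
Hence every profiled minimizer equals $\vartheta_0$, and $\Pr(\hat\vartheta_n=\vartheta_0)\to1$. The gap could alternatively be read through $Q_n^{0,p}$ as stated, but working with the limit $Q^p$ directly is cleaner.

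\emph{Step 3 (oracle equivalence).} Let $E_n=\{\hat\vartheta_n=\vartheta_0\}$. On $E_n$ we have $\hat\rho_n=\hat\rho_n(\vartheta_0)$ exactly, so the difference $D_n:=\sqrt{N_n}(\hat\rho_n-\rho_0)-\sqrt{N_n}(\hat\rho_n(\vartheta_0)-\rho_0)$ vanishes on $E_n$. Then $\Pr(\|D_n\|>\varepsilon)\le\Pr(E_n^c)\to0$ for every $\varepsilon>0$, so $D_n=o_p(1)$ without any rate on $\hat\rho_n$ off $E_n$. This is the displayed expansion. By hypothesis, Theorem~\ref{thm:DB-AN-aff} applies to $\rho$ with $\vartheta=\vartheta_0$ fixed. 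Consistency of $\hat\rho_n(\vartheta_0)$ for $\rho_0$ is part of those conditions through Theorem~\ref{thm:as-consistency} with $\Theta$ replaced by $\mathcal B(\vartheta_0)$. Slutsky's theorem then shows that $\sqrt{N_n}(\hat\rho_n-\rho_0)$ has the same limit law as the oracle.

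The main obstacle is Step~1's interchange of infimum and uniform convergence. It needs uniformity over $\rho$ within each $\mathcal B(\vartheta)$, not just pointwise convergence, which is why the corollary assumes the uniform LLN jointly. It also requires the minimizers $\hat\rho_n(\vartheta)$ to exist for every $\vartheta$; if they do not, I would replace them by $o_p(1)$-near-minimizers, which leaves Steps~2 and~3 unchanged. A secondary subtlety is that $\rho_0$ must be the unique minimizer on $\mathcal B(\vartheta_0)$. That uniqueness is supplied by the identification part of the AN conditions imposed conditional on $\vartheta_0$.
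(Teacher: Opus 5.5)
Your proposal is correct and is essentially the paper's own proof: joint uniform convergence of $Q_n$ to $Q$, the 1-Lipschitz property of infima to get convergence of the profiled criteria, and the strict gap to get $\Pr(\hat\vartheta_n=\vartheta_0)\to1$. The rest also matches: the exact identity $\hat\rho_n=\hat\rho_n(\vartheta_0)$ on the selection event gives the $o_p(1)$ expansion and the oracle limit. The one addition is that you derive the joint uniform convergence of $Q_n$ by rerunning Lemma~\ref{lem:as-Q-unif}, whereas the paper simply takes it as the hypothesis of its selection lemma. That rerun needs $\sup\|\mu\|<\infty$ over the union of the $\mathcal B(\vartheta)$, which is automatic with bounded moments.
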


\begin{remark}[What the finite-grid results do and do not say]
\label{rem:finite-selection-interpretation}
Corollary~\ref{cor:discrete-parameters} is a Stage~1 result about exposure-map
parameters.  The continuous component $\rho$ may depend on the selected discrete
component $\hat\vartheta_n$ in finite samples.  The conclusion is not that this
dependence is absent, but that under a fixed strict profile gap the selected
class equals the population class with probability approaching one.  Hence the
selected estimator equals the oracle fixed-class estimator with probability
approaching one, which removes the discrete selection step from the first-order
asymptotic variance.

This fixed-gap condition rules out local ties among discrete classes.  If two
candidate exposure maps have population criteria that are equal, or differ only
at the $N_n^{-1/2}$ scale, the selected criterion can have a nonstandard
post-selection distribution, analogous to the maximum of two noisy sample means.
Such local-tie cases are not covered by the finite-grid corollary.
\end{remark}

\begin{remark}[Discrete rings and network radii]
\label{rem:discrete-rings-networks}
Corollary~\ref{cor:discrete-parameters} applies when a finite-grid component of
the exposure map is point-identified by a strict profile gap.  Examples include
non-nested finite grids of scalar ring radii or $K$-hop network radii when the
profile population criterion has a unique minimizer.

The ring case covered by Assumption~\ref{ass:AN-AFF} is a regular
continuous-parameter problem.  In the scalar ring case, $g(W;X_i,R)$ is an
average over units within radius $R$, and increasing $R$ does not usually refine
the smaller-radius exposure in the sense of sigma-fields; the regular
continuous-parameter theory applies provided the population moment map is mean
differentiable.
\end{remark}

\begin{remark}[Stage~1 versus Stage~2]
\label{rem:stage1-stage2-discrete}
The notation $\theta=(\vartheta,\rho)$ in this section refers only to the
Stage~1 exposure-map parameter.  Downstream causal or exposure--response
parameters, such as regression coefficients or average treatment effects, are
introduced in Section~\ref{sec:stage2}.  If the Stage~1 discrete component is
selected consistently, it contributes no first-order term to Stage~2 inference.
Any regular continuous Stage~1 component $\rho$, however, must be propagated into
Stage~2 by the stacked estimating-equation or delta-method argument described in
Section~\ref{sec:stage2} and Appendix~\ref{app:stage2}.
\end{remark}

\subsubsection{\texorpdfstring{Proof of Corollary~\ref{cor:discrete-parameters}}{Proof of the discrete-parameters corollary}}
\label{app:finite-discrete-selection}

Let $\theta=(\vartheta,\rho)$ with $\vartheta\in\Vcal$ finite and
$\rho\in\mathcal B(\vartheta)$ compact.  Define
\[
  Q^p(\vartheta)
  :=
  \inf_{\rho\in\mathcal B(\vartheta)} Q(\vartheta,\rho),
  \qquad
  Q_n^p(\vartheta)
  :=
  \inf_{\rho\in\mathcal B(\vartheta)} Q_n(\vartheta,\rho).
\]
The sample estimator in Corollary~\ref{cor:discrete-parameters} is equivalently
\[
  \hat\vartheta_n\in\arg\min_{\vartheta\in\Vcal} Q_n^p(\vartheta),
  \qquad
  \hat\rho_n=\hat\rho_n(\hat\vartheta_n).
\]

\begin{lemma}[Selection consistency for finite profiled criteria]
\label{lem:finite-profile-selection}
Suppose $\Vcal$ is finite and
\[
  \sup_{\vartheta\in\Vcal}\sup_{\rho\in\mathcal B(\vartheta)}
  |Q_n(\vartheta,\rho)-Q(\vartheta,\rho)|\pto0.
\]
If $Q^p$ has a unique minimizer $\vartheta_0$ and satisfies the strict profile
gap
\[
  Q^p(\vartheta)-Q^p(\vartheta_0)\ge c>0
  \qquad\text{for all }\vartheta\ne\vartheta_0,
\]
then $\Pr(\hat\vartheta_n=\vartheta_0)\to1$.
\end{lemma}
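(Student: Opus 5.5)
The argument mirrors the proofs of Lemma~\ref{lem:as-argmin} and Lemma~\ref{lem:finite-grid-argmin-consistency}, applied to the profiled criteria. First, I will transfer the joint uniform convergence to the profiles: for each fixed $\vartheta$, the elementary inequality $|\inf_\rho a(\rho)-\inf_\rho b(\rho)|\le\sup_\rho|a(\rho)-b(\rho)|$ gives
\[
  |Q_n^p(\vartheta)-Q^p(\vartheta)|
  \le
  \sup_{\rho\in\mathcal B(\vartheta)}|Q_n(\vartheta,\rho)-Q(\vartheta,\rho)|.
\]
Taking the maximum over the finite set $\Vcal$ and invoking the hypothesis yields
$\Delta_n:=\max_{\vartheta\in\Vcal}|Q_n^p(\vartheta)-Q^p(\vartheta)|\pto0$. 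Because $\Vcal$ is finite, no covering argument is needed. The infima are finite because $Q_n$ and $Q$ are nonnegative quadratic forms; continuity of $\rho\mapsto Q$ is not even needed for this step.

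Second, I will apply the gap argument. On the event $\{\Delta_n<c/3\}$, whose probability tends to one, every $\vartheta\ne\vartheta_0$ satisfies
\[
  Q_n^p(\vartheta)-Q_n^p(\vartheta_0)
  \ge
  Q^p(\vartheta)-Q^p(\vartheta_0)-2\Delta_n
  \ge c-2c/3=c/3>0 .
\]
Thus $\vartheta_0$ is the unique minimizer of $Q_n^p$ over $\Vcal$ on that event. Any $\hat\vartheta_n\in\arg\min_{\vartheta\in\Vcal}Q_n^p(\vartheta)$ must therefore equal $\vartheta_0$, so
\[
  \Pr(\hat\vartheta_n=\vartheta_0)\ge\Pr(\Delta_n<c/3)\to1 .
\]

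The main obstacle is notational rather than substantive. The estimator is defined through $Q_n(\vartheta,\hat\rho_n(\vartheta))$, not through the infimum. I will note that when the sample argmin $\hat\rho_n(\vartheta)$ exists, the two coincide. If only an approximate minimizer is available, with error $o_p(1)$ uniformly over the finitely many $\vartheta$, the same display holds with $2\Delta_n$ replaced by $2\Delta_n+o_p(1)$, and the conclusion is unchanged.
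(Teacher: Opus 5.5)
Your proposal is correct and follows essentially the same route as the paper: the one-Lipschitz property of the infimum transfers the joint uniform convergence to the profile criteria over the finite set $\Vcal$, and the strict gap then makes $\vartheta_0$ the unique sample profile minimizer with probability tending to one. Your explicit $c/3$ event and your remark that $Q_n(\vartheta,\hat\rho_n(\vartheta))=Q_n^p(\vartheta)$ only spell out steps the paper states in $o_p(1)$ form.
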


\begin{proof}
The uniform convergence of $Q_n$ implies uniform convergence of the profile
criteria:
\[
  \sup_{\vartheta\in\Vcal}|Q_n^p(\vartheta)-Q^p(\vartheta)|\pto0,
\]
because taking infima over $\rho$ is one-Lipschitz with respect to the sup norm.
For any $\vartheta\ne\vartheta_0$,
\[
  Q_n^p(\vartheta)-Q_n^p(\vartheta_0)
  =
  \{Q^p(\vartheta)-Q^p(\vartheta_0)\}+o_p(1)
  \ge c+o_p(1),
\]
uniformly over the finite set $\Vcal\setminus\{\vartheta_0\}$.  Hence, with
probability approaching one, every $\vartheta\ne\vartheta_0$ has a larger
profile sample criterion than $\vartheta_0$, so
$\Pr(\hat\vartheta_n=\vartheta_0)\to1$.
\end{proof}

\begin{proof}[Proof of Corollary~\ref{cor:discrete-parameters}]
Selection consistency follows from Lemma~\ref{lem:finite-profile-selection}.  On
the event $\{\hat\vartheta_n=\vartheta_0\}$,
\[
  \hat\rho_n=\hat\rho_n(\vartheta_0).
\]
Therefore, for any $\varepsilon>0$,
\begin{align*}
&\Pr\!\left(
  \left\|
    \sqrt{N_n}(\hat\rho_n-\rho_0)
    -
    \sqrt{N_n}\{\hat\rho_n(\vartheta_0)-\rho_0\}
  \right\|>\varepsilon
\right) \\
&\qquad\le
\Pr(\hat\vartheta_n\ne\vartheta_0)\to0.
\end{align*}
Thus
\[
  \sqrt{N_n}(\hat\rho_n-\rho_0)
  =
  \sqrt{N_n}\{\hat\rho_n(\vartheta_0)-\rho_0\}+o_p(1).
\]
The conditional asymptotic-normality assumption then gives the oracle first-order
distribution.
\end{proof}


\section{\texorpdfstring{Details and Proofs for Section~\ref{sec:db-optimal-instruments}}{Details and proofs for design-based optimal instruments}}
\label{app:db-efficiency}

This appendix formalizes the optimal-moment construction used in
Section~\ref{sec:db-optimal-instruments}; Theorems~\ref{thm:oracle_variance_bound}
and~\ref{thm:sieve_oracle} below together constitute the formal statement of
Theorem~\ref{thm:efficiency-informal}.  The main object is the class of
product residualized moments
\[
  \phi(Y_i)
  \left\{
    \psi(W)-\E[\psi(W)\mid g(W;X_i,\theta)]
  \right\}.
\]
The appendix proceeds in five steps.  First,
Appendix~\ref{app:moment_characterization} shows that, for each fixed unit,
the vanishing of all such product residualized moments is equivalent to the
unit-level exposure-sufficiency condition. Second, Appendix~\ref{app:avg_moment_space}
constructs the averaged moment space used by the GMM estimator.  Third,
Appendix~\ref{app:oracle_geometry} derives the oracle moment through the
Riesz representation theorem and the Cauchy--Schwarz inequality.  Fourth,
Appendix~\ref{app:sieve_gmm} shows how finite dictionaries and sieve GMM
approximate the oracle moment.  Finally, Appendix~\ref{app:vector_parameters}
extends the argument to vector-valued exposure parameters.

We write
\[
  G_{i,\theta}:=g(W;X_i,\theta),
  \qquad
  G_i:=G_{i,\theta_0}.
\]
The design-side residual operator is
\[
  R_{i,\theta}{\psi}(W)
  :=
  \psi(W)-\E[\psi(W)\mid G_{i,\theta}].
\]

\subsection{Moment characterization of unit-level exposure sufficiency}
\label{app:moment_characterization}
\label{app:conditional-poirier}

The first result is a population characterization of the unit-level
conditional-independence restriction.  It is a statement about the joint law
of \((Y_i,W,G_i)\) under the design \(\Dcal_n\) for a fixed \((i,n)\).  It does
not require independence across units.

\begin{assumption}[Measurability]
\label{ass:standard_borel}
\label{ass:standard-borel-unitwise}
For each fixed experiment \(n\) and unit \(i\), the random variables
\(Y_i\), \(W\), and \(G_i=g(W;X_i,\theta_0)\) take values in standard Borel
spaces.  The maps \(w\mapsto Y_i(w)\) and
\(w\mapsto g(w;X_i,\theta_0)\) are measurable.  Conditional expectations are
understood as regular conditional expectations under the design \(\Dcal_n\).
\end{assumption}

\begin{lemma}[Product-moment characterization]
\label{lem:product_moment_characterization}
\label{lem:conditional-poirier}
Fix an experiment \(n\) and a unit \(i\).  Under Assumption
\ref{ass:standard_borel}, the following two statements are equivalent:
\begin{enumerate}
  \item
  \[
    Y_i \indep W \mid G_i .
  \]
  \item For every bounded measurable \(\phi\) and every bounded measurable
  \(\psi\),
  \[
    \E
    \left[
      \phi(Y_i)
      \left\{
        \psi(W)-\E[\psi(W)\mid G_i]
      \right\}
    \right]
    =0 .
  \]
\end{enumerate}
\end{lemma}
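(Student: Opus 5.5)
The plan is to prove each direction separately. Both directions rest on the identity that, for bounded measurable $\phi$ and $\psi$, the law of iterated expectations gives
\[
  \E\bigl[\phi(Y_i)\{\psi(W)-\E[\psi(W)\mid G_i]\}\bigr]
  =
  \E\bigl[\phi(Y_i)\psi(W)\bigr]-\E\bigl[\E[\phi(Y_i)\mid G_i]\,\E[\psi(W)\mid G_i]\bigr].
\]
Here we pull $\E[\psi(W)\mid G_i]$ inside and condition $\phi(Y_i)$ on $G_i$. Since $\phi$ and $\psi$ are bounded, every term is integrable and no further integrability care is needed.

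\emph{(1)$\Rightarrow$(2).} Conditional independence means $\E[\phi(Y_i)\psi(W)\mid G_i]=\E[\phi(Y_i)\mid G_i]\E[\psi(W)\mid G_i]$ almost surely. Taking expectations shows that the right-hand side of the identity vanishes. This is the argument already used in the proof of Corollary~\ref{cor:design-moment-unit}, but with the measurability of $Y_i$ with respect to $\sigma(G_i)$ replaced by conditional independence.

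\emph{(2)$\Rightarrow$(1).} This is the substantive direction. Note first that $G_i=g(W;X_i,\theta_0)$ is a measurable function of $W$, so $\sigma(G_i)\subseteq\sigma(W)$. Fix a bounded measurable $\phi$ and set $U:=\E[\phi(Y_i)\mid W]-\E[\phi(Y_i)\mid G_i]$. The goal is to show that $U=0$ almost surely. Given that, $\E[\phi(Y_i)\mid W]=\E[\phi(Y_i)\mid G_i]=\E[\phi(Y_i)\mid W,G_i]$ for every bounded $\phi$, which is the standard characterization of $Y_i\indep W\mid G_i$ when $G_i$ is $W$-measurable.

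To prove $U=0$, use (2) with an arbitrary bounded $\psi$. Conditioning on $W$ and using the tower property rewrites (2) as
\[
  \E\bigl[\E[\phi(Y_i)\mid W]\,\psi(W)\bigr]
  =\E\bigl[\E[\phi(Y_i)\mid G_i]\,\E[\psi(W)\mid G_i]\bigr]
  =\E\bigl[\E[\phi(Y_i)\mid G_i]\,\psi(W)\bigr],
\]
so $\E[U\psi(W)]=0$ for every bounded measurable $\psi$. Now $U$ is bounded and $\sigma(W)$-measurable, and by the Doob--Dynkin lemma it can be written as $U=u(W)$ for a measurable $u$; here the standard Borel assumption (Assumption~\ref{ass:standard_borel}) is used. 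Choosing $\psi=u$ gives $\E[U^2]=0$.

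The main obstacle is technical. One must check that the functions $\psi$ in (2) range over all bounded $\sigma(W)$-measurable variables; they do, since $\psi$ may be any bounded measurable function of $W$, and $X_i$ is fixed. One must also justify the conditional-independence characterization via regular conditional distributions. Here I would again invoke Assumption~\ref{ass:standard_borel}. Regular conditional laws of $Y_i$ given $W$ and given $G_i$ exist, and equality of $\E[\phi(Y_i)\mid\cdot]$ over a countable convergence-determining class of bounded $\phi$ (for example, indicators of a countable $\pi$-system generating the Borel $\sigma$-field of the standard Borel space) gives almost-sure equality of the conditional laws. That equality is precisely conditional independence.
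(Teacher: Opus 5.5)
Your proof is correct, and your (2)$\Rightarrow$(1) direction takes a different route from the paper.

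\textbf{The paper's route.} The paper fixes $\psi$ and conditions the \emph{design} side on $(Y_i,G_i)$. It applies (2) to the modified design function $b(G_i)\psi(W)$, which gives $\E[a(Y_i)b(G_i)U_\psi(W)]=0$ for product test functions. A monotone-class argument then extends this to all bounded $h(Y_i,G_i)$, yielding $\E[\psi(W)\mid Y_i,G_i]=\E[\psi(W)\mid G_i]$.

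\textbf{Your route.} You fix $\phi$ and condition the \emph{outcome} side on $W$. Because $\sigma(G_i)\subseteq\sigma(W)$, the test class in (2) already exhausts the bounded $\sigma(W)$-measurable variables. So the single choice $\psi=u$ kills $\E[\phi(Y_i)\mid W]-\E[\phi(Y_i)\mid G_i]$, and no product-generation or monotone-class step is needed. This makes your argument somewhat more elementary.

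\textbf{Common ground.} Both arguments use the $W$-measurability of $G_i$ essentially. The paper needs it to make $b(G_i)\psi(W)$ an admissible design function; you need it to identify $\sigma(W,G_i)$ with $\sigma(W)$. The two proofs end at the two standard equivalent forms of conditional independence.

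\textbf{Unnecessary machinery.} Your closing paragraph does more than it needs to. You already have $\E[\phi(Y_i)\mid W]=\E[\phi(Y_i)\mid G_i]$ for every bounded $\phi$. The tower property then gives
\[
\E[\phi(Y_i)\psi(W)\mid G_i]=\E\bigl[\E[\phi(Y_i)\mid W]\psi(W)\mid G_i\bigr]=\E[\phi(Y_i)\mid G_i]\,\E[\psi(W)\mid G_i]
\]
directly, with no regular conditional laws or countable $\pi$-system. Likewise, Doob--Dynkin for a real-valued $U$ holds for $W$ in an arbitrary measurable space. So Assumption~\ref{ass:standard_borel} is not actually doing the work you attribute to it at those two steps.
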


\begin{proof}
First suppose that
\[
  Y_i\indep W\mid G_i .
\]
Then, for every bounded measurable \(\psi\),
\[
  \E[\psi(W)\mid Y_i,G_i]
  =
  \E[\psi(W)\mid G_i].
\]
Hence, for every bounded measurable \(\phi\),
\[
\begin{aligned}
&\E
\left[
  \phi(Y_i)
  \left\{
    \psi(W)-\E[\psi(W)\mid G_i]
  \right\}
\right]
\\
&\qquad
=
\E
\left[
  \phi(Y_i)
  \left\{
    \E[\psi(W)\mid Y_i,G_i]
    -
    \E[\psi(W)\mid G_i]
  \right\}
\right]
=0.
\end{aligned}
\]

Conversely, suppose that
\[
  \E
  \left[
    \phi(Y_i)
    \left\{
      \psi(W)-\E[\psi(W)\mid G_i]
    \right\}
  \right]
  =0
\]
for every bounded measurable pair \((\phi,\psi)\).  Fix a bounded
measurable \(\psi\) and define
\[
  U_\psi(W)
  :=
  \psi(W)-\E[\psi(W)\mid G_i].
\]
We will show that
\[
  \E[U_\psi(W)\mid Y_i,G_i]=0.
\]
Let \(a\) and \(b\) be bounded measurable functions.  Since \(G_i\) is a
measurable function of \(W\), the product
\[
  \widetilde\psi(W):=b(G_i)\psi(W)
\]
is a bounded measurable function of \(W\).  Applying the assumed moment
condition with \(\phi(Y_i)=a(Y_i)\) and design function
\(\widetilde\psi(W)\), we obtain
\[
0
=
\E
\left[
  a(Y_i)
  \left\{
    b(G_i)\psi(W)
    -
    \E[b(G_i)\psi(W)\mid G_i]
  \right\}
\right].
\]
But
\[
  \E[b(G_i)\psi(W)\mid G_i]
  =
  b(G_i)\E[\psi(W)\mid G_i],
\]
and therefore
\[
  0
  =
  \E
  \left[
    a(Y_i)b(G_i)
    \left\{
      \psi(W)-\E[\psi(W)\mid G_i]
    \right\}
  \right]
  =
  \E[a(Y_i)b(G_i)U_\psi(W)].
\]
The preceding identity holds for every product function \(a(Y_i)b(G_i)\).
Since products of functions of \(Y_i\) and functions of \(G_i\) generate the
bounded measurable functions of the pair \((Y_i,G_i)\), the same identity
extends, by a standard monotone-class argument, to every bounded measurable
function \(h(Y_i,G_i)\):
\[
  \E[h(Y_i,G_i)U_\psi(W)]=0.
\] Hence
\[
  \E[U_\psi(W)\mid Y_i,G_i]=0.
\]
Equivalently,
\[
  \E[\psi(W)\mid Y_i,G_i]
  =
  \E[\psi(W)\mid G_i]
\]
for every bounded measurable \(\psi\).  This is precisely
\[
  Y_i\indep W\mid G_i .
\]
\end{proof}

\begin{remark}[Finite support is only a special case]
\label{rem:finite_support}
Lemma~\ref{lem:product_moment_characterization} does not require \(W\) to
have finite support.  If \(W\) has finite support \(\Wcal_n\), then every
bounded measurable design function can be written as
\[
  \psi(W)=\sum_{w\in\Wcal_n}\psi(w)\mathbf 1\{W=w\}.
\]
In that special case, it is enough to verify the moment restrictions for
the atom indicators \(\mathbf 1\{W=w\}\).  With continuous-support or
large-support designs, the exact population characterization uses all
bounded measurable design functions, while feasible implementation uses a
countable determining dictionary.
\end{remark}

For feasible implementation it is useful to replace the full class of
bounded measurable functions by countable dense dictionaries.

\begin{assumption}[Countable determining dictionaries]
\label{ass:countable_dictionaries}
\label{ass:countable-dictionaries}
For each fixed \((i,n)\), \(L^2(Y_i)\) and \(L^2(W)\) are separable under
the design \(\Dcal_n\).  Let
\[
  \Phi_{i,n}=\{\phi_{\ell,i,n}:\ell\geq1\},
  \qquad
  \Psi_n=\{\psi_{m,n}:m\geq1\}
\]
be countable dense subsets of \(L^2(Y_i)\) and \(L^2(W)\), respectively.
\end{assumption}

\begin{lemma}[Countable moment characterization]
\label{lem:countable_characterization}
\label{lem:countable-characterization}
Suppose Assumptions~\ref{ass:standard_borel} and
\ref{ass:countable_dictionaries} hold.  Then
\[
  Y_i\indep W\mid G_i
\]
if and only if
\[
  \E
  \left[
    \phi_{\ell,i,n}(Y_i)
    \left\{
      \psi_{m,n}(W)-\E[\psi_{m,n}(W)\mid G_i]
    \right\}
  \right]
  =0
\]
for every \(\ell,m\geq1\).
\end{lemma}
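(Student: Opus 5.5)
The plan is to reduce Lemma~\ref{lem:countable_characterization} to Lemma~\ref{lem:product_moment_characterization} with a density argument. The bilinear functional
\[
  B(\phi,\psi):=\E\bigl[\phi(Y_i)\,R_{i,\theta_0}\psi(W)\bigr]
\]
is jointly continuous on \(L^2(Y_i)\times L^2(W)\), and a continuous bilinear form that vanishes on a product of dense sets vanishes everywhere.

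The forward direction is immediate. If \(Y_i\indep W\mid G_i\), the orthogonality of Corollary~\ref{cor:design-moment-unit} (or the first half of the proof of Lemma~\ref{lem:product_moment_characterization}) holds for all square-integrable \(\phi,\psi\). This uses the same conditioning step: \(\E[\psi(W)\mid Y_i,G_i]=\E[\psi(W)\mid G_i]\), extended from bounded \(\psi\) to \(L^2\) by truncation and \(L^2\)-continuity of conditional expectation. In particular it holds for every dictionary pair \((\phi_{\ell,i,n},\psi_{m,n})\).

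For the converse, I will first establish the continuity bound
\[
  |B(\phi,\psi)|\le \|\phi(Y_i)\|_{L^2}\,\|R_{i,\theta_0}\psi(W)\|_{L^2}\le \|\phi(Y_i)\|_{L^2}\,\|\psi(W)\|_{L^2}.
\]
This follows from Cauchy--Schwarz and the fact that \(R_{i,\theta_0}\) is an orthogonal projection, namely \(I-\E[\cdot\mid G_i]\), so it has operator norm at most one. Then:
\begin{enumerate}
\item Fix \(m\). The map \(\phi\mapsto B(\phi,\psi_{m,n})\) is a continuous linear functional that vanishes on the dense set \(\Phi_{i,n}\), so it vanishes on all of \(L^2(Y_i)\).
\item Fix any \(\phi\in L^2(Y_i)\). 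The map \(\psi\mapsto B(\phi,\psi)\) vanishes on the dense set \(\Psi_n\), hence on all of \(L^2(W)\).
\end{enumerate}
Bounded measurable functions lie in these \(L^2\) spaces, so statement~2 of Lemma~\ref{lem:product_moment_characterization} holds. That lemma, under Assumption~\ref{ass:standard_borel}, then yields \(Y_i\indep W\mid G_i\).

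The step most likely to need care is the interpretation of \(L^2(W)\). The design functions must be dense in the space of square-integrable functions of \(W\) under \(\Dcal_n\), and the bound must use \(\E[\psi(W)^2]\) under the design law. This is exactly how \(L^2(W)\) is defined in Assumption~\ref{ass:countable_dictionaries}, so no extra condition is needed. I will also note that no density of products is required: the two density steps are carried out one argument at a time, with the other argument held fixed, which avoids any tensor-product closure issue.
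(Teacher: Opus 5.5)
Your proposal is correct and is essentially the paper's proof. The same Cauchy--Schwarz bound, with $I-\E[\cdot\mid G_i]$ as an $L^2$ contraction, makes the bilinear form continuous on $L^2(Y_i)\times L^2(W)$; density of the dictionaries then extends the vanishing to all bounded measurable pairs, and Lemma~\ref{lem:product_moment_characterization} concludes, with your one-argument-at-a-time density step being equivalent to the paper's appeal to joint continuity on the dense product $\Phi_{i,n}\times\Psi_n$ and your truncation argument making explicit the extension of the forward direction to unbounded $L^2$ dictionary elements that the paper's citation of Lemma~\ref{lem:product_moment_characterization} (stated for bounded functions) leaves implicit---though Corollary~\ref{cor:design-moment-unit} is not the right citation there, since it assumes well specification rather than only conditional independence.
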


\begin{proof}
The forward direction follows from Lemma~\ref{lem:product_moment_characterization}.
For the converse, define the bilinear form
\[
  B_{i,n}(\phi,\psi)
  :=
  \E
  \left[
    \phi(Y_i)
    \left\{
      \psi(W)-\E[\psi(W)\mid G_i]
    \right\}
  \right].
\]
By Cauchy--Schwarz and the fact that conditional expectation is an
\(L^2\)-projection,
\[
\begin{aligned}
  |B_{i,n}(\phi,\psi)|
  &\leq
  \|\phi(Y_i)\|_{L^2}
  \left\|
    \psi(W)-\E[\psi(W)\mid G_i]
  \right\|_{L^2}
  \\
  &\leq
  \|\phi(Y_i)\|_{L^2}
  \|\psi(W)\|_{L^2}.
\end{aligned}
\]
Thus \(B_{i,n}\) is continuous on \(L^2(Y_i)\times L^2(W)\).  If it
vanishes on the dense product dictionary \(\Phi_{i,n}\times\Psi_n\), it
vanishes on all of \(L^2(Y_i)\times L^2(W)\), and therefore on all bounded
measurable \((\phi,\psi)\).  Lemma~\ref{lem:product_moment_characterization}
then implies \(Y_i\indep W\mid G_i\).
\end{proof}

\begin{remark}[No cross-unit independence is used]
\label{rem:no_cross_unit_independence}
The equivalence in Lemmas~\ref{lem:product_moment_characterization} and
\ref{lem:countable_characterization} is unit-level.  It concerns the joint
law of \((Y_i,W,G_i)\) under the assignment design \(\Dcal_n\) for fixed
\((i,n)\).  It does not require \(Y_i\) and \(Y_j\), or the moment kernels
for different units, to be independent.  Cross-unit dependence affects the
variance and limiting distribution of averaged moments, not the validity of
the unit-level moment characterization.
\end{remark}

\subsection{Averaged product residualized moment space}
\label{app:avg_moment_space}

Lemma~\ref{lem:product_moment_characterization} gives a unit-level
equivalence.  The GMM estimator uses averaged versions of these unit-level
restrictions.  For an admissible pair \((\phi,\psi)\), define
\[
  \eta_{\phi,\psi,i,n}(\theta)
  :=
  \phi(Y_i)R_{i,\theta}{\psi}(W)
  =
  \phi(Y_i)
  \left\{
    \psi(W)-\E[\psi(W)\mid G_{i,\theta}]
  \right\}.
\]
The corresponding averaged product residualized moment is
\[
  \eta_{\phi,\psi,n}(\theta)
  :=
  \frac{1}{N_n}\sum_{i\in\Ucal_n}\eta_{\phi,\psi,i,n}(\theta).
\]
At \(\theta=\theta_0\), unit-level exposure sufficiency implies
\[
  \E[\eta_{\phi,\psi,i,n}(\theta_0)]=0
  \qquad
  \text{for every } i,
\]
and hence
\[
  \E[\eta_{\phi,\psi,n}(\theta_0)]=0.
\]


Let \(\Hcal_0\) denote the linear span of averaged product residualized
moment directions evaluated at \(\theta_0\).  A generic direction
\(\eta\in\Hcal_0\) is generated by finite linear combinations
\[
  \eta_{i,n}
  =
  \sum_{r=1}^R a_r\phi_r(Y_i)R_{i,\theta_0}{\psi_r}(W),
\]
with experiment-level average
\[
  \eta_n
  =
  \frac{1}{N_n}\sum_{i\in\Ucal_n}\eta_{i,n}.
\]
For two such directions \(\eta,\xi\), define the design-based asymptotic
covariance bilinear form
\[
  \Omega(\eta,\xi)
  :=
  \lim_{n\to\infty}
  N_n\Cov(\eta_n,\xi_n),
\]
whenever the limit exists.  Write
\[
  \|\eta\|_\Omega^2:=\Omega(\eta,\eta).
\]

\begin{assumption}[Asymptotic covariance Hilbert space]
\label{ass:hilbert_space}
The following conditions hold.
\begin{enumerate}
  \item For every \(\eta,\xi\in\Hcal_0\), the limit
  \(\Omega(\eta,\xi)\) exists and is finite.
  \item \(\Omega\) is symmetric and positive semidefinite on \(\Hcal_0\).
  \item After quotienting out zero-variance directions
  \[
    \Ncal_\Omega
    :=
    \{\eta\in\Hcal_0:\|\eta\|_\Omega=0\},
  \]
  the completion of \(\Hcal_0/\Ncal_\Omega\) under \(\|\cdot\|_\Omega\) is a
  Hilbert space, denoted
  \[
    (\Hcal,\langle\cdot,\cdot\rangle_\Omega).
  \]
\end{enumerate}
\end{assumption}

The covariance \(\Omega\) is the covariance of averaged design-based
moments.  It contains all cross-unit covariance terms:
\[
  N_n\Var(\eta_n)
  =
  \frac{1}{N_n}
  \sum_{i\in\Ucal_n}\sum_{j\in\Ucal_n}
  \Cov(\eta_{i,n},\eta_{j,n}).
\]
Thus cross-unit dependence is not ruled out.  It is summarized by the
asymptotic covariance metric and handled by the LLN, CLT, and graph-HAC
conditions in Section~\ref{sec:DB-consistency-AN-affinity}.

\subsection{Oracle moment for scalar exposure parameters}
\label{app:oracle_geometry}

This subsection treats the scalar case \(\theta_0\in\R\).  The vector case
is given in Appendix~\ref{app:vector_parameters}.

For a direction \(\eta\in\Hcal\), let \(\eta_n(\theta)\) denote the
corresponding averaged product residualized moment at candidate parameter
\(\theta\), and define the population mean map
\[
  \mu_\eta(\theta)
  :=
  \lim_{n\to\infty}
  \E[\eta_n(\theta)],
\]
whenever the limit exists.  At the true value,
\[
  \mu_\eta(\theta_0)=0.
\]
Define the local derivative functional
\[
  G(\eta)
  :=
  \left.
  \frac{\partial}{\partial\theta}\mu_\eta(\theta)
  \right|_{\theta=\theta_0}.
\]
For $\eta$ in the linear span of the dictionary directions,
$\eta_n(\theta)$ is the corresponding finite linear combination of sample
moments and $\mu_\eta(\theta)$ is defined directly. A general $\eta\in\Hcal$
is an $\Omega$-norm limit of span elements and need not come with a canonical
finite-sample moment path away from $\theta_0$. Accordingly, off-$\theta_0$
objects such as $\mu_\eta(\theta)$ and the just-identified estimator in
part~3 below are used only for span directions; $G$ is then extended from the
span to all of $\Hcal$ by continuity, using the boundedness in part~2, and
statements for general $\eta\in\Hcal$ are understood in this extension sense.

\begin{assumption}[Local regularity]
\label{ass:local_regularity}
The following conditions hold.
\begin{enumerate}
  \item For every \(\eta\in\Hcal\), the map \(\theta\mapsto\mu_\eta(\theta)\)
  is differentiable in a neighborhood of \(\theta_0\), with derivative
  \(G(\eta)\) at \(\theta_0\).
  \item The map \(G:\Hcal\to\R\) is a bounded linear functional under the
  \(\Omega\)-norm.  That is, there exists \(C_G<\infty\) such that
  \[
    |G(\eta)|\leq C_G\|\eta\|_\Omega
    \qquad
    \text{for all }\eta\in\Hcal.
  \]
  \item For every \(\eta\in\Hcal\) with \(G(\eta)\neq0\), the just-identified
  estimator based on
  \[
    \eta_n(\theta)=0
  \]
  admits the local expansion
  \[
    \sqrt{N_n}(\hat\theta_{\eta,n}-\theta_0)
    =
    -
    \frac{\sqrt{N_n}\eta_n(\theta_0)}{G(\eta)}
    +o_p(1),
  \]
  and
  \[
    \sqrt{N_n}\eta_n(\theta_0)
    \Rightarrow
    \Ncal(0,\|\eta\|_\Omega^2).
  \]
  \item \textbf{Nonzero information.} The derivative functional does
  not vanish identically on \(\Hcal\): there exists \(\eta\in\Hcal\) with
  \(G(\eta)\neq0\). Equivalently, via the Riesz representation below,
  \(\|\eta^\star\|_\Omega>0\), so that \(V^\star=1/\|\eta^\star\|_\Omega^2\) is
  finite and the infimum in Theorem~\ref{thm:oracle_variance_bound} runs over a
  nonempty set. If instead \(G\equiv0\), the exposure parameter is unidentified
  at first order by this moment class and no \(\sqrt{N_n}\)-consistent estimator
  built from it exists; empirically this corresponds to a flat GMM criterion
  (weak identification).
\end{enumerate}
\end{assumption}

\begin{lemma}[Variance of a scalar moment direction]
\label{lem:scalar_direction_variance}
Under Assumptions~\ref{ass:hilbert_space} and
\ref{ass:local_regularity}, the asymptotic variance of the just-identified
estimator based on a scalar direction \(\eta\in\Hcal\) with \(G(\eta)\neq0\) is
\[
  V(\eta)
  =
  \frac{\|\eta\|_\Omega^2}{G(\eta)^2}.
\]
\end{lemma}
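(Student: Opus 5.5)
This is a direct consequence of the local expansion in Assumption~\ref{ass:local_regularity}, part~3, combined with Slutsky's theorem. Fix a direction $\eta\in\Hcal$ with $G(\eta)\neq0$. Part~3 supplies two facts. First,
\[
  \sqrt{N_n}(\hat\theta_{\eta,n}-\theta_0)
  =
  -G(\eta)^{-1}\sqrt{N_n}\,\eta_n(\theta_0)+o_p(1).
\]
Second,
\[
  \sqrt{N_n}\,\eta_n(\theta_0)\Rightarrow\Ncal(0,\|\eta\|_\Omega^2).
\]

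The plan is as follows. Since $G(\eta)$ is a nonzero deterministic scalar, the map $x\mapsto -x/G(\eta)$ is continuous. The continuous mapping theorem then gives
\[
  -G(\eta)^{-1}\sqrt{N_n}\,\eta_n(\theta_0)\Rightarrow\Ncal\bigl(0,\|\eta\|_\Omega^2/G(\eta)^2\bigr).
\]
Slutsky's theorem absorbs the $o_p(1)$ remainder, so $\sqrt{N_n}(\hat\theta_{\eta,n}-\theta_0)$ has the same Gaussian limit. Its variance is $V(\eta)=\|\eta\|_\Omega^2/G(\eta)^2$, as claimed.

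The only point needing care is the meaning of $\hat\theta_{\eta,n}$, $\|\eta\|_\Omega$ and $G(\eta)$ for a general $\eta\in\Hcal$ rather than a span direction.

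\begin{itemize}
\item \emph{Span directions.} The norm $\|\eta\|_\Omega^2$ is exactly $\lim N_n\Var(\eta_n(\theta_0))$ by Assumption~\ref{ass:hilbert_space}. This is consistent with the CLT variance in part~3.
\item \emph{General directions.} Following the convention stated before Assumption~\ref{ass:local_regularity}, both $G$ and $\|\cdot\|_\Omega$ are the continuous extensions from the span. So $V(\eta)$ is defined by the same formula.
\item \emph{Continuity of $V$.} $V$ is continuous in $\eta$ on $\{G\neq0\}$, because $G$ is bounded (part~2) and the norm is continuous. Hence the formula for general $\eta$ is the limit of the span formula along any approximating sequence $\eta^{(k)}\to\eta$ with $G(\eta^{(k)})\neq0$ eventually.
\end{itemize}

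I expect this interpretive step, rather than any probabilistic argument, to be the only subtle point. I would handle it by stating explicitly that the estimator-level statement is taken for span directions, and that the extension is via continuity.

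Finally, quotienting by $\Ncal_\Omega$ causes no ambiguity. If $\|\eta-\xi\|_\Omega=0$, then boundedness of $G$ gives $G(\eta)=G(\xi)$, so $V$ is well defined on equivalence classes.
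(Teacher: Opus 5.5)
Your argument is correct and matches the paper's proof. Both plug the local expansion from part~3 of Assumption~\ref{ass:local_regularity} into the Gaussian limit of $\sqrt{N_n}\,\eta_n(\theta_0)$, rescale by $-1/G(\eta)$, and absorb the $o_p(1)$ remainder. Your extra remarks on span versus general directions, and on well-definedness modulo zero-norm directions, are sound and consistent with the convention the paper states just before the assumption, though the paper's proof does not spell them out.
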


\begin{proof}
The expansion in Assumption~\ref{ass:local_regularity} gives
\[
  \sqrt{N_n}(\hat\theta_{\eta,n}-\theta_0)
  =
  -
  \frac{\sqrt{N_n}\eta_n(\theta_0)}{G(\eta)}
  +o_p(1).
\]
Since
\[
  \sqrt{N_n}\eta_n(\theta_0)
  \Rightarrow
  \Ncal(0,\|\eta\|_\Omega^2),
\]
the continuous mapping theorem implies
\[
  \sqrt{N_n}(\hat\theta_{\eta,n}-\theta_0)
  \Rightarrow
  \Ncal\left(0,\frac{\|\eta\|_\Omega^2}{G(\eta)^2}\right).
\]
\end{proof}

By Assumption~\ref{ass:local_regularity}, \(G\) is a bounded linear
functional on \(\Hcal\).  The Riesz representation theorem therefore implies
that there exists a unique element \(\eta^\star\in\Hcal\) such that
\[
  G(\eta)
  =
  \langle \eta^\star,\eta\rangle_\Omega
  \qquad
  \text{for all }\eta\in\Hcal.
\]
We call \(\eta^\star\) the oracle product residualized moment.

\begin{lemma}[Cauchy--Schwarz lower bound]
\label{lem:cauchy_schwarz_bound}
Under Assumptions~\ref{ass:hilbert_space} and
\ref{ass:local_regularity}, for any \(\eta\in\Hcal\) with \(G(\eta)\neq0\),
\[
  V(\eta)
  =
  \frac{\|\eta\|_\Omega^2}{G(\eta)^2}
  \geq
  \frac{1}{\|\eta^\star\|_\Omega^2}.
\]
Equality holds if and only if \(\eta\) is proportional to \(\eta^\star\).
\end{lemma}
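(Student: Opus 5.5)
The bound follows from the Riesz representation of $G$ combined with the Cauchy--Schwarz inequality in the Hilbert space $(\Hcal,\langle\cdot,\cdot\rangle_\Omega)$ of Assumption~\ref{ass:hilbert_space}. By Lemma~\ref{lem:scalar_direction_variance}, for any $\eta\in\Hcal$ with $G(\eta)\neq0$ the just-identified variance is $V(\eta)=\|\eta\|_\Omega^2/G(\eta)^2$. Assumption~\ref{ass:local_regularity} makes $G$ bounded and linear on $\Hcal$, so the Riesz element $\eta^\star$ constructed above satisfies $G(\eta)=\langle\eta^\star,\eta\rangle_\Omega$ for all $\eta\in\Hcal$. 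Part~4 of that assumption gives $\|\eta^\star\|_\Omega>0$, so the right-hand side $1/\|\eta^\star\|_\Omega^2$ is finite.

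The inequality is then a single Cauchy--Schwarz step:
\[
  G(\eta)^2=\langle\eta^\star,\eta\rangle_\Omega^2\le \|\eta^\star\|_\Omega^2\,\|\eta\|_\Omega^2 .
\]
Since $G(\eta)\neq0$, we may divide by $G(\eta)^2\,\|\eta^\star\|_\Omega^2>0$, which gives $\|\eta\|_\Omega^2/G(\eta)^2\ge 1/\|\eta^\star\|_\Omega^2$. Note also that $G(\eta)\neq0$ forces $\|\eta\|_\Omega>0$, so $V(\eta)$ is well defined and nothing degenerates.

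For the equality case, I use the standard characterization that equality in Cauchy--Schwarz holds if and only if $\eta$ and $\eta^\star$ are linearly dependent in $\Hcal$. Because $\eta^\star\neq0$, this means $\eta=c\,\eta^\star$ for some scalar $c$, and $c\neq0$ because $G(\eta)\neq0$. Conversely, if $\eta=c\,\eta^\star$ with $c\neq0$, then $V(\eta)=c^2\|\eta^\star\|_\Omega^2/(c^2\|\eta^\star\|_\Omega^4)=1/\|\eta^\star\|_\Omega^2$. Proportionality here is equality as elements of the quotient Hilbert space, i.e.\ modulo the zero-variance directions $\Ncal_\Omega$; this is the correct notion, since $V(\eta)$ depends on $\eta$ only through its class.

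The only delicate point is interpretive rather than computational. For $\eta$ outside the span of the dictionary, $V(\eta)$ is defined through the continuous extension of $G$ and the $\Omega$-norm, as explained before Assumption~\ref{ass:local_regularity}. The lemma should therefore be read as a statement about that extended formula. For span directions it coincides with the genuine estimator variance by Lemma~\ref{lem:scalar_direction_variance}, and the Hilbert-space argument above needs no further approximation.
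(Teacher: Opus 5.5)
Your proposal is correct and follows the paper's proof: the Riesz representation $G(\eta)=\langle\eta^\star,\eta\rangle_\Omega$, a single Cauchy--Schwarz step, rearrangement, and the Cauchy--Schwarz equality case for proportionality. Your additional checks are also correct; the paper leaves them implicit. They are that $\|\eta^\star\|_\Omega>0$ and $\|\eta\|_\Omega>0$ whenever $G(\eta)\neq0$, that $c\neq0$ in the equality case, and that proportionality is understood modulo $\Ncal_\Omega$.
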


\begin{proof}
By the Riesz representation,
\[
  G(\eta)^2
  =
  \langle \eta^\star,\eta\rangle_\Omega^2.
\]
By Cauchy--Schwarz,
\[
  \langle \eta^\star,\eta\rangle_\Omega^2
  \leq
  \|\eta^\star\|_\Omega^2\|\eta\|_\Omega^2.
\]
Rearranging gives
\[
  \frac{\|\eta\|_\Omega^2}{G(\eta)^2}
  \geq
  \frac{1}{\|\eta^\star\|_\Omega^2}.
\]
The equality condition is the equality condition in Cauchy--Schwarz:
\(\eta\) must be proportional to \(\eta^\star\) in \(\Hcal\).
\end{proof}

\begin{theorem}[Oracle variance bound]
\label{thm:oracle_variance_bound}
Suppose Assumptions~\ref{ass:standard_borel}, \ref{ass:hilbert_space}, and
\ref{ass:local_regularity} hold.  Among scalar directions in the averaged
product residualized moment space \(\Hcal\), the oracle variance is
\[
  V^\star
  :=
  \frac{1}{\|\eta^\star\|_\Omega^2}.
\]
Equivalently,
\[
  \inf_{\eta\in\Hcal:G(\eta)\neq0}
  \frac{\|\eta\|_\Omega^2}{G(\eta)^2}
  =
  V^\star,
\]
and equality is attained if and only if the moment direction is
proportional to \(\eta^\star\).
\end{theorem}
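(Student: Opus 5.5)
The plan is to assemble the theorem from Lemmas~\ref{lem:scalar_direction_variance} and~\ref{lem:cauchy_schwarz_bound}, together with the Riesz representation already fixed just before the statement. There are three steps: a lower bound, attainment, and the characterization of equality.

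\textbf{Lower bound.} Take any $\eta\in\Hcal$ with $G(\eta)\neq0$. By Lemma~\ref{lem:scalar_direction_variance}, the just-identified estimator has asymptotic variance $V(\eta)=\|\eta\|_\Omega^2/G(\eta)^2$. By Lemma~\ref{lem:cauchy_schwarz_bound}, this is at least $1/\|\eta^\star\|_\Omega^2$. Hence
\[
  \inf_{\eta\in\Hcal:\,G(\eta)\neq0} V(\eta)\ \ge\ V^\star .
\]
The infimum runs over a nonempty set because of the nonzero-information condition in Assumption~\ref{ass:local_regularity}, part~4.

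\textbf{Attainment.} The Riesz representer $\eta^\star$ is well defined because $G$ is bounded and linear on the Hilbert space $\Hcal$ (Assumptions~\ref{ass:hilbert_space} and~\ref{ass:local_regularity}, part~2). Moreover $G\not\equiv0$, so $\eta^\star\neq0$ in $\Hcal$ and $\|\eta^\star\|_\Omega>0$; thus $V^\star$ is finite and positive. Evaluating the functional at $\eta^\star$ gives
\[
  G(\eta^\star)=\langle\eta^\star,\eta^\star\rangle_\Omega=\|\eta^\star\|_\Omega^2\neq0,
\]
so $\eta^\star$ is admissible. Its variance is
\[
  V(\eta^\star)=\frac{\|\eta^\star\|_\Omega^2}{\|\eta^\star\|_\Omega^4}=V^\star .
\]
So the infimum is attained and equals $V^\star$.

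\textbf{Equality characterization.} If $\eta=c\,\eta^\star$ with $c\neq0$, then $V(\eta)=c^2\|\eta^\star\|_\Omega^2/(c^2\|\eta^\star\|_\Omega^4)=V^\star$. Conversely, equality in the bound forces equality in Cauchy--Schwarz, so $\eta$ is proportional to $\eta^\star$ in $\Hcal$; the constant is nonzero because $G(\eta)\neq0$. Proportionality is understood modulo the null space $\Ncal_\Omega$, consistent with the quotient construction in Assumption~\ref{ass:hilbert_space}.

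\textbf{Main obstacle.} The delicate point is interpretation rather than algebra. $\eta^\star$ may lie only in the completion of $\Hcal$ and need not have a finite-sample moment path. Per the convention stated before Assumption~\ref{ass:local_regularity}, $G$ is extended to all of $\Hcal$ by continuity. I would make explicit that the infimum in the theorem is over $\Hcal$ in this extension sense, with $V(\eta)$ taken to mean the ratio $\|\eta\|_\Omega^2/G(\eta)^2$. Under that reading, attainment at $\eta^\star$ is a statement about this ratio, not about a concrete estimator. Feasible attainment of the bound is deferred to the sieve result, Theorem~\ref{thm:sieve_oracle}.

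Assumption~\ref{ass:standard_borel} is not used in this argument; it only guarantees that $\Hcal$ is the relevant class via Lemma~\ref{lem:product_moment_characterization}.
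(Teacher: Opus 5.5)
Your proposal is correct and follows the paper's route. The paper's proof simply cites Lemma~\ref{lem:cauchy_schwarz_bound} for the lower bound and the equality condition; your check that $G(\eta^\star)=\|\eta^\star\|_\Omega^2\neq0$, so that the infimum is attained at $\eta^\star$, spells out the sharpness the paper leaves implicit, and your reading of the bound as a statement about the ratio $\|\eta\|_\Omega^2/G(\eta)^2$ on the completion matches the paper's extension convention.
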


\begin{proof}
The moment class is generated by the product residualized moments
characterized in Lemma~\ref{lem:product_moment_characterization}.  Within
the Hilbert space \(\Hcal\) induced by the asymptotic covariance metric,
Lemma~\ref{lem:cauchy_schwarz_bound} gives the sharp lower bound and the
equality condition.
\end{proof}

\begin{remark}[Within-class optimality]
\label{rem:within_class_optimality}
Theorem~\ref{thm:oracle_variance_bound} is a within-class optimality
result.  It does not characterize the globally efficient estimator under
all implications of exposure sufficiency.  In particular, exposure
sufficiency also implies multi-unit restrictions involving \((Y_i,Y_j)\),
joint exposure vectors, and more complicated kernels.  The oracle variance
\(V^\star\) is the efficiency bound for the averaged product residualized
moment space \(\Hcal\), which is the space used by the proposed GMM procedure.
\end{remark}

\begin{remark}[Instrument-only moments as a subspace]
\label{rem:instrument_subspace}
The instrument-only moments correspond to the special case
\[
  \phi(Y_i)=Y_i,
  \qquad
  \psi(W)=h(W).
\]
Their span is generally a strict subspace of \(\Hcal\).  Theorem
\ref{thm:oracle_variance_bound} optimizes over the larger product
residualized moment space generated by arbitrary admissible \((\phi,\psi)\).
\end{remark}

\subsection{Finite dictionaries and sieve GMM}
\label{app:sieve_gmm}

The oracle moment \(\eta^\star\) is generally infinite-dimensional.  The
feasible estimator approximates it using a finite dictionary of product
residualized moments.

Let
\[
  \{(\phi_m,\psi_m):m\geq1\}
\]
be a countable dictionary of outcome transformations and design-side
functions.  For each \(m\), define
\[
  \eta_{m,n}(\theta)
  :=
  \frac{1}{N_n}\sum_{i\in\Ucal_n}
  \phi_m(Y_i)
  \left\{
    \psi_m(W)-\E[\psi_m(W)\mid G_{i,\theta}]
  \right\}.
\]
For a finite dictionary of size \(M\), stack
\[
  \eta_n^{(M)}(\theta)
  :=
  \left(
    \eta_{1,n}(\theta),
    \ldots,
    \eta_{M,n}(\theta)
  \right)^\top.
\]
Let
\[
  G^{(M)}
  :=
  \left(G(\eta_1),\ldots,G(\eta_M)\right)^\top,
  \qquad
  \Omega^{(M)}
  :=
  \left(\Omega(\eta_k,\eta_\ell)\right)_{k,\ell\leq M}.
\]

\begin{lemma}[Finite-dimensional oracle]
\label{lem:finite_dimensional_oracle}
Suppose \(\Omega^{(M)}\) is nonsingular \redrev{and \(G^{(M)}\neq0\)}.  For scalar \(\theta_0\), the
optimal linear combination
\[
  a^\top\eta_n^{(M)}(\theta)
\]
solves
\[
  \min_{a:a^\top G^{(M)}\neq0}
  \frac{a^\top\Omega^{(M)}a}{(a^\top G^{(M)})^2}.
\]
The solution is proportional to
\[
  a_M^\star
  =
  \Omega^{(M)-1}G^{(M)},
\]
and the finite-dictionary oracle variance is
\[
  V_M^\star
  =
  \frac{1}{G^{(M)\top}\Omega^{(M)-1}G^{(M)}}.
\]
\end{lemma}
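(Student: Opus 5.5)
The plan is to treat Lemma~\ref{lem:finite_dimensional_oracle} as a finite-dimensional generalized Rayleigh quotient problem and solve it by a Cauchy--Schwarz argument in the inner product induced by $\Omega^{(M)}$, in the same way as Lemma~\ref{lem:cauchy_schwarz_bound}. First I will check that the objective is the right one. For a fixed $a\in\R^M$, the combined moment $a^\top\eta_n^{(M)}(\theta)$ is a span direction $\eta_a=\sum_m a_m\eta_m\in\Hcal_0$. By linearity of the limit mean map and of the derivative functional, $G(\eta_a)=a^\top G^{(M)}$. By bilinearity of $\Omega(\cdot,\cdot)$, $\|\eta_a\|_\Omega^2=a^\top\Omega^{(M)}a$. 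Lemma~\ref{lem:scalar_direction_variance} then shows that the just-identified estimator based on $\eta_a$ has asymptotic variance $a^\top\Omega^{(M)}a/(a^\top G^{(M)})^2$ whenever $a^\top G^{(M)}\neq0$. So the minimization in the statement is exactly the variance-minimization problem over linear combinations.

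Next I will minimize this quotient. Since $\Omega^{(M)}$ is a Gram matrix, it is symmetric positive semidefinite, and it is nonsingular by hypothesis, so it is positive definite. I will set $b:=\Omega^{(M)1/2}a$ and $c:=\Omega^{(M)-1/2}G^{(M)}$. Then $a^\top G^{(M)}=b^\top c$ and $a^\top\Omega^{(M)}a=\|b\|^2$. Cauchy--Schwarz gives $(b^\top c)^2\le\|b\|^2\|c\|^2$, so the quotient is bounded below by $1/\|c\|^2=1/(G^{(M)\top}\Omega^{(M)-1}G^{(M)})$.

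Because $G^{(M)}\neq0$ and $\Omega^{(M)-1}$ is positive definite, this denominator is strictly positive, so the bound is finite. Equality in Cauchy--Schwarz holds if and only if $b\propto c$, which is equivalent to $a\propto\Omega^{(M)-1}G^{(M)}=a_M^\star$. The feasibility condition $a^\top G^{(M)}\neq0$ is satisfied at $a_M^\star$, since $a_M^{\star\top}G^{(M)}=G^{(M)\top}\Omega^{(M)-1}G^{(M)}>0$, and it is satisfied at any nonzero multiple. Substituting $a_M^\star$ back into the quotient gives $V_M^\star$, so the infimum is attained.

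The only step that needs care is the first: identifying $G(\eta_a)$ with $a^\top G^{(M)}$ and $\|\eta_a\|^2_\Omega$ with $a^\top\Omega^{(M)}a$. This requires the limits defining $\mu_\eta$, its derivative, and $\Omega$ to exist for each dictionary element, which Assumptions~\ref{ass:hilbert_space} and~\ref{ass:local_regularity} provide for span directions. Given that, linearity carries over to the finite combination, and the rest is the elementary optimization above. As a cross-check, I will note that the result coincides with the efficient-GMM variance $(G^{(M)\top}\Omega^{(M)-1}G^{(M)})^{-1}$ obtained in Theorem~\ref{thm:DB-AN-aff} with $p=1$.
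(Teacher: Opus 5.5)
Your proof is correct, but it reaches the optimum by a different device than the paper does. The paper first normalizes $a^\top G^{(M)}=1$, which is legitimate because the quotient is invariant to rescaling $a$. This reduces the problem to minimizing $a^\top\Omega^{(M)}a$ under a linear constraint. The paper then reads off $a\propto\Omega^{(M)-1}G^{(M)}$ from the Lagrangian first-order condition $2\Omega^{(M)}a-\lambda G^{(M)}=0$ and substitutes back. You instead whiten by $\Omega^{(M)1/2}$ and apply Cauchy--Schwarz, exactly mirroring Lemma~\ref{lem:cauchy_schwarz_bound}. The Lagrangian route is shorter, but as written it only locates a stationary point. That this point is the global minimizer, and that every minimizer is proportional to $a_M^\star$, implicitly relies on $\Omega^{(M)}\succ0$ making the constrained quadratic strictly convex. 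Your argument gives the global lower bound, attainment, and the ``if and only if proportional'' characterization in one step, with no second-order check. It also makes it transparent that $\eta_M^\star$ is the Riesz representer of $G$ on $\Hcal_M$, with $\|\eta_M^\star\|_\Omega^2=1/V_M^\star$, which is what Theorem~\ref{thm:sieve_oracle} later uses. Your preliminary step identifies the quotient with the variance of the just-identified estimator based on $\eta_a$, using linearity of $G$, bilinearity of $\Omega$, and Lemma~\ref{lem:scalar_direction_variance}. The paper leaves this step implicit; it is correct, and it is what justifies calling the solution the optimal combination.
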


\begin{proof}
Normalize \(a^\top G^{(M)}=1\).  The problem becomes
\[
  \min_a a^\top\Omega^{(M)}a
  \quad
  \text{subject to}
  \quad
  a^\top G^{(M)}=1.
\]
The Lagrangian first-order condition gives
\[
  2\Omega^{(M)}a-\lambda G^{(M)}=0,
\]
so
\[
  a\propto \Omega^{(M)-1}G^{(M)}.
\]
Substituting the normalized solution yields
\[
  V_M^\star
  =
  \left\{
    G^{(M)\top}\Omega^{(M)-1}G^{(M)}
  \right\}^{-1}.
\]
\end{proof}

Equivalently, define the finite-dimensional Riesz representer
\[
  \eta_M^\star
  :=
  \sum_{m=1}^M\gamma_{M,m}^\star\eta_m,
  \qquad
  \gamma_M^\star
  :=
  \Omega^{(M)-1}G^{(M)}.
\]
Then \(\eta_M^\star\) is the Riesz representer of \(G\) restricted to
\[
  \Hcal_M:=\operatorname{span}\{\eta_1,\ldots,\eta_M\}.
\]

The feasible finite-dictionary GMM estimator is
\[
  \hat\theta_{M,n}
  \in
  \argmin_{\theta\in\Theta}
  \eta_n^{(M)}(\theta)^\top
  \Lambda_{M,n}
  \eta_n^{(M)}(\theta),
\]
where the efficient choice is
\[
  \Lambda_{M,n}=\widehat\Omega_{M,n}^{-1}.
\]
This yields the familiar two-step procedure: first obtain a preliminary
consistent estimate of \(\theta_0\), then estimate the asymptotic covariance
matrix of the stacked product residualized moments, and finally re-estimate
using the inverse covariance weight.

We next combine population sieve approximation with the paper's
fixed-dimensional feasible GMM theory.  The latter is imposed only pointwise
in the dictionary size; no growing-dimensional linearization or central limit
theorem is assumed.

\begin{assumption}[Dense sieve and pointwise feasible inference]
\label{ass:dense_sieve}
The following conditions hold.
\begin{enumerate}
  \item[(S1)] The linear span of \(\{\eta_m:m\geq1\}\) is dense in \(\Hcal\)
  under the \(\Omega\)-norm:
  \[
    \overline{
      \bigcup_{M\geq1}
      \operatorname{span}\{\eta_1,\ldots,\eta_M\}
    }^{\|\cdot\|_\Omega}
    =\Hcal.
  \]
  In continuous-support cases, this dictionary may be constructed from
  countable dense classes of outcome transformations and design-side
  functions as in Assumption~\ref{ass:countable_dictionaries}.

  \item[(S2)] For every fixed \(M\), the consistency and asymptotic-normality
  conditions of Theorem~\ref{thm:DB-AN-aff} hold for
  \(\eta_n^{(M)}(\theta)\), and the feasible two-step weight converges to
  \(\{\Omega^{(M)}\}^{-1}\), as in
  Assumption~\ref{ass:graph-hac}.  Thus,
  pointwise in \(M\),
  \[
    \sqrt{N_n}(\hat\theta_{M,n}-\theta_0)
    \Rightarrow
    \Ncal(0,V_M^\star).
  \]

  \item[(S3)] With \(b_n:=\max_{i\in\Ucal_n}|A_i|\),
  \[
    \frac{b_n}{N_n}\to0.
  \]
\end{enumerate}
\end{assumption}

\begin{theorem}[Oracle approximation and slow-sieve attainment]
\label{thm:sieve_oracle}
Suppose Assumptions~\ref{ass:hilbert_space}, \ref{ass:local_regularity},
and \ref{ass:dense_sieve} hold.  Then:
\begin{enumerate}[label=\textup{(\roman*)}]
  \item The finite-dimensional Riesz representers converge to the oracle
  representer:
  \[
    \|\eta_M^\star-\eta^\star\|_\Omega\to0,
    \qquad
    V_M^\star\to V^\star.
  \]

  \item There exists a deterministic, nondecreasing sequence
  \(M_n\to\infty\) such that
  \[
    \frac{M_n^2b_n}{N_n}\to0
  \]
  and the corresponding feasible sieve estimator satisfies
  \[
    \sqrt{N_n}(\hat\theta_{M_n,n}-\theta_0)
    \Rightarrow
    \Ncal(0,V^\star).
  \]
\end{enumerate}
\end{theorem}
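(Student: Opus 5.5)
Part (i) is Hilbert-space geometry in $(\Hcal,\langle\cdot,\cdot\rangle_\Omega)$. Let $P_M$ denote the $\Omega$-orthogonal projection onto $\Hcal_M=\operatorname{span}\{\eta_1,\ldots,\eta_M\}$. For every $\eta\in\Hcal_M$,
\[
  \langle P_M\eta^\star,\eta\rangle_\Omega=\langle\eta^\star,\eta\rangle_\Omega=G(\eta),
\]
so $P_M\eta^\star$ is the Riesz representer of $G|_{\Hcal_M}$. By uniqueness, $\eta_M^\star=P_M\eta^\star$; this agrees with the coefficient formula $\gamma_M^\star=\Omega^{(M)-1}G^{(M)}$ from Lemma~\ref{lem:finite_dimensional_oracle}, since that formula is the normal-equations solution of the projection. (If $\Omega^{(M)}$ is singular, I would work with a linearly independent subfamily, which leaves $\Hcal_M$ unchanged.) The spaces $\Hcal_M$ are nested and, by (S1), their union is dense. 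Hence $\|P_M\eta^\star-\eta^\star\|_\Omega\to0$, which is the standard fact that projections onto an increasing dense sequence of subspaces converge strongly to the identity. Finally,
\[
  G^{(M)\top}\Omega^{(M)-1}G^{(M)}=\|\eta_M^\star\|_\Omega^2\uparrow\|\eta^\star\|_\Omega^2>0
\]
by nonzero information (Assumption~\ref{ass:local_regularity}, part~4). Therefore $V_M^\star=1/\|\eta_M^\star\|^2_\Omega\to V^\star$, and this holds for $M\ge M_0$, where $M_0$ is the first index with $G^{(M_0)}\neq0$.

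Part (ii) is a diagonal argument built on (S2), which gives the fixed-$M$ limit $\sqrt{N_n}(\hat\theta_{M,n}-\theta_0)\Rightarrow\Ncal(0,V_M^\star)$. Let $d$ be a metric for weak convergence on $\R$, for example the bounded-Lipschitz metric. Set
\[
  \varepsilon_{M,n}:=d\bigl(\mathcal L(\sqrt{N_n}(\hat\theta_{M,n}-\theta_0)),\Ncal(0,V_M^\star)\bigr).
\]
For each fixed $M$ we have $\varepsilon_{M,n}\to0$ as $n\to\infty$. Choose integers $n_1<n_2<\cdots$ such that, for all $n\ge n_k$, three conditions hold: $\max_{M\le k}\varepsilon_{M,n}\le 1/k$, $k^2b_n/N_n\le 1/k$, and $n_k$ exceeds the index at which $M_0$ becomes admissible. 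The second condition is possible by (S3). Define $M_n:=k$ for $n_k\le n<n_{k+1}$ (and $M_n:=M_0$ before $n_1$). This $M_n$ is deterministic, nondecreasing, and tends to infinity. It also satisfies $M_n^2b_n/N_n\le1/M_n\to0$ and $\varepsilon_{M_n,n}\le 1/M_n\to0$.

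The triangle inequality then gives
\[
  d\bigl(\mathcal L(\sqrt{N_n}(\hat\theta_{M_n,n}-\theta_0)),\Ncal(0,V^\star)\bigr)\le \varepsilon_{M_n,n}+d\bigl(\Ncal(0,V^\star_{M_n}),\Ncal(0,V^\star)\bigr).
\]
The second term vanishes by part (i), because Gaussian laws are continuous in their variance. This proves the claimed limit.

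I expect the main obstacle to be the bookkeeping in the diagonal step rather than any deep estimate. The key point is that (S2) only delivers convergence pointwise in $M$, so $M_n$ must be chosen slowly enough that $\max_{M\le M_n}\varepsilon_{M,n}$ still vanishes. Using a metric $d$ together with finite maxima over $M\le k$ is the device that makes this work. The rate condition $M_n^2b_n/N_n\to0$ is not needed for the distributional limit itself; it is an additional slowness requirement, and the same choice of $n_k$ imposes it at no extra cost. A minor secondary check is that the $\Omega$-inner product on span elements is represented by the matrix $\Omega^{(M)}$, which holds by the definition of $\Omega(\eta_k,\eta_\ell)$ and bilinearity.
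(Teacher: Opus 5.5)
Your proposal is correct and follows the paper's proof essentially step for step. In part (i) you identify $\eta_M^\star$ with the $\Omega$-orthogonal projection of $\eta^\star$ onto the nested dense spaces $\Hcal_M$; in part (ii) you run the same bounded-Lipschitz diagonal argument, with increasing thresholds enforcing both the fixed-$M$ approximation bound and $k^2b_n/N_n\le 1/k$, closed by the triangle inequality and part (i) (your control of $\max_{M\le k}\varepsilon_{M,n}$ is more than needed, since only the bound at $M=M_n$ is used, but it is harmless).
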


\begin{proof}
For part~\textup{(i)}, Assumption~\ref{ass:dense_sieve}\textup{(S1)}
implies that the orthogonal projection of \(\eta^\star\) onto \(\Hcal_M\)
converges to \(\eta^\star\) in \(\|\cdot\|_\Omega\).  The
finite-dimensional Riesz representer on \(\Hcal_M\) is precisely
\(\eta_M^\star\), so
\[
  \|\eta_M^\star-\eta^\star\|_\Omega\to0.
\]
Since
\[
  V_M^\star=\frac{1}{\|\eta_M^\star\|_\Omega^2},
  \qquad
  V^\star=\frac{1}{\|\eta^\star\|_\Omega^2},
\]
the variance convergence follows by continuity and the nonzero-information
condition in Assumption~\ref{ass:local_regularity}.

For part~\textup{(ii)}, let \(d_{\mathrm{BL}}\) be the bounded-Lipschitz
metric on probability laws and write
\[
  X_{M,n}:=\sqrt{N_n}(\hat\theta_{M,n}-\theta_0).
\]
For every fixed \(M\), Assumption~\ref{ass:dense_sieve}\textup{(S2)} gives
\[
  d_{\mathrm{BL}}
  \left(
    \mathcal L(X_{M,n}),
    \Ncal(0,V_M^\star)
  \right)
  \to0.
\]
By this convergence and Assumption~\ref{ass:dense_sieve}\textup{(S3)}, one
can choose a strictly increasing sequence of integers \(n_M\), with
\(n_M\geq M\), such that, for every \(n\geq n_M\),
\[
  d_{\mathrm{BL}}
  \left(
    \mathcal L(X_{M,n}),
    \Ncal(0,V_M^\star)
  \right)
  \leq\frac1M,
  \qquad
  M^2\frac{b_n}{N_n}\leq\frac1M.
\]
Define
\[
  M_n:=\max\{M:n_M\leq n\},
\]
with \(M_n=1\) before \(n_1\).  Then \(M_n\) is deterministic,
nondecreasing, and tends to infinity.  Moreover,
\[
  \frac{M_n^2b_n}{N_n}\leq\frac1{M_n}\to0.
\]
Finally, the triangle inequality gives
\[
\begin{aligned}
  &d_{\mathrm{BL}}
  \left(
    \mathcal L(X_{M_n,n}),
    \Ncal(0,V^\star)
  \right)
  \\
  &\qquad\leq
  \frac1{M_n}
  +
  d_{\mathrm{BL}}
  \left(
    \Ncal(0,V_{M_n}^\star),
    \Ncal(0,V^\star)
  \right)
  \to0,
\end{aligned}
\]
where the last step uses part~\textup{(i)}.  Because
\(d_{\mathrm{BL}}\) metrizes weak convergence, this proves the claim.
\end{proof}

\begin{remark}[Scope of the slow-sieve result]
\label{rem:primitive_sieve_conditions}
Theorem~\ref{thm:sieve_oracle}\textup{(ii)} is an existence result.  It
shows that the fixed-dimensional feasible theory, population sieve
approximation, and the dimension restriction
\[
  M_n^2b_n/N_n\to0
\]
are jointly attainable along some sufficiently slow deterministic sequence.
It does not provide an explicit growth rule or claim that the displayed rate,
by itself, validates every prespecified growing sequence.  Such a result would
additionally require uniform growing-dimensional linearization, a
triangular-array central limit theorem for the efficient direction, and
uniform control of covariance estimation and estimated design projections.
\end{remark}

\subsection{Vector parameters and optimal linear aggregation}
\label{app:vector_parameters}

The previous subsections stated the oracle geometry for scalar
\(\theta_0\).  Suppose now that
\[
  \theta_0\in\R^p.
\]
For each coordinate \(j=1,\ldots,p\), define the coordinate derivative
functional
\[
  G_j(\eta)
  :=
  \left.
  \frac{\partial}{\partial\theta_j}\mu_\eta(\theta)
  \right|_{\theta=\theta_0}.
\]
Assume each \(G_j\) is a bounded linear functional on \(\Hcal\).  By Riesz,
there exist unique elements
\[
  s_1^\star,\ldots,s_p^\star\in\Hcal
\]
such that
\[
  G_j(\eta)
  =
  \langle s_j^\star,\eta\rangle_\Omega
  \qquad
  \text{for all }\eta\in\Hcal.
\]
Define the oracle information matrix
\[
  \Ical^\star
  :=
  \left(
    \langle s_j^\star,s_k^\star\rangle_\Omega
  \right)_{j,k=1}^p.
\]
If \(\Ical^\star\) is nonsingular, the vector-parameter oracle variance is
\[
  V^\star
  =
  (\Ical^\star)^{-1}.
\]

For a finite dictionary of size \(M\), let
\[
  G_M
  :=
  \left.
  \frac{\partial}{\partial\theta}
  \E[\eta_n^{(M)}(\theta)]
  \right|_{\theta=\theta_0}
  \in\R^{M\times p},
\]
and let \(\Omega_M\in\R^{M\times M}\) denote the asymptotic covariance matrix
of the stacked moments.  The finite-dimensional optimal GMM variance is
\[
  V_M^\star
  =
  (G_M^\top\Omega_M^{-1}G_M)^{-1}.
\]
Equivalently, the \(M\)-dimensional moment vector can be compressed without
first-order loss to the \(p\)-dimensional optimal moment
\[
  G_M^\top\Omega_M^{-1}\eta_n^{(M)}(\theta).
\]
The feasible two-step GMM estimator uses \(\widehat\Omega_M^{-1}\) in place
of \(\Omega_M^{-1}\).  Under the density condition in Assumption
\ref{ass:dense_sieve}\textup{(S1)},
\[
  V_M^\star\to V^\star,
\]
where \(V^\star=(\Ical^\star)^{-1}\).  If the fixed-dimensional feasible
GMM conditions hold pointwise for every \(M\), the same diagonal argument as
in Theorem~\ref{thm:sieve_oracle}\textup{(ii)} yields a sufficiently slow
deterministic sequence \(M_n\to\infty\) along which the feasible estimator
attains this vector oracle variance.

\begin{remark}[Discrete or non-smooth exposure parameters]
\label{rem:discrete_nonregular}
The Riesz derivative argument is a local regularity result.  It applies
most directly to smooth exposure parameters, such as decay rates in smooth
spatial kernels.  For rings, finite grids of radii, or network
\(K\)-hop exposure parameters, the relevant object is generally not the
local derivative
\[
  \partial_\theta\mu_\eta(\theta_0).
\]
Those cases should be treated as finite-grid or nonregular
selection problems.  The product residualized moment class remains the
source of valid design-based moments, but the oracle derivative geometry in
this appendix should be interpreted as applying to the regular smooth case.
\end{remark}

\section{\texorpdfstring{Details and Proofs for Section~\ref{sec:stage2}}{Details and proofs for the Stage-2 results}}
\label{app:stage2}

This appendix formalizes the two-step inference problem in Section~\ref{sec:stage2}.
The main text states the regular continuous case.  Here we also allow a finite-grid exposure component and show that, after it has been selected consistently, the remaining two-step estimator is a finite-dimensional Z-estimator based on a stacked map.
The first block is the \emph{projected} Stage~1 GMM equation.

Throughout, write the exposure-map parameter as
\[
  \theta=(\vartheta,\rho),
\]
where $\vartheta$ is a finite-grid component, possibly absent, and $\rho\in\R^{p_\rho}$ is a regular continuous component, possibly absent.
Under Corollary~\ref{cor:discrete-parameters}, assume that the finite-grid component is selected consistently:
\[
  \Pr(\hat\vartheta_n=\vartheta_0)\to1.
\]
All statements below are conditional on the selected class $\vartheta_0$.
If there is no finite-grid component, simply drop $\vartheta$ and read $\rho=\theta$.
If there is no regular continuous first-stage component, the first block below is absent.

\subsection{Two-step estimator as a stacked Z-estimator}
\label{subsec:two-step-def-app}

\paragraph{Stage 1.}
Let $\Psi_{1i}(\rho)\in\R^{q_1}$ denote the Stage~1 design-based moment vector conditional on $\vartheta_0$, with sample average
\[
  \bar\Psi_{1,n}(\rho)
  :=
  \frac{1}{N_n}\sum_{i\in\Ucal_n}\Psi_{1i}(\rho).
\]
Let
\[
  \mu_1(\rho)
  :=
  \lim_{n\to\infty}
  \frac{1}{N_n}\sum_{i\in\Ucal_n}\E[\Psi_{1i}(\rho)]
\]
denote the deterministic limit of the Stage~1 moment map, and define
\[
  G_1
  :=
  \frac{\partial \mu_1(\rho_0)}{\partial\rho^\top},
\]
with the derivative understood in the mean-differentiability sense of Assumption~\ref{ass:AN-AFF}.
The Stage~1 GMM estimator for the regular component satisfies the projected estimating equation
\begin{equation}
\label{eq:stage1-projected-ee-app}
  G_1^\top\Lambda_{1,n}\bar\Psi_{1,n}(\hat\rho_n)
  =
  o_p(N_n^{-1/2}),
\end{equation}
where $\Lambda_{1,n}$ is the Stage~1 weight matrix.
Equation \eqref{eq:stage1-projected-ee-app} is the same projected GMM/Z condition used in Theorem~\ref{thm:DB-AN-aff}.
When the Stage~1 moments are exactly identified, it reduces to the usual moment equation up to a nonsingular transformation.

\paragraph{Stage 2.}
For the linear exposure--response projection, define
\[
  g_i(\rho)
  :=
  g(W;X_i,\vartheta_0,\rho),
  \qquad
  Z_i(\rho)
  :=
  \begin{pmatrix}
    1\\
    g_i(\rho)
  \end{pmatrix}.
\]
The Stage~2 target $(\alpha_0,\beta_0)$ is defined by the population moment condition
\begin{equation}
\label{eq:linear-app}
  \lim_{n\to\infty}
  \frac{1}{N_n}\sum_{i\in\Ucal_n}
  \E\!\left[
    Z_i(\rho_0)
    \{Y_i-\alpha_0-\beta_0^\top g_i(\rho_0)\}
  \right]
  =0.
\end{equation}
The linear conditional mean restriction
\[
  \E[Y_i\mid g_i(\rho_0)]
  =
  \alpha_0+\beta_0^\top g_i(\rho_0)
\]
is a sufficient condition for \eqref{eq:linear-app}, but is not required for the projection interpretation.
Define the Stage~2 moment vector
\begin{equation}
\label{eq:m2-def-app}
  \Psi_{2i}(\rho,\alpha,\beta)
  :=
  Z_i(\rho)
  \{Y_i-\alpha-\beta^\top g_i(\rho)\}
  \in\R^{1+k},
\end{equation}
and its sample average
\[
  \bar\Psi_{2,n}(\rho,\alpha,\beta)
  :=
  \frac{1}{N_n}\sum_{i\in\Ucal_n}\Psi_{2i}(\rho,\alpha,\beta).
\]
The Stage~2 estimator solves
\begin{equation}
\label{eq:stage2-ee-app}
  \bar\Psi_{2,n}(\hat\rho_n,\hat\alpha_n,\hat\beta_n)
  =
  o_p(N_n^{-1/2}).
\end{equation}

\paragraph{Stacked map.}
Collect the regular parameters into
\[
  \zeta
  :=
  (\rho^\top,\alpha,\beta^\top)^\top,
  \qquad
  \zeta_0
  :=
  (\rho_0^\top,\alpha_0,\beta_0^\top)^\top .
\]
Define the stacked estimating map
\begin{equation}
\label{eq:Phi-bar-app}
  \bar\Phi_n(\zeta)
  :=
  \begin{pmatrix}
    G_1^\top\Lambda_{1,n}\bar\Psi_{1,n}(\rho)\\
    \bar\Psi_{2,n}(\rho,\alpha,\beta)
  \end{pmatrix}.
\end{equation}
Equations \eqref{eq:stage1-projected-ee-app} and \eqref{eq:stage2-ee-app} imply
\[
  \bar\Phi_n(\hat\zeta_n)=o_p(N_n^{-1/2}),
  \qquad
  \hat\zeta_n:=(\hat\rho_n^\top,\hat\alpha_n,\hat\beta_n^\top)^\top .
\]
Thus the two-step estimator is a Z-estimator for the regular parameter vector $\zeta$.

\subsection{Joint asymptotic normality}
\label{subsec:delta-linearization-app}

Let $\Phi(\zeta)$ denote the deterministic limit of \eqref{eq:Phi-bar-app}:
\[
  \Phi(\zeta)
  :=
  \begin{pmatrix}
    G_1^\top\Lambda_1\mu_1(\rho)\\
    \mu_2(\rho,\alpha,\beta)
  \end{pmatrix},
\]
where $\Lambda_{1,n}\pto\Lambda_1$ and
\[
  \mu_2(\rho,\alpha,\beta)
  :=
  \lim_{n\to\infty}
  \frac{1}{N_n}\sum_{i\in\Ucal_n}
  \E[\Psi_{2i}(\rho,\alpha,\beta)].
\]
By construction, $\Phi(\zeta_0)=0$.

\begin{assumption}[Two-step high-level regularity]
\label{ass:stage2-highlevel}
Conditional on the selected finite-grid exposure component $\vartheta_0$, the following conditions hold.
\begin{enumerate}[label=\textup{(S\arabic*)}, leftmargin=1.4cm]
  \item \label{ass:stage2-consistency} \textbf{Consistency and approximate zero.}
  The estimator satisfies $\hat\zeta_n\pto\zeta_0$ and
  \[
    \bar\Phi_n(\hat\zeta_n)=o_p(N_n^{-1/2}).
  \]

  \item \label{ass:stage2-mean-diff} \textbf{Mean differentiability and rank.}
  The map $\Phi$ is mean-differentiable at $\zeta_0$:
  \[
    \Phi(\zeta)
    =
    A(\zeta-\zeta_0)+r(\zeta),
    \qquad
    \|r(\zeta)\|=o(\|\zeta-\zeta_0\|),
  \]
  for a nonsingular matrix $A$.

  \item \label{ass:stage2-clt} \textbf{Joint affinity-set CLT.}
  The stacked centered map satisfies
  \[
    \sqrt{N_n}\{\bar\Phi_n(\zeta_0)-\Phi(\zeta_0)\}
    \Rightarrow
    \Ncal(0,\Omega_\Phi),
  \]
  for a finite positive definite matrix $\Omega_\Phi$.
  This condition should be verified by applying the affinity-set CLT to the stacked array underlying \eqref{eq:Phi-bar-app}; equivalently, the CLT must hold for every fixed linear combination of the Stage~1 projected moments and Stage~2 moments.

  \item \label{ass:stage2-equicont} \textbf{\redrev{Local stochastic expansion.}}
  \redrev{For every sequence $\zeta_n\pto\zeta_0$,
  \[
    \sqrt{N_n}
    \Big[
      \{\bar\Phi_n(\zeta_n)-\Phi(\zeta_n)\}
      -
      \{\bar\Phi_n(\zeta_0)-\Phi(\zeta_0)\}
    \Big]
    =
    o_p\big(1+\sqrt{N_n}\,\|\zeta_n-\zeta_0\|\big).
  \]}
\end{enumerate}
\end{assumption}

Assumption~\ref{ass:stage2-highlevel} is the exact analogue of the high-level Z-estimator assumptions used in Theorem~\ref{thm:DB-AN-aff}, applied to the stacked Stage~1/Stage~2 map.
Smooth exposure-response systems can verify it using derivative-level LLNs.
Non-smooth exposure maps, such as rings, can verify it through mean differentiability of $\Phi$ and stochastic equicontinuity of the centered stacked process; no sample-path differentiability of the ring exposure is required.

\begin{theorem}[Joint asymptotic normality for the two-step estimator]
\label{thm:joint-clt-app}
Suppose Assumption~\ref{ass:stage2-highlevel} holds. Then, under the designs $\Dcal_n$,
\[
  \sqrt{N_n}(\hat\zeta_n-\zeta_0)
  \Rightarrow
  \Ncal\!\left(0,V_\zeta\right),
  \qquad
  V_\zeta:=A^{-1}\Omega_\Phi A^{-\top}.
\]
\end{theorem}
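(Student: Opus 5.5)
The plan is to apply the finite-dimensional Z-estimator result, Lemma~\ref{lem:vdvw-z-estimator}, directly to the stacked random map $M_n:=\bar\Phi_n$ and its deterministic limit $M:=\Phi$. The parameter is $\zeta\in\R^{p_\rho+1+k}$. The stacked map has the same dimension, since the projected Stage~1 block $G_1^\top\Lambda_{1,n}\bar\Psi_{1,n}$ is $p_\rho$-dimensional and the Stage~2 block is $(1+k)$-dimensional, so $A$ is square. This mirrors the proof of Theorem~\ref{thm:DB-AN-aff}: all the dependence structure is already packaged in the high-level conditions of Assumption~\ref{ass:stage2-highlevel}.

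The hypotheses of Lemma~\ref{lem:vdvw-z-estimator} are checked one at a time.
\begin{enumerate}
\item Consistency $\hat\zeta_n\pto\zeta_0$ and the approximate zero $\|\bar\Phi_n(\hat\zeta_n)\|=o_p(N_n^{-1/2})$ are exactly \ref{ass:stage2-consistency}.
\item $\Phi(\zeta_0)=0$ holds by construction. The Stage~1 block vanishes because $\mu_1(\rho_0)=0$, by Corollary~\ref{cor:design-moment-unit} conditional on $\vartheta_0$. The Stage~2 block vanishes by the defining projection condition \eqref{eq:linear-app}.
\item Differentiability at $\zeta_0$ with nonsingular derivative $A$ is \ref{ass:stage2-mean-diff}. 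Here $\Phi(\zeta)=A(\zeta-\zeta_0)+o(\|\zeta-\zeta_0\|)$ is precisely Fréchet differentiability at $\zeta_0$.
\item The weak convergence $\sqrt{N_n}\{\bar\Phi_n(\zeta_0)-\Phi(\zeta_0)\}\Rightarrow\Ncal(0,\Omega_\Phi)$ is \ref{ass:stage2-clt}. Its limit is tight.
\item The local stochastic expansion (Z) is verbatim \ref{ass:stage2-equicont}.
\end{enumerate}

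The lemma then yields the expansion
\[
\sqrt{N_n}(\hat\zeta_n-\zeta_0)=-A^{-1}\sqrt{N_n}\{\bar\Phi_n(\zeta_0)-\Phi(\zeta_0)\}+o_p(1).
\]
The continuous mapping theorem applied to the linear map $-A^{-1}$ gives the limit $\Ncal(0,A^{-1}\Omega_\Phi A^{-\top})$.

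One subtlety concerns the weight matrix. The stacked map contains $\Lambda_{1,n}$ while its limit uses $\Lambda_1$. If \ref{ass:stage2-equicont} is read as stated, with $\Phi$ defined using $\Lambda_1$, this discrepancy is already absorbed. If one instead starts from equicontinuity of the raw moments, I would reproduce the argument from the proof of Theorem~\ref{thm:DB-AN-aff}. That argument bounds the extra term $G_1^\top(\Lambda_{1,n}-\Lambda_1)\sqrt{N_n}\{\mu_1(\rho_n)-\mu_1(\rho_0)\}$ by $o_p(1)\cdot O(\sqrt{N_n}\|\rho_n-\rho_0\|)$ using local Lipschitzness of $\mu_1$. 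The CLT centering needs the same care: $\sqrt{N_n}\,G_1^\top(\Lambda_{1,n}-\Lambda_1)\bar\Psi_{1,n}(\rho_0)=o_p(1)\cdot O_p(1)$ because $\mu_1(\rho_0)=0$.

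The last step handles the finite-grid selection. All statements are conditional on $\vartheta_0$, and $\Pr(\hat\vartheta_n=\vartheta_0)\to1$ by Corollary~\ref{cor:discrete-parameters}. The argument from the proof of that corollary therefore transfers the conclusion from the oracle estimator (with $\vartheta_0$ fixed) to the actual estimator, since the two agree with probability tending to one. The main obstacle is not in this deduction, which is essentially bookkeeping once the assumptions are matched to the lemma. It lies in ensuring that $A$ is nonsingular given its block-triangular structure. The Stage~1 block depends only on $\rho$, so $A=\begin{pmatrix}G_1^\top\Lambda_1G_1&0\\ \partial_\rho\mu_2&\partial_{(\alpha,\beta)}\mu_2\end{pmatrix}$. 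Nonsingularity thus reduces to invertibility of $G_1^\top\Lambda_1G_1$ and of the Stage~2 Gram matrix $\lim N_n^{-1}\sum_i\E[Z_i(\rho_0)Z_i(\rho_0)^\top]$. I would note this decomposition to show that \ref{ass:stage2-mean-diff} is consistent with Assumption~\ref{ass:AN-AFF}, but the theorem itself simply assumes it.
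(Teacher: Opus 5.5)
Your proposal is correct and follows the same route as the paper. You apply Lemma~\ref{lem:vdvw-z-estimator} to the stacked map $\bar\Phi_n$ with limit $\Phi$, read each hypothesis directly off \ref{ass:stage2-consistency}--\ref{ass:stage2-equicont}, and conclude from the linear expansion plus Slutsky. Your extra remarks are consistent with the paper but not needed for the theorem as stated: the weight-matrix discrepancy is already absorbed by \ref{ass:stage2-equicont}, the finite-grid transfer is handled separately in Remark~\ref{rem:discrete-theta-stage2}, and the block-triangular form of $A$ only unpacks the assumed nonsingularity.
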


\begin{proof}
\redrev{We apply Lemma~\ref{lem:vdvw-z-estimator} to the stacked Stage~1/Stage~2 map, taking $M_n:=\bar\Phi_n$, $M:=\Phi$, and parameter $\zeta$. Its hypotheses are supplied by Assumption~\ref{ass:stage2-highlevel}: consistency $\hat\zeta_n\pto\zeta_0$ and the approximate-zero condition $\bar\Phi_n(\hat\zeta_n)=o_p(N_n^{-1/2})$ from \textup{\ref{ass:stage2-consistency}}; $\Phi(\zeta_0)=0$ (the population moment vanishes at the truth) with nonsingular derivative $A$ from \textup{\ref{ass:stage2-mean-diff}}; the joint CLT $\sqrt{N_n}\{\bar\Phi_n(\zeta_0)-\Phi(\zeta_0)\}\Rightarrow\Ncal(0,\Omega_\Phi)$ from \textup{\ref{ass:stage2-clt}}; and the local stochastic expansion condition~\textup{(Z)} from \textup{\ref{ass:stage2-equicont}}. Because condition~\textup{(Z)} normalizes the remainder by $1+\sqrt{N_n}\|\zeta_n-\zeta_0\|$, the lemma delivers the $\sqrt{N_n}$ rate rather than presupposing it, and yields
\[
  \sqrt{N_n}(\hat\zeta_n-\zeta_0)
  =
  -A^{-1}\sqrt{N_n}\{\bar\Phi_n(\zeta_0)-\Phi(\zeta_0)\}+o_p(1).
\]
The joint CLT in \textup{\ref{ass:stage2-clt}} and Slutsky's theorem then give
$\sqrt{N_n}(\hat\zeta_n-\zeta_0)\Rightarrow\Ncal(0,A^{-1}\Omega_\Phi A^{-\top})=\Ncal(0,V_\zeta)$.}
\end{proof}

\begin{remark}[Discrete exposure components]
\label{rem:discrete-theta-stage2}
The theorem is stated after conditioning on a consistently selected finite-grid exposure component.
If $\theta=(\vartheta,\rho)$ and $\Pr(\hat\vartheta_n=\vartheta_0)\to1$, then
\[
  \hat\zeta_n(\hat\vartheta_n)
  =
  \hat\zeta_n(\vartheta_0)
  \qquad\text{w.p.a.1}.
\]
Consequently, finite-grid selection has no first-order contribution to the distribution in Theorem~\ref{thm:joint-clt-app}.
This statement relies on fixed-gap selection consistency; it is not a claim about nonregular post-selection inference under ties or local-to-zero gaps.
\end{remark}

\subsection{\texorpdfstring{Inference for the exposure--response coefficient $\beta_0$}{Inference for the exposure-response coefficient (beta-0)}}
\label{subsec:beta-inference-app}

Let $K_\beta$ denote the selection matrix that extracts the $\beta$ coordinates from $\zeta$:
\[
  \beta=K_\beta\zeta,
  \qquad
  \beta_0=K_\beta\zeta_0,
  \qquad
  \hat\beta_n=K_\beta\hat\zeta_n.
\]
Define
\[
  V_\beta:=K_\beta V_\zeta K_\beta^\top.
\]

\begin{corollary}[Asymptotic normality of $\hat\beta_n$]
\label{cor:beta-clt-app}
Under the assumptions of Theorem~\ref{thm:joint-clt-app},
\[
  \sqrt{N_n}(\hat\beta_n-\beta_0)
  \Rightarrow
  \Ncal(0,V_\beta).
\]
\end{corollary}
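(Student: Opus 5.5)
This is a direct consequence of Theorem~\ref{thm:joint-clt-app} and the continuous mapping theorem, since $\hat\beta_n$ is a fixed linear transformation of $\hat\zeta_n$. We take the conclusion of Theorem~\ref{thm:joint-clt-app} as given:
\[
  \sqrt{N_n}(\hat\zeta_n-\zeta_0)\Rightarrow\Ncal(0,V_\zeta),
  \qquad V_\zeta=A^{-1}\Omega_\Phi A^{-\top}.
\]
Its hypotheses are exactly those of the corollary, namely Assumption~\ref{ass:stage2-highlevel} conditional on the selected component $\vartheta_0$.

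\textbf{Step 1 (linear identity).} The selection matrix $K_\beta$ is deterministic, does not depend on $n$, and has dimension $k\times\dim(\zeta)$. By definition $\hat\beta_n=K_\beta\hat\zeta_n$ and $\beta_0=K_\beta\zeta_0$. Hence the identity
\[
  \sqrt{N_n}(\hat\beta_n-\beta_0)=K_\beta\sqrt{N_n}(\hat\zeta_n-\zeta_0)
\]
holds exactly, with no remainder term.

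\textbf{Step 2 (continuous mapping).} The map $x\mapsto K_\beta x$ is continuous. The continuous mapping theorem therefore gives
\[
  \sqrt{N_n}(\hat\beta_n-\beta_0)\Rightarrow K_\beta \mathcal Z,
  \qquad \mathcal Z\sim\Ncal(0,V_\zeta).
\]

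\textbf{Step 3 (identify the limit law).} A linear image of a Gaussian vector is Gaussian. Thus $K_\beta\mathcal Z\sim\Ncal(0,K_\beta V_\zeta K_\beta^\top)=\Ncal(0,V_\beta)$.

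If a finite-grid component is present, one further observation is needed. By Remark~\ref{rem:discrete-theta-stage2}, $\hat\zeta_n(\hat\vartheta_n)=\hat\zeta_n(\vartheta_0)$ with probability approaching one. The two versions of $\sqrt{N_n}(\hat\beta_n-\beta_0)$ therefore differ by $o_p(1)$, and Slutsky's theorem transfers the limit.

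There is no substantive obstacle here. The only point needing care is that the limit may be degenerate: $V_\beta$ is positive semidefinite, and it is positive definite whenever $A^{-1}$ and $\Omega_\Phi$ are nonsingular, since $K_\beta$ has full row rank. Either way, the statement holds as a possibly degenerate normal limit.
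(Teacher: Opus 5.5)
Your proposal is correct and matches the paper's proof, which likewise obtains the result immediately from Theorem~\ref{thm:joint-clt-app} and the continuous mapping theorem applied to the linear map $\zeta\mapsto K_\beta\zeta$. Your degeneracy caveat is unnecessary: Assumption~\ref{ass:stage2-highlevel} already makes $A$ nonsingular and $\Omega_\Phi$ positive definite, and $K_\beta$ has full row rank, so $V_\beta$ is in fact positive definite.
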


\begin{proof}
This follows immediately from Theorem~\ref{thm:joint-clt-app} and the continuous mapping theorem applied to the linear map $\zeta\mapsto K_\beta\zeta$.
\end{proof}

In practice, estimate $A$ and $\Omega_\Phi$ by plug-in analogues evaluated at $\hat\zeta_n$.
The covariance $\Omega_\Phi$ should be estimated using the same affinity-set or graph-HAC logic as in Section~\ref{sec:DB-consistency-AN-affinity}, applied to the stacked moment contributions underlying \eqref{eq:Phi-bar-app}: the projected Stage~1 contribution $G_1^\top\Lambda_1\Psi_{1i}(\hat\rho_n)$ and the Stage~2 contribution $\Psi_{2i}(\hat\rho_n,\hat\alpha_n,\hat\beta_n)$.
The resulting sandwich estimate
\[
  \hat V_\zeta=
  \hat A^{-1}\hat\Omega_\Phi\hat A^{-\top}
\]
can be projected to obtain
\[
  \hat V_\beta=K_\beta\hat V_\zeta K_\beta^\top.
\]
This delivers design-based inference for the Stage~2 exposure--response coefficients that accounts for regular continuous first-stage exposure uncertainty and for the same spatial/network dependence structure used in Stage~1.
\begin{remark}[Conservativeness of the Stage-2 variance estimator]
\label{rem:stage2-conservative-app}
Write $u_i:=Y_i-\alpha_0-\beta_0^\top g_i(\rho_0)$ and
\[
  \mu_{2,i}:=\E[\Psi_{2i}(\zeta_0)]=\E[Z_i(\rho_0)\,u_i].
\]
The projection target \eqref{eq:linear-app} imposes only
$\tfrac1{N_n}\sum_{i}\mu_{2,i}\to0$. The individual $\mu_{2,i}$ need not vanish,
and they are not identified from the realized assignment: each is a design
expectation involving $\widetilde{Y}_i$ at exposure values other than the
realized one. The covariance $\Omega_\Phi$ in
Assumption~\ref{ass:stage2-highlevel} is built from contributions centered at
these unit-level means, whereas the feasible $\hat\Omega_\Phi$ described
above centers at the common sample mean, the only
centering available. Evaluated at $\zeta_0$, with $A_i$ the affinity set of
unit $i$ (Section~\ref{sec:affset}; not the Jacobian $A$),
\[
  \E[\hat\Omega_\Phi]
  =\Omega_\Phi
  +\underbrace{\frac{1}{N_n}\sum_{i}\sum_{j\in A_i}
    \begin{pmatrix}0&0\\[2pt]0&\mu_{2,i}\mu_{2,j}^\top\end{pmatrix}}_{=:\,B_n}
  +o(1);
\]
the Stage-1 block and the cross block are unbiased because the Stage-1 unit
means vanish identically (Corollary~\ref{cor:design-moment-unit}), and
evaluating at $\hat\zeta_n$ instead of $\zeta_0$ adds an $o_p(1)$ term under
Assumption~\ref{ass:stage2-highlevel} without affecting the decomposition.

The sign of $B_n$ determines sharpness. The own-index terms
$\tfrac1{N_n}\sum_i\mu_{2,i}\mu_{2,i}^\top$ are positive semidefinite, so if the
affinity sets are singletons or block-diagonal, $B_n\succeq0$; since
$V_\zeta=A^{-1}\Omega_\Phi A^{-\top}$ and
$V_\beta=K_\beta V_\zeta K_\beta^\top$ are congruence transformations, which
preserve the positive semidefinite order, $\hat V_\beta$ then weakly overstates
$V_\beta$ and confidence intervals for $\beta_0$ are asymptotically valid but
conservative, with no assumption of correct specification. If the
exposure--response is correctly specified unit by unit, so
$\mu_{2,i}=0$ for every $i$, then $B_n=0$ and inference is exact. Under general
overlapping dependence the cross terms $\mu_{2,i}\mu_{2,j}^\top$ are not
sign-definite; a sufficient condition restoring $B_n\succeq0$ is local
alignment of the unit-level means,
$\sum_i\sum_{j\in A_i}(c^\top\mu_{2,i})(c^\top\mu_{2,j})\ge0$ for every
direction $c$, as when misspecification varies smoothly relative to the
affinity radius. Absent such structure we confine the conservativeness
statement to the block-diagonal case.

\end{remark}

\section{Additional Application Details}

The appendix follows the order of the main applications. We first report
additional implementation details for the
\citet{muralidharan2023generaleq} application, where the design-based procedure
largely validates the original exposure scale. We then report additional results
for the \citet{egger2022general} application, where the exposure choice is more
consequential for the downstream IV estimates and multiplier accounting.

\subsection{\texorpdfstring{Additional Stage-1 implementation details for \citet{muralidharan2023generaleq}}{Additional Stage-1 implementation details for Muralidharan et al. (2023)}}
\label{app:mns-stage1}

This appendix provides further details on the Stage~1 exposure-mapping
exercise for \citet{muralidharan2023generaleq} used in the main text.

Throughout, we work with the same GP-level outcomes as in the main text:
NREGS earnings, total income, and
wage-labor income. We use the population-weighted
village-level ring exposure measure and village-consistent annulus
average instruments (``bands'' specification), the strict placebo design, and
the two-step fixed-weight GMM criterion searched over the 0.1 km grid
$\mathcal R=\{0.1,0.2,\ldots,30.0\}$ km. The baseline in the main text uses
3 km-wide bands up to $r_{\max}=20$ km ($M=7$ moments).

\paragraph{Two-step criterion and Wald interval.}
Let $\hat\mu(\theta)$ denote the stacked design-based moment vector for outcome
$Y$ at radius $\theta$, computed using the mandal-level randomization. The
Stage~1 estimator minimizes a two-step fixed-weight criterion. In the first
step we form the identity-weighted pilot criterion
\[
  \widehat Q_1(\theta)
  \;:=\;
  \hat\mu(\theta)^\top \hat\mu(\theta),
  \qquad
  \hat\theta_1
  \;:=\;
  \arg\min_{\theta\in\mathcal R} \widehat Q_1(\theta).
\]
In the second step we estimate the weighting matrix at the pilot radius and
hold it fixed: $\widehat W_0 := \widehat S(\hat\theta_1)^{-1}$, where
$\widehat S(\theta)$ is the design-based covariance estimator for the moment
vector at radius $\theta$. The radius estimator is then
\[
  \hat\theta
  \;:=\;
  \arg\min_{\theta\in\mathcal R}
  \widehat Q_2(\theta),
  \qquad
  \widehat Q_2(\theta)
  \;:=\;
  \hat\mu(\theta)^\top \widehat W_0\hat\mu(\theta).
\]
The plotted objective in all figures is $\widehat Q_2(\theta)/n$ for scale
comparability; this rescaling does not affect the minimizing radius.

Regarding inference: because physical interaction distances vary
continuously and the 0.1 km grid step is fine, the asymptotic normality
result of Section~\ref{sec:DB-consistency-AN-affinity} applies, and we report
a Wald-style standard error using a finite-difference Jacobian evaluated at
the neighboring grid points,
\[
  \widehat{\mathrm{se}}(\hat\theta)
  \;=\;
  \left[
    n \Bigl( \hat G^\top \widehat W_0 \hat G \Bigr)
  \right]^{-1/2},
  \qquad
  \hat G
  \;:=\;
  \frac{
    \hat\mu(\hat\theta + h) - \hat\mu(\hat\theta - h)
  }{2h},
\]
where $h = 0.1$ km is the grid step. The 95\% Wald interval is
$\hat\theta \pm 1.96 \cdot \widehat{\mathrm{se}}(\hat\theta)$.
\rev{One regularity point is worth making explicit here. The \emph{sample}
criterion in the ring radius is piecewise constant, with jumps at observed
inter-unit distances, so the Wald interval is justified by smoothness of the
\emph{population} (design-expectation) moment map $\theta\mapsto\mu(\theta)$ near
$\hat\theta$, not by the fineness of the grid; a fine grid only ensures that the
finite-difference Jacobian approximates the population derivative well. Where
that population smoothness is in doubt (e.g.,\ the flatter total- and wage-income
objectives), the exercise is better read as finite-grid model selection over the
candidate radii, and we treat the reported Wald intervals as approximate in those
cases.}

The fixed-radius overidentification diagnostic at the selected radius is
\[
  J(\hat\theta)
  \;:=\;
  n\,\hat\mu(\hat\theta)^\top \widehat S(\hat\theta)^{-1}\hat\mu(\hat\theta),
\]
which, \rev{because it is evaluated at the radius $\hat\theta$ chosen by the
criterion, is under the null asymptotically $\chi^2$ with
$\mathrm{df} = \mathrm{rank}\!\bigl(\widehat S(\hat\theta)\bigr) - \dim\theta$
degrees of freedom ($\dim\theta=1$ for the scalar radius), i.e.,\ the $M-1$
convention used in the table notes}. A large $p$-value suggests that the selected radius is
consistent with the moment restrictions; it does not imply that a wide range
of radii are equally plausible.

\subsubsection{Randomization design and placebo assignments}

Our Stage~1 procedure relies only on the mandal-level randomization of the
Smartcard rollout. \citet{muralidharan2023generaleq} randomized 296 mandals
in eight districts into multiple implementation waves, stratifying by district
and by a principal component of mandal socio-economic characteristics. The
randomization was conducted in two batches (six ``v5'' districts and two
``v6'' districts) using R scripts that (i) construct the first principal
component of stratification variables within each district, (ii) sort mandals
by district, revenue division, and this principal component, (iii) assign
mandals cyclically to a set of intermediate waves (seven waves in the first
batch and ten waves in the second), (iv) permute wave labels within
revenue-division $\times$ wave-cell groups, and (v) collapse these
intermediate waves into treatment, buffer, and control groups while enforcing
district-specific targets for the number of treatment and control mandals.

To align our placebo assignments with this design as closely as possible, we
obtained the original randomization scripts from the authors and implemented
an analogous ``strict'' re-randomization procedure in our code.\footnote{We
thank the authors for sharing the original randomization code. Our
implementation follows their logic but operates on the replication files
rather than the original merged census data.}

A subtle but important limitation is that the original experiment included a
third, ``buffer'' wave of mandals that were scheduled to receive Smartcards
after the treatment group, but no household surveys were conducted in these
buffer mandals. As \citet{muralidharan2023generaleq} emphasize, the survey
data are only available for treatment and control mandals, and the timing and
intensity of Smartcard rollout in buffer mandals are not fully observed. In
Appendix~B.1, they therefore treat buffer mandals as effectively untreated
when constructing neighborhood-treatment measures, and show that re-weighting
buffer mandals as partially treated (e.g.,\ 10--50 percent of full intensity)
yields qualitatively similar results and typically larger estimated total
effects, suggesting that the baseline specification is conservative. Following
the same spirit, our re-randomization scheme is defined over the study mandals
(those with survey data and $T_m\in\{0,1\}$) and treats the buffer group as a
latent complement that is not part of the analysis sample. Thus our placebo
assignments approximate the conditional randomization distribution rather than
the full distribution over all 296 mandals including the buffer wave.

In addition to this strict placebo design, which mirrors the wave structure
and district-level targets in the original randomization, we also consider a
simpler robustness specification in which we re-draw treatment indicators
within strata defined by district and discrete bins of the principal
component, holding fixed the number of treated mandals in each stratum. This
``simplified'' design preserves the key stratification features of the
experiment but abstracts from its detailed wave structure. Our main results
are based on the strict design; the simplified design yields qualitatively
similar Stage~1 radius estimates and objective values, reinforcing that our
conclusions are driven by the core features of the mandal-level randomization
rather than by fine details of the implementation.

\subsection{Additional results for Egger et al.\ (2022)}
\label{app:egger-extra}

This appendix gives the outcome-level results behind the Egger et al. application
in Section~\ref{subsec:egger-main}. The main text reports the three aggregates
most relevant for the multiplier discussion. Here we report the same
selected-support IV comparison for all outcomes and display the full set of
radius-path diagnostics.

\subsubsection*{A. Linear exposure index and IV implementation}

For each candidate support $R_m=2m$ km, we use Egger et al.'s annular transfer
variables to form the linear spillover exposure index
\[
g_v^{(m)}(W;\theta_m)
=
\sum_{j=1}^{m}\theta_{m,j}T^{\neg v}_{v,2(j-1)-2j}(W),
\qquad \mathbf 1'\theta_m=1.
\]
Stage~1 estimates the index weights $\theta_m$ from the design-based
exposure-sufficiency moments. Stage~2 then keeps the Egger et al. IV outcome
equation and replaces the original scalar spillover variable with
$g_v^{(m)}(W;\widehat\theta_m)$. Thus the appendix tables vary the spillover
exposure map, but otherwise keep the downstream IV exercise as close as possible
to the original specification.

\subsubsection*{B. Full selected-support IV results}

Table~\ref{tab:appendix-IV-linear-index-selected} extends the main-text
comparison to all outcomes. \rev{It reports the original Egger et al.\ IV
estimates, the selected-support linear-index estimates, the change in point
estimate and standard error, the selected support $\hat R$, and the $p$-value of
the support test (the likelihood-ratio test of the full $20$ km model against
the restricted model that zeroes the annuli beyond $\hat R$) that drives the
selection.}

\begin{table}[htbp]
  \centering
  \caption{All outcome IV components: Egger et al. vs. selected linear exposure-index estimates}
  \label{tab:appendix-IV-linear-index-selected}
  \resizebox{0.98\linewidth}{!}{%
  \begin{tabular}{llccccc}
    \toprule
    Outcome & Group & Egger IV & Linear-index IV & $\Delta\beta,\ \Delta\mathrm{SE}$ & $R$ & Support p-value \\
    \midrule
    Household expenditure, annualized & Recipients & 338.6 (100.6) & 320.0 (69.4) & $-18.6,\ -31.2$ & 6.0 & 0.249 \\
     & Non-recipients & 334.7 (131.1) & 224.9 (85.2) & $-109.8,\ -45.9$ & 6.0 & 0.249 \\[0.35em]
    Non-durable expenditure, annualized & Recipients & 227.2 (91.4) & 224.5 (68.2) & $-2.7,\ -23.1$ & 6.0 & 0.302 \\
     & Non-recipients & 317.5 (126.9) & 222.1 (77.5) & $-95.4,\ -49.4$ & 6.0 & 0.302 \\[0.35em]
    Food expenditure, annualized & Recipients & 133.8 (60.6) & 85.8 (35.1) & $-48.0,\ -25.5$ & 4.0 & 0.100 \\
     & Non-recipients & 133.3 (63.7) & 54.6 (27.3) & $-78.7,\ -36.3$ & 4.0 & 0.100 \\[0.35em]
    Temptation goods expenditure, annualized & Recipients & 5.9 (8.4) & 7.4 (5.8) & $1.5,\ -2.6$ & 10.0 & 0.214 \\
     & Non-recipients & -0.7 (6.8) & 1.1 (4.6) & $1.8,\ -2.2$ & 10.0 & 0.214 \\[0.35em]
    Durable expenditure, annualized & Recipients & 109.0 (26.5) & 108.9 (14.6) & $-0.2,\ -11.9$ & 4.0 & 0.236 \\
     & Non-recipients & 8.4 (11.8) & 2.0 (4.4) & $-6.4,\ -7.4$ & 4.0 & 0.236 \\[0.35em]
    Assets (non-land, non-house), net borrowing & Recipients & 183.4 (55.2) & 177.3 (31.2) & $-6.1,\ -24.0$ & 4.0 & 0.250 \\
     & Non-recipients & 133.1 (101.3) & 41.6 (44.8) & $-91.5,\ -56.5$ & 4.0 & 0.250 \\[0.35em]
    Housing value & Recipients & 477.4 (56.1) & 425.3 (29.3) & $-52.1,\ -26.7$ & 4.0 & 0.298 \\
     & Non-recipients & 80.6 (231.6) & -32.6 (96.1) & $-113.2,\ -135.5$ & 4.0 & 0.298 \\[0.35em]
    Land value & Recipients & 158.5 (304.3) & 73.2 (156.1) & $-85.3,\ -148.2$ & 4.0 & 0.062 \\
     & Non-recipients & 544.8 (416.4) & 180.0 (175.3) & $-364.7,\ -241.2$ & 4.0 & 0.062 \\[0.35em]
    Household income, annualized & Recipients & 135.7 (82.1) & 97.2 (36.9) & $-38.5,\ -45.1$ & 4.0 & 0.152 \\
     & Non-recipients & 225.0 (100.1) & 81.1 (42.6) & $-143.8,\ -57.4$ & 4.0 & 0.152 \\[0.35em]
    Net value of household transfers received, annualized & Recipients & -7.4 (14.5) & -42.8 (45.3) & $-35.3,\ 30.8$ & 16.0 & 0.309 \\
     & Non-recipients & 8.8 (16.8) & -93.2 (77.5) & $-102.0,\ 60.7$ & 16.0 & 0.309 \\[0.35em]
    Profits (ag \& non-ag), annualized & Recipients & 35.9 (37.7) & 27.4 (21.4) & $-8.5,\ -16.3$ & 4.0 & 0.157 \\
     & Non-recipients & 36.4 (46.5) & 10.2 (18.6) & $-26.3,\ -27.9$ & 4.0 & 0.157 \\[0.35em]
    Wage earnings, annualized & Recipients & 73.7 (62.4) & 54.6 (30.0) & $-19.0,\ -32.4$ & 6.0 & 0.910 \\
     & Non-recipients & 182.6 (90.2) & 103.3 (36.8) & $-79.3,\ -53.4$ & 6.0 & 0.910 \\[0.35em]
    Tax paid, annualized & Recipients & -0.1 (2.2) & 1.6 (1.6) & $1.7,\ -0.6$ & 20.0 & 1.000 \\
     & Non-recipients & 1.7 (2.5) & 0.2 (1.7) & $-1.4,\ -0.8$ & 20.0 & 1.000 \\
    \bottomrule
  \end{tabular}%
  }
  \begin{flushleft}
  \footnotesize\emph{Notes:} Entries are IV point estimates with standard errors in parentheses. Egger IV values re-estimate the original 2~km Table~1 specification on our replication sample (cluster standard errors); they closely match the published Table~I values. Recipient linear-index entries report total effects; non-recipient entries report pooled spillover effects, both with joint delta-method standard errors at each outcome's selected support $\hat R$. The Support p-value column reports the likelihood-ratio support-test p-value at $\hat R$, the diagnostic shown in the paper radius-path figures.
  \end{flushleft}
\end{table}

\rev{The broader pattern matches the main text: recipient expenditure and asset
components are stable, while non-recipient and income-side outcomes are more
sensitive and noisier. The design-based step does not move every coefficient,
but it moves enough of the downstream accounting to matter.}

\subsubsection*{C. Full radius-path diagnostics}

Figures~\ref{fig:egger-radius-path-app-exp}--\ref{fig:egger-radius-path-app-tax}
report the radius-path diagnostics for the full outcome list. The shaded region
marks the selected support. The lower panel in each figure reports the support
$p$-value path, the likelihood-ratio test of the full $20$ km model against the
restricted model at each candidate support. \bluerev{Diamonds report the
sublocation-level split-sample medians with their median intervals (dotted
whiskers), as in the main text.}

\begin{figure}[htbp]
\centering
\includegraphics[width=0.46\textwidth]{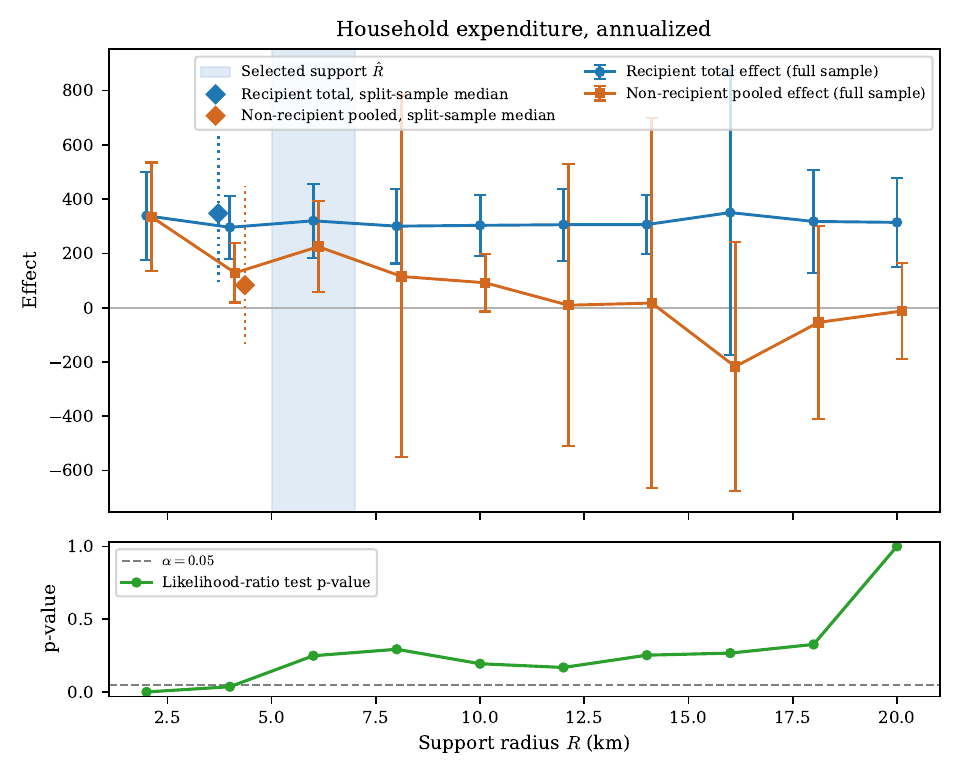}
\includegraphics[width=0.46\textwidth]{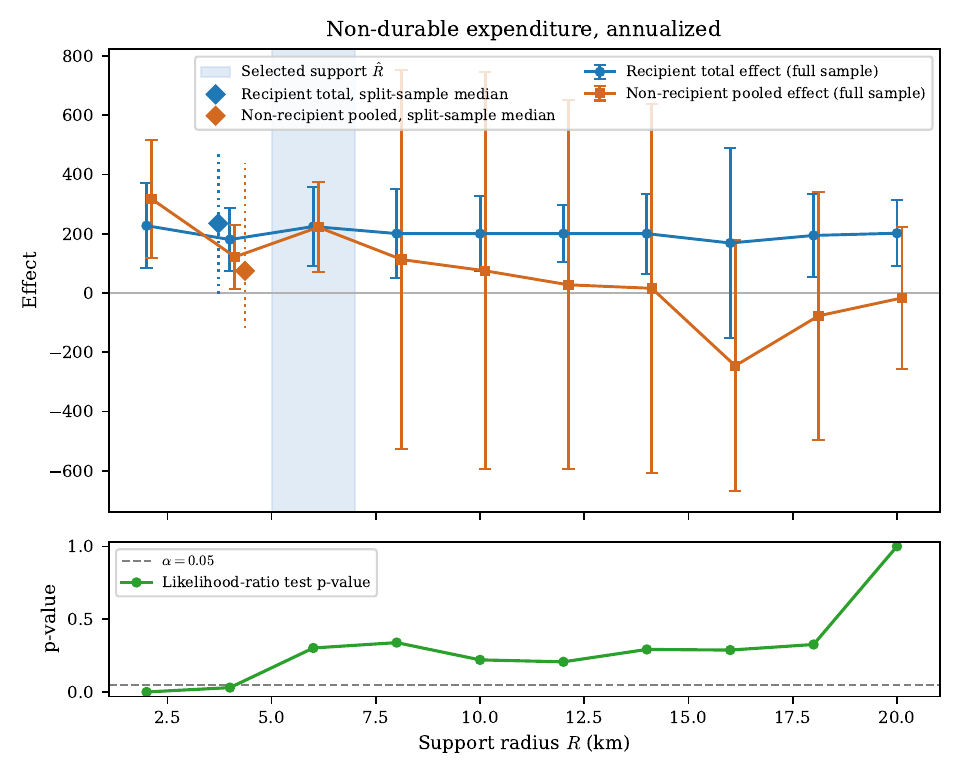}

\vspace{0.3em}

\includegraphics[width=0.46\textwidth]{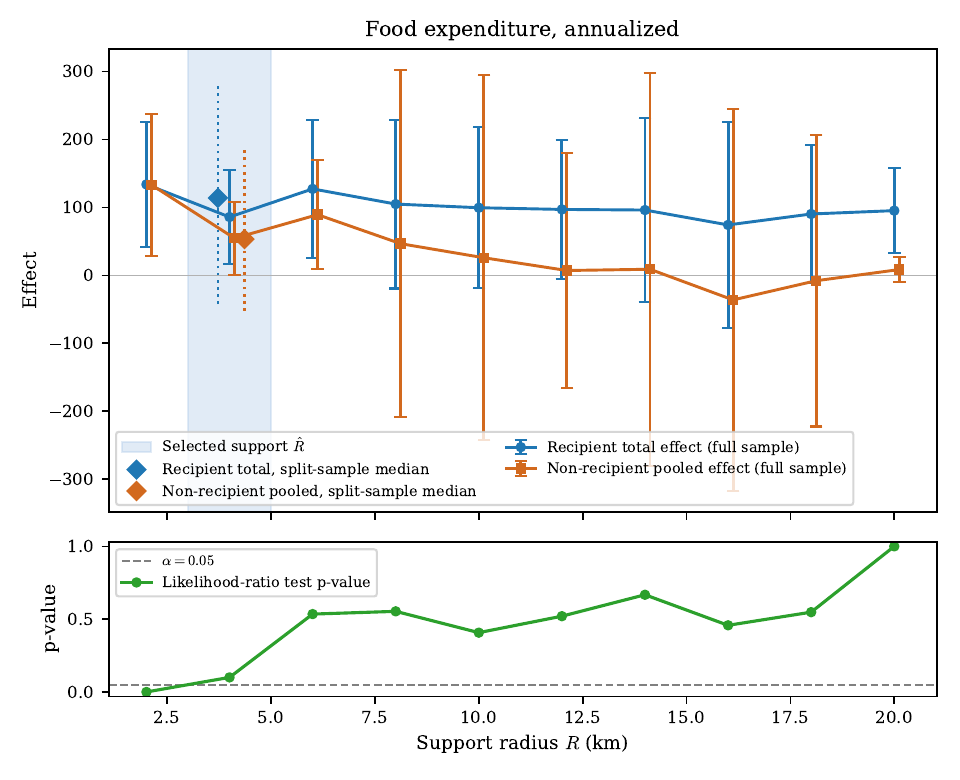}
\includegraphics[width=0.46\textwidth]{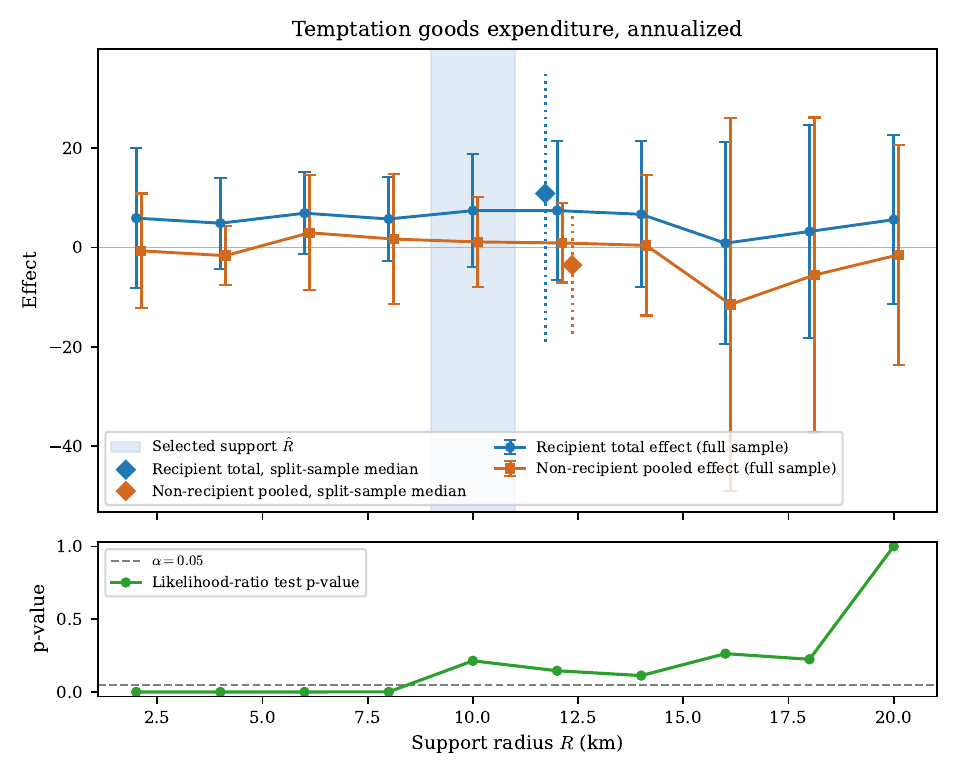}

\vspace{0.3em}

\includegraphics[width=0.46\textwidth]{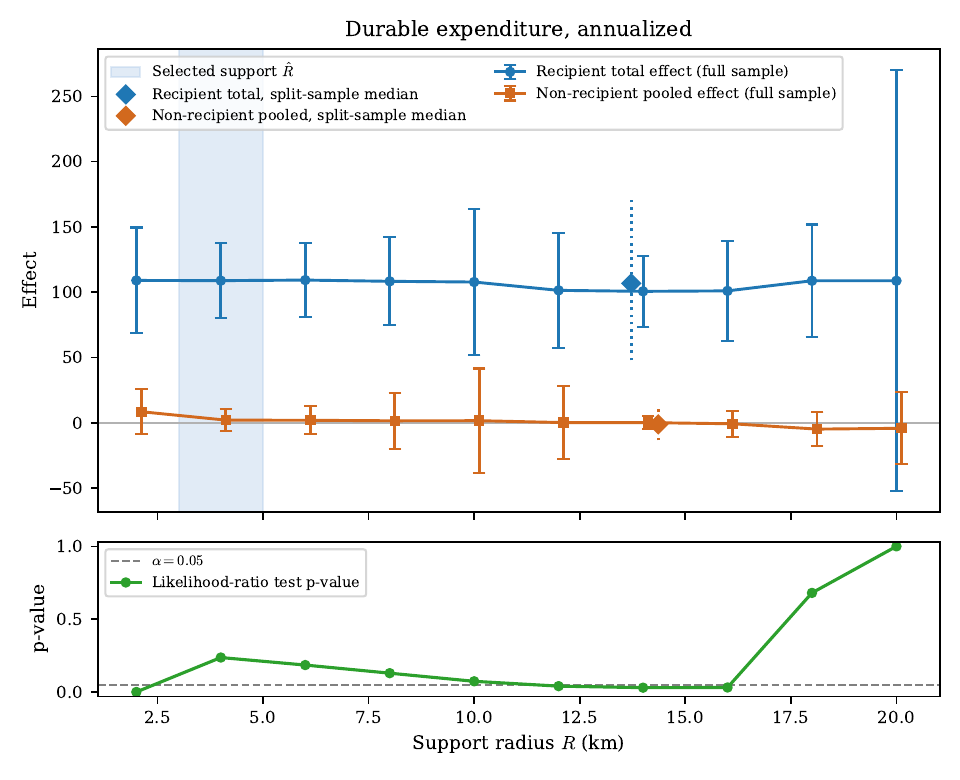}
\includegraphics[width=0.46\textwidth]{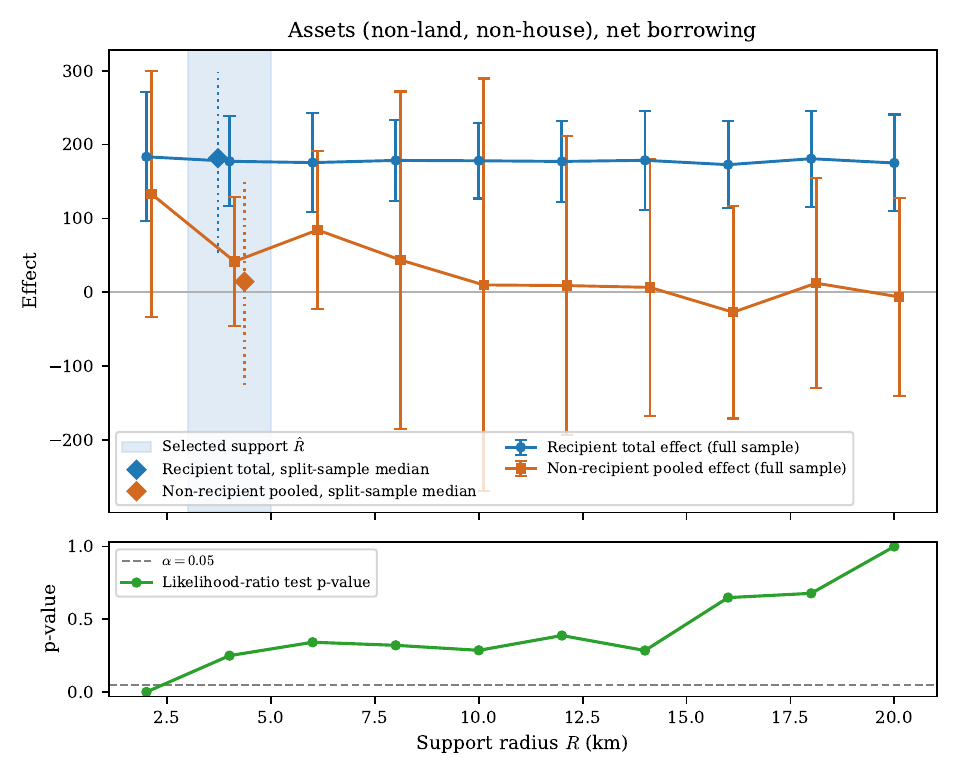}
\caption{Radius-path diagnostics for expenditure and asset outcomes in the Egger application}
\label{fig:egger-radius-path-app-exp}
\end{figure}

\begin{figure}[htbp]
\centering
\includegraphics[width=0.46\textwidth]{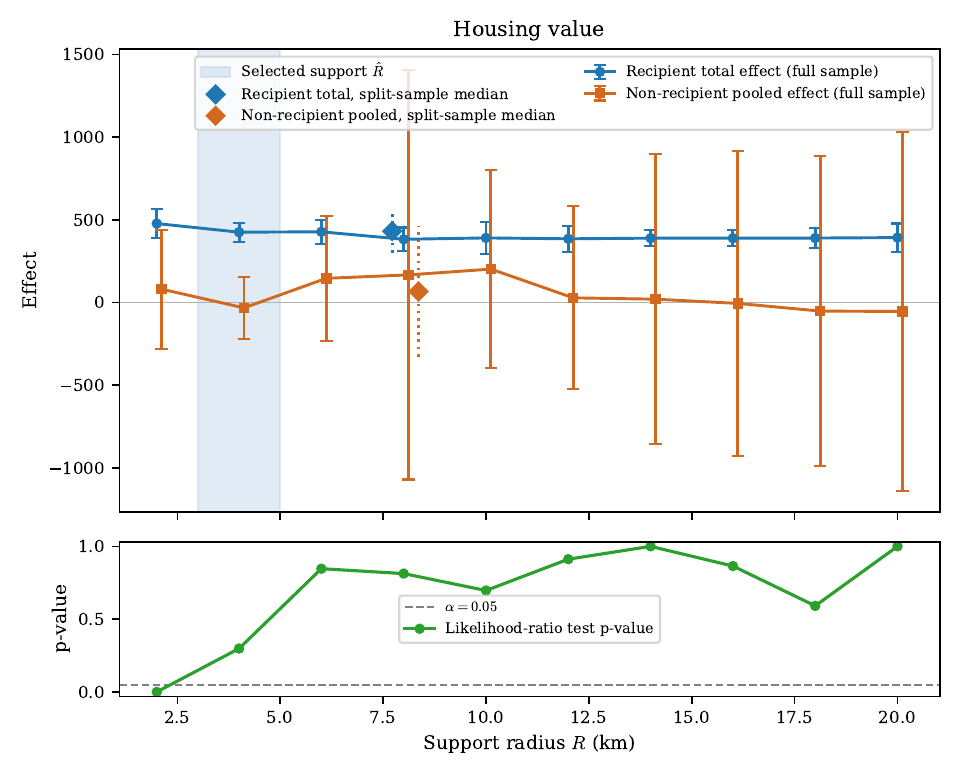}
\includegraphics[width=0.46\textwidth]{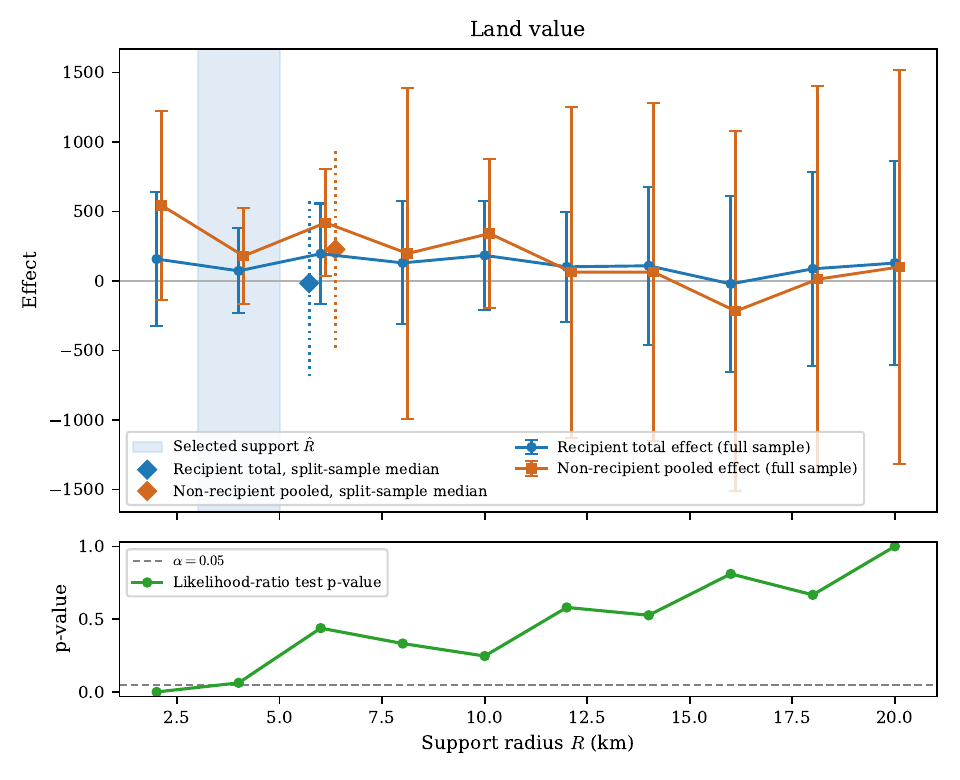}

\vspace{0.3em}

\includegraphics[width=0.46\textwidth]{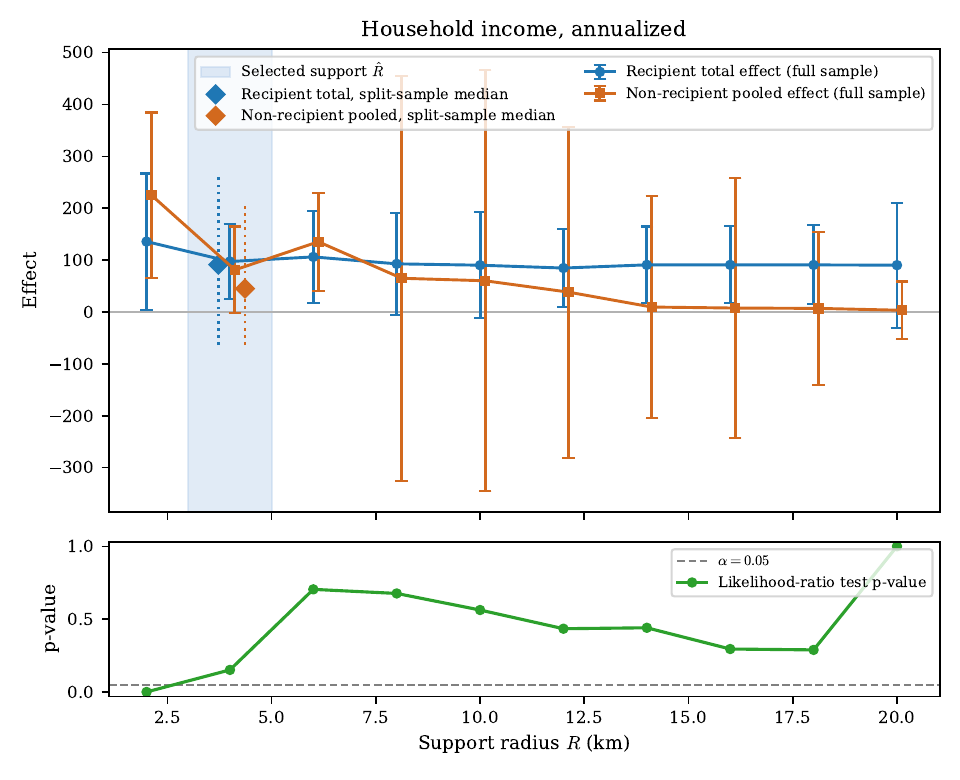}
\includegraphics[width=0.46\textwidth]{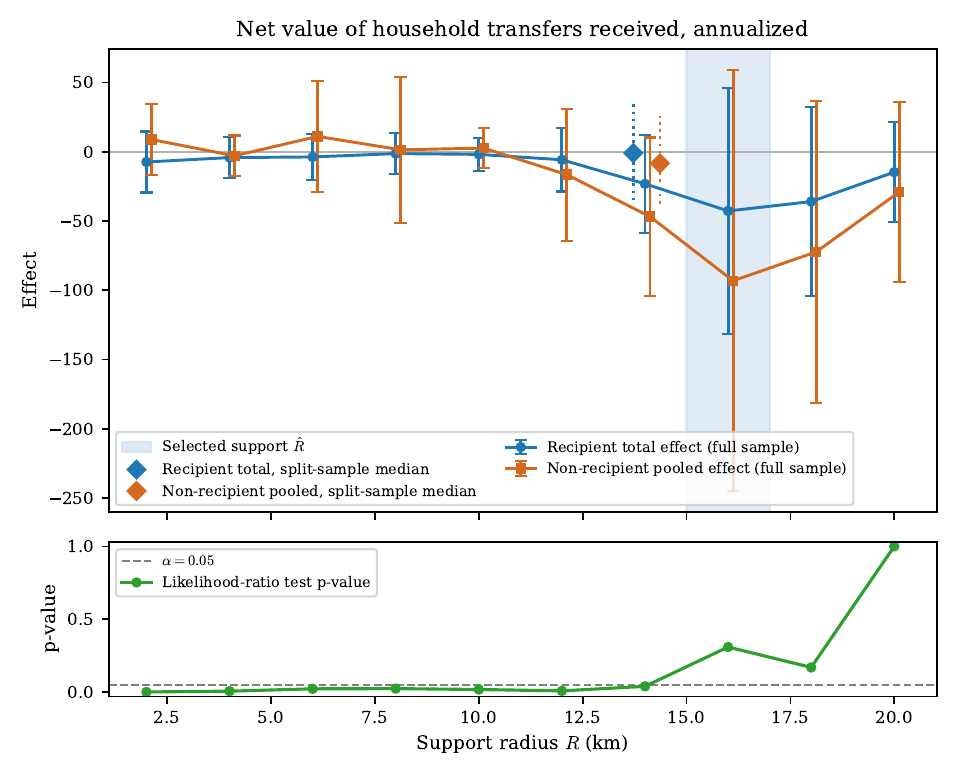}

\vspace{0.3em}

\includegraphics[width=0.46\textwidth]{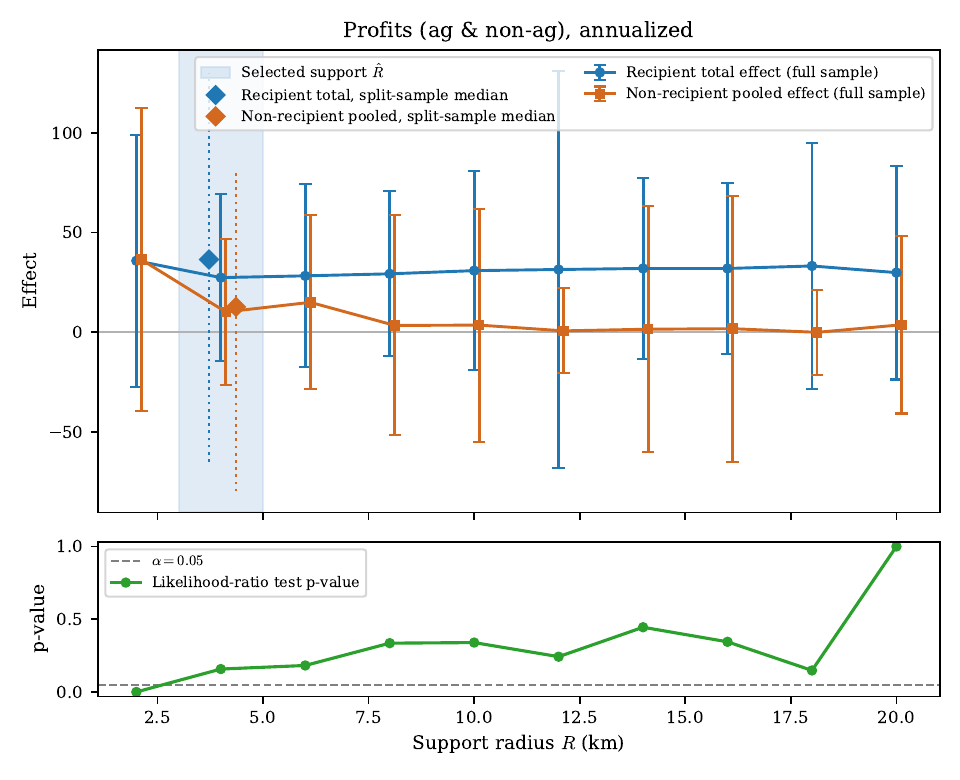}
\includegraphics[width=0.46\textwidth]{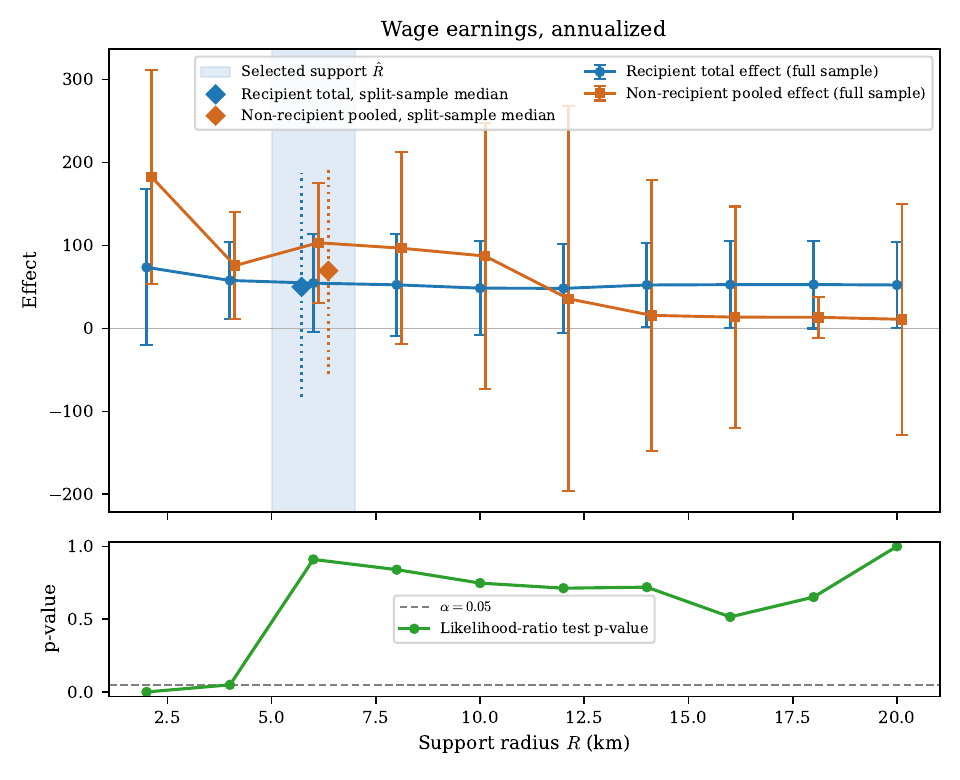}
\caption{Radius-path diagnostics for wealth, income, and transfer outcomes in the Egger application}
\label{fig:egger-radius-path-app-assets}
\end{figure}

\begin{figure}[htbp]
\centering
\includegraphics[width=0.55\textwidth]{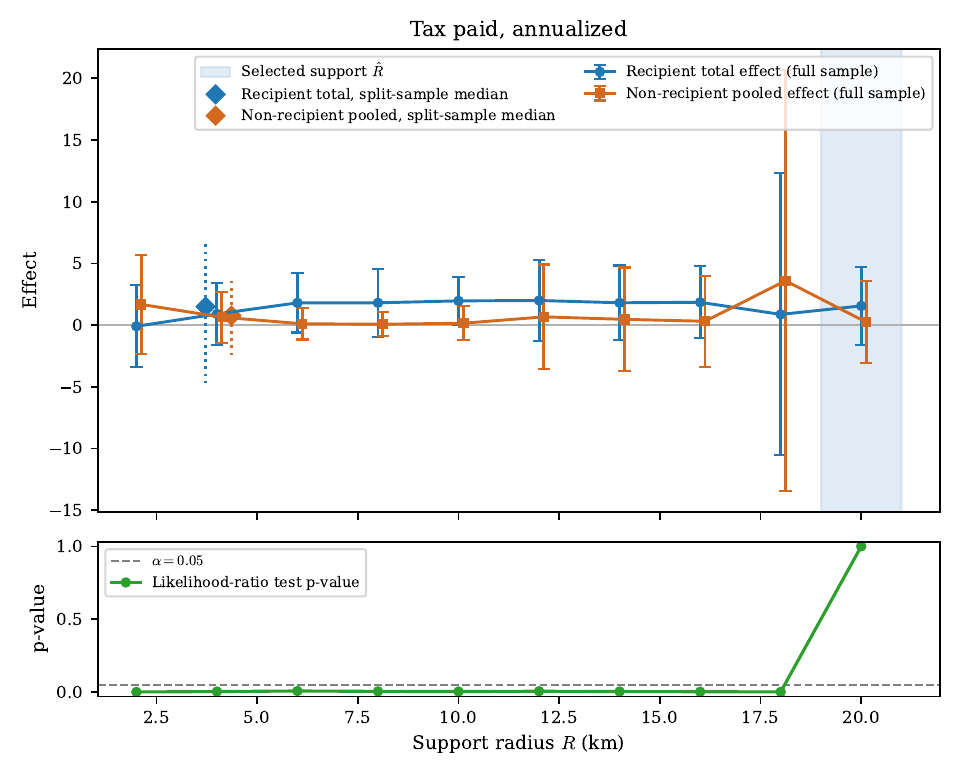}
\caption{Radius-path diagnostics for taxes paid in the Egger application}
\label{fig:egger-radius-path-app-tax}
\end{figure}

The full figure set reinforces the main-text message. \bluerev{The original 2 km support
is rejected for every outcome; the selected support is most often modestly larger,
commonly $4$--$6$ km, with a few outcomes selecting longer supports.} At the same
time, the block split checks are noisy, so we interpret the selected supports as
diagnostics for the exposure map.

\subsubsection*{D. The radius path and the resolution of the annular instruments}

The radius-path standard errors in
Figures~\ref{fig:egger-radius-path-app-exp}--\ref{fig:egger-radius-path-app-tax}
are not always increasing in the support; for a few outcomes, consumption among
them, the intervals widen sharply near $16$~km and then tighten again. This
unevenness reflects the construction of the index rather than the strength of
spillovers at those distances: the annular transfer variables at neighboring
radii are highly collinear when the rings are narrow, so the estimated weights
are only weakly determined at particular supports and the first-step uncertainty
carried into the second stage is uneven rather than smoothly increasing. The
reported estimates do not depend on this.
Table~\ref{tab:egger-radius-grid-invariance} reports the recipient total at a
common $4$~km support for annular instrument bands of $1$, $1.5$, and $2$~km;
the point estimates move by far less than one standard error.

\begin{table}[htbp]
\centering
\caption{Recipient total at a common $4$~km support, by instrument band width}
\label{tab:egger-radius-grid-invariance}
\begin{tabular}{lccc}
\toprule
& \multicolumn{3}{c}{Annular instrument band width} \\
\cmidrule(lr){2-4}
Outcome & $1$~km & $1.5$~km & $2$~km \\
\midrule
Consumption  & $296\ (59)$ & $296\ (54)$ & $296\ (62)$ \\
Assets       & $177\ (31)$ & $177\ (29)$ & $177\ (33)$ \\
Total income & $\phantom{0}97\ (37)$ & $\phantom{0}91\ (36)$ & $\phantom{0}97\ (40)$ \\
\bottomrule
\end{tabular}
\begin{flushleft}
\footnotesize Notes: Recipient total effect at a fixed $4$~km support, with the
joint delta-method standard error in parentheses, using annular instruments of
width $1$, $1.5$, and $2$~km. The index weights and the downstream IV are
otherwise as in the main specification. \ridgerev{The design-side conditional
expectations in this exhibit are estimated by unpenalized least squares on the
quadratic basis rather than the main specification's cross-validated ridge fit.}
\end{flushleft}
\end{table}


\subsubsection*{\rev{E. Multiplier accounting: reparameterization and delta-method inference}}
\label{app:egger-multiplier}

\rev{This appendix details the multiplier exercise of
Section~\ref{subsec:egger-main}. \citet{egger2022general}'s Table~V maps the
coefficients of several weighted IV regressions into component multipliers. For
component $c$, role $g$ (recipient/non-recipient), and ring $k$, we impose the
design reparameterization
\[
  \beta_{cgk}=\gamma_{cg}\,\theta_{cgk},
  \qquad \textstyle\sum_{k}\theta_{cgk}=1,
\]
so that the normalized shape $\theta_{cg}$ is exactly the Stage~1 index and
$\gamma_{cg}$ is a single slope on the constructed index $S_{icg}(\widehat\theta_{cg})$, the
$\widehat\theta_{cg}$-weighted combination of unit $i$'s ring exposures,
which enters each Table~V equation as an endogenous regressor instrumented by the
corresponding index instrument. Everything else (deflators, transfer
normalizations, weights, supports) is held fixed.}

\rev{We propagate uncertainty to the multipliers by a cluster-robust delta-method
sandwich, clustering at the sublocation level (the unit of randomization). Stacking
the free shape parameters $\phi$ (an $m_{cg}-1$ vector per outcome/role, with $m_{cg}$ the number of annuli in the selected support) and the full
Table~V coefficient vector $b$ into $p=(\phi',b')'$, write, for each sublocation
cluster $s$, the stacked influence-function contribution $\psi_s$ that combines the
Stage~1 design-GMM influence, which maps the design moments to $\widehat\phi$, with the
Table~V two-stage least squares influence; the latter includes the generated-regressor
adjustment by which the Stage~1 uncertainty in $\widehat\phi$ enters $\widehat b$. For
the vector of reported multipliers $f(p)$ with $D=\partial f/\partial p'$, the
covariance is
\[
  \widehat V_f \;=\; D\,\Big(\tfrac{C}{C-1}\sum_{s=1}^{C}\psi_s\psi_s'\Big)\,D',
  \qquad C=84.
\]
This construction never inverts the moment covariance $\widehat\Omega=\sum_s U_s U_s'$
(with $U_s$ the cluster-$s$ stacked Stage-1 and Table~V moment contribution),
and that is what keeps it well posed. The unit of independence is the sublocation
cluster, so $\operatorname{rank}(\widehat\Omega)\le C=84$, while the Table~V coefficient
vector is high-dimensional ($\approx 400$); the efficient-GMM information matrix
$(G'\widehat\Omega^{-1}G)^{-1}$, with $G=\partial\bar g/\partial p'$ the moment Jacobian,
would invert through this rank-$84$ ceiling and is ill-posed. The sandwich avoids the
inversion entirely: $\widehat\Omega$ enters only as the cluster meat, and the reported
multipliers are a low-dimensional smooth function whose variance is well defined
whatever the rank of $\widehat\Omega$. The standard errors must be read against the
cluster count rather than the household count; with $C=84$ they are comparable in
magnitude to \citet{egger2022general}'s own (Table~\ref{tab:egger-multiplier-results}),
and applying their wild cluster bootstrap (Rademacher signs by sublocation) to our
index, holding the shape weights fixed, reproduces the Table~V component to within a
few percent. The resulting design-based multipliers and their benchmarks are reported
in the main text (Table~\ref{tab:egger-multiplier-results}).}

\end{document}